\author{Moses Charikar\thanks{email: \texttt{moses@cs.stanford.edu}. Moses Charikar is supported by a Simons Investigator Award.} \\ Stanford University \and Erik Waingarten\thanks{email: \texttt{ewaingar@seas.upenn.edu}. Part of this work was done while Erik Waingarten was a postdoc at Stanford University, supported by an NSF postdoctoral fellowship and by Moses Charikar's Simons Investigator Award.} \\ University of Pennsylvania}
\title{The Johnson-Lindenstrauss Lemma for Clustering and Subspace Approximation: From Coresets to Dimension Reduction}
\begin{document}         
\maketitle
\thispagestyle{empty}

\begin{abstract}
We study the effect of Johnson-Lindenstrauss transforms in various projective clustering problems, generalizing recent results which only applied to center-based clustering~\cite{MMR19}. We ask the general question: for a Euclidean optimization problem and an accuracy parameter $\eps \in (0, 1)$, what is the smallest target dimension $t \in \N$ such that a Johnson-Lindenstrauss transform $\bPi \colon \R^d \to \R^t$ preserves the cost of the optimal solution up to a $(1+\eps)$-factor. We give a new technique which uses coreset constructions to analyze the effect of the Johnson-Lindenstrauss transform. 
Our technique, in addition applying to center-based clustering, improves on (or is the first to address) other Euclidean optimization problems, including:
\begin{itemize}
\item For $(k,z)$-subspace approximation: we show that $t = \tilde{O}(zk^2 / \eps^3)$ suffices, whereas the prior best bound, of $O(k/\eps^2)$, only applied to the case $z = 2$ \cite{CEMMP15}.
\item For $(k,z)$-flat approximation: we show $t = \tilde{O}(zk^2/\eps^3)$ suffices, completely removing the dependence on $n$ from the prior bound $\tilde{O}(zk^2 \log n/\eps^3)$ of~\cite{KR15}.
\item For $(k,z)$-line approximation: we show $t = O((k \log \log n + z + \log(1/\eps)) / \eps^3)$ suffices, and ours is the first to give any dimension reduction result.
\end{itemize}
\end{abstract}

\newpage
\thispagestyle{empty}
\tableofcontents

\thispagestyle{empty}

\newpage 
\setcounter{page}{1}


\section{Introduction}

\newcommand{\cost}{\mathrm{cost}}

The Johnson-Lindenstrauss lemma \cite{JL84} concerns dimensionality reduction for high-dimensional Euclidean spaces. It states that, for any set of $n$ points $x_1,\dots, x_n$ in $\R^d$ and any $\eps \in (0,1)$, there exists a map $\Pi \colon \R^d \to  \R^t$, with $t = O(\log n / \eps^2)$ such that, for any $i, j \in [n]$,
\begin{align}
 \dfrac{1}{1+\eps} \cdot \|x_i - x_j \|_2 \leq \|\Pi(x_i) - \Pi(x_j)\|_2 \leq (1+\eps) \cdot \| x_i - x_j \|_2. \label{eq:preserve-dist}
 \end{align}
From a computational perspective, the lemma has been extremely influential in designing algorithms for high-dimensional geometric problems, partly because proofs show that a \emph{random} linear map, oblivious to the data, suffices. Proofs specify a distribution $\calJ_{d,t}$ supported on linear maps $\R^d \to \R^t$ which is independent of $x_1,\dots, x_n$ (for example, given by a $t \times d$ matrix of i.i.d $\calN(0, 1/t)$ entries \cite{IM98, DG03}), and show that a draw $\bPi \sim \calJ_{d,t}$ satisfies (\ref{eq:preserve-dist}) with probability at least $0.9$.

In this paper, we study the Johnson-Lindenstrauss transforms for \emph{projective clustering} problems, generalizing a recent line-of-work which gave (surprising) dimension reduction results for center-based clustering~\cite{BZD10, CEMMP15, BBCGS19, MMR19}. The goal is to reduce the dimensionality of the input of a (more general) projective clustering problem (from $d$ to $t$ with $t \ll d$) without affecting the cost of the optimal solution significantly. We map $d$-dimensional points to $t$-dimensional points (via a random linear map $\bPi$) such that the optimal cost in $t$-dimensions is within a $(1+\eps)$-factor of the optimal cost in the original $d$-dimensional space. We study this for a variety of problems, each of which is specified by a set of candidate solutions $\calC_d$ and a cost function. By varying the family of candidate solutions $\calC_d$ and the cost functions considered, one obtains center-based clustering problems (like $k$-means and $k$-median), as well as subspace approximation problems (like principal components analysis), and beyond (like clustering with subspaces). The key question we address here is:
\begin{quote}
\textbf{Main Question}: For a projective clustering problem, how small can $t$ be as a function of $n$ (the dataset size) and $\eps$ (the accuracy parameter), such that the cost of the optimization is preserved up to $(1\pm \eps)$-factor with probability at least $0.9$ over $\bPi \sim \calJ_{d,t}$?
\end{quote}

Our results fit into a line of prior work on the power of Johnson-Lindenstrauss maps beyond the original discovery of \cite{JL84}. These have been investigated before for various problems and in various contexts, including nearest neighbor search~\cite{IM98, HIM12, AIR18}, numerical linear algebra~\cite{S06, M11, W14}, prioritized and terminal embeddings \cite{EFN17, MMMR18, NN19, CN21}, and clustering and facility location problems~\cite{BZD10, CEMMP15, KR15, BBCGS19, MMR19, NSIZ21, ISZ21}. 


\subsection{Our Contribution}

In the $(k, j)$-projective clustering problem with $\ell_z$ cost (\cite{HV02, AP03, EV05, DRVW06,VX12, KR15, FSS20, TWZBF22}), the goal is to cluster a dataset $X = \{ x_1,\dots, x_n \} \subset \R^d$, where each cluster is approximated by an affine $j$-dimensional subspace. Namely, we define an objective function for $(k, j)$-projective clustering problems with $\ell_z$ cost on a dataset $X$, which aims to minimize
\[ \min_{c \in \calC_d} \cost_z(X,c), \]
where the ``candidate solutions'' $\calC_{d}$ consist of all $k$-tuples of $j$-dimensional affine subspaces. 
The cost function $\cost_z(X, \cdot)$ maps each candidate solution $c \in \calC_d$ to a cost in $\R_{\geq 0}$ given by the $\ell_z$-norm of the vector of distances between each dataset point $x \in X$ to its nearest point on one of the $k$ subspaces. Intuitively, each point of the dataset $x_i \in X$ ``pays'' for the Euclidean distance to the nearest subspace in $c \in \calC_d$, and the total cost is the $\ell_z$-norm of the $n$ payments (one for each point).

There has been significant prior work which showed surprising results for the special case of $(k,z)$-clustering (like $k$-means and $k$-median, which corresponds to $(k, 0)$-projective clustering with $\ell_z$-cost) as well as for low-rank approximation (which corresponds to $(1, k)$-projective clustering for non-affine subspaces with $\ell_2$-cost)~\cite{BZD10, CEMMP15, BBCGS19, MMR19}. It is important to note that the techniques in prior works are \emph{specifically tailored} to the Euclidean optimization problem at hand. For example, the results of~\cite{MMR19}, which apply for $(k,0)$-clustering with $\ell_z$-norm, rely on using center points as the approximation and do not generalize to affine subspaces beyond dimension $0$. The other result of~\cite{CEMMP15} for low-rank approximation uses the specific algebraic properties of the $\ell_2$-norm (which characterize the optimal low-rank approximation). These prior works carry a strong conceptual message: for $(k,z)$-clustering and low-rank approximation, even though many pairwise distances among dataset points become highly distorted (since one projects to $t \ll \log n$ dimensions), the cost of the optimization (which aggregates distances) need not be significantly distorted. 

\textbf{Our Results.} We show that $(k, 0)$-clustering with $\ell_z$-norm and low-rank approximation are not isolated incidents, but rather, part of a more general phenomenon. Our main conceptual contribution is the following: we use algorithms for constructing \emph{coresets} (via the ``sensitivity sampling'' framework of \cite{FL11}) to obtain bounds on dimension reduction. Then, the specific bounds that we obtain for the various problems depend on the sizes of the coresets that the algorithms can produce. We can instantiate our framework to new upper bounds for the following problems:
\begin{itemize}
\item \textbf{$(k, z)$-Subspace Approximation}. This problem is a (restricted) $(1, k)$-projective clustering problems with $\ell_z$-cost. We seek to minimize over a $k$-dimensional subspace $S$ of $\R^d$ the $\ell_z$-norm of the $n$-dimensional vector where the coordinate $i \in [n]$ encodes the distance between $x_i$ and the closest point in $S$.\footnote{This is a restricted version of projective clustering because subspaces are not affine and required to go through the origin.} 
\item \textbf{$(k,z)$-Flat Approximation}. This problem is exactly the $(1.k)$-projective clustering problem with $\ell_z$ cost. It is similar to $(k,z)$-subspace approximation, except we optimize over all affine subspaces. 
\item \textbf{$(k,z)$-Line Approximation}. This problem corresponds to $(k, 1)$-projective clustering with $\ell_z$-cost. The optimization minimizes, over an arbitrary set $L$ of $k$ of $1$-dimensional affine subspaces $l_1,\dots, l_k$ (i.e., $k$ lines in $\R^d$), the $\ell_z$-norm of the $n$-dimensional vector where the coordinate $i \in [n]$ encodes the distance between $x_i$ and the closest point on any line in $L$. 
\end{itemize}
Concretely, our quantitative results are summarized by the following theorem.
\begin{theorem}[Main Result---Informal]\label{thm:main}
Let $X = \{ x_1,\dots, x_n \} \subset \R^d$ be any dataset, and $\calJ_{d, t}$ denote a $t \times d$ matrix of i.i.d entries from $\calN(0, 1/t)$. Let $\calC_d$ and $\calC_t$ be candidate solutions for a projective clustering problem in $\R^d$ and $\R^t$, respectively. For any $\eps \in (0, 1)$, we have
\begin{align}
\Prx_{\bPi \sim \calJ_{d, t}}\left[ \left| \dfrac{\min_{c' \in \calC_{t}} \cost(\bPi(X), c')}{\min_{c \in \calC_d} \cost(X, c)} - 1\right| \leq \eps  \right] \geq 0.9 \label{eq:comparison}
\end{align}
whenever:
\begin{itemize}
\item \emph{$(k, z)$-subspace and $(k,z)$-flat approximation:} $\calC_d$ and $\calC_t$ are all $k$-dimensional subspaces of $\R^d$ and $\R^t$, respectively; the cost measures the $\ell_z$-norm of distances between points to the subspace; and, $t = \tilde{O}(zk^2/\eps^3)$. Similarly, the same bound on $t$ holds for $\calC_d$ and $\calC_t$ varying over all affine $k$-dimensional subspaces.
\item \emph{$(k, z)$-line approximation:} $\calC_d$ and $\calC_t$ are all $k$-tuples of lines in $\R^d$ and $\R^t$, respectively; the cost measures the $\ell_z$-norm of distances between points and the nearest line; and, $t = O( (k \log\log n + z + \log(1/\eps)) / \eps^3)$. 
\end{itemize}
\end{theorem}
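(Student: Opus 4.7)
The plan is to reduce the dimension-reduction question to the construction of \emph{coresets} via the sensitivity sampling framework of Feldman--Langberg. For each of the three problems, a weighted subset $S \subseteq X$ of size $m$ can be obtained by sampling with probabilities proportional to per-point \emph{sensitivities} $s_i = \sup_{c \in \calC_d} d(x_i,c)^z / \sum_j d(x_j,c)^z$, yielding a strong coreset: for every candidate $c \in \calC_d$, the weighted cost on $S$ approximates $\cost(X,c)$ up to a $(1+\eps)$-factor. The size $m$ is governed by the total sensitivity $\sum_i s_i$ and the combinatorial complexity (shatter/pseudo-dimension) of $\calC_d$; for $(k,z)$-subspace and $(k,z)$-flat approximation these parameters are $\mathrm{poly}(k,z)$, while for $(k,z)$-line approximation a refined analysis introduces the $k\log\log n$ factor. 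The key conceptual point, which I would formalize as a meta-theorem, is that the same sensitivity decomposition that makes $S$ a coreset can be exploited to analyze $\bPi$.

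With this decomposition in hand, I would prove~\eqref{eq:comparison} by the usual two-sided comparison. For the upper bound, fix the optimal $c^* \in \calC_d$ and take $c' \in \calC_t$ to be the natural push-forward of $c^*$ under $\bPi$ (for subspaces, the image $\bPi(c^*)$; for lines, the image of each line). To bound $|\cost(\bPi(X),c')^z - \cost(X,c^*)^z|$, I would split each point's contribution into (i) a \emph{heavy} part supported on the $O(m)$ sensitivity-heavy points, for which a direct union bound over $m$ Johnson--Lindenstrauss events preserves each distance $d(x_i,c^*)$ within $(1\pm\eps)$, and (ii) a \emph{light} residual whose aggregate $\ell_z^z$-mass is controlled by concentration of the total sum (using Gaussianity of $\bPi$) rather than pointwise preservation. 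Adding the two pieces yields preservation of the overall cost. The reverse inequality follows by symmetry: one lifts an optimal $c'$ from $\R^t$ back to $\R^d$ by choosing a suitable preimage under $\bPi$ and repeats the heavy/light decomposition.

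The main obstacle will be the union bound over candidate solutions, since preservation must hold not just for the fixed $c^*$ but uniformly enough that the minima on the two sides remain comparable. The naive Feldman--Langberg union bound pays a factor of $\log n$ times the shattering dimension of $\calC_d$, which would reintroduce $\log n$ into $t$ and defeat our goal. Instead, I would run a chaining-style argument over multiple sensitivity/distance scales, where the Gaussian concentration of $\bPi$ supplies enough tail probability to absorb the net's metric entropy at each level. For $(k,z)$-line approximation, obtaining only $k\log\log n$ dependence requires iterating this chunking procedure across $O(\log\log n)$ scales, at each step removing points whose cost is already dominated and recursively coreseting the remainder; balancing the errors across these iterations is the most delicate part of the argument, and is where the asymmetry between lines and higher-dimensional flats materializes.
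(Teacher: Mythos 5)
Your proposal correctly identifies sensitivity sampling as the engine, but the hard half of the argument is more subtle than you indicate, and the route you sketch has several real gaps.

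First, you state that the reverse inequality ``follows by symmetry: one lifts an optimal $c'$ from $\R^t$ back to $\R^d$ by choosing a suitable preimage under $\bPi$.'' There is no symmetry: the optimum $c^{**}\in\calC_t$ is a \emph{function of $\bPi$}, so the events you would union-bound over are not independent of the randomness in $\bPi$, and a ``suitable preimage'' is not well-defined in general since $\bPi$ has a huge kernel. The paper resolves this by a concrete mechanism: apply the Shyamalkumar--Varadarajan lemma to the projected coreset $(\bPi(\bS),\bw)$ to exhibit a near-optimal solution \emph{lying in the span of $O(k^2\log(k/\eps)/\eps)$ coreset points}, require $\bPi$ to be a subspace embedding on all such spans (which only costs $t\sim k^2\polylog/\eps^3$ because $|\bS|$ is small), and then pull the solution back through the restricted inverse $\bPi^{-1}$ on that span. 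Without an analogue of this ``small span'' lemma, there is no canonical preimage, and the argument does not close.

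Second, your heavy/light decomposition plus a multiscale chaining argument over candidate solutions in $\R^t$ is not what the paper does and seems substantially harder to carry out. A key structural point the paper exploits is that it never needs a uniform-over-$\calC_t$ statement: because it only needs to upper-bound $\cost((\bPi(\bS),\bw),c^{**})$ for the single data-dependent optimizer $c^{**}$, the estimate is obtained from a per-point expectation/variance bound and Markov/Chebyshev, not from a net or chaining over $\calC_t$. The union bound over candidate solutions is confined to the internal coreset construction in $\R^d$ (event $\bE_1$), where it is a black-box cited result. If you try to run chaining over $\calC_t$ directly you must handle the dependence on $\bPi$, which is precisely the difficulty the paper sidesteps.

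Third, the circularity you do not mention is actually the key technical obstacle the paper solves. The sensitivity bounds that control the variance of the estimator on $\bPi(X)$ need to know that the push-forward solution $\bPi(c^*)$ is an $O(1)$-approximation to the optimum of $\bPi(X)$ --- but that is (a weak form of) the statement being proved. The paper breaks this by a two-pass bootstrap: first use Markov on the expectation identity (which requires no variance control) to get an $O(1)$-approximation, then feed that back in to bound the second moment and apply Chebyshev to get $1\pm\eps$. Your proposal has nothing playing this role.

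Finally, the $k\log\log n$ for $(k,z)$-line approximation does not come from iterating an error-balancing chunking procedure over $\log\log n$ scales. It comes from an explicit $3$-coreset for $(k,\infty)$-line approximation of size $(O(\log n))^k$ (constructed recursively on one-dimensional instances and combined across the $k$ lines), which yields total sensitivity $\frakS_\sigma = \poly(2^z,(\log n)^k)$; the target dimension only depends on $\log\frakS_\sigma = O(k\log\log n + z)$. Crucially, that $(k,\infty)$-coreset is shown to \emph{commute with any linear map}, which is what lets the same partition control variance after projection. That commutation lemma is absent from your outline and is load-bearing.
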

In all cases, the bound that we obtain is directly related to the size of the best coresets from the sensitivity sampling framework, and all of our proofs follow the same format.\footnote{The reason we did not generalize the results to $(k, j)$-projective clustering with $\ell_z$-cost with $j > 1$ is that these problems do not admit small coresets~\cite{H04}. Researchers have studied ``integer $(k, j)$-projective clustering'' where one restricts the input points to have bounded integer coordinates, where small coresets do exists~\cite{EV05}. However, using this approach for dimension reduction would incur additional additive errors, so we have chosen not to pursue this route.} Our proofs are not entirely black-box applications of coresets (since we use the specific instantiations of the sensitivity sampling framework), but we believe that any improvement on the size of the best coresets will likely lead to quantitative improvements on the dimension reduction bounds. However, improving our current bounds by either better coresets or via a different argument altogether (for example, improving on the cubic dependence on $\eps$) seems to require significantly new ideas.

\begin{figure}[H]
\begin{center}
\begin{tabular}{ |c||c|c|c|  } 
 \hline
 \multicolumn{3}{|c|}{Target Dimension for Johnson-Lindenstrauss Transforms} \\
 \hline
Problem & New Result & Prior Best \\
 \hline
$(k,z)$-subspace & $\tilde{O}(zk^2 / \eps^3)$  & $O(k/\eps^2)$ (when $z = 2$) \small{\cite{CEMMP15}} \\
 $(k,z)$-flat & $\tilde{O}(zk^2/\eps^3)$ & $\tilde{O}(zk^2 \log n/\eps^3)$ \small{\cite{KR15}} \\
 $(k,z)$-line    &$O((k\log\log n + z + \log(1/\eps))/\eps^3)$ & None \\
\hline
\end{tabular}
\end{center}
\caption{Comparison to Prior Bounds
}
\end{figure}

\textbf{A Subtlety in ``For-All'' versus ``Optimal'' Guarantees.} So far, the main question (and the results we obtain) focus on applying the Johnson-Lindenstrauss transform and preserving the optimal cost, i.e., that the \emph{minimizing solution} in the original and the dimension reduced space have approximately the same cost. A stronger guarantee which one may hope for, a so-called ``for-all'' guarantee, asks that after applying the Johnson-Lindenstrauss transform, every solution has its cost approximately preserved before and after dimension reduction. We do not achieve ``for all'' guarantees, like those appearing in~\cite{MMR19}. However, we emphasize that various subtleties arise in what is meant by ``a solution,'' as the prior work on dimension reduction and coresets refer to \emph{different notions} (even though they agree at the optimum). 

Consider, for example, the $1$-medoid problem, a constrained version of the $1$-means problem. The $1$-medoid cost of a dataset $X$ is the minimum over centers $c$ chosen from the dataset $X$, of the sum of squares of distances from each dataset point $x$ to $c$. The subtlety is the following: one can apply a Johnson-Lindenstrauss transform to $t = O(\log(1/\eps)/\eps^2)$ dimensions and preserve the $1$-means cost, and one may hope that a ``for-all'' guarantee would also preserve the $1$-medoid cost. Somewhat surprisingly, we show that it does not. 
\begin{theorem}[Johnson-Lindenstrauss for Medoid---Informal (see Theorem~\ref{thm:medoid})]\label{thm:medoid-inf}
For any $t = o(\log n)$, applying a Johnson-Lindenstrauss transform to dimension $t$ decreases the cost of the $1$-medoid problem by a factor approaching $2$. 
\end{theorem}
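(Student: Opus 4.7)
The plan is to exhibit a hard instance consisting of $n$ mutually orthogonal unit vectors, and to show that a random Johnson-Lindenstrauss projection shrinks the $1$-medoid cost by a factor approaching $2$. Take $X = \{e_1, \dots, e_n\} \subset \R^n$, so every pairwise squared distance equals $2$. For any data point $e_i$ the standard identity
\[ \sum_{j=1}^n \|e_i - e_j\|_2^2 \;=\; n\|e_i - \bar{e}\|_2^2 + \sum_{j=1}^n \|e_j - \bar{e}\|_2^2, \]
together with $\|e_i - \bar{e}\|_2^2 = 1 - 1/n$ for every $i$, pins the original $1$-medoid cost at exactly $2(n-1)$.

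Next I would apply $\bPi \sim \calJ_{n,t}$ and set $y_i = \bPi e_i$. Since the $y_i$'s are the columns of $\bPi$, they are i.i.d.\ $\calN(0, I_t/t)$ vectors in $\R^t$. Applying the same identity in the projected space,
\[ \min_{i \in [n]} \sum_{j=1}^n \|y_i - y_j\|_2^2 \;=\; n \min_{i \in [n]} \|y_i - \bar{y}\|_2^2 \;+\; \sum_{j=1}^n \|y_j - \bar{y}\|_2^2. \]
The second term equals $\sum_j \|y_j\|_2^2 - n\|\bar{y}\|_2^2$, which has expectation $n-1$ and concentrates as $n - 1 \pm O(\sqrt{n/t})$ by $\chi^2$-concentration; so it contributes $(1+o(1))(n-1)$ to the projected cost.

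The crux of the argument is to show $\min_i \|y_i - \bar{y}\|_2^2 = o(1)$ with high probability when $t = o(\log n)$. Since $\|\bar{y}\|_2^2 = O(1/n)$ with high probability, it suffices to prove $\min_i \|y_i\|_2^2 = o(1)$. Each $\|y_i\|_2^2$ is distributed as $\chi^2_t / t$, whose density near $0$ behaves like $x^{t/2-1}$, giving the anti-concentration bound
\[ \Pr\!\left[\|y_i\|_2^2 \le \delta\right] \;\geq\; c_t \cdot (t\delta)^{t/2}. \]
Choosing $\delta = (c / (n c_t))^{2/t}/t$ makes this probability at least $c/n$, and independence of the $y_i$ then gives $\min_i \|y_i\|_2^2 \le \delta$ with probability at least $1 - e^{-\Omega(c)}$. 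For $t = o(\log n)$, $(1/n)^{2/t} = e^{-2\log n / t} \to 0$, so $\delta \to 0$.

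Combining the three estimates, the projected $1$-medoid cost is at most $n \cdot o(1) + (1+o(1))(n-1) = (1+o(1))(n-1)$, whose ratio to the original $2(n-1)$ tends to $1/2$, which is the claimed factor. The main technical obstacle I anticipate is making the anti-concentration step uniform: one must carefully track how the constant $c_t$ in the $\chi^2_t$ density flattens as $t$ grows, and verify that $t = o(\log n)$ is sharp enough to push $\delta$ all the way to $0$. Everything else reduces to standard Gaussian/$\chi^2$ concentration.
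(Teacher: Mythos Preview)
Your proposal is correct and matches the paper's own argument essentially line for line: the same instance $X=\{e_1,\dots,e_n\}$, the same observation that the original cost is $2(n-1)$, the same key step that $\min_i\|\bPi(e_i)\|_2^2=o(1)$ via $\chi^2_t$ anti-concentration when $t=o(\log n)$, and the same concentration for $\sum_i\|\bPi(e_i)\|_2^2$. Your centroid-identity decomposition is a slightly cleaner bookkeeping device than the paper's direct ``Pythagorean theorem'' expansion, but the content is identical; the concern you flag about the $t$-dependence of the $\chi^2_t$ normalizing constant is routine (Stirling on $\Gamma(t/2)$ shows the required $\delta$ behaves like $e^{-1}n^{-2/t}\to 0$).
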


\textbf{Not all ``for-all'' guarantees are the same.} A ``for-all'' guarantee comes with (an implicit) representation of a candidate ``solution,'' and different choices of representations yield different guarantees. The above theorem does not contradict the ``for-all'' guarantee of~\cite{MMR19} because there, a candidate solution for $(k,z)$-clustering refers to a partition of $X$ into $k$ parts and not a set of centers. Often in the coreset literature, a ``solution'' refers to a set of centers and not arbitrary partitions. For $k = 1$, there are many possible centers but only one partition, and it is important (as per Theorem~\ref{thm:medoid-inf}) that a potential ``for all'' guarantee considers partitions.

For $(k, z)$-subspace and -flat approximation, in a natural representation of the solutions, the same issue arises. Consider the $1$-column subset selection problem, a constrained version of the $(1,2)$-subspace approximation problem, where subspaces must be in the span of the dataset points. The $1$-column subset selection cost of a dataset is the minimum over $1$-dimensional subspaces spanned by a dataset point of $X$, of the sum of squares of distances from each dataset point $x$ to the projection onto the subspace. Similarly to Theorem~\ref{thm:medoid-inf}, a Johnson-Lindenstrauss transform does not preserve the cost of $1$-column subset selection.
\begin{theorem}[Johnson-Lindenstrauss for Column Subset Selection---Informal (see Theorem~\ref{thm:1-css})]
For any $t = o(\log n)$, applying a Johnson-Lindenstrauss transform to dimension $t$ decreases the cost of the $1$-column subset selection problem by a factor approaching $3/2$.
\end{theorem}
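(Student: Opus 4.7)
My plan is to mirror the proof strategy of Theorem~\ref{thm:medoid}, but with a subspace-based cost in place of the centroid-based cost. As the dataset I take the standard basis $X = \{e_1, \ldots, e_n\} \subset \R^n$: for any choice of center $e_i$ one has $\sum_j (\|e_j\|^2 - \langle e_i, e_j\rangle^2/\|e_i\|^2) = n - 1$, so the original $1$-column subset selection cost is exactly $n - 1$, independently of $i$.

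The heart of the argument will be to analyze the cost after applying a Gaussian $\bPi \sim \calJ_{n, t}$. By rotational invariance of the Gaussian measure, the images $Y_i := \bPi(e_i)$ are i.i.d.\ with law $\mathcal{N}(0, \tfrac{1}{t} I_t)$. Collecting them into a matrix $M \in \R^{t \times n}$ with columns $Y_i$, the $1$-CSS cost in the projected space using center $Y_i$ equals $\|M\|_F^2 - Y_i^T M M^T Y_i / \|Y_i\|^2$. Standard Wishart concentration (Marchenko--Pastur in the regime $n \gg t$) gives $\|M\|_F^2 \to n$ and $MM^T \to \tfrac{n}{t} I_t$ with high probability, so every Rayleigh quotient $Y_i^T MM^T Y_i / \|Y_i\|^2$ concentrates near $n/t$ simultaneously in $i$. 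The reduced cost is thus approximately $n - n/t$, yielding a ratio $(n-1)/(n - n/t) \to t/(t-1)$ as $n \to \infty$, which at $t = 3$ is already exactly the claimed factor of $3/2$.

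The main difficulty is making the factor of $3/2$ appear uniformly for every $t = o(\log n)$, rather than only at $t = 3$. I expect this will require a $t$-adapted variant: one natural candidate is the biased basis $x_i = e_i + a\, e_0 \in \R^{n+1}$, where the parameter $a$ is tuned so that the restricted (data-point) $1$-CSS cost in the original space exceeds the unrestricted optimum by a factor approaching $3/2$ (the ratio $(1+2a^2)/(1+a^2)$ equals $3/2$ at $a^2 = 1$, tending to $2$ as $a \to \infty$). In the projected space, JL preserves the dominant principal direction via spiked-Wishart behavior, while at the same time some normalized image $Y_i/\|Y_i\|$ lands near this direction by angular concentration of i.i.d.\ Gaussian vectors in $\R^t$, bringing the restricted cost down toward the unrestricted value. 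The delicate step, and the main obstacle, is choosing $a$ so that the restricted-vs-unrestricted gap in the \emph{original} space and the data-point-vs-top-singular-direction alignment in the \emph{projected} space balance out precisely, and then verifying that this alignment holds uniformly across the full range $t = o(\log n)$ so that the $3/2$ bound emerges asymptotically in $n$.
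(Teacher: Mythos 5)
Your proposal heads in the right direction but stops short of a proof, and the mechanism you sketch for the projected cost is more complicated than needed. Let me separate the two halves.

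Your warmup with $X = \{e_1,\dots,e_n\}$ is correctly analyzed but definitively does not work for the stated range: the ratio $t/(t-1)$ is $3/2$ only at $t = 3$, and tends to $1$ as $t$ grows, so for a generic $t = o(\log n)$ you lose the constant. So this is not a proof of a weaker version of the theorem; a different instance is mandatory. Your ``$t$-adapted variant'' $x_i = e_i + a\,e_0$ with $a = 1$ is (up to rescaling) exactly the paper's construction $x_i = (e_{n+1} + e_i)/\sqrt{2}$, so you have rediscovered the right instance. But you flag as the ``main obstacle'' the task of choosing $a$ to balance things across the range $t = o(\log n)$, and you never resolve it. There is nothing to balance: $a = 1$ works uniformly. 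The reason is that the cost after projection does not depend on $t$ through any spiked-Wishart or angular-concentration calculation; what happens is that the projected restricted (CSS) cost collapses to the \emph{original unrestricted} optimum, and the factor $3/2$ is simply the ratio of the original restricted cost $\frac{3}{4}(n-1)$ to the original unrestricted cost $\frac{n}{2}$, both computed before any projection.

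The paper's mechanism for why the projected cost drops to $\approx n/2$ is much simpler than what you propose. Writing $\bg_i = \bPi(e_i)$, these are $n+1$ i.i.d.\ $\mathcal{N}(0, I_t/t)$ vectors. The left tail of the $\chi^2_t$ distribution gives $\Pr[\|\bg_i\|_2^2 \leq \epsilon] \approx \epsilon^{t/2}$ up to poly factors, and since $t = o(\log n)$ one can take $\epsilon = \epsilon(n) \to 0$ while still having $n\,\epsilon^{t/2} \to \infty$; hence with high probability \emph{some} index $i$ has $\|\bg_i\|_2^2 = o(1)$. For that $i$, $\bPi(x_i) = (\bg_{n+1} + \bg_i)/\sqrt{2}$ is within $o(1)$ of $\bg_{n+1}/\sqrt{2}$, so the column span $S' = \mathrm{span}(\bPi(x_i))$ is essentially $\mathrm{span}(\bg_{n+1})$. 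Projecting out this direction from each $\bPi(x_j) = (\bg_{n+1} + \bg_j)/\sqrt{2}$ leaves at most $\|\bg_j\|_2^2/2 + o(1)$, and $\sum_j \|\bg_j\|_2^2 \leq (1+o(1))n$ with high probability, giving projected cost $\leq \frac{n}{2}(1+o(1))$ and hence ratio $\geq 3/2 - o(1)$. You should note that you only need one good column, so you never need to argue that the top singular direction of the projected data concentrates (spiked Wishart) nor that some normalized $Y_i/\|Y_i\|$ hits it (angular concentration); the small-norm event for $\bg_i$ alone suffices and is a clean one-dimensional tail estimate. These two more elaborate routes can be made to work, but they are not needed and introduce extra steps you have not carried out.
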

The above theorem does not contradict the ``for-all'' guarantee of~\cite{CEMMP15} for similar reasons (which, in addition, crucially rely on having $z = 2$, and which we elaborate on in Appendix~\ref{sec:forall}). For $(k,z)$-line approximation, however, there is an interesting open problem: is it true that after applying a Johnson-Lindenstrauss transform to $t = \poly(k \log\log n/\eps)$, \emph{for all partitions} of $X$ into $k$ parts, the cost of \emph{optimally} approximating each part with a line has its cost preserved. 

\ignore{This section is meant for two things:
\begin{enumerate}
\item To help compare the guarantees of this work to that of prior works on $(k,z)$-clustering of \cite{MMR19} and $(k,2)$-subspace approximation \cite{CEMMP15}, expanding on Remark~\ref{rem:for-all}. In short, for $(k,z)$-clustering, the results of \cite{MMR19} are qualitatively stronger than the results obtained here. In $(k,2)$-subspace approximation, the ``for all'' guarantees  of \cite{CEMMP15} are for the qualitatively different problem of low-rank approximation. While the costs of low-rank approximation and $(k, 2)$-subspace approximation happen to agree at the optimum, the notion of a candidate solution is different.
\item To show that, for two related problems of ``medoid'' and ``column subset selection,'' one cannot apply the Johnson-Lindenstrauss transform to dimension $o(\log n)$ while preserving the cost. The medoid problem is a center-based clustering problem, and column subset selection problem is a subspace approximation problem. The instances we will construct for these problems are very symmetric, such that uniform sampling will give small coresets. The se give concrete examples ruling out a theorem which directly relates the size of coresets to the effect of the Johnson-Lindenstrauss transform.
\end{enumerate}

\paragraph{Center-Based Clustering} Consider the following (slight) modification to the center-based clustering problems known as the ``medoid'' problem.
\begin{definition}[$1$-medoid problem]
Let $X = \{ x_1,\dots, x_n \} \subset \R^d$ be any set of points. The $1$-medoid problem asks to optimize
\[ \min_{\substack{c \in X}} \sum_{x \in X} \| x - c\|_2^2. \]
\end{definition}
Notice the difference between $1$-medoid and $1$-mean: in $1$-medoid the center is restricted to be from within the set of points $X$, whereas in $1$-mean the center is arbitrary. Perhaps surprisingly, this modification has a dramatic effect on dimension reduction.
\begin{theorem}\label{thm:medoid}
For large enough $n, d \in \N$, there exists a set of points $X \subset \R^d$ (in particular, given by the $n$-basis vector $\{ e_1,\dots, e_n \} \subset \R^n$) such that, with high probability over the draw of $\bPi \sim \calJ_{d, t}$ where $t = o(\log n)$, 
\begin{align*}
\dfrac{\mathop{\min}_{c \in X} \sum_{x \in X} \| x - c\|_2^2} {\mathop{\min}_{c' \in \bPi(X)} \sum_{x \in X} \| \bPi(x) - c'\|_2^2}  \geq 2 - o(1).
\end{align*}
\end{theorem}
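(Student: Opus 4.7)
My plan is to first note that the numerator is immediate: since $\|e_i - e_j\|_2^2 = 2$ for all $i \neq j$, every choice of $c \in X$ gives $1$-medoid cost exactly $2(n-1)$. For the denominator, I will set $v_i := \bPi(e_i) \in \R^t$, so that $v_1,\dots,v_n$ are i.i.d.\ Gaussian vectors with $\calN(0,1/t)$ coordinates, and let $\bar v = \tfrac{1}{n}\sum_j v_j$. Adding and subtracting $\bar v$ inside each norm and using $\sum_j(v_j - \bar v) = 0$ yields the identity
\[
f(i) \;:=\; \sum_{j \neq i} \|v_j - v_i\|_2^2 \;=\; \sum_{j=1}^n \|v_j - \bar v\|_2^2 \;+\; n\,\|v_i - \bar v\|_2^2.
\]
The first summand does not depend on $i$, is distributed as $\tfrac{1}{t}\chi^2_{t(n-1)}$, and hence concentrates around its mean $n-1$ up to a factor $1 \pm o(1)$ by standard $\chi^2$ tail bounds. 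The theorem therefore reduces to showing $\min_{i \in [n]} \|v_i - \bar v\|_2^2 = o(1)$ with probability $1 - o(1)$, because then $\min_i f(i) \leq n(1+o(1))$ and the target ratio is at least $\tfrac{2(n-1)}{n(1+o(1))} \to 2$.

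For the reduction, since $\|\bar v\|_2^2 = O(1/n) = o(1)$ with high probability (a one-line Gaussian concentration argument), it suffices to prove $\min_i \|v_i\|_2^2 = o(1)$. The $n$ variables $t\|v_i\|_2^2$ are i.i.d.\ $\chi^2_t$, and I will establish the small-ball lower bound
\[
\Pr\!\left[\|v_i\|_2^2 \leq \delta\right] \;\gtrsim\; \frac{(\delta\, e)^{t/2}}{\sqrt{t}}
\]
valid for $\delta \in (0,1)$, by integrating the $\chi^2_t$ density $\propto x^{t/2-1} e^{-x/2}$ on $[0, \delta t]$ and applying Stirling to $\Gamma(t/2 + 1)$. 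By independence across $i$, the probability that no $v_i$ achieves this is at most $\exp\!\left(-c\,n\,(\delta e)^{t/2}/\sqrt{t}\right)$.

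The hard part --- and the only place where the hypothesis $t = o(\log n)$ enters --- will be choosing $\delta$ so that both $\delta \to 0$ and $n(\delta e)^{t/2}/\sqrt{t} \to \infty$. I plan to take $\delta = n^{-1/t}$: then $\delta \to 0$ precisely when $t = o(\log n)$, and $n(\delta e)^{t/2}/\sqrt{t} \geq n^{1/2}\, e^{t/2}/\sqrt{t} \to \infty$, so the failure probability is $\exp(-\Omega(\sqrt{n})) = o(1)$. I expect the anti-concentration estimate together with this parameter balancing to be the only non-routine step; the rest is the identity above, standard $\chi^2$ concentration for the first summand, and the trivial bound on $\|\bar v\|_2$.
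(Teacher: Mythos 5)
Your proof is correct and reaches the same core probabilistic fact as the paper: when $t = o(\log n)$, with high probability some $\|\bPi(e_i)\|_2^2 = o(1)$, and choosing $c' = \bPi(e_i)$ drops the medoid cost to $(1+o(1))n$ against the exact numerator $2(n-1)$. Where you differ is in how you aggregate the squared distances. The paper's sketch bounds $\sum_j \|\bPi(e_j) - \bPi(e_i)\|_2^2$ by combining $\sum_j \|\bPi(e_j)\|_2^2 \leq (1+o(1))n$ with $\|\bPi(e_i)\|_2^2 = o(1)$ and invokes ``the Pythagorean theorem''; strictly speaking this requires controlling the cross-term $\langle \sum_j \bPi(e_j), \bPi(e_i)\rangle$, which is left implicit. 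Your exact bias-variance decomposition
\[
\sum_{j}\|v_j - v_i\|_2^2 \;=\; \sum_j \|v_j - \bar v\|_2^2 \;+\; n\,\|v_i - \bar v\|_2^2
\]
sidesteps the cross-term entirely, isolates the only $i$-dependent quantity as $\|v_i - \bar v\|_2^2$, and makes the $\chi^2_{t(n-1)}$ concentration of the $i$-independent part transparent. You also spell out the $\chi^2$ small-ball estimate and the threshold choice $\delta = n^{-1/t}$ that shows exactly why $t = o(\log n)$ is the right condition (so that $\delta \to 0$ while $n\,\delta^{t/2}/\sqrt{t} \to \infty$), which the paper treats as folklore. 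One minor point: the anti-concentration bound $\Pr[\|v_i\|_2^2 \le \delta] \gtrsim (\delta e)^{t/2}/\sqrt{t}$ carries an omitted $e^{-t\delta/2}$ factor, so it is only clean for $\delta \to 0$; but since your choice forces $\delta \to 0$, and the weaker bound $\gtrsim \delta^{t/2}/\sqrt{t}$ already gives $n\delta^{t/2}/\sqrt{t} = \sqrt{n}/\sqrt{t} \to \infty$, this does not affect the argument.
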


Theorem~\ref{thm:medoid}  gives very strong lower bound for dimension reduction for $k$-medoid, showing that decreasing the dimension to any $o(\log n)$ does not preserve (even the optimal) solutions within better-than factor $2$. This is in stark contrast to the results on center-based clustering, where the $1$-mean problem can preserve the solutions up to $(1\pm \eps)$-approximation without any dependence on $n$ or $d$. The proof itself is also very straight-forward: each $\bPi(e_i)$ is an independent Gaussian vector in $\R^t$, and if $t = o(\log n)$, with high probability, there exists an index $i \in [n]$ where $\| \bPi(e_i) \|_2^2 = o(1)$. In a similar vein, with high probability $\sum_{i=1}^n \| \bPi(e_i) \|_2^2 \leq (1+o(1)) n$. We take a union bound and set the center $c' = \bPi(e_i)$ for the index $i$ where $\| \bPi(e_i) \|_2^2 = o(1)$. By the pythagorean theorem, the cost of this $1$-medoid solution is at most $(1+o(1)) n$. On the other hand, every $1$-medoid solution in $X$ has cost $2(n-1)$. 

We emphasize that Theorem~\ref{thm:medoid} does not contradict~\cite{MMR19, BBCGS19}, even though it rules out that ``all candidate \emph{centers}'' are preserved. The reason is that the notion of ``candidate solution'' is different. Informally, \cite{MMR19} shows that for any dataset $X \subset \R^d$ of $n$ vectors and any $k \in \N$, $\eps > 0$, applying the Johnson-Lindenstrauss map $\bPi \sim \calJ_{d, t}$ with $t = \tilde{O}(\log(k/\eps) / \eps^2)$ satisfies the following guarantee: for all partitions of $X$ into $k$ sets, $(P_1, P_2, \dots, P_k)$, the following is true:
\begin{align*}
\sum_{\ell=1}^k \min_{c_{\ell}' \in \R^t} \sum_{x \in P_{\ell}} \| \bPi(x) - c_{\ell}'\|_2^2 \approx_{1\pm \eps} \sum_{\ell=1}^k \min_{c_{\ell} \in \R^d} \sum_{x \in P_{\ell}} \| x - c_{\ell}\|_2^2 .
\end{align*}
The ``for all'' quantifies over clusterings $(P_1,\dots, P_k)$ is different (as seen from the $1$-medoid example) from the ``for all'' over centers $c_1,\dots, c_k$.

\paragraph{Subspace Approximation} The same subtlety appears in subspace approximation. Here, we can consider the $1$-column subset selection problem, which at a high level, is the medoid version of subspace approximation. We want to approximate a set of points by their projections onto the subspace spanned by one of those points.
\begin{definition}[$1$-column subset selection]
Let $X = \{ x_1,\dots, x_n \} \subset \R^d$ be any set of points. The $1$-column subset selection problem asks to optimize
\begin{align*}
\min_{\substack{S = \mathrm{span}(\{ x_i \}) \\ x_i \in X}} \sum_{x \in X} \| x - \rho_{S}(x) \|_2^2
\end{align*}
\end{definition}

Again, notice the difference between $1$-column subset selection and $(k, 1)$-subspace approximation: the subspace $S$ is restricted to be in the span of a point from $X$. Given Theorem~\ref{thm:medoid}, it is not surprising that Johnson-Lindenstrauss cannot reduce the dimension of $1$-column subset selection to $o(\log n)$ without incurring high distortions.
\begin{theorem}\label{thm:1-css}
For large enough $n, d \in \N$, there exists a set of points $X \subset \R^d$ such that, with high probability over the draw of $\bPi \sim \calJ_{d,t}$ where $t = o(\log n)$, 
\begin{align*}
\dfrac{\mathop{\min}_{\substack{S = \mathrm{span}(x) \\ x \in X}} \sum_{x \in X} \| x - \rho_S(x)\|_2^2} {\mathop{\min}_{\substack{S' = \mathrm{span}(\bPi(x)) \\ x \in X}} \sum_{x \in X} \| \bPi(x) - \rho_{S'}(\bPi(x))\|_2^2}  \geq 3/2 - o(1).
\end{align*}
\end{theorem}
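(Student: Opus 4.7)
I propose to prove the bound via an explicit hard instance analogous to the basis-vector construction in Theorem~\ref{thm:medoid}, but augmented with a shared ``signal'' direction that lies \emph{outside} the dataset. Let $d = n+1$ and
\[ X = \bigl\{\, x_i := e_i + e_{n+1} : i \in [n] \,\bigr\}. \]
Every point shares the common component $e_{n+1}$ at a $45^\circ$ angle; the unconstrained optimal rank-$1$ direction is essentially $\mathrm{span}(e_{n+1})$, but this is unavailable to the $1$-column-subset-selection problem in the original space.

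For the original cost, symmetry implies that every choice $\mathrm{span}(x_i)$ gives the same cost, and a direct calculation using $\langle x_i, x_j\rangle = 1$ and $\|x_i\|^2 = 2$ yields the squared distance $\|x_j\|^2 - \langle x_i, x_j\rangle^2/\|x_i\|^2 = 2 - 1/2 = 3/2$ for each $j \neq i$. Thus the $1$-column-subset-selection cost is exactly $\tfrac{3}{2}(n-1)$.

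For the projected cost, the key is the same small-norm event exploited in the proof of Theorem~\ref{thm:medoid}: since $\Pr[\|\bPi(e_i)\|^2 \leq \delta] \approx \delta^{t/2}$, a union bound over $i \in [n]$ produces, for $t = o(\log n)$, an index $i^*$ with $\|\bPi(e_{i^*})\|^2 = o(1)$ with high probability. Because $\bPi(x_{i^*}) = \bPi(e_{i^*}) + \bPi(e_{n+1}) \approx \bPi(e_{n+1})$, choosing $\mathrm{span}(\bPi(x_{i^*}))$ as the candidate subspace effectively projects the other data points onto $\mathrm{span}(\bPi(e_{n+1}))$, the projected analogue of the optimal unconstrained direction. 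After expanding the quadratic distance and cancelling the $\|\bPi(e_{n+1})\|^2$ contributions, the squared distance from $\bPi(x_j)$ to $\mathrm{span}(\bPi(x_{i^*}))$ reduces (up to $o(1)$ errors) to the squared distance from $\bPi(e_j)$ to $\mathrm{span}(\bPi(e_{n+1}))$, which by rotational invariance of the Gaussian is distributed as $\chi^2_{t-1}/t$. Summing over $j \neq i^*$ and using standard concentration, the projected cost is at most $(n-1)(1 - 1/t)(1 + o(1))$.

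Dividing these, the ratio is at least $\tfrac{3/2}{(1-1/t)(1+o(1))} \geq \tfrac{3}{2} - o(1)$, as desired. The main technical obstacle is the careful propagation of error terms under conditioning on the small-norm event for $\bPi(e_{i^*})$: while $\|\bPi(e_{i^*})\|^2$ depends on $\bPi$, its direction remains uniformly distributed on the sphere and independent of the other randomness, which keeps the approximations tractable. A secondary subtlety is that for small constant $t$, individual $\|\bPi(e_j)\|^2$ are not concentrated around $1$, so one must argue at the level of the relevant aggregated sums and inner products (e.g.\ $\langle \bPi(\sum_j e_j), \bPi(e_{n+1})\rangle$ via a single Gaussian tail bound) rather than term by term.
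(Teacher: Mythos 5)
Your proposal is correct and follows essentially the same route as the paper: same hard instance (your $x_i = e_i + e_{n+1}$ is the paper's $(e_i + e_{n+1})/\sqrt{2}$ up to a global rescaling, which cancels in the ratio), same small-norm event over $i \in [n]$ to find an $i^*$ with $\|\bPi(e_{i^*})\|^2 = o(1)$, and same observation that $\mathrm{span}(\bPi(x_{i^*})) \approx \mathrm{span}(\bPi(e_{n+1}))$ so that the residual of $\bPi(x_j)$ reduces to the residual of $\bPi(e_j)$ against the $\bPi(e_{n+1})$ direction. The only difference is that you bound the per-point projected residual by the exact $\chi^2_{t-1}/t$ distribution (expectation $1 - 1/t$), whereas the paper uses the cruder bound $\|\bPi(e_j)\|_2^2$ (expectation $1$); both give $3/2 - o(1)$ after summing, and your observation about arguing on aggregated sums rather than per-term concentration for constant $t$ is a fair point that the paper handles implicitly by bounding the total cost by $\tfrac12\sum_j \|\bPi(e_j)\|_2^2 + o(n)$ and using concentration of the full sum.
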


For $(k,z)$-subspace approximation, we want to approximate the entire dataset with a single subspace---there is no partition of the dataset. To define a natural ``for-all'' guarantee when $\calC_d$ and $\calC_t$ consist of the set of all $k$-dimensional subspaces of $\R^d$ and $\R^t$, one associates each $c \in \calC_d$ with $\tau(c) \in \calC_t$ and hope to prove a ``for-all'' analogue of (\ref{eq:comparison}) for a small value of $t$:
\[ \Prx_{\bPi \sim \calJ_{d, t}}\left[ \forall c \in \calC_d : \left| \dfrac{\cost(\bPi(X), \tau(c))}{\cost(X, c)} - 1\right| \leq \eps  \right] \geq 0.9 \]
In Appendix~\ref{sec:forall}, we expand on why such a statement cannot give non-trivial bounds for a natural association (which associates subspaces spanned by the same dataset points before and after the projection). Namely, we show that, for two natural problems, the $1$-medoid problem and the $1$-column subset selection problem, where such a ``for all'' statement should apply, a Theorem~\ref{thm:main} with $t = o(\log n / \eps^2)$ is not true.

{\color{red} Erik: maybe makes sense to put the precise theorem statements, since it will help reviewers understand the subtlety in ``for all'' statements?}}

\ignore{\textbf{$(k, z)$-Clustering with $\ell_z$-Cost}. These problems are the $(k,0)$-projective clustering problems with $\ell_z$-cost, which capture the $k$-median and $k$-means problems. The optimization problem seeks to minimize, over an arbitrary set $C$ of $k$ center points $c_1,\dots, c_k \in \R^d$, the $\ell_z$-norm of the $n$-dimensional vector where the coordinate $i \in [n]$ encodes the distance between $x_i$ and the closest point in $C$. For this problem, we show in Theorem~\ref{thm:center-based} that dimension $t = O((\log k + z\log(1/\eps))/\eps^2)$ suffices. This improves on an analysis of \cite{MMR19} who give a bound of $O( (\log k + z\log(1/\eps) + z^2) / \eps^2)$ (however, we note that the result of \cite{MMR19} gives a stronger ``for all'' guarantee -- see Remark~\ref{rem:for-all}).

Finally, one would hope for a general theorem relating coresets and the Johnson-Lindenstrauss transform. However, our technique is not entirely a black-box application of coreset algorithms. We will need to use specific implementations of the coreset algorithms, as well as additional geometric facts that are specific to each problem. In Appendix~\ref{sec:forall}, we give two examples of (similar) projective clustering problems and certain instances which do admit small coresets, but one cannot apply the Johnson-Lindenstrauss transform to $o(\log n)$ dimensions without paying significantly in the distortion. The examples are the ``medoid'' and ``column subset selection'' problems, two well-studied center-based clustering and subspace approximation problems (respectively) where one imposes an additional restriction on the candidate solutions $\calC_d$. }

\ignore{
For example, the ``$k$-median problem'' corresponds to $(k, 0)$-projective clustering with $\ell_1$: we divide the dataset into $k$ clusters, and each cluster is approximated by a center (i.e., $0$-dimensional affine subspace), the cost is the sum of distances to the centers (i.e., $\ell_1$ norm of cost vector). Quantitatively we show the following results for the

Another example is ``principal components analysis,'' where for $j$ components is the $(1, j)$-projective clustering with $\ell_2$-norm cost. We give a general technique for analyzing the effect of Johnson-Lindenstrauss transforms on these problems by using algorithms for constructing \emph{coresets} (see the recent survey~\cite{F20}). Quantitatively, we summarize our results in the following table.

\begin{figure}[H]
\begin{center}
\begin{tabular}{ |c||c|c|c|  } 
 \hline
 \multicolumn{3}{|c|}{Target Dimension for Johnson-Lindenstrauss Transforms} \\
 \hline
Problem & New Result & Prior Best \\
 \hline
$(k,z)$-clustering  & $O((\log k + z \log(1/\eps))/\eps^2 )$    & $O((\log k + z\log(1/\eps) + z^2) / \eps^2)$ \small{\cite{MMR19}} \\
$(k,z)$-subspace & $\tilde{O}(zk^2 / \eps^3)$  & $O(k/\eps^2)$ (when $z = 2$) \small{\cite{CEMMP15}} \\
 $(k,z)$-flat & $\tilde{O}(zk^2/\eps^3)$ & $\tilde{O}(zk^2 \log n/\eps^3)$ \small{\cite{KR15}} \\
 $(k,z)$-line    &$O((k\log\log n + z + \log(1/\eps))/\eps^3)$ & None \\
  \hline
\end{tabular}
\end{center}
\caption{The $(k,z)$-clustering problem equates to $(k, 0)$-projective clustering with $\ell_z$ cost, generalizing $k$-median when $z = 1$, $k$-mean when $z = 2$, and $k$-center when $z = \infty$. The $(k,z)$-flat approximation problem is $(1, k)$-projective clustering with $\ell_z$ cost, and $(k,z)$-subspace approximation is $(1,k)$-projective clustering with $\ell_z$ cost when subspaces considered are not affine (i.e., contain $0$). Finally, $(k,z)$-line approximation is the $(k, 1)$-projective clustering with $\ell_z$ cost, generalizing $k$-line median when $z = 1$ and the minimum enclosing cylinders when $z = \infty$.
}
\end{figure}}

\subsection{Related Work} 
\paragraph{Dimension Reduction.} Our paper continues a line of work initiated by Boutsidis, Zouzias, and Drineas \cite{BZD10}, who first studied the effect of a Johnson-Lindenstrauss transform for $k$-means clustering, and showed that $t = O(k/\eps^2)$ sufficed for a $2 + \eps$-approximation. 
The bound was improved to $(1+\eps)$-approximation with $t = O(k/\eps^2)$ by Cohen, Elder, Musco, Musco, Persu~\cite{CEMMP15}, who also showed that $t = O(\log k / \eps^2)$ gave a $9+\eps$-approximation. 
Becchetti, Bury, Cohen-Addad, Grandoni, Schwiegelshohn~\cite{BBCGS19} showed that $t = O((\log k + \log\log n) \log(1/\eps)/ \eps^6)$ sufficed for preserving the costs of all $k$-mean clusterings. Makarychev, Makarychev, and Razenshteyn \cite{MMR19} improved and generalized the above bounds for all $(k,z)$-clustering. They showed that $t = O((\log k + z\log(1/\eps) + z^2) / \eps^2)$ preserved costs to $(1\pm \eps)$-factor.

For subspace approximation problems, \cite{CEMMP15} showed that $t = O(k/\eps^2)$ preserves the cost of $(k,2)$-subspace approximation to $(1+\eps)$-factor. In addition, \cite{KR15} showed that $O(zk^{2} \log n / \eps^3)$ preserved the cost of the $(k,z)$-flat approximation to $(1+\eps)$-factor.


\paragraph{Coresets.} Coresets are a well-studied technique for reducing the size of a dataset, while approximately preserving a particular desired property. Since its formalization in Agarwal, Har-Peled, and Varadarajan \cite{AHV05}, coresets have played a foundational role in computational geometry, and found widespread application in clustering, numerical linear algebra, and machine learning (see the recent survey~\cite{F20}). Indeed, even for clustering problems in Euclidean spaces, there is a long line of work (which is still on-going)~\cite{BHI02, HM04, AHV05, C09, LS10, FL11, VX12, VX12b, FSS13, BFL16, SW18, HV20, BBHJKW20, CSS21, CLSS22} exploring the best coreset constructions.

Most relevant to our work is the ``sensitivity sampling'' framework of Feldman and Langberg~\cite{FL11}, which gives algorithms for constructing coresets for the projective clustering problems we study. In light of the results of \cite{FL11}, as well as the classical formulation of the Johnson-Lindenstrauss lemma \cite{JL84}, it may seem natural to apply coreset algorithms and dimensionality reduction concurrently. However, this is not without a few technical challenges. As we will see in the next subsection, it is not necessarily the case that coreset algorithms and random projections ``commute.'' Put succinctly, the random projection $\bPi$ of a coreset of $X$ may not be a coreset of the random projection $\bPi(X)$. Indeed, proving such a statement constitutes the bulk of the technical work.

 
 \subsection{Organization}
 
 The following section (Section~\ref{sec:overview}) overviews the high-level plan, since all our results follow the same technique. To highlight the technique, the first technical section considers the case of $(k, z)$-clustering (Section~\ref{sec:clustering}), where the technique of arguing via coresets shows to obtain $t = O((\log(k) + z\log(1/\eps))/\eps^2)$. The remaining sections cover the technical material for $(k,z)$-subspace approximation (in Section~\ref{sec:subspace}), $(k,z)$-flat approximation (in Section~\ref{sec:flat}), and finally $(k,z)$-line approximation (in Section~\ref{sec:line}).
 

\renewcommand{\cost}{\mathrm{cost}}

\section{Overview of Techniques}\label{sec:overview}

In this subsection, we give a high-level overview of the techniques employed. As it will turn out, all results in this paper follow from one general technique, which we instantiate for the various problem instances. 

We give an abstract instantiation of the approach. We will be concerned with geometric optimization problems of the following sort:
\begin{itemize}
\item For each $d \in \N$, we specify a class of \emph{candidate solutions} given by a set $\calC_d$. For example, in center-based clustering, $\calC_d$ may be given by a tuple of $k$ points in $\R^d$, corresponding to $k$ centers for a center-based clustering. In subspace approximation, the set $\calC_d$ may denote the set of all $k$-dimensional subspaces of $\R^d$.
\item There will be a cost function $f_{d} \colon \R^d \times \calC_d \to \R_{\geq 0}$ which, takes a point $x \in \R^d$ and a potential solution $c \in \calC_d$, and outputs the cost of $x$ on $c$. Continuing on the example on center-based clustering, $f_{d}$ may denote the distance from a dataset point $x \in \R^d$ to its nearest point in $c$. In subspace approximation, $f_{d}$ may denote the distance from a dataset point to the orthogonal projection of that point onto the subspace $c$. For a parameter $z \in \N$, we will denote the cost of using $c$ for a dataset $X \subset \R^d$ by
\[ \cost_{d,z}(X, c) = \left( \sum_{x \in X} f_{d}(x, c)^z \right)^{1/z}. \]
For simplicity in the notation, we will drop the subscripts from the functions $f$ and $\cost$ when they are clear from context.
\end{itemize}
We let $\calJ_{d,t}$ denote a distribution over linear maps $\bPi \colon \R^d \to \R^t$ which will satisfy some ``Johnson-Lindenstrauss'' guarantees (we will specify in the preliminaries the properties we will need). For concreteness, we will think of $\bPi \sim \calJ_{d,t}$ given by matrix multiplication by a $t \times d$ matrix of i.i.d draws of $\calN(0, 1/t)$. We ask, for a particular bound on the dataset size $n \in \N$, a  geometric optimization problem (specified by $\{ \calC_d \}_{d \in \N}$, $f_d$ and $z$), and a parameter $\eps \in (0,1)$, what is the smallest $t \in \N$ such that with probability at least $0.9$ over the draw of $\bPi \sim \calJ_{d,t}$,
\begin{align} 
\frac{1}{1+\eps} \cdot \min_{c \in \calC_{d}} \cost(X, c) \leq \min_{c \in \calC_{t}} \cost(\bPi(X), c) \leq (1+\eps) \cdot \min_{c \in \calC_d} \cost(X, c). \label{eq:goal}  
\end{align}
The right-most inequality in (\ref{eq:goal}) claims that the cost after applying $\bPi$ does not increase significantly, i.e., $\min_{c \in \calC_t} \cost(\bPi(X), c) \leq (1 +\eps) \min_{c \in \calC_d} \cost(X, c)$. This direction is easy to prove for the following reason. For a dataset $X \subset \R^d$, we consider the solution $c^* \in \calC_d$ minimizing $\cost(X, c^*)$. We sample $\bPi \sim \calJ_{d,t}$ and we find a candidate solution $c^{**} \in \calC_{t}$ which exhibits an upper bound on $\min_{c \in \calC_t} \cost(\bPi(X), c) \leq \cost(\bPi(X), c^{**})$. For example, in the center-based clustering, $c^* \in \calC_d$ is a set of $k$ centers in $\R^d$, and we may consider $c^{**} \in \calC_t$ as the $k$ centers from $c^*$ after applying $\bPi$. The fact that $\cost(\bPi(x), c^{**}) \leq (1+\eps) \cost(X,c^*)$ with high probability over $\bPi \sim \calJ_{d,t}$ will follow straight-forwardly from properties of $\calJ_{d,t}$. Importantly, the optimal solution $c^*$ does not depend on $\bPi \sim \calJ_{d,t}$. In fact, while we expect $\bPi \colon \R^d \to \R^t$ to distort some distances substantially, we can pick $c^{**} \in \calC_t$ so that too many distortions on these points is unlikely. 

However, the same reasoning \emph{does not} apply to the left-most inequality in (\ref{eq:goal}). This is because the solution $c^{**} \in \calC_{t}$ which minimizes $\min_{c \in \calC_t} \cost(\bPi(X), c)$ depends on $\bPi$. Indeed, we would expect $c^{**} \in \calC_t$ to take advantage of distortions in $\bPi$ in order to decrease the cost of the optimal solution. We proceed by the following high level plan. We identify a sequence of important events defined over the draw of $\bPi \sim \calJ_{d,t}$ which occur with probability at least $0.9$. The special property is that if $\bPi$ satisfies these events, we can identify, from $c^{**} \in \calC_t$ minimizing $\cost(\bPi(X), c^{**})$, a candidate solution $c^* \in \calC_d$ which exhibits an upper bound $\cost(X, c^*) \leq (1+\eps) \cost(\bPi(X), c^{**})$. 

We now specify how exactly we define, for an optimal $c^{**} \in \calC_t$ (depending on $\bPi$), a candidate solution $c^* \in \calC_t$ whose cost is not much higher than $\cost(\bPi(X), c^{**})$. For that, we will use the notion of coresets. Before the formal definition, we note there is a natural extension of $\cost$ for weighted datasets. In particular, if $S \subset \R^d$ is a set of points and $w \colon S \to \R_{\geq 0}$ is a set of weights for $S$, then we will use $\cost( (S, w), c)$ as $1/z$-th power of the sum over all $x \in S$ of $w(x) \cdot f_{d}(x, c)^z$. 

\begin{definition}[(Weak)\footnote{The word ``weak'' is used to distinguish these from ``strong'' coresets. These are a weighted subset of points which approximately preserve the cost of all candidate solutions.} Coresets, see also \cite{F20}]\label{def:weak-coreset}
For $d \in \N$, let $\calC_d$ denote a class of candidate solutions and $f \colon \R^d \times \calC_{d} \to \R_{\geq 0}$ specify the cost of a point to a solution. For a dataset $X \subset \R^d$ and a parameter $\eps \in (0, 1)$, a (weak) $\eps$-coreset for $X$ is a weighted set of points $S \subset \R^d$ and $w \colon S \to \R_{\geq 0}$ which satisfy
\[ \dfrac{1}{1+\eps} \cdot \min_{c \in \calC_d} \cost(X, c) \leq \min_{c \in \calC_d} \cost((S, w), c) \leq (1+\eps) \cdot \min_{c \in \calC_d} \cost(X, c). \]
\end{definition}

\newcommand{\ALG}{\texttt{ALG}}

It will be crucial for us that these problems admit small coresets. More specifically, for the problems considered in this paper, there exists (known) algorithms which can produce small-size coresets from a dataset. In what follows, $\ALG$ is a randomized algorithm which receives as input a dataset $X \subset \R^d$ and outputs a weighted subset of points $(\bS, \bw)$ which is a weak $\eps$-coreset for $X$ with high probability. Computationally, the benefit of using coresets is that the sets $\bS$ tend to be much smaller than $X$, so that one may compute on $(\bS, \bw)$ and obtain an approximately optimal solution for $X$. For us, the benefit will come in defining the important events. At a high level, since $\bS$ is small, the important events defined with respect to $\bPi$ will only worry about distortions within the subset (or subspace spanned by) $\bS$.

In particular, it is natural to consider the following approach:
\begin{enumerate}
\item\label{en:step-1} We begin with the original dataset $X \subset \R^d$, and we consider the solution $c \in \calC_d$ which minimizes $\cost(X, c)$. The goal is to show that $\cost(X, c)$ cannot be much larger than $\cost(\bPi(X), c^{**})$, where $c^{**} \in \calC_t$ minimizes $\cost(\bPi(X), c^{**})$.
\item Instead of considering the entire dataset $X$, we execute $\ALG(X)$ and consider the weak $\eps$-coreset $(\bS, \bw)$ that we obtain. If we can identify a candidate solution $c^* \in \calC_d$ whose cost $\cost((\bS, \bw), c^{*}) \leq (1+\eps) \cost(\bPi(X), c^{**})$, we would be done. Indeed, $\min_{c \in \calC_d} \cost((\bS, \bw), c) \leq \cost((\bS, \bw), c^{*})$, and the fact $(\bS, \bw)$ is a weak $\eps$-coreset implies $\min_{c \in \calC_d} \cost(X, c) \leq (1+\eps) \min_{c\in \calC_d}\cost((\bS, \bw), c)$.
\item Moving to a coreset $(\bS, \bw)$ allows one to relate $\cost((\bS, \bw), c^*)$ and $\cost((\bPi(\bS), \bw), c^{**})$ by considering the performance of $\bPi$ on $\bS$. The benefit is that the important events, defined over the draw of $\bPi \sim \calJ_{d,t}$, set $t$ as a function of $|\bS|$, instead of $|X|$. A useful example to consider is requiring $\bPi^{-1}$ be $(1+\eps)$-Lipschitz on the entire subspace spanned by $\bS$, which requires $t = \Theta(|\bS| / \eps^2)$. For the problems considered here, a nearly optimal $c^{**} \in \calC_t$ for $(\bS, \bw)$ will be in the subspace spanned by $\bS$, so we may identify $c^{*} \in \calC_d$ whose cost on $(\bS, \bw)$ is not much higher than the $\cost((\bPi(\bS), \bw), c^{**})$ by evaluating $\bPi^{-1}(c^{**})$ since $c^{**}$ lies inside $\mathrm{span}(\bS)$.\footnote{While the above results in bounds for $t$ which are already meaningful, we will exploit other geometric aspects of the problems considered to get bounds on $t$ which are logarithmic in the coreset size. For center-based clustering, \cite{MMR19} showed that one may apply Kirzbraun's theorem. For subspace approximation, we use the geometric lemmas of \cite{SV12}.}
\item\label{en:step-4} The remaining step is showing $\cost((\bPi(\bS), \bw), c^{**}) \leq (1+\eps) \cost(\bPi(X), c^{**})$. In particular, one would like to claim $(\bPi(\bS), \bw)$ is a weak $\eps$-coreset for $\bPi(X)$ and use the right-most inequality in Definition~\ref{def:weak-coreset}. However, it is not clear this is so. The problem is that the algorithm $\ALG$ depends on the $d$-dimensional representation of $X \subset \R^d$, and $(\bPi(\bS), \bw)$ may not be a valid output for $\ALG(\bPi(X))$. As we show, this does work for (some) coreset algorithms built on the \emph{sensitivity sampling} framework (see,~\cite{FL11, BFL16}).\footnote{We will not prove that with high probability over $\bPi$ and $\ALG(X)$, $(\bPi(\bS), \bw)$ is a weak $\eps$-coreset for $\bPi(X)$. Rather, all we need is that the right-most inequality in Definition~\ref{def:weak-coreset} holds for $(\bPi(\bS), \bw)$ and $\bPi(X)$, which is what we show.}
\end{enumerate}

\subsubsection{Sensitivity Sampling for Step~\ref{en:step-4}} In the remainder of this section, we briefly overview the sensitivity sampling framework, and the components required to make Step~\ref{en:step-4} go through. At a high level, coreset algorithms in the sensitivity sampling framework proceed in the following way. Given a dataset $X \subset \R^d$, the algorithm computes a \emph{sensitivity sampling distribution} $\tilde{\sigma}$ supported on $X$. The requirement is that, for each potential solution $c \in \calC_d$, sampling from $\tilde{\sigma}$ gives a low-variance estimator for $\cost_{d,z}(X, c)^z$. In particular, we let $\tilde{\sigma}(x)$ be the probability of sampling $x \in X$. Then, for any distribution $\tilde{\sigma}$ and any $c \in \calC_{d}$,
\begin{align}
\Ex_{\bx \sim \tilde{\sigma}}\left[ \dfrac{1}{\tilde{\sigma}(\bx)} \cdot \dfrac{f_d(\bx, c)^z}{\cost_{d,z}(X,c)^z} \right] = 1. \label{eq:expectation} 
\end{align}
Equation~\ref{eq:expectation} implies that, for any $m \in \N$, if $\bS$ is $m$ i.i.d samples from $\tilde{\sigma}$ and $\bw(x) = 1 / (m \tilde{\sigma}(x))$, the expectation of $\cost_{d,z}((\bS, \bw), c)^z$ is $\cost_{d,z}(X, c)^z$. In addition, the algorithm designs $\tilde{\sigma}$ so that, for a parameter $T > 0$, 
\begin{align} 
\sup_{c \in \calC_d} \Ex_{\bx \sim \tilde{\sigma}}\left[ \left( \dfrac{1}{\tilde{\sigma}(\bx)} \cdot \dfrac{f_d(\bx, c)^z}{\cost_{d,z}(X, c)^z}\right)^2 \right]  \leq T. \label{eq:variance}
\end{align}
If we set $m \geq T / \eps^2$, (\ref{eq:variance}) and Chebyshev's inequality implies $\cost_{d,z}((\bS, \bw), c)^z \approx_{1\pm\eps} \cost_{d,z}(X, c)^z$ for each $c \in \calC_{d}$ with a high constant probability, and the remaining work is in increasing $m$ by a large enough factor to ``union bound'' over all $c \in \calC_{d}$. There is a canonical way of ensuring $\tilde{\sigma}$ and $T$ satisfy (\ref{eq:variance}): we first define $\sigma \colon X \to \R_{\geq 0}$, known as a ``sensitivity function'', which sets for each $x \in X$,
\begin{align}
\sigma(x) \geq \sup_{c \in \calC_d} \dfrac{f_d(x, c)^z}{\cost_{d,z}(X, c)^{z}}, \qquad\text{and}\qquad T = \sum_{x \in X} \sigma(x),\label{eq:sensitivity-function}
\end{align}
which is known as the ``total sensitivity.'' Then, the distribution is given by letting $\tilde{\sigma}(x) = \sigma(x) / T$. 

We now show how to incorporate the map $\bPi \sim \calJ_{d,t}$, to argue Step~\ref{en:step-4}. Recall that we let $\bS$ denote $m$ i.i.d draws from $\tilde{\sigma}$ and the weights be $\bw(x) = 1 / (m \tilde{\sigma}(x))$. We want to argue that, with high constant probability over the draw of $(\bS, \bw)$ and $\bPi \sim \calJ_{d,t}$, we have
\begin{align} 
\cost_{t,z}((\bPi(\bS), \bw), c^{**}) \leq (1+\eps) \cdot \cost_{t,z}(\bPi(X), c^{**}). \label{eq:step-4-goal}
\end{align}
First, note that the analogous version of (\ref{eq:expectation}) for $\cost_{t,z}(\bPi(X), c)$ continues to hold. In particular, for any map $\bPi$ in the support of $\calJ_{d,t}$ and $c^{**} \in \calC_t$ minimizing $\cost_{t,z}(\bPi(X), c)$, 
\begin{align} 
\Ex_{\bx \sim \tilde{\sigma}}\left[\dfrac{1}{\tilde{\sigma}(\bx)} \cdot \dfrac{f_{t}(\bPi(\bx), c^{**})^z}{\cost_{t,z}(\bPi(X), c^{**})^z} \right] = 1. \label{eq:expectation-after-Pi}
\end{align}
Hence, it remains to define $\tilde{\sigma}$ satisfying (\ref{eq:sensitivity-function}) which also satisfies one additional requirement. With high probability over $\bPi \sim \calJ_{d,t}$, we should have
\begin{align} 
\Ex_{\bx \sim \tilde{\sigma}}\left[ \left(\dfrac{1}{\tilde{\sigma}(\bx)} \cdot \dfrac{f_{t}(\bPi(\bx), c^{**})^z}{\cost_{t,z}(\bPi(X), c^{**})}  \right)^2\right] \lsim T. \label{eq:variance-after-map}
\end{align}
The above translates to saying, for most $\bPi \sim \calJ_{d,t}$ the variance of $\cost((\bPi(\bS), \bw), c^{**})$, when $m = O(T/\eps^2)$, is small. Once that is established, we may apply Chebyshev's inequality and conclude (\ref{eq:step-4-goal}) with high constant probability.\footnote{Since Steps~\ref{en:step-1}--\ref{en:step-4} only argued about the optimal $c^{**} \in \calC_{t}$, there is no need to ``union bound'' over all $c \in \calC_t$ in our arguments.} 

\subsubsection{The Circularity and How to Break It} One final technical hurdle arises. While one may define the sensitivity function $\sigma(x)$ to be exactly $\sup_{c \in \calC_d} f_d(x, c)^z / \cost_{d,z}(X,c)^z$ and automatically satisfy (\ref{eq:sensitivity-function}), it becomes challenging to argue that (\ref{eq:variance-after-map}) holds. In the end, the complexity we seek to optimize is the total sensitivity $T$, so there is flexibility in defining $\sigma$ while showing (\ref{eq:variance-after-map}) holds. In fact, sensitivity functions $\sigma$ which are computationally simple tend to be known, since an algorithm using coresets must quickly compute $\sigma(x)$ for every $x \in X$. 

The sensitivity functions $\sigma$ used in the literature (for instance, in \cite{FL11, VX12}) are defined with respect to an approximately optimal $c \in \calC_{d}$ (or bi-criteria approximation) for $\cost_{d,z}(X, c)$. Furthermore, the arguments used to show these function satisfy (\ref{eq:sensitivity-function}), which we will also employ for  (\ref{eq:variance-after-map}), crucially utilize the approximation guarantee on $c \in \calC_{d}$. The apparent circularity appears in approximation algorithms and also shows up in the analysis here:
\begin{itemize}
\item For $X \subset \R^d$, we identify the optimal $c \in \calC_{d}$ minimizing $\cost_{d,z}(X, c)$, and use $c$ to define $\sigma \colon X \to \R_{\geq 0}$. The fact that $c \in \calC_d$ is optimal (and therefore approximately optimal) allows us to use known arguments (in particular, those in \cite{VX12, VX12b}) to establish (\ref{eq:sensitivity-function}) and give an upper bound on $T$.
\item We use the proof of the ``easy'' direction to identify a solution $c' \in \calC_t$ with $\cost_{t,z}(\bPi(X), c') \leq (1+\eps) \cost_{d,z}(X, c)$ (recall this was used to establish the right-most inequality in (\ref{eq:goal})). From the analytical perspective, it is useful to think of $\sigma' \colon \bPi(X) \to \R_{\geq 0}$ as the function one would get from defining a sensitivity function like in the previous step with $c'$ instead of $c$. If we could show $c' \in \calC_t$ was approximately optimal for $\bPi(X)$, we could use \cite{VX12, VX12b} again to argue (\ref{eq:variance-after-map}). The circularity is the following. Showing $c' \in \calC_t$ is approximately optimal means showing an upper bound on $\min_{c \in \calC_t} \cost_{t,z}(\bPi(X), c)$ in terms of $\cost_{t,z}(\bPi(X),c')$. Since, we picked $\cost_{t,z}(\bPi(X), c')$ to be at most $ (1+\eps)\min_{c \in \calC_{d}} \cost_{d,z}(X,c)$, this is exactly what we sought to prove.
\end{itemize}
If ``approximately optimal'' above required $c' \in \calC_{t}$ be a $(1+\eps)$-approximation to the optimal $\bPi(X)$, we would have a complete circularity and be unable to proceed. However, similarly to the case of approximation algorithms, it suffices to have a poor approximation. Suppose we showed $c' \in \calC_{t}$ was a $C$-approximation, then, increasing $m$ by a factor depending on $C$ (which would affect the resulting dimensionality $t$) would account for this increase and drive the variance back down to $\eps^2$. Moreover, showing $c'$ is a $O(1)$-approximation with probability at least $0.99$ over $\bPi \sim \calJ_{d,t}$, given Steps~\ref{en:step-1}--\ref{en:step-4} is straight-forward. Instead of showing the stronger bound that $\cost_{t,z}((\bPi(\bS), \bw), c^{**}) \leq (1+\eps) \cost_{t,z}(\bPi(X), c^{**})$, we show that $\cost_{t,z}((\bPi(\bS), \bw), c^{**}) \leq O(1) \cdot \cost_{t,z}(\bPi(X), c^{**})$. The latter (loose) bound is a consequence of applying Markov's inequality to (\ref{eq:expectation-after-Pi}).

In summary, we overcome the circularity by going through Steps~\ref{en:step-1}--\ref{en:step-4} twice. In the first time, we show a weaker $O(1)$-approximation. Specifically, we show that $\bPi\sim \calJ_{d,t}$ preserves the cost of $\min_{c \in \calC_t} \cost_{d,z}(\bPi(X), c)$ up to factor $O(1)$. The first time around, we won't upper bound the variance in (\ref{eq:variance-after-map}), and we simply use Markov's inequality to (\ref{eq:expectation-after-Pi}) in order to prove a (loose) bound on Step~\ref{en:step-4}. Once we've established the $O(1)$-factor approximation, we are guaranteed that $c' \in \calC_t$ is a $O(1)$-approximation to $\min_{c \in \calC_t} \cost_{t,z}(\bPi(X), c)$. This means that, actually, the sensitivity sampling distribution $\tilde{\sigma}$ we had considered (when viewed as a sensitivity sampling distribution for $\bPi(X)$) gives estimators with bounded variance, as in (\ref{eq:variance-after-map}). In particular, going through Steps~\ref{en:step-1}--\ref{en:step-4} once again implies that $c' \in \calC_t$ was actually the desired $1\pm \eps$-approximation.

\section{Preliminaries}

We specify the properties we use from the distribution $\calJ_{d,t}$. We will refer to these as ``Johnson-Lindenstrauss'' distributions. Throughout the proof, we will often refer to $\calJ_{d,t}$ as given by a $t\times d$ matrix of i.i.d draws from $\calN(0, 1/t)$. The goal of specifying the useful properties is to use other ``Johnson-Lindenstrauss''-like distributions. The first property we need is that $\bPi \colon \R^d \to \R^t$ is a linear map, and that any $x, y \in \R^d$ satisfies
\[ \Ex_{\bPi \sim \calJ_{d,t}}\left[ \dfrac{\| \bPi(x) - \bPi(y)\|_2^2}{\|x - y\|_2^2}\right] = 1.\]
We use the standard property of $\calJ_{d,t}$, that $\bPi$ preserves distances with high probability, i.e., for any $x, y \in \R^{d}$,
\[ \Prx_{\bPi \sim \calJ_{d,t}}\left[ \left| \dfrac{\|\bPi(x) - \bPi(y)\|_2^2}{\|x-y\|_2^2} - 1 \right| \geq \eps \right] \leq e^{-\Omega(\eps^2 t)}. \]
More generally, we use the conclusion of the following lemma. We give a proof when $\calJ_{d,t}$ is a $t\times d$ i.i.d entries of $\calN(0, 1/t)$.
\begin{lemma}\label{lem:gaussian-guarantee}
Let $\calJ_{d,t}$ denote a Johnson-Lindenstrauss distribution over maps $\R^d \to \R^t$ given by a matrix multiplication on the left by a $t\times d$ matrix of i.i.d draws of $\calN(0, 1/t)$. If $t \gsim z / \eps^2$, then for any $x, y \in \R^d$,
\begin{align*}
\Ex_{\bPi \sim \calJ_{d.t}}\left[ \left( \dfrac{\| \bPi(x) - \bPi(y)\|_2^z}{\|x - y\|_2^z} - 1 \right)^+ \right] \leq \dfrac{(1+\eps)^z - 1}{100}.
\end{align*}
\end{lemma}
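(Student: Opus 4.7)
The plan is to reduce to bounding the positive part of $W^z - 1$, where $W = \|\bPi(u)\|_2$ for $u = (x-y)/\|x-y\|_2$; by linearity of $\bPi$ this normalization costs nothing. Since $\bPi$ acts on the unit vector $u$ by $t$ i.i.d.\ $\calN(0, 1/t)$ inner products, the random variable $tW^2$ is distributed as $\chi^2_t$, which is where all the sharp concentration enters.

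Starting from the tail-integral identity
\[
\Ex[(W^z - 1)^+] \;=\; \int_0^\infty \Pr[W \geq (1+\lambda)^{1/z}] \, d\lambda,
\]
I would apply Gaussian concentration to the $1/\sqrt{t}$-Lipschitz map $v \mapsto \|v\|_2/\sqrt{t}$ evaluated on the standard Gaussian $\sqrt{t}\,\bPi u \in \R^t$, obtaining $\Pr[W \geq 1+\mu] \leq e^{-\mu^2 t/2}$ for all $\mu > 0$ (up to a $1/t$ mean-versus-median correction that is absorbed into constants). The key step is converting this to a tail for $W^z$: from $e^x \geq 1+x$ one gets $(1+\lambda)^{1/z} - 1 \geq \log(1+\lambda)/z$, hence
\[
\Pr[W \geq (1+\lambda)^{1/z}] \;\leq\; \exp\!\left(-\frac{t\,\log^2(1+\lambda)}{2z^2}\right).
\]
I would then substitute $u = \log(1+\lambda)$ (so $d\lambda = e^u\,du$), turning the integral into $\int_0^\infty \exp(-tu^2/(2z^2) + u)\,du$; completing the square bounds this by $e^{z^2/(2t)} \cdot z\sqrt{2\pi/t}$.

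Plugging in $t \geq Cz/\eps^2$ for a large absolute constant $C$, the bound becomes $e^{z\eps^2/(2C)} \sqrt{2\pi z/C}\cdot \eps$. I would finish by splitting into two regimes relative to the target $((1+\eps)^z - 1)/100$. When $z\eps \leq 1$, Bernoulli's inequality gives $(1+\eps)^z - 1 \geq z\eps$, and the exponential factor is at most $e^{1/(2C)} \leq 2$, so one needs $\sqrt{2\pi z/C}\,\eps \lesssim z\eps/100$, which holds for any $C$ exceeding an absolute constant. When $z\eps \geq 1$, using $\log(1+\eps) \geq \eps/2$ yields $(1+\eps)^z - 1 \gtrsim e^{z\eps/2}$, and then the elementary inequality $e^{y/4}/\sqrt{y} \geq e^{1/2}/\sqrt{2}$ (minimized at $y = 2$) swallows the $\sqrt{z}\,\eps = \sqrt{z\eps\cdot\eps} \leq \sqrt{z\eps}$ factor, again for $C$ above an absolute constant.

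The main obstacle is exactly this transfer from sharp concentration of $W$ (with fluctuations of order $1/\sqrt{t}$) to a usable tail for $W^z$: a naive change of variables $s = W^z - 1$ introduces a factor of $z(1+v)^{z-1}$ that, if integrated crudely, forces a $(1+\eps)^z$ overhead one cannot afford. The logarithmic substitution is what makes the argument clean, converting the unwieldy multiplicative exponent $(1+\lambda)^{1/z}$ into an additive $\log(1+\lambda)/z$ and reducing the calculation to a shifted Gaussian integral.
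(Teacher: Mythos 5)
Your proposal is correct, and it takes a genuinely different route from the paper's proof. The paper writes $\|\bg\|_2^2$ as a rescaled $\chi^2_t$, splits off a cushion term $(1+\lambda)^z - 1$ for a free parameter $\lambda \sim \eps$, and then bounds $\Ex[(\|\bg\|_2^z - (1+\lambda)^z)^+]$ via its tail integral, changing variables from $\|\bg\|_2^z$ to $\|\bg\|_2^2$ (which produces the $v^{z/2-1}$ weight) and applying a Chernoff bound using the explicit closed-form log-MGF of a $\chi^2$ variable inside the integral. You instead apply Gaussian Lipschitz concentration directly to the norm $W = \|\bg\|_2$ (not its square), and the substitution $u = \log(1+\lambda)$ linearizes the $z$-th power, turning the tail integral into a shifted Gaussian integral with a closed form $e^{z^2/(2t)} z\sqrt{2\pi/t}$. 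You then close by a two-regime case split on $z\eps \lessgtr 1$, using Bernoulli's inequality in the small regime and $(1+\eps)^z \geq e^{z\eps/2}$ together with minimizing $e^{y/4}/\sqrt{y}$ in the large regime. The paper's cushion parameter $\lambda$ implicitly does the same regime balancing, but less transparently. Your log substitution is cleaner than the paper's $v^{z/2-1}$ manipulation, at the cost of making the case split explicit; both approaches are sound. One small clean-up: the "mean-versus-median correction" you mention is not actually needed, since $\Ex\|\bg\|_2 \leq \sqrt{\Ex\|\bg\|_2^2} = 1$ by Jensen, so the tail bound $\Pr[W \geq 1+\mu] \leq e^{-\mu^2 t / 2}$ holds with no slack.
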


\begin{proof}
We note that by the 2-stability of the Gaussian distribution, we have $\| \bPi(x) - \bPi(y)\|_2^2$ is equivalently distributed as $\|\bg\|_2^2 \cdot \|x - y\|_2^2$, where $\bg \sim \calN(0, I_t / t)$. Therefore, we have that for any $\lambda > 0$ which we will optimize shortly to be a small constant times $\eps$,
\begin{align*}
\Ex_{\bPi \sim \calJ(d,t)}\left[ \left(\dfrac{\|\bPi(x) - \bPi(y)\|_2^z}{\|x-y\|_2^z} - 1 \right)^+\right] &\leq (1+\lambda)^z - 1 + \Ex_{\bg \sim \calN(0, I_t/t)}\left[ (\| \bg\|_2^z - (1+\lambda)^z )^+\right].
\end{align*}
Furthermore, 
\begin{align*}
\Ex_{\bg \sim \calN(0, I_t/t)}\left[ \left( \| \bg \|_2^z - (1+\lambda)^z \right)^+\right] &= \int_{u:(1+\lambda)^z}^{\infty} \Prx\left[ \| \bg\|_2^z \geq u \right] du = \frac{z}{2} \int_{v:(1+\lambda)^2}^{\infty} \Prx\left[ \|\bg\|_2^2 \geq v\right] \cdot v^{z/2 - 1} dv.
\end{align*}
We will upper bound the probability that $\|\bg\|_2^2$ exceeds $v$ by the Chernoff-Hoeffding method. In particular, recall that $\|\bg\|_2^2$ when $\calN(0, I_t/t)$ is distributed as a $\chi^2$-random variable with $t$ degrees of freedom, rescaled by $1/t$, such that the moment generating function of $\|\bg\|_2^2$ has the following closed form solution whenever $\alpha < t/2$:
\[ \log\left(\Ex_{\bg \sim \calN(0, I_t/t)}\left[ \exp\left( \alpha \| \bg\|_2^2\right)\right]\right) = -\frac{t}{2} \log\left(1 - \frac{2\alpha}{t} \right) \leq \alpha + \frac{2\alpha^2}{t}  \]
In particular, for any $\alpha < t / 2$, we may upper bound
\begin{align*}
\frac{z}{2} \int_{v:(1+\lambda)^2}^{\infty} \Prx\left[ \|\bg\|_2^2 \geq v\right] v^{z/2 -1} &\leq \frac{z}{2} \cdot \exp\left(\alpha + \frac{2\alpha^2}{t} \right)\int_{v:(1+\lambda)^2}^{\infty} \exp\left( - \left(\alpha - \frac{z}{2} - 1 \right) v \right) dv \\
	&\leq \dfrac{z}{2\alpha - z - 2} \cdot \exp\left( \frac{2\alpha^2}{t} - \alpha \lambda + (1+\lambda) (z/2+ 1) \right) < \frac{\lambda}{100},
\end{align*}
by letting setting $\alpha = t \lambda / 10$ whenever $t \gsim z / \lambda^2$. Setting $\lambda$ to be a small constant of $\eps$ gives the desired guarantees.
\end{proof}

\begin{definition}[Subspace Embeddings]
Let $d \in \N$ and $A \subset \R^d$ denote a subspace of $\R^d$. For $\eps > 0$, a map $f \colon \R^d \to \R^t$ is an $\eps$-subspace embedding of $A$ if, for any $x \in A$,
\[ \frac{1}{1+\eps} \cdot \|f(x)\|_2 \leq \| x\|_2 \leq (1+\eps) \cdot \|f(x)\|_2. \]
\end{definition}

\begin{lemma}\label{lem:subspace-embedding}
Let $d \in \N$ and $A \subset \R^d$ be a subspace of dimension at most $k$. For $\eps, \delta \in (0,1/2)$, let $\calJ_{d,t}$ denote a Johnson-Lindenstrauss distribution over maps $\R^d \to \R^t$ given by a matrix multiplication on the left by a $t\times d$ matrix of i.i.d draws of $\calN(0, 1/t)$. If $t \sim (k + \log(1/\delta)) / \eps^2$, then $\bPi \sim \calJ_{d,t}$ is an $\eps$-subspace embedding of $A$ with probability at least $1 - \delta$. 
\end{lemma}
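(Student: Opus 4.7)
\textbf{Proof plan for Lemma~\ref{lem:subspace-embedding}.}

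The plan is the standard net argument: reduce the claim about the whole subspace to a finite union bound, then extend from the net to all unit vectors in $A$ by linearity of $\bPi$. Without loss of generality I may assume $A$ has dimension exactly $k$ (otherwise embed it in a $k$-dimensional superspace and prove the stronger statement). Since $\bPi$ is linear, it suffices to show that $(1+\eps)^{-1} \leq \|\bPi(x)\|_2 \leq 1+\eps$ for every unit vector $x$ in the unit sphere $S_A := \{x \in A : \|x\|_2 = 1\}$.

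First I would construct a $\gamma$-net $\calN \subset S_A$ for a small absolute constant $\gamma$ (say $\gamma = 1/4$). By the standard volume-packing argument, one can choose such a net of size at most $(3/\gamma)^k = e^{O(k)}$. For each fixed $y \in \calN$, the quoted tail bound for $\calJ_{d,t}$ (applied with $x = y$ and $y = 0$) gives
\begin{align*}
\Prx_{\bPi \sim \calJ_{d,t}}\Bigl[ \bigl| \|\bPi(y)\|_2^2 - 1 \bigr| \geq \eps/2 \Bigr] \leq e^{-\Omega(\eps^2 t)}.
\end{align*}
Union-bounding over the $e^{O(k)}$ points of $\calN$ and choosing the hidden constant in $t \gtrsim (k + \log(1/\delta))/\eps^2$ large enough, the event
\begin{align*}
\calE := \Bigl\{ \forall y \in \calN : \bigl| \|\bPi(y)\|_2 - 1 \bigr| \leq \eps/3 \Bigr\}
\end{align*}
holds with probability at least $1 - \delta$.

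It remains to promote the guarantee on $\calN$ to all of $S_A$. I would take $M := \sup_{x \in S_A} \|\bPi(x)\|_2$ and, for an arbitrary $x \in S_A$, choose $y \in \calN$ with $\|x - y\|_2 \leq \gamma$. On the event $\calE$, linearity of $\bPi$ gives
\begin{align*}
\|\bPi(x)\|_2 \leq \|\bPi(y)\|_2 + \|\bPi(x-y)\|_2 \leq (1+\eps/3) + \gamma M,
\end{align*}
and taking the supremum over $x \in S_A$ yields $M \leq (1+\eps/3) + \gamma M$, so $M \leq (1+\eps/3)/(1-\gamma) \leq 1+\eps$ for $\gamma$ a small constant and $\eps \in (0,1/2)$. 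For the lower bound, the same approximation gives $\|\bPi(x)\|_2 \geq \|\bPi(y)\|_2 - \gamma M \geq (1-\eps/3) - \gamma(1+\eps) \geq 1/(1+\eps)$, again provided $\gamma$ is chosen to be a sufficiently small absolute constant.

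The only mildly delicate step is calibrating the constants in $\gamma$, the net size bound, and the hidden constant in $t \gtrsim (k + \log(1/\delta))/\eps^2$ so that the two-sided bound comes out at exactly $1+\eps$ rather than $1 + O(\eps)$; this is done by starting the net argument with accuracy $\eps/3$ (or any other small constant fraction of $\eps$) and absorbing the resulting constant factor into the hidden constant in $t$. No truly novel idea is needed beyond the standard subspace embedding proof.
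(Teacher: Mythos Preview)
The paper states Lemma~\ref{lem:subspace-embedding} in its preliminaries without proof, treating it as a standard fact about Gaussian sketches. Your net argument is correct and is precisely the textbook proof of this result, so there is nothing to compare against; your write-up would serve perfectly well as the omitted justification.
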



\section{Center-based $(k,z)$-clustering}\label{sec:clustering}

\renewcommand{\cost}{\mathrm{cost}}

In the $(k,z)$-clustering problems, for any set $C \subset \R^d$ of $k$ points, and point $x \in \R^d$, we write
\[ \cost_z^z(x, C) = \min_{c \in C} \| x - c\|_2^2,\]
and for a subset $X \subset \R^d$,
\[ \cost_z^z(X, C) = \sum_{x \in X} \cost(x, C) = \sum_{x \in X} \min_{c \in C} \| x - c\|_2^z. \]
We extend the above notation to weighted subsets, where for a subset $S \subset \R^d$ with (non-negative) weights $w \colon S \to \R_{\geq 0}$, we write $\cost_z^z((S, w), C) = \sum_{x \in S} w(x) \min_{c \in C} \|x - c \|_2^z$. The main result of this section is the following theorem.

\begin{theorem}[Johnson-Lindenstrauss for Center-Based Clustering]\label{thm:center-based}
Let $X = \{ x_1,\dots, x_n \} \subset \R^d$ be any set of points, and let $C \subset \R^d$ denote the optimal $(k,z)$-clustering of $X$. For any $\eps \in (0, 1/2)$, suppose we let $\calJ_{d,t}$ be the distribution over Johnson-Lindenstrauss maps where
\[ t \gsim \frac{\log k + z\log(1/\eps)}{\eps^2},\]
 Then, with probability at least $0.9$ over the draw of $\bPi \sim \calJ_{d,t}$, 
\[ \frac{1}{1+\eps} \cdot \cost_z(X, C) \leq  \min_{\substack{C' \subset \R^t \\ |C'| = k}} \cost_z(\bPi(X), C') \leq (1+\eps) \cost_z(X, C). \] 
\end{theorem}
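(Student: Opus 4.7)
The right-hand inequality is the ``easy direction'' of Section~\ref{sec:overview}: set $C' = \bPi(C)$ and, for each $x \in X$ assigned to its center $c(x) \in C$, apply Lemma~\ref{lem:gaussian-guarantee} to the vector $x - c(x)$. Summing the resulting expectation bound over $x \in X$ yields $\Ex_{\bPi}[(\cost_z^z(\bPi(X), \bPi(C)) - (1+\eps)^z\cost_z^z(X,C))^+] \leq \tfrac{(1+\eps)^z - 1}{100}\cost_z^z(X,C)$, and Markov's inequality converts this into the required high-probability upper bound on $\min_{C'} \cost_z(\bPi(X), C')$. All the work is in the left-hand inequality, for which I follow the four-step template of Section~\ref{sec:overview}.

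Step 1 is the coreset construction. I would use the sensitivity-sampling algorithm of \cite{FL11,VX12,BFL16} for $(k,z)$-clustering: fix a constant-factor approximation $\tilde{C} \in \calC_d$, define the Varadarajan--Xiao sensitivity function $\sigma \colon X \to \R_{\geq 0}$ with respect to $\tilde{C}$ (which satisfies (\ref{eq:sensitivity-function}) with $T = O(k)$ up to $z$-dependent factors), and take $m$ i.i.d.\ samples $\bS$ from $\tilde\sigma = \sigma/T$ with weights $\bw(x) = 1/(m\tilde\sigma(x))$, with $m$ chosen large enough that $(\bS,\bw)$ is a weak $\eps$-coreset for $X$ with probability $\geq 0.99$ and $\log|\bS| = O(\log k + z\log(1/\eps))$. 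For Step 2, let $c^{**} \in \calC_t$ attain $\min_{c \in \calC_t}\cost_z(\bPi(X), c)$, and condition on the ``important event'' that $\bPi$ is $(1\pm\eps)$-bi-Lipschitz on the $O(|\bS|^2)$ pairs in $\bS\times \bS$; by a union bound of the standard JL tail bound from the preliminaries, this event holds with probability $\geq 0.99$ whenever $t \gsim (\log k + z\log(1/\eps))/\eps^2$. The pseudo-inverse $g \colon \bPi(\bS) \to \R^d$ defined by $g(\bPi(x)) = x$ is then $(1+O(\eps))$-Lipschitz, so Kirszbraun's theorem (as used in \cite{MMR19}) extends $g$ to a $(1+O(\eps))$-Lipschitz map on all of $\R^t$, and setting $c^* = (g(c_i^{**}))_{i=1}^k \in \calC_d$ yields $\cost_z((\bS,\bw), c^*) \leq (1+\eps)\cost_z((\bPi(\bS),\bw), c^{**})$. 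Combined with the coreset guarantee of Step 1, this gives $\cost_z(X, C) \leq (1+\eps)^2\cost_z((\bPi(\bS), \bw), c^{**})$.

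The hard part is the remaining inequality $\cost_z((\bPi(\bS),\bw), c^{**}) \leq (1+\eps)\cost_z(\bPi(X), c^{**})$. Equation~(\ref{eq:expectation-after-Pi}) is automatic in $\bPi$, but upgrading expectation to Chebyshev concentration requires the variance bound (\ref{eq:variance-after-map}), and the Varadarajan--Xiao argument for (\ref{eq:variance-after-map}) on the dataset $\bPi(X)$ with sensitivity function $\tilde\sigma$ needs $\bPi(C)$ to be an $O(1)$-approximation to $c^{**}$---essentially what I am trying to prove. I would break this circularity via the two-pass scheme suggested in the overview. On the first pass, drop the variance requirement and use only Markov's inequality on (\ref{eq:expectation-after-Pi}) to conclude $\cost_z((\bPi(\bS),\bw), c^{**}) = O(1)\cdot\cost_z(\bPi(X),c^{**})$ with constant probability; chaining through Steps 1--2 yields $\cost_z(X, C) = O(1)\cdot\cost_z(\bPi(X),c^{**})$, and combined with the easy direction this certifies $\bPi(C)$ as an $O(1)$-approximation in $\calC_t$. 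On the second pass, using $\bPi(C)$ in place of the unknown optimum in the Varadarajan--Xiao analysis for $\bPi(X)$ inflates $T$ by only a constant, so (\ref{eq:variance-after-map}) now holds, Chebyshev upgrades the estimate to the sharp $(1+\eps)$-factor, and a final union bound over the $O(1)$ important events plus a constant rescaling of $\eps$ complete the proof. The main obstacle is precisely this bootstrap: verifying that the sensitivity-sampling variance analysis of \cite{VX12,VX12b} survives applying $\bPi$ once we are given a constant-factor approximate solution in $\calC_t$.
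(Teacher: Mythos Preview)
Your proposal is correct and follows essentially the same route as the paper: the easy direction via Lemma~\ref{lem:gaussian-guarantee} and Markov, then for the hard direction the sensitivity-sampling coreset, the bi-Lipschitz event on $\bS$ together with Kirszbraun, and the two-pass bootstrap (Markov first, Chebyshev second) to break the circularity. The only notable differences are cosmetic: the paper defines $\sigma$ with respect to the \emph{optimal} $C$ rather than an approximation $\tilde C$ (since this is an analysis, not an algorithm), and it makes the second-pass variance control explicit by isolating an additional event $\bE_4$ bounding $\sum_{x}\bD_x^{2z}\sigma(x)$, where $\bD_x = \|\bPi(x)-\bPi(c(x))\|_2/\|x-c(x)\|_2$---this is exactly the ``verifying the variance analysis survives $\bPi$'' step you flag as the main obstacle.
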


There are two directions to showing dimension reduction: (1) the optimal clustering in the \emph{reduced space} is not too expensive, and (2) the optimal clustering in the reduced space is not too cheap. We note that (1) is simple because we can exhibit a clustering in the reduced space whose cost is not too high; however, (2) is much trickier, since we need to rule out a too-good-to-be-true clustering in the reduced space.

\subsection{Easy Direction: Optimum Cost Does Not Increase}

\begin{lemma}\label{lem:upper-bound}
Let $X = \{ x_1,\dots, x_n \} \subset \R^d$ be any set of points and let $C \subset \R^d$ of size $k$ be the centers of the optimal $(k,z)$-clustering of $X$. We let $\calJ_{d,t}$ be the distribution over Johnson-Lindenstrauss maps. If $t \gsim z/\eps^2$, then with probability at least $0.99$ over the draw of $\bPi \sim \calJ_{d,t}$, 
\[ \left( \sum_{x \in X} \| \bPi(x) - \bPi(c(x))\|_2^z \right)^{1/z} \leq (1+\eps) \cost_z(X, C), \]
and hence,
\[ \min_{\substack{C' \subset \R^t \\ |C'| = k}} \cost_z(\bPi(X), C') \leq (1+\eps) \min_{\substack{C \subset \R^d \\ |C| = k}} \cost_z(X, C).   \]  
\end{lemma}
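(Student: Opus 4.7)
The plan is to exhibit the candidate solution $C' = \bPi(C) \subset \R^t$, i.e., simply apply $\bPi$ to each center in the optimal $C$, and to show that this solution by itself achieves cost at most $(1+\eps)\cost_z(X,C)$. Since in $\R^t$ each point $\bPi(x)$ may be assigned to its nearest center in $\bPi(C)$ rather than necessarily to $\bPi(c(x))$, one gets the pointwise domination
\[ \cost_z(\bPi(X), \bPi(C))^z = \sum_{x \in X} \min_{c \in C} \|\bPi(x) - \bPi(c)\|_2^z \;\leq\; \sum_{x \in X} \|\bPi(x) - \bPi(c(x))\|_2^z, \]
so both inequalities in the lemma reduce to controlling the right-hand side by $(1+\eps)^z\cost_z(X,C)^z$.

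To control the right-hand side, the key idea is to pass to the nonnegative random variable
\[ \bY \;=\; \sum_{x \in X} \Bigl(\|\bPi(x) - \bPi(c(x))\|_2^z \;-\; \|x - c(x)\|_2^z\Bigr)^+, \]
which captures only the pairs whose squared-then-$z$-power distance \emph{increases} under $\bPi$. Because the trivial identity
\[ \sum_{x \in X} \|\bPi(x) - \bPi(c(x))\|_2^z \;\leq\; \sum_{x \in X} \|x - c(x)\|_2^z \;+\; \bY \;=\; \cost_z(X,C)^z + \bY \]
holds deterministically, it suffices to show that with probability at least $0.99$, $\bY \leq ((1+\eps)^z - 1)\cost_z(X,C)^z$.

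For this, apply Lemma~\ref{lem:gaussian-guarantee} pairwise: for each $x \in X$, using $t \gsim z/\eps^2$,
\[ \Ex_{\bPi}\!\left[\Bigl(\|\bPi(x) - \bPi(c(x))\|_2^z - \|x-c(x)\|_2^z\Bigr)^+\right] \;\leq\; \|x - c(x)\|_2^z \cdot \frac{(1+\eps)^z - 1}{100}. \]
Summing over $x \in X$ by linearity of expectation gives $\Ex[\bY] \leq \tfrac{(1+\eps)^z-1}{100}\cdot \cost_z(X,C)^z$, and Markov's inequality then yields $\Pr[\bY > ((1+\eps)^z-1)\cost_z(X,C)^z] \leq 1/100$. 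Combining the pieces and taking $z$-th roots produces the desired bound.

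There is no real obstacle in this direction; the crucial point (and the only subtlety worth flagging) is to work with the positive-part random variable $\bY$ rather than the raw difference. This avoids having to control decreases in length under $\bPi$ (which can in principle be large), since for the upper-bound direction decreases only help us; and it is exactly this asymmetry that makes Lemma~\ref{lem:gaussian-guarantee}'s one-sided $(\cdot)^+$ expectation sufficient, so that no subspace-embedding hypothesis or dependence on $n$ is incurred, and the dimension requirement is the mild $t \gsim z/\eps^2$.
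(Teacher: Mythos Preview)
Your proposal is correct and follows essentially the same approach as the paper's proof: exhibit $\bPi(C)$ as the candidate centers, bound the sum of positive parts $\sum_x(\|\bPi(x)-\bPi(c(x))\|_2^z-\|x-c(x)\|_2^z)^+$ in expectation via Lemma~\ref{lem:gaussian-guarantee} and linearity, then apply Markov's inequality and take $z$-th roots. The paper's argument is organized identically, with the same one-sided positive-part trick and the same invocation of Lemma~\ref{lem:gaussian-guarantee}.
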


\begin{proof}
For $x \in X$, let $c(x) \in C$ denote the closest point from $C$ to $x$. We compute the expected positive deviation from assigning $\bPi(x)$ to $\bPi(c(x))$, and note that the $\cost_z(\bPi(X), \bPi(C))$ can only be lower. Hence, if we can show
\begin{align}
\dfrac{1}{\cost_z^z(X, C)} \Ex_{\bPi \sim \calJ_{d,t}}\left[ \sum_{x\in X}\left( \left\| \bPi(x) - \bPi(c(x))\right\|_2^z - \| x - c(x) \|_2^z\right)^+ \right] \lsim \dfrac{(1+\eps)^z - 1}{100}, \label{eq:desired}
\end{align}
then by Markov's inequality, we will obtain $\cost_z^z(\bPi(X), \bPi(C)) \leq (1+\eps)^z \cost_z^z(X, C)$, and obtain the desired bound when raising to power $1/z$. This last part follows from Lemma~\ref{lem:gaussian-guarantee}.
\end{proof}

\subsection{Hard Direction: Optimum Cost Does Not Decrease} 

We now show that after applying a Johnson-Lindenstrauss map, the cost of the optimal clustering in the dimension-reduced space is not too cheap. This section will be significantly more difficult, and will draw on the following preliminaries.

\subsubsection{Preliminaries} 

\begin{definition}[Weak and Strong Coresets]\label{def:clustering-coreset}
Let $X = \{ x_1,\dots, x_n \} \subset \R^d$ be a set of points. A (weak) $\eps$-\emph{coreset} of $X$ for $(k,z)$-clustering is a subset $S \subset \R^d$ of points with weights $w \colon S \to \R_{\geq 0}$ such that, 
\[  \dfrac{1}{1+\eps} \cdot \min_{\substack{C \subset \R^d \\ |C| \leq k}} \cost_z(X, C) \leq \min_{\substack{C \subset \R^d \\ |C| \leq k}} \cost_z((S,w), C) \leq (1+\eps) \cdot  \min_{\substack{C \subset \R^d \\ |C| \leq k}} \cost_z(X, C). \]
The coreset $(S, w)$ is a strong $\eps$-coreset if, for all $C = \{ c_1,\dots, c_k\} \subset \R^d$, we have
\[ \dfrac{1}{1+\eps} \cdot \cost_z(X, C) \leq \cost_z((S, w), C)  \leq (1+\eps) \cdot \cost_z(X, C).  \]
\end{definition}

Notice that Definition~\ref{def:clustering-coreset} gives an approach to finding an approximately optimal $(k,z)$-clustering. Given $X \subset \R^d$, we find a (weak) $\eps$-coreset $(S, w)$ and find the optimal clustering $C \subset \R^d$ with respect to the coreset $(S, w)$. Then, we can deduce that a clustering which is optimal for the coreset is also approximately optimal for the original point set. A common and useful framework for building coresets is by utilizing the ``sensitivity'' sampling framework.

\newcommand{\frakS}{\mathfrak{S}}

\begin{definition}[Sensitivities]
Let $n, d \in \N$, and consider any set of points $X = \{ x_1,\dots, x_n \} \subset \R^d$, as well as $k \in \N$, $z \geq 1$. A sensitivity function $\sigma \colon X \to \R_{\geq 0}$ for $(k,z)$-clustering $X$ in $\R^d$ is a function satisfying, that for all $x \in X$,
\begin{align*}
\sup_{\substack{C \subset \R^d \\ |C| \leq k}} \dfrac{\cost_z^z(x, C)}{\cost_z^z(X, C)}  \leq \sigma(x).
\end{align*}
The total sensitivity of the sensitivity function $\sigma$ is given by
\[ \frakS_{\sigma} = \sum_{x \in X} \sigma(x). \]
For a sensitivity function, we let $\tilde{\sigma}$ denote the sensitivity sampling distribution, supported on $X$, which samples $x \in X$ with probability proportional to $\sigma(x)$. 
\end{definition}

The following lemma gives a particularly simple sensitivity sampling distribution, which will be useful for analyzing our dimension reduction procedure. The proof below will follow from two applications of the triangle inequality which we reproduce from Claim~5.6 in \cite{HV20}.
\begin{lemma}\label{lem:sens}
Let $n, d \in \N$ and consider a set of points $X = \{ x_1,\dots, x_n \} \subset \R^d$. Let $C \subset \R^d$ of size $k$ be optimal $(k,z)$-clustering of $X$, and let $c\colon X \to C$ denote the function which sends $x \in X$ to its closest point in $C$, and let $X_{c} \subset X$ be the set of points where $c(x) = c$. Then, the function $\sigma \colon X \to \R_{\geq 0}$ given by
 \[ \sigma(x) = 2^{z-1} \cdot \dfrac{\|x - c(x) \|_2^z}{\cost_z^z(X, C)} + \dfrac{2^{2z-1}}{|X_{c(x)}|} \]
is a sensitivity function for $(k,z)$-clustering $X$ in $\R^d$, satisfying
\[ \frakS_{\sigma} = 2^{z-1} + 2^{2z-1} \cdot k\]
\end{lemma}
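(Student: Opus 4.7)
The plan is to prove Lemma~\ref{lem:sens} in two stages: first a direct calculation of the total sensitivity $\frakS_\sigma$, and second a pointwise verification that the proposed $\sigma$ upper-bounds the worst-case ratio $\cost_z^z(x,C')/\cost_z^z(X,C')$ over all candidate centers $C' \subset \R^d$ with $|C'| \leq k$. The pointwise bound is what requires the two triangle inequalities alluded to in the lemma's statement.

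\textbf{Total sensitivity.} I would begin by summing $\sigma$ term-by-term. The first term contributes $2^{z-1} \sum_{x \in X} \|x - c(x)\|_2^z / \cost_z^z(X,C) = 2^{z-1}$, by definition of $\cost_z^z(X,C)$. For the second term I would group by cluster: $\sum_{x \in X} 1/|X_{c(x)}| = \sum_{c \in C} \sum_{x \in X_c} 1/|X_c| = \sum_{c \in C} 1 = k$, so this contributes $2^{2z-1} k$. Adding gives $\frakS_\sigma = 2^{z-1} + 2^{2z-1} k$.

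\textbf{Pointwise bound.} Fix any $C' \subset \R^d$ with $|C'| \leq k$ and any $x \in X$, and let $c' \in C'$ be the closest center in $C'$ to $c(x)$ so that $\|c(x) - c'\|_2^z = \cost_z^z(c(x), C')$. The first triangle inequality, combined with the convexity bound $(a+b)^z \leq 2^{z-1}(a^z + b^z)$ for $a,b \geq 0$, yields
\[\cost_z^z(x, C') \leq \|x - c'\|_2^z \leq 2^{z-1}\bigl(\|x - c(x)\|_2^z + \cost_z^z(c(x), C')\bigr).\]
The second triangle inequality bounds $\cost_z^z(c(x), C')$ via any $y \in X_{c(x)}$: for every such $y$, $\cost_z^z(c(x), C') \leq 2^{z-1}(\|c(x) - y\|_2^z + \cost_z^z(y, C'))$. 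Averaging over $y \in X_{c(x)}$ and recognizing that $\sum_{y \in X_{c(x)}} \|c(x) - y\|_2^z \leq \cost_z^z(X,C)$ and $\sum_{y \in X_{c(x)}} \cost_z^z(y, C') \leq \cost_z^z(X, C')$, I would get
\[\cost_z^z(c(x), C') \leq \frac{2^{z-1}}{|X_{c(x)}|}\bigl(\cost_z^z(X,C) + \cost_z^z(X,C')\bigr).\]

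\textbf{Combining via optimality.} The last step, which is the only place the optimality of $C$ enters, is to use $\cost_z^z(X,C) \leq \cost_z^z(X,C')$, so that the parenthesized sum is at most $2\,\cost_z^z(X,C')$. Dividing the first inequality by $\cost_z^z(X,C')$ and substituting gives
\[\frac{\cost_z^z(x, C')}{\cost_z^z(X, C')} \leq \frac{2^{z-1} \|x - c(x)\|_2^z}{\cost_z^z(X, C')} + \frac{2^{2z-1}}{|X_{c(x)}|} \leq \frac{2^{z-1} \|x - c(x)\|_2^z}{\cost_z^z(X, C)} + \frac{2^{2z-1}}{|X_{c(x)}|} = \sigma(x),\]
where the second inequality again uses $\cost_z^z(X,C) \leq \cost_z^z(X,C')$ to shrink the denominator of the first term.

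The only minor subtlety — not a real obstacle — is being careful that the replacement $\cost_z^z(X,C') \mapsto \cost_z^z(X,C)$ in the denominator of the first term is in the correct direction to match the stated form of $\sigma(x)$ (it is, because $C$ is optimal). Everything else is bookkeeping on the triangle/convexity inequality $(a+b)^z \leq 2^{z-1}(a^z+b^z)$ applied twice, exactly as the paper signals.
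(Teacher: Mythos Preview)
Your proof is correct and follows essentially the same two-triangle-inequality argument as the paper: split $\cost_z(x,C')$ through $c(x)$, then bound $\cost_z^z(c(x),C')$ by averaging a second triangle inequality over the cluster $X_{c(x)}$, and finally invoke optimality of $C$ to replace $\cost_z^z(X,C')$ by $\cost_z^z(X,C)$ in both places. The only cosmetic difference is that the paper routes the first triangle inequality through the point of $C'$ closest to $x$ (rather than to $c(x)$) and then sums over all of $X$ in the intermediate step, whereas you go directly through $\cost_z(c(x),C')$ and average only over $X_{c(x)}$; your presentation is arguably cleaner, but the substance is identical.
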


\begin{proof}
Consider any set $C' \subset \R^d$ of $k$ points, and let $c' \colon X \to C'$ be the function which sends each $x \in X$ to its closest point in $C'$. Then, we have
\begin{align}
\| x - c'(x)\|_2^z &\leq \left( \|x - c(x) \|_2 + \| c(x) - c'(x)\|_2\right)^z \leq 2^{z-1} \| x - c(x) \|_2^z + 2^{z-1} \| c(x) - c'(x) \|_2^z \label{eq:sens-1}\\
			&\leq 2^{z-1} \|x - c(x) \|_2^z + \dfrac{2^{z-1}}{|X_{c(x)}|} \sum_{x \in X} \|c(x) - c'(x)\|_2^z  \label{eq:sens-2} \\
			&\leq 2^{z-1} \|x - c(x)\|_2^z + \frac{2^{2(z-1)}}{|X_{c(x)}|} \left( \cost_z^z(X, C) + \cost_z^z(X, C') \right), \label{eq:sens-3}
\end{align}
where we used the triangle inequality and H\"{o}lder inequality in (\ref{eq:sens-1}), and added additional non-negative terms in (\ref{eq:sens-2}), and we finally apply the triangle inequality and H\"{o}lder's inequality once more in (\ref{eq:sens-3}). Hence, using the fact that $C$ is the optimal clustering, we have
\begin{align*}
\dfrac{\cost_z^z(x, C')}{\cost_z^z(X, C')} &\leq 2^{z-1} \cdot \dfrac{\|x - c(x)\|_2^z}{\cost(X, C')} + \dfrac{2^{2(z-1)}}{|X_{c(x)}|} \left( \dfrac{\cost_z^z(X, C)}{\cost_z^z(X, C')} + 1\right) \leq 2^{z-1} \cdot \dfrac{\|x - c(x)\|_2^z}{\cost_z^z(X, C)} + \dfrac{2^{2z - 1}}{|X_{c(x)}|}.
\end{align*}
The bound on $\frakS_{\sigma}$ follows from summing over all $x \in X$, noting the fact that $\sum_{x \in X} 1/|X_{c(x)}| = k$.
\end{proof}

The main idea behind the sensitivity sampling framework for building coresets is to sample from a sensitivity sampling distribution enough times in order to build a coreset. For this work, it will be sufficient to consider the following theorem of~\cite{HV20}, which shows that $\poly(k, 1/\eps^{z})$ draws from an appropriate sensitivity sampling distribution suffices to build strong $\eps$-coresets for $(k,z)$-clustering in $\R^d$.

\begin{theorem}[$\eps$-Strong Coresets from Sensitivity Sampling~\cite{HV20}]\label{thm:strong-coresets}
For any subset $X = \{ x_1,\dots, x_n \} \subset \R^d$ and $\eps \in (0,1/2)$. Let $C \subset \R^d$ of size $k$ be the optimal $(k,z)$-clustering of $X$, and let $\tilde{\sigma}$ denote the sensitivity sampling distribution of Lemma~\ref{lem:sens}.
\begin{itemize}
\item Let $(\bS, \bw)$ denote a random (multi-)set $\bS \subset X$ and $\bw \colon \bS \to \R_{\geq 0}$ given by, for $m = \poly(k, 1/\eps^z)$ iterations, sampling $\bx \sim \tilde{\sigma}$ i.i.d and letting $\bw(\bx) = 1/(m\tilde{\sigma}(x))$.
\item Then, with probability $1 - o(1)$ over the draw of $(\bS, \bw)$, it is an $\eps$-strong coreset for $X$.
\end{itemize}
\end{theorem}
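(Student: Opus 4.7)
The plan is to prove the strong coreset guarantee in three stages: (i) show that the reweighted sensitivity-sampling estimator is unbiased and bounded for each fixed candidate clustering $C$, (ii) obtain a per-$C$ Bernstein concentration bound, and (iii) upgrade the per-$C$ bound to hold uniformly over all $k$-tuples of centers via a net argument tailored to the low-dimensional structure of optimal $(k,z)$-clusterings.

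For stage (i), fix $C \subset \R^d$ with $|C| \le k$ and define $Y_C(x) = \cost_z^z(x,C)/\tilde{\sigma}(x)$. By the construction $\bw(\bx) = 1/(m\tilde{\sigma}(\bx))$, the estimator $\cost_z^z((\bS,\bw),C)$ is the empirical mean of $m$ i.i.d.\ copies of $Y_C(\bx)$ with $\bx \sim \tilde{\sigma}$, and $\Ex_{\bx \sim \tilde{\sigma}}[Y_C(\bx)] = \cost_z^z(X,C)$. The defining property of the sensitivity function in Lemma~\ref{lem:sens} gives the uniform upper bound $Y_C(x) \le \frakS_\sigma \cdot \cost_z^z(X,C)$ with $\frakS_\sigma = O(2^z k)$. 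Bernstein's inequality then yields, for any single $C$,
\[
\Prx\left[ \left| \cost_z^z((\bS,\bw),C) - \cost_z^z(X,C) \right| > \eps \cost_z^z(X,C) \right] \le \exp\bigl(-\Omega(m \eps^2 / \frakS_\sigma)\bigr),
\]
which, after raising to the $1/z$ power, translates into the desired $(1+\eps)$-approximation for that $C$.

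For stage (iii), the main technical work, we construct a (cost-relative) $\eps$-net $\calN$ over $k$-tuples of centers such that $\log |\calN| = \poly(k, z, \log(1/\eps))$ and such that, via a stability/triangle-inequality argument, $\cost_z((\bS,\bw),C)$ and $\cost_z(X,C)$ are each within a $(1+\eps)$ factor of the corresponding quantity for the nearest $C' \in \calN$. The crucial non-trivial ingredient is that the candidate $k$-tuples effectively live in a low-dimensional affine subspace: a Sohler--Woodruff-type result for $(k,z)$-clustering shows that the span of an approximately-optimal coreset can be taken of dimension $\poly(k/\eps)$, so one may replace arbitrary $C \subset \R^d$ by $k$-tuples inside this subspace at negligible cost. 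A standard volume-covering argument then gives the net, and union bounding over $\calN$ with failure probability $o(1/|\calN|)$ requires $m = \tilde{O}(\frakS_\sigma \log|\calN| / \eps^2) = \poly(k, 1/\eps^z)$.

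The main obstacle is precisely stage (iii): a naive $\eps$-net over $k$-tuples in $\R^d$ has size growing with $d$ (and with $n$ after discretizing to dataset points), so achieving the $\poly(k, 1/\eps^z)$ bound with no $n$ or $d$ dependence hinges on establishing the structural claim that it suffices to cover clusterings inside a subspace of dimension $\poly(k/\eps)$, together with a Lipschitz-type stability bound for $\cost_z(\cdot, C)$ under small perturbations of $C$ that holds simultaneously for $X$ and for $(\bS,\bw)$. Given stages (i)--(ii), the only parameter to optimize is $m$, which is chosen large enough (polynomially in $k, 1/\eps, 2^z$, and $\log|\calN|$) to make the union bound over $\calN$ succeed with probability $1 - o(1)$, completing the proof of the strong coreset guarantee.
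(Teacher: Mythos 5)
The paper does not actually prove Theorem~\ref{thm:strong-coresets}: it is imported as a black-box citation to~\cite{HV20}, and the authors use it as given. So there is no ``paper's own proof'' to compare against; the most useful thing to do is assess your sketch against the known structure of the argument in~\cite{HV20} and the sensitivity-sampling literature.

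Your three-stage outline is the right high-level template and matches the strategy used by~\cite{FL11, HV20}: unbiasedness and boundedness of the importance-sampling estimator, per-solution concentration, and a uniform bound via a cover of the solution space. Stage (i) is correct and your derivation that $Y_C(x) \leq \frakS_{\sigma} \cdot \cost_z^z(X,C)$ with $\frakS_{\sigma} = O(2^z k)$ follows exactly from the sensitivity property of Lemma~\ref{lem:sens}. Stage (ii) via Bernstein is also fine, as is the observation that the $1/z$-th power only helps.

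Two things in stage (iii) are off or glossed over. First, a quantitative error: if the structural reduction puts the effective centers into a subspace of dimension $\poly(k/\eps)$, then a volume-covering net over $k$-tuples of centers in that subspace has $\log|\calN| = \poly(k/\eps)\cdot\log(1/\eps)$, i.e., polynomial in $1/\eps$, not $\poly(k,z,\log(1/\eps))$ as you claim. This does not break the final bound because $m$ can still be taken $\poly(k, 2^z, 1/\eps)$, but the intermediate claim as written is wrong. Second, and more substantively, the phrase ``a Lipschitz-type stability bound for $\cost_z(\cdot,C)$ under small perturbations of $C$ that holds simultaneously for $X$ and $(\bS,\bw)$'' is where the real technical work hides and where a naive metric $\eps$-net argument is known to fail: $\cost_z^z(x,C)$ is not Lipschitz in $C$ uniformly over $x$ when points are far from all centers, so you must cover in the metric induced by the normalized functions $g_C(x) = \cost_z^z(x,C)/(\tilde{\sigma}(x)\cost_z^z(X,C))$ rather than in parameter space, and control the covering number via pseudo-dimension/shattering-dimension machinery or a chaining argument (as~\cite{HV20} and~\cite{FL11, BFL16} do). Without that substitution the union bound you describe is not justified. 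If you replace the naive net by a function-class covering-number bound, the outline goes through and is essentially the approach taken in~\cite{HV20}.
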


\begin{theorem}[Kirszbraun theorem]
Let $Y \subset \R^{d_1}$ and $\phi \colon Y \to \R^{d_2}$ be an $L$-Lipschitz map (with respect to Euclidean norms on $\R^{d_1}$ and $\R^{d_2}$). There exists a map $\tilde{\phi} \colon \R^{d_1} \to \R^{d_2}$ which is $L$-Lipschitz extending $\phi$, i.e., $\phi(x) = \tilde{\phi}(x)$ for all $x \in Y$.
\end{theorem}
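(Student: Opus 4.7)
The plan is to prove the theorem in two stages. First I would reduce the problem to the \emph{one-point extension} statement: given an $L$-Lipschitz map $\phi : Y \to \R^{d_2}$ and a single new point $x^* \in \R^{d_1} \setminus Y$, there exists $z \in \R^{d_2}$ with $\|z - \phi(y)\|_2 \leq L \|x^* - y\|_2$ for every $y \in Y$. Once this is established, a standard Zorn's lemma argument on the poset of $L$-Lipschitz partial extensions of $\phi$ (ordered by inclusion of their domains, with the chain condition verified by taking unions) produces a maximal $L$-Lipschitz extension; the one-point claim forces its domain to be all of $\R^{d_1}$.

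The one-point extension amounts to showing that the intersection
\[ K \;=\; \bigcap_{y \in Y} \overline{B}\bigl(\phi(y),\, L\|x^* - y\|_2\bigr) \]
of closed Euclidean balls in $\R^{d_2}$ is non-empty. By Helly's theorem in $\R^{d_2}$, combined with a standard compactness argument (one of the balls is compact, so it suffices that every finite sub-collection has a common point, and further that every sub-collection of at most $d_2+1$ balls does), I can reduce to the case where $Y = \{y_1, \dots, y_N\}$ is finite.

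For finite $Y$, the strategy is to consider the continuous convex function
\[ F(z) \;=\; \max_{i=1,\dots,N} \frac{\|z - \phi(y_i)\|_2}{L \|x^* - y_i\|_2}, \]
pick a minimizer $z^* \in \R^{d_2}$ (existence via coercivity), and prove $\mu := F(z^*) \leq 1$, which is exactly the statement $z^* \in K$ after scaling. Standard convex analysis (or a KKT argument applied to the active-constraint set $I$) shows $z^*$ is a convex combination $z^* = \sum_{i \in I} \alpha_i \phi(y_i)$ of the active images, with $\|z^* - \phi(y_i)\|_2 = \mu L \|x^* - y_i\|_2$ for each $i \in I$. The decisive identity, which follows by expanding using $\sum_{i \in I} \alpha_i (z^* - \phi(y_i)) = 0$, is
\[ \sum_{i,j \in I} \alpha_i \alpha_j \|\phi(y_i) - \phi(y_j)\|_2^2 \;=\; 2 \sum_{i \in I} \alpha_i \|z^* - \phi(y_i)\|_2^2. \]
The analogous computation in $\R^{d_1}$, applied to $x^*$ and the same weights $\alpha_i$ (where now $x^*$ need not be a minimizer, yielding only an inequality), gives
\[ \sum_{i,j \in I} \alpha_i \alpha_j \|y_i - y_j\|_2^2 \;\leq\; 2 \sum_{i \in I} \alpha_i \|x^* - y_i\|_2^2. \]
Substituting the tight relation $\|z^* - \phi(y_i)\|_2^2 = \mu^2 L^2 \|x^* - y_i\|_2^2$ into the first identity, using the hypothesis $\|\phi(y_i) - \phi(y_j)\|_2 \leq L \|y_i - y_j\|_2$ on the left, and combining with the second inequality forces $\mu^2 \leq 1$.

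The main obstacle is precisely this last geometric/algebraic step: showing $F(z^*) \leq 1$. All of the earlier reductions (Zorn, Helly, picking a minimizer) are formal, but producing $\mu \leq 1$ genuinely uses the Euclidean (Hilbert) structure via the parallelogram-style identity above. Indeed, Kirszbraun's theorem is known to fail in general normed spaces (e.g.\ with $\ell_\infty$ on the target), so the proof must invoke an inner-product property at this point; this is where care is needed.
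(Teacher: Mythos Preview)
Your proposal is a correct and standard sketch of the classical proof of Kirszbraun's theorem: reduce to the one-point extension by Zorn's lemma, rephrase one-point extension as non-emptiness of an intersection of closed balls, pass to a finite family via Helly plus compactness, and then handle the finite case by minimizing the maximal normalized distance and exploiting the Hilbert-space variance identity to force $\mu \le 1$. The algebra you indicate checks out, and you correctly flag that the Euclidean inner-product structure is indispensable at the final step.

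However, the paper does \emph{not} provide its own proof of this statement. Kirszbraun's theorem is quoted in the preliminaries as a known black-box result and is invoked only once, in the proof of Lemma~\ref{lem:combine-events}, to extend the inverse map $\bPi^{-1}$ from $\bPi(\bS)$ to all of $\R^t$ while keeping the Lipschitz constant $(1+\eps)$. So there is no ``paper's proof'' to compare against; your write-up goes strictly beyond what the paper contains. If your goal was to supply a self-contained justification, your outline is fine (with the usual caveat that the KKT step---that the minimizer $z^*$ lies in the convex hull of the active images---deserves one clean sentence about the subdifferential of a pointwise maximum of smooth convex functions).
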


\subsection{The Important Events} 

We now define the important events which will allow us to prove that the optimum $(k,z)$-clustering after dimension reduction does not decrease substantially. We first define the events, and then we prove that if the events are all satisfied, then we obtain our desired lower bound.

\begin{definition}[The Events]\label{def:events}
Let $X = \{ x_1,\dots, x_n \} \subset \R^d$ and $C \subset \R^d$ of size $k$ be centers for an optimal $(k,z)$-clustering of $X$, and $\tilde{\sigma}$ is the sensitivity sampling distribution of $X$ with respect to $C$ as in Lemma~\ref{lem:sens}. We will consider the following experiment, 
\begin{enumerate}
\item We generate a sample $(\bS, \bw)$ by sampling from $\tilde{\sigma}$ for $m = \poly(k, 1/\eps^{z})$ i.i.d iterations $\bx \sim \tilde{\sigma}$ and set $\bw(\bx) = 1/ (m\tilde{\sigma}(\bx))$.
\item Furthermore, we sample $\bPi \sim \calJ_{d, t}$ which is a Johnson-Lindenstrauss map $\R^d \to \R^t$. 
\item We let $\bS' = \bPi(\bS) \subset \R^t$ denote the image of $\bPi$ on $\bS$.
\end{enumerate}
The events are the following:
\begin{itemize}
\item $\bE_1$ : The weighted (multi-)set $(\bS, \bw)$ is a weak $\eps$-coreset for $(k,z)$-clustering $X$ in $\R^d$.
\item $\bE_2$ : The map $\bPi \colon \bS \to \bS'$, given by restricting $\bPi$ is $(1+\eps)$-bi-Lipschitz.
\item $\bE_3(\beta)$ : We let $\bC' \subset \R^t$ of size $k$ be the optimal centers for $(k,z)$-clustering $\bPi(X)$ in $\R^t$. The weighted (multi-)set $(\bPi(\bS), \bw)$ satisfies 
\[ \cost_z^z((\bPi(\bS), \bw), \bC') \leq \beta \cdot \cost_z^z(\bPi(X), \bC'). \]
\end{itemize}
\end{definition}

\begin{lemma}\label{lem:combine-events}
Let $X = \{ x_1,\dots, x_n \} \subset \R^d$, and suppose $(\bS, \bw)$ and $\bPi \colon \R^d \to \R^t$ satisfy events $\bE_1,\bE_2$ and $\bE_3(\beta)$, then, 
\begin{align*}
\min_{\substack{C' \subset \R^t \\ |C'| = k}} \cost_z(\bPi(X), C') \geq \dfrac{1}{\beta^{1/z} (1+\eps)^{1+1/z}} \cdot \min_{\substack{C \subset \R^d \\ |C| = k}} \cost_z(X, C).
\end{align*}
\end{lemma}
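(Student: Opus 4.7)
The plan is to construct, from the optimal centers $\bC' \subset \R^t$ realizing $\min_{|C'|=k}\cost_z(\bPi(X), C')$, a candidate $k$-center set $C^* \subset \R^d$ in the original space whose cost on the weighted coreset $(\bS, \bw)$ is comparable to $\cost_z((\bPi(\bS), \bw), \bC')$. Then I chain the three events $\bE_1$, $\bE_2$, $\bE_3(\beta)$ together to relate $\min_C \cost_z(X, C)$ to $\cost_z(\bPi(X), \bC')$.

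The first step is to use $\bE_2$ together with the Kirszbraun theorem. Since $\bPi|_{\bS}$ is $(1+\eps)$-bi-Lipschitz onto $\bS' := \bPi(\bS)$, the inverse $\bPi^{-1} \colon \bS' \to \bS$ is a well-defined $(1+\eps)$-Lipschitz map. Kirszbraun extends it to a $(1+\eps)$-Lipschitz map $\tilde{\phi} \colon \R^t \to \R^d$ agreeing with $\bPi^{-1}$ on $\bS'$. Setting $C^* := \tilde{\phi}(\bC')$ (so $|C^*| \leq k$), the identity $\tilde{\phi}(\bPi(x)) = x$ for $x \in \bS$ combined with the Lipschitz bound gives, for every $x \in \bS$ and every $c \in \bC'$, the pointwise inequality $\|x - \tilde{\phi}(c)\|_2 \leq (1+\eps)\|\bPi(x) - c\|_2$. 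Taking a min over $c$, raising to the $z$-th power and summing against $\bw$ yields
\[
\cost_z^z((\bS, \bw), C^*) \;\leq\; (1+\eps)^z \cdot \cost_z^z((\bPi(\bS), \bw), \bC').
\]

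The second step is to cash in $\bE_3(\beta)$, which directly bounds the right-hand side by $\beta \cdot \cost_z^z(\bPi(X), \bC')$. The third step uses the weak $\eps$-coreset property in $\bE_1$ to pass from the coreset back to $X$: $\min_C \cost_z(X, C) \leq (1+\eps) \min_C \cost_z((\bS, \bw), C) \leq (1+\eps) \cost_z((\bS, \bw), C^*)$. Chaining these three bounds and taking a $z$-th root produces the claimed factor $\beta^{1/z}(1+\eps)^{1+1/z}$, provided one is careful to apply the weak-coreset inequality on the $z$-th power scale so that only one of the $(1+\eps)$ factors survives outside the $z$-th root.

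I do not expect a real obstacle: the three events are tailored precisely so that Kirszbraun produces $C^*$, $\bE_3$ transfers from the projected coreset to all of $\bPi(X)$, and $\bE_1$ transfers back to $X$. The only care needed is bookkeeping of when the $z$-th power is applied and removed, so that the Lipschitz factor, the $\beta$ factor, and the coreset slack combine with the sharp exponents claimed.
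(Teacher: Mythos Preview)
Your proposal is correct and follows essentially the same approach as the paper: both use $\bE_2$ to view $\bPi^{-1}|_{\bS'}$ as a $(1+\eps)$-Lipschitz map, extend it to all of $\R^t$ via Kirszbraun, pull back the optimal centers $\bC'$ to obtain $C^*\subset\R^d$, and then chain $\bE_3(\beta)$ and the weak-coreset property $\bE_1$ in exactly the way you describe. The only difference is the order in which the inequalities are invoked, and your remark about applying the coreset bound on the $z$-th power scale matches how the paper combines the factors to obtain the exponent $1+1/z$.
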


\begin{proof}
Let $\bC' \subset \R^t$ of size $k$ denote the centers which give the optimal $(k,z)$-clustering of $\bPi(X)$ in $\R^t$. Then, by $\bE_3$, 
\begin{align*}
\cost_z^z(\bPi(X), \bC') \geq (1/\beta) \cdot \cost_z^z((\bPi(\bS), \bw), \bC').
\end{align*}
Now, we use Kirszbraun's Theorem, and extend $\bPi^{-1} \colon \R^t \to \R^d$ in a $(1+\eps)$-Lipschitz manner. Hence, 
\[ \cost_z((\bPi(\bS), \bw), \bC') \geq \dfrac{1}{1+\eps} \cdot \cost_z((\bS, \bw), \bPi^{-1}(\bC')) \geq \dfrac{1}{1+\eps} \cdot \min_{\substack{C'' \subset \R^d \\ |C''| = k}} \cost_z((\bS, \bw), C''). \]
Finally, using the fact that $(\bS, \bw)$ is an $\eps$-weak coreset, we may conclude that 
\[ \cost_z^z((\bS, \bw), C'') \geq \dfrac{1}{1+\eps} \cdot \cost_z^z(X, C'').   \]
\end{proof}

We now turn to showing that an appropriate setting of parameters implies that the events occur often. For the first event, Theorem~\ref{thm:strong-coresets} from \cite{HV20} implies event $\bE_1$ occurs with probability $1-o(1)$. We state the usual guarantees of the Johnson-Lindenstrauss transform, which is what we need for event $\bE_2$ to hold. 
\begin{lemma}\label{lem:jl}
Let $S \subset \R^d$ be any set of $m$ points, and $\calJ_{d, t}$ denote the Johnson-Lindenstrauss map, with
\[ t \gsim \dfrac{\log m}{\eps^2}. \]
Then, with probability $0.99$ over the draw of $\bPi \sim \calJ_{d,t}$, $\bPi \colon S \to \bPi(S)$ is $(1\pm\eps)$-bi-Lipschitz, and hence, $\bE_2$ occurs with probability $0.99$.
\end{lemma}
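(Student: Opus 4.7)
The plan is a direct application of the pairwise distance preservation property of $\calJ_{d,t}$ combined with a union bound over pairs. First I would invoke the standard concentration bound stated in the preliminaries: for every fixed pair $x, y \in \R^d$,
\[ \Prx_{\bPi \sim \calJ_{d,t}}\left[ \left| \dfrac{\|\bPi(x) - \bPi(y)\|_2^2}{\|x-y\|_2^2} - 1 \right| \geq \eta \right] \leq e^{-\Omega(\eta^2 t)}, \]
applied with a parameter $\eta = \Theta(\eps)$ chosen small enough that $(1+\eta)^{1/2} \leq 1+\eps$ and $(1-\eta)^{-1/2} \leq 1+\eps$, so that squared-distance preservation upgrades to the bi-Lipschitz condition in the lemma statement.

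Next I would take a union bound over the at most $\binom{m}{2} \leq m^2$ pairs of distinct points in $S$. The probability that some pair suffers distortion larger than $\eta$ is at most $m^2 \cdot e^{-\Omega(\eta^2 t)}$. Setting $t \gsim \log m / \eps^2$ with a large enough absolute constant drives this failure probability below $0.01$, so that with probability at least $0.99$ every pair in $S$ has its squared distance preserved within a $(1 \pm \eta)$-factor simultaneously.

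On this good event, taking square roots and using our choice of $\eta$ yields
\[ \dfrac{1}{1+\eps}\|x-y\|_2 \leq \|\bPi(x) - \bPi(y)\|_2 \leq (1+\eps) \|x-y\|_2 \quad \text{for all } x, y \in S, \]
which is precisely the statement that the restriction $\bPi \colon S \to \bPi(S)$ is $(1\pm\eps)$-bi-Lipschitz. This in turn is the content of event $\bE_2$ in Definition~\ref{def:events}, so the same calculation shows $\bE_2$ occurs with probability at least $0.99$.

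There is really no obstacle here: the proof is routine and the only thing to be careful about is the conversion between squared-distance distortion (which is what the Johnson-Lindenstrauss tail bound directly gives) and the bi-Lipschitz guarantee stated in the lemma, which only changes the absolute constants hidden in $\gsim$. The lemma is stated here primarily to name event $\bE_2$ and record the dimension requirement $t = \Omega(\log m/\eps^2)$, which is the quantity that will combine with the coreset size bound $m = \poly(k,1/\eps^z)$ from Theorem~\ref{thm:strong-coresets} to yield the final target dimension in Theorem~\ref{thm:center-based}.
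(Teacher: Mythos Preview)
Your proposal is correct and is exactly the standard Johnson-Lindenstrauss argument; the paper itself does not write out a proof of this lemma, treating it as the ``usual guarantee'' of the Johnson-Lindenstrauss transform (it simply states the lemma after recalling the pairwise concentration bound in the preliminaries). Your union-bound-over-$\binom{m}{2}$-pairs derivation is precisely what is being invoked.
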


\subsubsection{A Bad Approximation Guarantee} 

\begin{lemma}[Warm-Up Lemma]\label{lem:bad-approx}
Fix any $\Pi \in \calJ_{d,t}$ and let $C' \subset \R^t$ denote the optimal centers for $(k,z)$-clustering of $\Pi(X)$, then with probability at least $0.99$ over the draw of $(\bS, \bw)$ as per Definition~\ref{def:events},
\begin{align*}
\sum_{x \in \bS} \bw(x) \cdot \min_{c \in C'} \| \Pi(x) - c\|_2^z \leq 100 \cdot \cost_z^z(\Pi(x), C'),
\end{align*}
with probability at least $0.99$. In other words, $\bE_3(100)$ holds with probability at least $0.99$.
\end{lemma}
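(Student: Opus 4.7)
The plan is to interpret the left-hand side as an unbiased estimator of $\cost_z^z(\Pi(X), C')$ and then apply Markov's inequality. This is essentially the ``loose'' use of (\ref{eq:expectation-after-Pi}) described in the overview (the first pass through Steps~\ref{en:step-1}--\ref{en:step-4}), before any variance bound is needed.

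Concretely, $\bS$ consists of $m = \poly(k, 1/\eps^z)$ i.i.d.\ samples $\bx_1, \dots, \bx_m \sim \tilde{\sigma}$, with $\bw(\bx_i) = 1/(m\tilde\sigma(\bx_i))$. Hence
\[ \sum_{x \in \bS} \bw(x) \cdot \min_{c \in C'} \|\Pi(x) - c\|_2^z = \frac{1}{m}\sum_{i=1}^m \frac{\min_{c \in C'}\|\Pi(\bx_i) - c\|_2^z}{\tilde{\sigma}(\bx_i)}. \]
Each summand is a nonnegative random variable, and since $\Pi$ is fixed, (\ref{eq:expectation-after-Pi}) (or equivalently, expanding $\bx \sim \tilde\sigma$ over $X$) yields
\[ \Ex_{\bx \sim \tilde\sigma}\!\left[\frac{1}{\tilde\sigma(\bx)} \min_{c \in C'}\|\Pi(\bx) - c\|_2^z\right] = \sum_{x \in X} \min_{c \in C'}\|\Pi(x) - c\|_2^z = \cost_z^z(\Pi(X), C'). \]
Averaging over the $m$ i.i.d.\ samples preserves this expectation.

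Finally, since the random variable in question is nonnegative with expectation $\cost_z^z(\Pi(X), C')$, Markov's inequality directly gives
\[ \Prx_{(\bS, \bw)}\!\left[\sum_{x \in \bS}\bw(x)\cdot \min_{c \in C'}\|\Pi(x) - c\|_2^z > 100\cdot \cost_z^z(\Pi(X), C')\right] \leq \frac{1}{100}, \]
which is exactly the event $\bE_3(100)$ in Definition~\ref{def:events}.

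\textbf{Main obstacle.} There is essentially none, which is the point: the lemma is the ``easy but lossy'' building block used to break the circularity in the overview. The only thing to be careful about is that $\Pi$ is held fixed throughout and the randomness is over $(\bS,\bw)$ only---this ensures $C' \subset \R^t$ (which depends on $\Pi(X)$, not on $\bS$) can be treated as a fixed solution when applying the expectation identity. The more delicate variance bound corresponding to (\ref{eq:variance-after-map}), needed to upgrade $100$ to $1+\eps$, is deferred to the second pass once an $O(1)$-approximation guarantee is in hand.
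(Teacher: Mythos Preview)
Your proposal is correct and matches the paper's proof essentially line for line: both rewrite the weighted sum as an empirical average of $\frac{1}{\tilde\sigma(\bx)}\min_{c\in C'}\|\Pi(\bx)-c\|_2^z$, observe that its expectation over $\bx\sim\tilde\sigma$ equals $\cost_z^z(\Pi(X),C')$, and finish with Markov's inequality. Your remark that $\Pi$ (and hence $C'$) is fixed while the randomness is only over $(\bS,\bw)$ is exactly the point, and the paper makes the same implicit use of it.
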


\begin{proof}
For any $x \in X$, let $c'(x) \in C'$ denote the point in $C'$ closest to $\Pi(x)$. Then, we note
\begin{align}
\sum_{x \in \bS} \bw(x) \cdot \| \Pi(x) - c'(x) \|_2^z = \Ex_{\bx \sim \bS}\left[ \frac{1}{\tilde{\sigma}(\bx)} \cdot \| \Pi(\bx) - c'(\bx)\|_2^z \right], \label{eq:estimator-sample-S}
\end{align}
so that in expectation over $\bS$, we have
\begin{align*}
\Ex_{\bS}\left[ \Ex_{\bx \sim \bS}\left[ \frac{1}{\tilde{\sigma}(\bx)} \cdot \| \Pi(\bx) - c'(\bx)\|_2^z \right]\right] = \Ex_{\bx \sim \tilde{\sigma}}\left[\frac{1}{\tilde{\sigma}(\bx)} \cdot \| \Pi(\bx) - c'(\bx)\|_2^z \right] = \cost_{z}^z(\Pi(X), C'). 
\end{align*}
By Markov's inequality, we obtain our desired bound.
\end{proof}

\begin{corollary}\label{cor:bad-approx}
Let $X = \{ x_1, \dots, x_n \} \subset \R^d$ be any set of points, and $C \subset \R^d$ of size $k$ be centers for optimally $(k,z)$-clustering $X$. For any $\eps \in (0,1/2)$, let $\calJ_{d,t}$ be the Johnson-Lindenstrauss map with
\[ t \gsim \dfrac{z \log (1/\eps) + \log k}{\eps^2},\]
Then, with probability at least $0.97$ over the draw of $\bPi \sim \calJ_{d,t}$,
\begin{align}
\dfrac{1}{100^{1/z} (1+\eps)^{1+1/z}} \cdot \cost_z\left(X, C \right) \leq \min_{\substack{C' \subset \R^t \\ |C'| = k}} \cost_{z}(\bPi(X), C'). \label{eq:optimum-not-decreased}
\end{align}
\end{corollary}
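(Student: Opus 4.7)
The corollary is essentially a packaging result: it asserts that the three events $\bE_1$, $\bE_2$, and $\bE_3(100)$ from Definition~\ref{def:events} occur simultaneously with probability at least $0.97$ under the given setting of $t$, after which the quantitative conclusion follows by applying Lemma~\ref{lem:combine-events} with $\beta = 100$, whose output factor is exactly $100^{1/z}(1+\eps)^{1+1/z}$.

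Concretely, I would execute the sampling procedure from Definition~\ref{def:events}: draw $(\bS, \bw)$ by taking $m = \poly(k, 1/\eps^z)$ i.i.d.\ samples from the sensitivity distribution $\tilde{\sigma}$ associated (via Lemma~\ref{lem:sens}) with the optimal center set $C$, and independently draw $\bPi \sim \calJ_{d,t}$. Then I bound each failure probability in turn. For $\bE_1$, Theorem~\ref{thm:strong-coresets} applied to this same $\tilde{\sigma}$ and this same $m$ guarantees that $(\bS,\bw)$ is a strong $\eps$-coreset, and hence also a weak $\eps$-coreset, for $X$ with probability $1 - o(1)$; this probability depends only on the draw of $(\bS,\bw)$. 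For $\bE_2$, I would condition on the (at most $m$) points of $\bS$ and invoke Lemma~\ref{lem:jl}: since $m = \poly(k,1/\eps^z)$ gives $\log m = O(\log k + z\log(1/\eps))$, the hypothesis $t \gsim (z\log(1/\eps) + \log k)/\eps^2$ supplies exactly the target dimension the lemma requires, so $\bPi$ is $(1\pm\eps)$-bi-Lipschitz on $\bS$ with probability at least $0.99$ over $\bPi$, and Fubini removes the conditioning. For $\bE_3(100)$, Lemma~\ref{lem:bad-approx} shows that for \emph{every} fixed $\Pi$ in the support of $\calJ_{d,t}$, the event fails with probability at most $0.01$ over the coreset draw; a second application of Fubini converts this into an unconditional bound on the joint distribution.

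A union bound over the three failure events yields simultaneous occurrence of $\bE_1,\bE_2,\bE_3(100)$ with probability at least $1 - o(1) - 0.01 - 0.01 \geq 0.97$, and Lemma~\ref{lem:combine-events} then produces exactly inequality (\ref{eq:optimum-not-decreased}). I do not anticipate a genuine obstacle: all of the substantive content has already been carried out in the preceding lemmas, and the dimension bound is dictated purely by Lemma~\ref{lem:jl} applied to $\bS$. The only mild care needed is in the order of quantifiers, since $\bE_2$ and $\bE_3(100)$ each mix randomness over $\bS$ and $\bPi$; this is the reason for the Fubini remarks above, which let me decouple the two sources of randomness into a clean union bound.
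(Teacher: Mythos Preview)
Your proposal is correct and follows essentially the same approach as the paper: sample $(\bS,\bw)$ and $\bPi$ as in Definition~\ref{def:events}, invoke Theorem~\ref{thm:strong-coresets}, Lemma~\ref{lem:jl}, and Lemma~\ref{lem:bad-approx} to secure $\bE_1,\bE_2,\bE_3(100)$ respectively, union bound to reach probability $\geq 0.97$, and conclude via Lemma~\ref{lem:combine-events} with $\beta=100$. Your explicit remark that $\log m = O(\log k + z\log(1/\eps))$ is exactly the computation that justifies the stated bound on $t$, and your Fubini comments add a touch of rigor that the paper leaves implicit.
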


\begin{proof}
We sample $\bPi \sim \calJ_{d,t}$ and $(\bS, \bw)$ as per Definition~\ref{def:events}. By Lemma~\ref{lem:jl}, Lemma~\ref{lem:bad-approx}, Theorem~\ref{thm:strong-coresets}, and a union bound, we have events $\bE_1$, $\bE_2$, and $\bE_3(100)$ hold with probability at least $0.97$. Hence, we obtain the desired result from applying Lemma~\ref{lem:combine-events}.
\end{proof}

\subsubsection{Improving the Approximation}\label{sec:improving-approx}

In what follows, we will improve upon the approximation of Corollary~\ref{cor:bad-approx} significantly, to show that with large probability, $\bE_3(1+\eps)$ holds. We let $X = \{ x_1,\dots, x_n \} \subset \R^d$ and denote $C \subset \R^d$ of size $k$ to be the optimal $(k,z)$-clustering of $X$. As before, we let $c \colon X \to C$ map each $x \in X$ to the point center in $C$, and $\sigma \colon X \to \R_{\geq 0}$ be the sensitivities of $X$ with respect to $C$ as in Lemma~\ref{lem:sens}, and $\tilde{\sigma}$ be the sensitivity sampling distribution. 

We define one more event, $\bE_4$, with respect to the randomness of $\bPi \sim \calJ_{d,t}$. First, we let $\bD_x \in \R_{\geq 0}$ denote the random variable given by
\begin{align}
\bD_x \eqdef \dfrac{\|\bPi(x) - \bPi(c(x))\|_2}{\|x - c(x)\|_2}. \label{eq:def-D}
\end{align}
Notice that when $\bPi$ consists of i.i.d $\calN(0, 1/t)$, then $t \bD_x^2$ is distributed as $\chi^2$-random variable with $t$ degrees of freedom. We say event $\bE_4$ holds whenever
\begin{align}
\sum_{x \in X} \bD_x^{2z} \cdot \sigma(x) \leq 100(k+1) 2^z.  \label{eq:event-4}
\end{align}
\begin{claim}\label{cl:event-4}
With probability at least $0.99$ over the draw of $\bPi \sim \calJ_{d,t}$, event $\bE_4$ holds.
\end{claim}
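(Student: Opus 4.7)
The plan is to apply Markov's inequality directly to $\sum_{x\in X}\bD_x^{2z}\sigma(x)$, which is a nonnegative random variable in $\bPi$. Because the sensitivities $\sigma(x)$ are deterministic, linearity of expectation reduces everything to a single per-point moment:
\[
\Ex_{\bPi \sim \calJ_{d,t}}\!\left[\,\sum_{x \in X} \bD_x^{2z}\,\sigma(x)\,\right] \;=\; \sum_{x \in X} \sigma(x)\cdot \Ex_{\bPi}\!\left[\bD_x^{2z}\right],
\]
and then one combines this with the total-sensitivity bound $\frakS_{\sigma} = 2^{z-1} + 2^{2z-1}k$ from Lemma~\ref{lem:sens}.

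For the per-point moment, I would reduce to a chi-squared exactly as in Lemma~\ref{lem:gaussian-guarantee}. By $2$-stability of the Gaussian, $\bD_x^2 \stackrel{d}{=} \|\bg\|_2^2$ for $\bg \sim \calN(0, I_t/t)$, and hence $t\bD_x^2 \sim \chi^2_t$. Using the closed form $\Ex[(\chi^2_t)^z] = 2^z\prod_{i=0}^{z-1}(t/2 + i)$ together with the elementary estimate $\prod_{i=0}^{z-1}(t/2+i) \leq (t/2 + z)^z$, this yields
\[
\Ex_{\bPi}\!\left[\bD_x^{2z}\right] \;\leq\; \Bigl(1 + \tfrac{2z}{t}\Bigr)^{z},
\]
which is $O(2^z)$ for $t \gsim z$ (and becomes $O(1)$ once $t \gsim z^2$), so in particular it is controlled throughout the regime of $t$ used in Theorem~\ref{thm:center-based}. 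One can equivalently mimic the MGF-truncation-and-integration computation already performed in the proof of Lemma~\ref{lem:gaussian-guarantee} to reach the same bound.

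Plugging this back in and using Lemma~\ref{lem:sens} gives $\Ex_{\bPi}\!\left[\sum_x \bD_x^{2z}\sigma(x)\right] \lsim (k+1)\cdot 2^{O(z)}$, and Markov's inequality with the claimed threshold then produces event $\bE_4$ with probability at least $0.99$ (up to absorbing absolute constants into the stated factor of $100$). The only point that requires care is choosing $t$ large enough — linear in $z$ is already enough — so that the $\chi^2_t$ moment does not explode exponentially in $z$; this is precisely the regime covered by Theorem~\ref{thm:center-based}. Beyond this, the argument is a clean first-moment-plus-Markov computation, and I do not foresee any deeper obstacle.
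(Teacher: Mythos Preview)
Your proposal is correct and follows essentially the same approach as the paper: both compute the expectation of $\sum_{x} \bD_x^{2z}\sigma(x)$ by linearity, bound the per-point moment $\Ex[\bD_x^{2z}]$ via the chi-squared structure (the paper invokes Lemma~\ref{lem:gaussian-guarantee} for this), combine with the total sensitivity bound $\frakS_\sigma$ from Lemma~\ref{lem:sens}, and apply Markov's inequality. Your explicit chi-squared moment computation is in fact a bit cleaner than the paper's loose citation, since the stated conclusion of Lemma~\ref{lem:gaussian-guarantee} does not literally yield $\Ex[\bD_x^{2z}]\le 2^z$, only its proof technique does.
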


\begin{proof}
The proof will simply follow from computing the expectation of the left-hand side of (\ref{eq:event-4}), and applying Markov's inequality. In particular, for every $x \in X$, we can apply Lemma~\ref{lem:guassian-guarantee} to conclude have
\begin{align*}
\Ex_{\bPi \sim \calJ_{d,t}}\left[ \bD_{x}^{2z}\right] \leq 2^{z}. 
\end{align*}
The remaining part follows from the bound on $\frakS_{\sigma}$.
\end{proof}

\begin{lemma}\label{lem:variance-bound}
Let $\Pi \in \calJ_{d,t}$ be a Johnson-Lindenstrauss map where, for $\alpha > 1$, the following events holds:
\begin{enumerate}
\item\label{en:ub-1} Guarantee from Lemma~\ref{lem:upper-bound}: $\sum_{x \in X} \| \Pi(x) - \Pi(c(x))\|_2^z \leq \alpha \cdot \cost_z^z(X, C)$.
\item\label{en:ub-2} Guarantee from Corollary~\ref{cor:bad-approx}: letting $C' \subset \R^t$ be the optimal $(k,z)$-clustering of $\Pi(X)$, then $\cost_z^z(\Pi(X), C') \geq (1/\alpha) \cdot \cost_z^z(X, C)$.
\item\label{en:ub-3} Event $\bE_4$ holds. 
\end{enumerate}
Then, if we let $(\bS, \bw)$ denote $m = \poly(k,2^z, 1/\eps, \alpha)$ i.i.d draws from $\tilde{\sigma}$ and $\bw(x) = 1/(m \tilde{\sigma}(x))$, with probability at least $0.99$,
\[ \cost_z^z((\Pi(\bS), \bw), C') \leq (1+\eps) \cdot \cost_z^z(\Pi(X), C'). \]
\end{lemma}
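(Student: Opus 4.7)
The plan is to view $\cost_z^z((\Pi(\bS), \bw), C')$ as an unbiased estimator of $\cost_z^z(\Pi(X), C')$, bound its second moment using assumptions~\ref{en:ub-1}--\ref{en:ub-3}, and conclude by Chebyshev's inequality. Letting $c' \colon X \to C'$ send each $x$ to the closest point of $C'$ to $\Pi(x)$,
\[ \cost_z^z((\Pi(\bS), \bw), C') = \frac{1}{m}\sum_{i=1}^{m} \frac{\|\Pi(\bx_i) - c'(\bx_i)\|_2^z}{\tilde{\sigma}(\bx_i)}, \]
and by~(\ref{eq:expectation-after-Pi}) each summand has mean $\cost_z^z(\Pi(X), C')$. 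So the task reduces to controlling the second moment of a single summand.

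For this I would establish a pointwise bound on $\|\Pi(x) - c'(x)\|_2^z/\cost_z^z(\Pi(X), C')$ by mimicking the triangle-inequality argument in the proof of Lemma~\ref{lem:sens}, but carried out in the projected space. For any $x' \in X_{c(x)}$,
\[ \|\Pi(x) - c'(x)\|_2 \leq \|\Pi(x) - \Pi(c(x))\|_2 + \|\Pi(c(x)) - \Pi(x')\|_2 + \|\Pi(x') - c'(x')\|_2, \]
and since $c(x') = c(x)$ the middle term equals $\bD_{x'}\|x' - c(x')\|_2$. Raising to the $z$-th power via H\"older's inequality and averaging over $x' \in X_{c(x)}$, the second average is at most $\alpha\cost_z^z(X, C)/|X_{c(x)}|$ by assumption~\ref{en:ub-1}, and the third is at most $\cost_z^z(\Pi(X), C')/|X_{c(x)}|$. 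Substituting $\cost_z^z(X,C) \le \alpha\,\cost_z^z(\Pi(X), C')$ from assumption~\ref{en:ub-2} and invoking the formula for $\sigma(x)$ in Lemma~\ref{lem:sens}, one obtains
\[ \frac{\|\Pi(x) - c'(x)\|_2^z}{\cost_z^z(\Pi(X), C')} \lsim (3/2)^{z}\,\alpha^{2}\,(\bD_x^z + 1)\,\sigma(x). \]

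Feeding this into the second-moment computation, using $\tilde{\sigma}(x) = \sigma(x)/\frakS_\sigma$,
\[ \Ex_{\bx \sim \tilde{\sigma}}\!\left[\left(\frac{1}{\tilde{\sigma}(\bx)} \cdot \frac{\|\Pi(\bx) - c'(\bx)\|_2^z}{\cost_z^z(\Pi(X), C')}\right)^{2}\right] = \frakS_{\sigma}\sum_{x \in X} \frac{1}{\sigma(x)}\left(\frac{\|\Pi(x) - c'(x)\|_2^z}{\cost_z^z(\Pi(X), C')}\right)^{2} \lsim 9^{z}\,\alpha^{4}\,\frakS_{\sigma}\sum_{x \in X}(\bD_x^{2z} + 1)\,\sigma(x). \]
By event $\bE_4$ (assumption~\ref{en:ub-3}) the sum $\sum_x \bD_x^{2z}\sigma(x)$ is $\lsim 2^{z}k$, and Lemma~\ref{lem:sens} gives $\frakS_\sigma \lsim 2^{2z}k$, so the single-sample second moment is $\poly(\alpha, k, 2^{z})$. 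Taking $m = \poly(k, 2^{z}, 1/\eps, \alpha)$ large enough, Chebyshev's inequality then yields $|\cost_z^z((\Pi(\bS),\bw), C') - \cost_z^z(\Pi(X), C')| \le \eps \cdot \cost_z^z(\Pi(X), C')$ with probability at least $0.99$, which suffices after raising to the $1/z$ power. The main obstacle is the triangle-inequality decomposition: it is what re-expresses the projection-dependent quantity $\|\Pi(x) - c'(x)\|_2^z$ purely in terms of the pre-projection sensitivity $\sigma(x)$ and the scalar random variables $\bD_x$, so that event $\bE_4$ and the total-sensitivity bound $\frakS_\sigma$ can together control the variance.
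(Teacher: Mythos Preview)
Your proposal is correct and follows essentially the same approach as the paper: both argue via Chebyshev after bounding the per-sample second moment, and both obtain the key pointwise inequality $\|\Pi(x)-c'(x)\|_2^z/\cost_z^z(\Pi(X),C') \lsim 2^{O(z)}\alpha^2(D_x^z+1)\sigma(x)$ by the same triangle-inequality-through-$\Pi(c(x))$ argument from Lemma~\ref{lem:sens} (the paper splits into two terms and applies the triangle inequality twice, you split into three terms at once and average over $x'\in X_{c(x)}$, but these are equivalent). The only slip is the final remark about ``raising to the $1/z$ power''---unnecessary since the lemma is already stated for $\cost_z^z$---but this does not affect correctness.
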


\begin{proof}
The proof follows the same schema as Lemma~\ref{lem:bad-approx}. However, we give a bound on the variance of the estimator in order to improve upon the use of Markov's inequality. Specifically, we compute the variance of a rescaling of (\ref{eq:estimator-sample-S}).
\begin{align}
&\mathop{\Var}_{\bS}\left[ \Ex_{\bx \sim \bS}\left[ \frac{1}{\tilde{\sigma}(\bx)} \cdot \dfrac{\|\Pi(\bx) - c'(\bx)\|_2^z}{\cost_z^z(\Pi(X), C')}\right] \right] \leq \frac{1}{m} \Ex_{\bx \sim \tilde{\sigma}}\left[ \left(\frac{1}{\tilde{\sigma}(\bx)} \cdot \dfrac{\| \Pi(\bx) - c'(\bx)\|_2^z}{\cost_z^z(\Pi(X), C')} \right)^2\right]  \nonumber \\
&\qquad\qquad\qquad\qquad= \frac{\frakS_{\sigma}}{m} \sum_{x \in X} \left(\frac{1}{\sigma(x)} \cdot \dfrac{\| \Pi(x) - c'(x)\|_2^{2z}}{\cost_z^{2z}(\Pi(X), C')} \right). \label{eq:var-bound}
\end{align}
By the same argument as in the proof of Lemma~\ref{lem:sens} (applying the triangle inequality twice),
\begin{align}
\dfrac{\| \Pi(x) - c'(x) \|_2^z}{\cost_z^z(\Pi(X), C')} &\leq 2^{z-1} \cdot \dfrac{\| \Pi(x) - \Pi(c(x))\|_2^z}{\cost_z^z(\Pi(X), C')} + \dfrac{2^{2(z-1)}}{|X_{c(x)}|} \left( \dfrac{\cost_z^z(\Pi(X), \Pi(C))}{\cost_z^z(\Pi(X), C')} + 1\right). \label{eq:sensitivity-bound}
\end{align}
Recall that we have the lower bound (\ref{en:ub-2}) and the upper bound (\ref{en:ub-1}). Hence, along with the definition of $D_x$ in (\ref{eq:def-D}) (we remove the bold-face as $\Pi$ is fixed), we upper bound the left-hand side of (\ref{eq:sensitivity-bound}) by
\begin{align*}
&\alpha \cdot 2^{z-1}\cdot \dfrac{D_x^{z} \cdot \| x - c(x) \|_2^z}{\cost_z^z(X, C)} + \dfrac{2^{2(z-1)}}{|X_{c(x)}|} \left( \alpha^2 + 1\right) \leq \alpha^2 \cdot 2^{100z} ( D_x^z + 1) \cdot \sigma(x). 
\end{align*}
In particular, we may plug this in to (\ref{eq:var-bound}) and use the definition of $\sigma(x)$. Specifically, one obtains the variance in (\ref{eq:var-bound}) is upper bounded by
\begin{align*}
\frac{\frakS_{\sigma}\cdot \alpha^4 \cdot 2^{200z}}{m} \sum_{x \in X} (D_x^z + 1)^2 \cdot \sigma(x) &\leq \frac{4\frakS_{\sigma}^2 \cdot \alpha^4 \cdot 2^{200z}}{m} + \frac{4 \frakS_{\sigma}\cdot \alpha^4 \cdot 2^{200z}}{m} \sum_{x \in X} D_x^{2z} \cdot \sigma(x) \\
	&\leq \frac{1000 \frakS_{\sigma}^2 \cdot \alpha^4 \cdot 2^{300 z}}{m},
\end{align*}
where in the final inequality, we used the fact that $\bE_4$ holds. Hence, letting $m$ be a large enough polynomial in $\poly(k, 2^z, 1/\eps, \alpha)$ implies the variance is smaller than $o(\eps^2)$, so we can apply Chebyshev's inequality.
\end{proof}

\begin{corollary}
Let $X = \{ x_1,\dots, x_n \} \subset \R^d$ by any set of points, and $C \subset \R^d$ of size $k$ be centers for optimally $(k,z)$-clustering $X$. For any $\eps \in (0, 1/2)$, let $\calJ_{d,t}$ be the Johnson-Lindenstrauss map with 
\[ t \gsim \frac{z \log(1/\eps) + \log k}{\eps^2}. \]
Then, with probability at least $0.92$ over the draw of $\bPi \sim \calJ_{d,t}$, 
\[ \dfrac{1}{(1+\eps)^{1+2/z}} \cdot \cost_z(X, C) \leq \min_{\substack{C' \subset \R^t \\ |C'| \leq k}} \cost_z(\bPi(X), C'). \]
\end{corollary}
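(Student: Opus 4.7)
The plan is to combine the events of Definition~\ref{def:events} with the auxiliary guarantees we have already established, then invoke Lemma~\ref{lem:combine-events} with $\beta = 1+\eps$. Specifically, I would draw $\bPi \sim \calJ_{d,t}$ and the sensitivity sample $(\bS,\bw)$ as in Definition~\ref{def:events}, using $m = \poly(k, 2^z, 1/\eps)$ i.i.d.\ samples from $\tilde\sigma$ (large enough to absorb the constant-factor slack $\alpha = O((1+\eps)^z)$ coming from Corollary~\ref{cor:bad-approx}), and show that the events $\bE_1$, $\bE_2$, and $\bE_3(1+\eps)$ hold simultaneously with probability at least $0.92$. Once all three events hold, Lemma~\ref{lem:combine-events} immediately yields the desired inequality $\min_{|C'|\leq k} \cost_z(\bPi(X), C') \geq \cost_z(X,C)/(1+\eps)^{1+2/z}$.

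Four of the ingredients come essentially for free from earlier results. Event $\bE_1$ holds with probability $1-o(1)$ by Theorem~\ref{thm:strong-coresets}; event $\bE_2$ holds with probability $\geq 0.99$ by Lemma~\ref{lem:jl}, since $|\bS|\leq m$ and the hypothesis $t \gsim (\log k + z\log(1/\eps))/\eps^2$ dominates $\log m/\eps^2$; the upper-bound event of Lemma~\ref{lem:upper-bound} fails with probability at most $0.01$; the bad-approximation event of Corollary~\ref{cor:bad-approx} fails with probability at most $0.03$; and $\bE_4$ fails with probability at most $0.01$ by Claim~\ref{cl:event-4}. The nontrivial step is to produce $\bE_3(1+\eps)$, and this is precisely the payoff of Lemma~\ref{lem:variance-bound}: the three hypotheses of that lemma are the Lemma~\ref{lem:upper-bound} event, the Corollary~\ref{cor:bad-approx} event (which supplies the required constant $\alpha$), and $\bE_4$. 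Together they imply $\bE_3(1+\eps)$ with probability $\geq 0.99$ over the residual randomness of $(\bS,\bw)$.

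A union bound over the six failure events then gives joint success with probability at least $0.92$, and on that event Lemma~\ref{lem:combine-events} with $\beta=1+\eps$ closes the argument. There is no serious obstacle here beyond careful bookkeeping; the one subtlety is that Lemma~\ref{lem:variance-bound} is stated for a \emph{fixed} $\Pi$ satisfying the three conditions, so we must first expose $\bPi$ and verify the $\bPi$-only events, then draw $(\bS,\bw)$ conditionally. Since $\bPi$ and $(\bS,\bw)$ are drawn independently in Definition~\ref{def:events}, this conditional viewpoint is legitimate and the total failure probability decomposes cleanly into the six contributions listed above.
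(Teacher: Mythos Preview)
Your proposal is correct and follows essentially the same approach as the paper's own proof: establish $\bE_1$ via Theorem~\ref{thm:strong-coresets}, $\bE_2$ via Lemma~\ref{lem:jl}, feed Lemma~\ref{lem:upper-bound}, Corollary~\ref{cor:bad-approx}, and Claim~\ref{cl:event-4} into Lemma~\ref{lem:variance-bound} to obtain $\bE_3(1+\eps)$, take a union bound, and finish with Lemma~\ref{lem:combine-events} at $\beta=1+\eps$. Your explicit remark about exposing $\bPi$ first and then drawing $(\bS,\bw)$ conditionally is a nice clarification of a step the paper leaves implicit.
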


\begin{proof}
We sample $\bPi \sim \calJ_{d,t}$ and $(\bS, \bw)$ as per Definition~\ref{def:events}. Note that by Theorem~\ref{thm:strong-coresets} and the setting of $m$, event $\bE_1$ holds with probability at least $0.99$ over the draw of $(\bS, \bw)$. By Lemma~\ref{lem:upper-bound}, Corollary~\ref{cor:bad-approx}, and Claim~\ref{cl:event-4}, and the setting of $m$ and $t$, the conditions (\ref{en:ub-1}), (\ref{en:ub-2}) and (\ref{en:ub-3}) of Lemma~\ref{lem:variance-bound} hold with probability at least $0.95$, with $\alpha $ being set to a large enough constant, and hence event $\bE_3(1+\eps)$ holds with probability at least $0.94$. Finally, event $\bE_2$ holds with probability $0.99$ by Lemma~\ref{lem:jl}, and taking a union bound and Lemma~\ref{lem:combine-events} gives the desired bound.
\end{proof}

\section{Subspace Approximation}\label{sec:subspace}

In the $(k,z)$-\emph{subspace approximation} problem, we consider a subspace $R \subset \R^d$ of dimension less than $k$, which we may encode by a collection of at most $k$ orthonormal vectors $r_1,\dots, r_k \in R$. We let $\rho_R \colon \R^d \to \R^d$ denote the map which sends each vector $x \in \R^d$ to its closest point in $R$, and note that
\begin{align*}
\rho_R(x) &= \argmin_{z \in R} \| x - z\|_2^2 = \sum_{i=1}^k \langle x, r_i \rangle \cdot r_i \in \R^d,
\end{align*}
For any subset $X \subset \R^d$ and any $k$-dimensional subspace $R$, we let 
\[ \cost_z^z(X, R) = \sum_{x \in X} \left\| x - \rho_R(x) \right\|_2^z. \]
In this section, we will show that we may compute the optimum $k$-subspace approximation after applying a Johnson-Lindenstrauss transform.
\begin{theorem}[Johnson-Lindenstrauss for $k$-Subspace Approximation]\label{thm:k-subspace}
Let $X = \{ x_1,\dots, x_n \} \subset \R^d$ be any set of points, and let $R$ denote the optimal $(k, z)$-subspace approximation of $X$. For any $\eps \in (0, 1)$, suppose we let $\calJ_{d, t}$ be the distribution over Johnson-Lindenstrauss maps where
\[ t \gsim \dfrac{z \cdot k^2 \cdot \polylog(k/\eps)}{\eps^3}. \]
Then, with probability at least $0.9$ over the draw of $\bPi \sim \calJ_{d, t}$,
\begin{align*}
\dfrac{1}{1 + \eps} \cdot \cost_z(X, R) \leq \min_{\substack{R' \subset \R^t \\ \dim R' \leq k}} \cost_z(\bPi(X), R') \leq (1+\eps) \cdot \cost_z(X, R).
\end{align*}
\end{theorem}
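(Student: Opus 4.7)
The plan is to specialize the four-step template of Section~\ref{sec:overview} to $(k,z)$-subspace approximation, substituting distance-to-subspace for distance-to-center and the geometric lemma of Shyamalkumar--Varadarajan~\cite{SV12} for Kirszbraun's theorem. The easy direction goes through essentially unchanged from Lemma~\ref{lem:upper-bound}: for the optimal $R \subset \R^d$, take the candidate $R' = \bPi(R) \subset \R^t$, which has dimension at most $k$ by linearity of $\bPi$, and note that for each $x \in X$ the point $\bPi(\rho_R(x))$ lies in $R'$, so
\[ \|\bPi(x) - \rho_{R'}(\bPi(x))\|_2 \leq \|\bPi(x - \rho_R(x))\|_2. \]
Summing $z$-th powers over $x \in X$, invoking Lemma~\ref{lem:gaussian-guarantee}, and applying Markov's inequality to the aggregate positive deviation yields $\min_{R' \subset \R^t}\cost_z(\bPi(X), R') \leq (1+\eps)\cost_z(X, R)$ with probability at least $0.99$ as soon as $t \gsim z/\eps^2$.

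For the hard direction, I would fix a sensitivity function $\sigma \colon X \to \R_{\geq 0}$ for $(k,z)$-subspace approximation defined with respect to the optimum $R$, using the construction of~\cite{VX12, VX12b}, which has total sensitivity $\frakS_\sigma = \poly(k,z)$. I would then draw $(\bS, \bw)$ by $m$ i.i.d.\ samples from $\tilde{\sigma}$ with $\bw(x) = 1/(m\tilde{\sigma}(x))$, with $m$ to be tuned, and define four events mirroring Definition~\ref{def:events}:
\begin{itemize}
\item $\bE_1$: $(\bS, \bw)$ is a weak $\eps$-coreset for $(k,z)$-subspace approximation of $X$, via known sensitivity-sampling results for appropriate $m$;
\item $\bE_2$: $\bPi$ is an $\eps$-subspace embedding of $\mathrm{span}(\bS)$, which by Lemma~\ref{lem:subspace-embedding} requires $t \gsim m/\eps^2$;
\item $\bE_3(\beta)$: for the optimal $\bR^* \subset \R^t$ of dimension $k$ for $\bPi(X)$, $\cost_z^z((\bPi(\bS), \bw), \bR^*) \leq \beta \cdot \cost_z^z(\bPi(X), \bR^*)$;
\item $\bE_4$: $\sum_{x \in X} \bD_x^{2z} \sigma(x) \lsim \frakS_\sigma \cdot 2^z$, where $\bD_x = \|\bPi(x - \rho_R(x))\|_2 / \|x - \rho_R(x)\|_2$, controlled via Lemma~\ref{lem:gaussian-guarantee} as in Claim~\ref{cl:event-4}.
\end{itemize}
The analogue of Lemma~\ref{lem:combine-events} uses $\bE_3$ and $\bE_1$ exactly as before, but replaces the Kirszbraun step with the following: since $\bR^*$ need not lie in $\mathrm{span}(\bPi(\bS))$, I would invoke the geometric lemma of~\cite{SV12} (flagged in the overview's footnote) to produce a $k$-dimensional $\tilde R \subset \mathrm{span}(\bPi(\bS))$ with $\cost_z((\bPi(\bS), \bw), \tilde R) \leq (1+\eps)\cost_z((\bPi(\bS), \bw), \bR^*)$. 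Event $\bE_2$ then guarantees that $\bPi^{-1}(\tilde R) \subset \mathrm{span}(\bS) \subset \R^d$ is a $k$-dimensional candidate whose cost on $(\bS, \bw)$ is within $(1+\eps)$ of $\cost_z((\bPi(\bS), \bw), \tilde R)$ (each residual $x - \rho_{\bPi^{-1}(\tilde R)}(x)$ lies in $\mathrm{span}(\bS)$, so its norm transfers under the subspace embedding), and $\bE_1$ converts this into the desired lower bound on $\min_{R''}\cost_z(X, R'')$, for a total distortion of $(1+\eps)^{O(1)}/\beta^{1/z}$.

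The main obstacle, as in Section~\ref{sec:improving-approx}, is upgrading $\bE_3$ from a constant factor to $(1+\eps)$, and I would follow the two-pass trick. The first pass applies Markov's inequality to the unbiased estimator (analogue of~(\ref{eq:expectation-after-Pi})) to get $\bE_3(O(1))$, which together with the easy direction yields a loose $O(1)$-approximation of the optimal ratio and hence certifies that the ``easy-direction'' candidate $\bPi(R)$ is $O(1)$-approximate in $\R^t$. The second pass bounds the per-point ratio $\|\bPi(x) - \rho_{\bR^*}(\bPi(x))\|_2^z / \cost_z^z(\bPi(X), \bR^*)$ by mimicking the chain~(\ref{eq:sens-1})--(\ref{eq:sens-3}) with $\rho_R(x)$ in place of $c(x)$ and the appropriate subspace-approximation analogue of the cluster-averaging term: two applications of the triangle inequality and H\"older, combined with the first-pass $O(1)$-approximation, give a bound of the form $O(1) \cdot 2^{O(z)} \cdot (\bD_x^z + 1) \cdot \sigma(x)$. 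Combined with $\bE_4$, Chebyshev promotes $\bE_3$ to factor $(1+\eps)$ provided $m \gsim \frakS_\sigma^2 \cdot 2^{O(z)}/\eps^2$; balancing this against the subspace-embedding requirement $t \gsim m/\eps^2$ from $\bE_2$ and absorbing $\polylog$ factors yields the claimed $t = \tilde{O}(zk^2/\eps^3)$.
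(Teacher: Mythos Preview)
Your overall template is right, and the two-pass argument for upgrading $\bE_3$ is exactly what the paper does. The gap is in your event $\bE_2$ and the final parameter count. You ask that $\bPi$ be an $\eps$-subspace embedding of all of $\mathrm{span}(\bS)$, which costs $t \gsim m/\eps^2$. But $m$ cannot be made as small as $\tilde O(zk^2/\eps)$: for $\bE_1$ the weak-coreset theorem you need (Theorem~\ref{thm:weak-coresets-sa}) requires $m = \poly((k+1)^z, 1/\eps)$, and even your own Chebyshev step needs $m \gsim \frakS_\sigma^2 \cdot 2^{O(z)}/\eps^2$ with $\frakS_\sigma = 2^{O(z)}(k+1)^{1+z}$ from the~\cite{VX12} construction (Lemma~\ref{lem:subspace-sensitivities}). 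So your ``balancing'' actually gives $t$ polynomial in $k^z$ and $2^z$, not $\tilde O(zk^2/\eps^3)$; the $\polylog$ absorption in your last sentence does not go through.

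The paper avoids this by exploiting the \emph{small-support} conclusion of the Shyamalkumar--Varadarajan lemma rather than just the containment $\tilde R \subset \mathrm{span}(\bPi(\bS))$. Lemma~\ref{lem:find-good-subspace} produces a subset $Q \subset \bS$ of size $\ell = O(k^2 \log(k/\eps)/\eps)$ with $\tilde R \subset \mathrm{span}(\bPi(Q))$, so to transfer the cost of $\tilde R$ back to $\R^d$ it suffices, for each $x \in \bS$, that $\bPi$ be an $\eps$-subspace embedding of $\mathrm{span}(Q \cup \{x\})$. The paper's $\bE_2$ therefore asks only that $\bPi$ embed the span of \emph{every} $\ell$-element subset of $\bS$; a union bound over the $\binom{m}{\ell}$ subsets (Lemma~\ref{lem:event-2-sa}) gives $t \gsim \ell \log m / \eps^2$. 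Since $\log m = O(z\log k + \log(1/\eps))$, this yields $t \gsim (k^2\log(k/\eps)/\eps)\cdot(z\log k + \log(1/\eps))/\eps^2 = \tilde O(zk^2/\eps^3)$. In short, the dependence on the coreset size must be \emph{logarithmic}, and it is the small subset $Q$ from~\cite{SV12}, not the full span of $\bS$, that buys this.
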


The proof of the above theorem proceeds in two steps, and models the argument in the previous section. First, we show that the cost of the optimum does not increase substantially (the right-most inequality in the theorem). This is done in the next subsection. The second step is showing that the optimum does not decrease substantially (the left-most inequality in the theorem). The second step is done in the subsequent subsection.

\subsection{Easy Direction: Optimum Cost Does Not Increase}

The first direction, which shows that the optimal $(k,z)$-subspace approximation does not increase follows similarly to Lemma~\ref{lem:upper-bound}.

\begin{lemma}\label{lem:subspace-ub}
Let $X = \{ x_1,\dots, x_n \} \subset \R^d$ by any set of points and let $R \subset \R^d$ be optimal $(k,z)$-subspace approximation of $X$. For any $\eps \in (0, 1)$, we let $\calJ_{d,t}$ be the distribution over Johnson-Lindenstrauss maps. If $t \gsim z / \eps^2$, then with probability at least $0.99$ over the draw of $\bPi \sim \calJ_{d,t}$, 
\[ \sum_{x \in X} \| \bPi(x) - \bPi(\rho_R(x))\|_2^z \leq (1+\eps) \cdot \cost_z(X, R), \]
and hence,
\[ \min_{\substack{R' \subset \R^t \\ \dim R' \leq k}} \cost_z(\bPi(X), R') \leq (1+\eps) \min_{\substack{R \subset \R^d \\ \dim R \leq k}} \cost_z(X, R). \]
\end{lemma}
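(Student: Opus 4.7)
The plan is to mirror the proof of Lemma~\ref{lem:upper-bound} from the center-based setting, replacing ``assigning each $x$ to $\bPi(c(x))$'' with ``projecting each $\bPi(x)$ onto $\bPi(R)$.'' Concretely, I would exhibit as a candidate solution in $\R^t$ the subspace $R' = \bPi(R)$, which has dimension at most $k$ by linearity of $\bPi$. For each $x \in X$, the point $\bPi(\rho_R(x))$ lies in $R'$, and hence is a feasible (but not necessarily optimal) approximation for $\bPi(x)$ within $R'$. Therefore
\[ \min_{\substack{R' \subset \R^t \\ \dim R' \leq k}} \cost_z^z(\bPi(X), R') \leq \cost_z^z(\bPi(X), \bPi(R)) \leq \sum_{x \in X} \| \bPi(x) - \bPi(\rho_R(x))\|_2^z, \]
so it suffices to show the right-hand side is at most $(1+\eps)^z \cdot \cost_z^z(X, R)$ with probability at least $0.99$ over $\bPi \sim \calJ_{d,t}$.

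To do this, I would apply Lemma~\ref{lem:gaussian-guarantee} to each pair $(x, \rho_R(x))$: for $t \gsim z/\eps^2$,
\[ \Ex_{\bPi \sim \calJ_{d,t}}\left[ \left( \dfrac{\|\bPi(x) - \bPi(\rho_R(x))\|_2^z}{\|x - \rho_R(x)\|_2^z} - 1 \right)^+ \right] \leq \dfrac{(1+\eps)^z - 1}{100}. \]
Multiplying through by $\|x - \rho_R(x)\|_2^z$ and summing over $x \in X$, linearity of expectation yields
\[ \Ex_{\bPi}\left[ \sum_{x \in X} \bigl(\| \bPi(x) - \bPi(\rho_R(x))\|_2^z - \| x - \rho_R(x)\|_2^z\bigr)^+ \right] \leq \dfrac{(1+\eps)^z - 1}{100} \cdot \cost_z^z(X, R). \]
Since the summand is nonnegative, Markov's inequality implies that, with probability at least $0.99$, the sum inside the expectation is at most $((1+\eps)^z - 1) \cdot \cost_z^z(X, R)$. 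Using the trivial bound $a \leq b + (a-b)^+$ termwise and summing yields $\sum_{x \in X} \|\bPi(x) - \bPi(\rho_R(x))\|_2^z \leq (1+\eps)^z \cdot \cost_z^z(X, R)$, and taking $z$-th roots gives the claim.

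I do not expect any real obstacle here: the argument is essentially the same as for center-based clustering, with the crucial point being that the candidate subspace $R'=\bPi(R)$ is determined purely by $R$ and $\bPi$ (not by any optimization over the randomness), so the estimate from Lemma~\ref{lem:gaussian-guarantee} applies pair-by-pair and aggregates into a single Markov application. The harder direction is the lower bound (i.e., that the optimum cannot \emph{decrease}), which requires the coreset machinery; this ``easy direction'' lemma is included for completeness and only needs the pointwise Gaussian moment bound.
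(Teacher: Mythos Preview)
Your proposal is correct and follows essentially the same approach as the paper: exhibit $\bPi(R)$ as a candidate $k$-dimensional subspace in $\R^t$, note that $\bPi(\rho_R(x))\in\bPi(R)$ by linearity so the cost is upper bounded by $\sum_x\|\bPi(x)-\bPi(\rho_R(x))\|_2^z$, and then control the expected positive deviation via Lemma~\ref{lem:gaussian-guarantee} followed by Markov's inequality. Your write-up is in fact slightly more explicit than the paper's in spelling out the summation and the final $a\le b+(a-b)^+$ step.
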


\begin{proof}
Note that $\bPi$ is a linear map, so if we let $r_1,\dots, r_k \in R$ denote $k$ orthonormal unit vectors spanning $R$, then $\bPi(r_1), \dots, \bPi(r_k) \in \R^t$ are $k$ vectors spanning the subspace $\bPi(R)$. Furthermore, we may consider the $k$-dimensional subspace 
\[ \bPi(R) \eqdef \left\{ \sum_{i=1}^k \alpha_i \cdot \bPi(r_i) \in \R^t : \alpha_1,\dots, \alpha_k \in \R \right\}. \]
Notice that for any $x \in \R^d$, by linearity of $\bPi$,
\begin{align*}
\bPi(\rho_{R}(x)) &= \sum_{i=1}^k \langle x - \tau, r_i \rangle \cdot  \bPi(r_i) \in \bPi(F),
\end{align*}
which means that we may always upper bound 
\begin{align}
\min_{\substack{R' \subset \R^t\\ \dim R' \leq k}} \cost_z(\bPi(X), R') &\leq \left( \sum_{x \in X} \left\| \bPi(x) - \bPi(\rho_R(x)) \right\|_2^z \right)^{1/z}. \label{eq:specific-F}
\end{align}
It hence remains to upper bound the left-hand side of (\ref{eq:specific-F}). We now use the fact that $\bPi$ is drawn from a Johnson-Lindenstrauss distribution. Specifically, the lemma follows from applying Markov's inequality once we show
\[ \dfrac{1}{\cost_z^z(X, F)} \Ex_{\bPi \sim \calJ_{d,t}}\left[ \sum_{x\in X} \left(\| \bPi(x) - \bPi(\rho_F(x))\|_2^z - \| x - \rho_F(x)\|_2^z \right)^+\right] \leq \dfrac{(1+\eps)^z - 1}{100}, \]
which follows from Lemma~\ref{lem:gaussian-guarantee}.
\end{proof}

\subsection{Hard Direction: Optimum Cost Does Not Decrease}

\subsubsection{Preliminaries}

In the $(k,z)$-subspace approximation problem, there will be a difference between complexities of known strong coresets and weak coresets. Our argument will only use weak coresets, which is important for us, as strong coresets have a dependence on $d$ (which we are trying to avoid).

\begin{definition}[Weak Coresets for $(k,z)$-subspace approximation]
Let $X = \{ x_1,\dots, x_n \} \subset \R^d$ be a set of points. A weak $\eps$-coreset of $X$ for $(k,z)$-subspace approximation is a weighted subset $S \subset \R^d$ of points with weights $w \colon S \to \R_{\geq 0}$ such that, 
\[ \frac{1}{1+\eps} \cdot \min_{\substack{R \subset \R^d \\ \dim R \leq k}} \cost_z(X, R) \leq  \min_{\substack{R \subset \R^d \\ \dim R \leq k}} \cost_z((S, w), R) \leq (1+\eps) \cdot  \min_{\substack{R \subset \R^d \\ \dim R \leq k}} \cost_z(X, R). \]
\end{definition}

Similarly to the case of $(k,z)$-clustering, algorithms for building weak coresets proceed by sampling according to the sensitivity framework. We proceed by defining sensitivity functions in the context of subspace approximation, and then state a lemma which gives a sensitivity function that we will use.
\begin{definition}[Sensitivities]
Let $n, d \in \N$, and consider any set of points $X = \{ x_1,\dots, x_n \} \subset \R^d$, as well as $k \in \N$ and $z \geq 1$. A sensitivity function $\sigma \colon X \to \R_{\geq 0}$ for $(k,z)$-subspace approximation in $\R^d$ is a function satisfying that, for all $x \in X$,
\begin{align*}
\sup_{\substack{R \subset \R^d \\ \dim R \leq k}} \dfrac{\|x - \rho_R(x)\|_2^z}{\cost_z^z(X, R)} \leq \sigma(x). 
\end{align*}
The total sensitivity of the sensitivity function $\sigma$ is given by
\[ \frakS_{\sigma} = \sum_{x \in X} \sigma(x). \]
For a sensitivity function, we let $\tilde{\sigma}$ denote the sensitivity sampling distribution, supported on $X$, which samples $x \in X$ with probability proportional to $\sigma(x)$. 
\end{definition}

We now state a specific sensitivity function that we will use. The proof will closely follow a method for bounding the total sensitivity of \cite{VX12}. The resulting weak $\eps$-coreset will have a worse dependence than the best-known coresets for this problem; however, the specific form of the sensitivity function will be especially useful for us. Specifically, the non-optimality of the sensitivity function will not significantly affect the final bound on dimension reduction.

\begin{lemma}[Theorem 18 of~\cite{VX12}]\label{lem:subspace-sensitivities}
Let $n, d \in \N$, and consider any set of points $X = \{ x_1,\dots, x_n \} \subset \R^d$, as well as $k \in \N$ with $k < d$, and $z \geq 1$. Suppose $R \subset \R^d$ is the optimal $(k,z)$-subspace approximation of $X$ in $\R^d$. Then, the function $\sigma \colon X \to \R_{\geq 0}$ given by 
\begin{align*} 
\sigma(x) = 2^{z-1} \cdot \dfrac{\|x - \rho_R(x)\|_2^z}{\cost_z^z(X, R)} + 2^{2z-1} \cdot \sup_{u \in \R^d} \dfrac{|\langle \rho_R(x), u \rangle|^z}{\sum_{x' \in X} |\langle \rho_R(x'), u \rangle|^z}
\end{align*}
is a sensitivity function for $(k,z)$-subspace approximation of $X$ in $\R^d$, satisfying
\[ \frakS_{\sigma} \leq 2^{z-1} + 2^{2z-1} (k+1)^{1+z}. \]
\end{lemma}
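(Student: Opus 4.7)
The plan is to verify the two claims separately: (a) $\sigma$ dominates the pointwise sensitivity ratio uniformly over all $k$-dimensional subspaces $R'$, and (b) the total sensitivity is at most $2^{z-1} + 2^{2z-1}(k+1)^{1+z}$. The overall strategy for (a) mimics the two-step triangle-inequality argument in Lemma~\ref{lem:sens}, using the optimal subspace $R$ as a ``pivot.'' Fix any $k$-dimensional $R' \subset \R^d$. Since $\rho_{R'}(\rho_R(x)) \in R'$, minimality of $\rho_{R'}(x)$ together with the triangle inequality gives
\[ \|x - \rho_{R'}(x)\|_2 \leq \|x - \rho_R(x)\|_2 + \|\rho_R(x) - \rho_{R'}(\rho_R(x))\|_2. \]
Raising to the $z$th power and using $(a+b)^z \leq 2^{z-1}(a^z+b^z)$ splits $\|x - \rho_{R'}(x)\|_2^z / \cost_z^z(X, R')$ into an ``$R$-error'' term and a ``projection-within-$R$'' term. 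For the first, optimality of $R$ gives $\cost_z^z(X, R) \leq \cost_z^z(X, R')$, so it is bounded by $2^{z-1}\|x-\rho_R(x)\|_2^z/\cost_z^z(X,R)$, which matches the first summand of $\sigma(x)$.

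The projection-within-$R$ term is the more delicate piece. Applying the same triangle-inequality split to every $x' \in X$ and summing yields
\[ \sum_{x'\in X} \|\rho_R(x') - \rho_{R'}(\rho_R(x'))\|_2^z \leq 2^{z-1}\cost_z^z(X,R) + 2^{z-1}\cost_z^z(X,R') \leq 2^z\cost_z^z(X,R'), \]
so it suffices to bound $\|\rho_R(x) - \rho_{R'}(\rho_R(x))\|_2^z$ divided by this sum. I would let $u^* \in R'^\perp$ be the unit vector realizing $\|\rho_R(x) - \rho_{R'}(\rho_R(x))\|_2 = |\langle \rho_R(x), u^*\rangle|$, which exists since the residual of orthogonal projection is orthogonal to $R'$. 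For every $x'$, the same orthogonality gives $|\langle \rho_R(x'), u^*\rangle| = |\langle \rho_R(x') - \rho_{R'}(\rho_R(x')), u^*\rangle| \leq \|\rho_R(x') - \rho_{R'}(\rho_R(x'))\|_2$, hence the denominator is at least $\sum_{x'}|\langle \rho_R(x'), u^*\rangle|^z$ and the ratio is bounded by the supremum in the second summand of $\sigma(x)$. The constants combine as $2^{z-1}\cdot 2^z = 2^{2z-1}$, as required.

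For (b), after summing the first summand of $\sigma$ (which contributes $2^{z-1}$) the only nontrivial piece is
\[ \sum_{x \in X} \sup_{u \in \R^d} \frac{|\langle \rho_R(x), u\rangle|^z}{\sum_{x'\in X}|\langle \rho_R(x'), u\rangle|^z}. \]
Because the points $\rho_R(x)$ all lie in the $k$-dimensional subspace $R$, only the component of $u$ in $R$ matters, making this exactly the \emph{total $\ell_z$-sensitivity of $n$ points in $\R^k$}. The anticipated main obstacle is establishing a dimension-only upper bound here; I would cite the Varadarajan--Xiao bound of $(k+1)^{1+z}$ directly from~\cite{VX12}, whose proof proceeds by an iterative simplex-based partitioning of $R$ that reduces this quantity to controllable pieces. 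Combining with the $2^{z-1}$ from the first summand gives the stated $\frakS_{\sigma} \leq 2^{z-1} + 2^{2z-1}(k+1)^{1+z}$.
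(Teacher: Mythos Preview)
Your proposal is correct and follows essentially the same route as the paper's proof: the same triangle-inequality split through $\rho_{R'}(\rho_R(x))$, the same bound $\cost_z^z(\rho_R(X),R')\le 2^{z}\cost_z^z(X,R')$ via optimality of $R$, and the same reduction of the inner ratio to a single direction (your choice of $u^*\in R'^\perp$ along the residual is exactly the normal of the hyperplane $H\supset R'$ the paper constructs), with the total-sensitivity bound deferred to~\cite{VX12}.
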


\begin{proof}
Consider any subspace $R' \subset \R^d$ of dimension at most $k$. Then, for any $x \in X$
\begin{align}
&\| x - \rho_{R'}(x) \|_2^z \leq \| x - \rho_{R'}(\rho_{R}(x))\|_2^z \leq 2^{z-1} \left( \| x - \rho_{R}(x) \|_2^z + \| \rho_{R}(x) - \rho_{R'}(\rho_{R}(x))\|_2^z\right) \nonumber \\
	&\qquad\leq 2^{z-1} \left(\dfrac{\| x - \rho_R(x)\|_2^z}{\cost_z^z(X, R)} \cdot \cost_z^z(X, R) + \dfrac{\| \rho_R(x) - \rho_{R'}(\rho_R(x))\|_2^z}{\cost_z^z(\rho_{R}(X), R')} \cdot \cost_z^z(\rho_R(X), R')\right).\label{eq:sub-sens}
\end{align}
Notice that $\cost_z^z(\rho_R(X), R') \leq 2^{z-1}( \cost_z^z(X, R) + \cost_z^z(X, R'))$ by the triangle inequality and H\"{o}lder's inequality, and that $\cost_z^z(X, R) \leq \cost_z^z(X, R')$ since $R$ is the optimal $(k,z)$-subspace approximation. Hence, dividing the left- and right-hand side of (\ref{eq:sub-sens}) we have
\begin{align*}
\dfrac{\|x - \rho_{R'}(x)\|_2^z}{\cost_z^z(X, R')} \leq 2^{z-1} \cdot \dfrac{\|x - \rho_R(x)\|_2^z}{\cost_z^z(X, R)} + 2^{2z- 1} \cdot \dfrac{\|\rho_R(x) - \rho_{R'}(\rho_{R}(x))\|_2^z}{\cost_z^z(\rho_R(X), R')}.
\end{align*}
It remains to show that, for any set of points $Y \subset \R^d$ (in particular, the set $\{ \rho_R(x) : x \in X\}$), and any $y \in Y$,
\begin{align*}
\sup_{\substack{R' \subset \R^d \\ \dim R' \leq k}} \dfrac{\| y - \rho_{R'}(y)\|_2^z}{\cost_z^z(Y, R')} \leq \sup_{\substack{H \subset \R^d \\ \dim H = d-1}} \dfrac{\| y - \rho_H(y)\|_2^z}{\cost_z^z(Y, H)} = \sup_{u \in \R^d} \dfrac{|\langle y, u\rangle|^z}{\sum_{y' \in Y} |\langle y', u\rangle|^z}.
\end{align*}
In particular, note that for any subspace $R' \subset \R^d$ of dimension at most $k$, there exists a $(d-1)$-dimensional subspace $H \subset \R^d$ containing $R'$ given by all vectors orthogonal to $y- \rho_{R'}(y)$. In particular, $\cost_z^z(Y, H) \leq \cost_z^z(Y, R')$ since $R'$ is contained in $H$, and $\| y - \rho_H(y)\|_2= \|y - \rho_{R'}(y)\|_2$ by the definition of $H$. The bound on the total sensitivity then follows from Lemma~16 in~\cite{VX12}, where we use the fact that $\{ \rho_R(x) : x \in X \}$ lies in a $k$-dimensional subspace.
\end{proof}

We will use the following geometric theorem of \cite{SV12} in our proof. The theorem says that an approximately optimal $(k,z)$-subspace approximation lies in the span of a small set of points. We state the lemma for weighted point sets, even though \cite{SV12} state it for unweighted points. We note that adding weights can be simulated by replicating points. 

\begin{lemma}[Theorem 3.1~\cite{SV12}]\label{lem:find-good-subspace}
Let $d, k \in \N$, and consider a weighted set of points $S \subset \R^d$ with weights $w \colon S \to \R_{\geq 0}$, as well as $\eps \in (0,1)$ and $z \geq 1$. There exists a subset $Q \subset S$ of $O(k^2\log(k/\eps)/\eps)$ and a $k$-dimensional subspace $R' \subset \R^d$ within the span of $Q$ satisfying
\[ \cost_z((S, w), R') \leq (1+\eps) \min_{\substack{R \subset \R^d \\ \dim R \leq k}} \cost_z((S, w), R). \]
\end{lemma}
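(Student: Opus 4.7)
The plan is to prove this by an iterative greedy/adaptive-sampling argument (the approach used in \cite{SV12}), first reducing to the unweighted setting (since weights can be simulated by point replication, up to rational scaling), and then building up $Q$ dimension-by-dimension. The total budget of $O(k^2 \log(k/\eps)/\eps)$ decomposes naturally as $k$ outer iterations, each contributing $O(k \log(k/\eps)/\eps)$ new points to capture one more ``direction'' of the optimal $k$-dimensional subspace.

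The heart of the argument is the base case $k=1$: I would first establish a sub-lemma stating that for any multi-set $T \subset \R^d$ and any $\eps' \in (0,1)$, there is a subset $Q_0 \subset T$ of size $O(\log(1/\eps')/\eps')$ such that some line in $\mathrm{span}(Q_0)$ achieves $\ell_z$ cost within factor $(1+\eps')$ of the optimal line for $T$. The proof proceeds iteratively: letting $v^*$ be a unit vector spanning the optimal line $R^*$, one maintains a ``current'' subspace $V_j = \mathrm{span}(Q_0^{(j)})$ built after $j$ rounds, together with the residual direction $v^*_\perp$ obtained by projecting $v^*$ onto $V_j^\perp$ and renormalizing. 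If the best line inside $V_j$ is not yet a $(1+\eps')$-approximation, then a constant fraction of $\cost_z(T, R^*)$ comes from components along $v^*_\perp$; a pigeonhole/extremal-point argument then identifies a point $x_{j+1} \in T$ whose inclusion reduces $\|v^*_\perp\|$ by a factor $(1-\Omega(\eps'))$. After $O(\log(1/\eps')/\eps')$ rounds, $v^*$ is captured up to the required accuracy.

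For general $k$, I would iterate this base case. At outer iteration $i = 1, \dots, k$, let $R'_{i-1}$ be the subspace spanned by the points chosen so far, and form the projected multi-set $S_{i-1} = \{x - \rho_{R'_{i-1}}(x) : x \in S\} \subset (R'_{i-1})^\perp$. Using that the optimal $k$-dimensional subspace $R^*$ for $S$ has a residual component in $(R'_{i-1})^\perp$ of dimension at most $k$, I show that $S_{i-1}$ admits a line approximation in $(R'_{i-1})^\perp$ whose cost is controlled by $\cost_z(S, R^*)$ up to a benign constant factor (using the $\ell_z$ triangle inequality with H\"older, which is where the argument becomes dimension-dependent for $z \neq 2$). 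Applying the $k=1$ sub-lemma with accuracy $\eps/k$ on $S_{i-1}$ selects $O(k\log(k/\eps)/\eps)$ points in $S$ whose residuals span an approximate best line in $(R'_{i-1})^\perp$. Setting $R'_i = \mathrm{span}(R'_{i-1}, \text{new line})$ and iterating $k$ times yields $|Q| = O(k^2 \log(k/\eps)/\eps)$. A final telescoping argument over the $k$ outer rounds, carefully handling the $(1 + \eps/k)$ slack at each level, establishes $\cost_z((S,w), R'_k) \leq (1+\eps) \cost_z((S,w), R^*)$.

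The main obstacle will be the $z \neq 2$ analysis. For $z=2$ one has orthogonal Pythagorean decomposition, so ``projecting onto $(R'_{i-1})^\perp$ and recursing'' preserves cost exactly. For general $z$, the projection step introduces multiplicative losses that must be absorbed via the slack $\eps/k$ per level and careful use of the $\ell_z$ triangle inequality (which is why the per-level budget is $O(k\log(k/\eps)/\eps)$ rather than $O(\log(1/\eps)/\eps)$). Additionally, arguing that the ``residual line'' selected inside $(R'_{i-1})^\perp$ can actually be realized as the span of original points from $S$ (rather than of their projections) requires a short additional step: since the chosen line lies in $(R'_{i-1})^\perp$, any vector on it can be expressed as a linear combination of residuals $x - \rho_{R'_{i-1}}(x)$, which, together with $R'_{i-1} = \mathrm{span}$ of already-chosen points, is inside $\mathrm{span}(Q)$.
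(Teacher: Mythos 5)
The paper does not actually prove Lemma~\ref{lem:find-good-subspace}: it is stated as Theorem~3.1 of \cite{SV12} and invoked as a black box, with the only additional remark being that the unweighted result of \cite{SV12} extends to weighted point sets by simulating weights via point replication. Your proposal is therefore not comparable to a ``paper proof'' --- you have attempted a from-scratch reconstruction of the argument in \cite{SV12}.

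As such a reconstruction, your high-level outline is on the right track and matches the structure of the Shyamalkumar--Varadarajan adaptive-sampling framework: reduce to unweighted via replication, grow the subspace one dimension at a time, use a base-case line lemma with $O(\log(1/\eps')/\eps')$ points, allocate accuracy $\eps' = \Theta(\eps/k)$ per level, and telescope over $k$ levels to get $|Q| = O(k^2 \log(k/\eps)/\eps)$. You also correctly identify the two technically delicate spots: the loss incurred by projecting to $(R'_{i-1})^\perp$ when $z \neq 2$, and the need to realize the chosen line by original points of $S$ rather than their residuals.

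The one place your sketch is genuinely imprecise is the core progress step of the base case. You claim an extremal point can be chosen whose inclusion shrinks $\|v^*_\perp\|$ by a factor $(1 - \Omega(\eps'))$. That is not quite the right invariant: the argument in \cite{SV12} (and its predecessors in the Deshpande--Varadarajan line of work) tracks the $\ell_z$ cost of the best line inside the current span, showing that either this cost is already within $(1+\eps')$ of $\cost_z(T, R^*)$ or a well-chosen point decreases it multiplicatively. Shrinking the residual direction $v^*_\perp$ is neither what the argument establishes nor obviously achievable by a single greedy point, and the two notions of progress are not equivalent for general $z$. Since this lemma is cited without proof in the paper, you are free to substitute the correct invariant, but as written this step would not survive scrutiny.
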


Lemma~\ref{lem:subspace-sensitivities} gives us an appropriate sensitivity function, and Lemma~\ref{lem:find-good-subspace} limits the search of the subspace to just a small set of points. Similarly to the case of $(k,z)$-clustering, we can use this to construct weak $\eps$-coresets for $(k,z)$-subspace approximation. The following theorem is Theorem~5.10 from \cite{HV20}. We state the theorem with the sensitivity function of Lemma~\ref{lem:subspace-sensitivities}.

\begin{theorem}[Theorem~5.10 from \cite{HV20}]\label{thm:weak-coresets-sa}
For any subset $X = \{ x_1,\dots, x_n \} \subset \R^d$ and $\eps \in (0, 1/2)$, let $\tilde{\sigma}$ denote the sensitivity sampling distribution from the sensitivity function of Lemma~\ref{lem:subspace-sensitivities}. 
\begin{itemize}
\item Let $(\bS, \bw)$ denote the random (multi-)set $\bS \subset X$ and $\bw \colon \bS \to \R_{\geq 0}$ given by, for 
\[ m = \poly((k+1)^{z}, 1/\eps) \]
iterations, sampling $\bx \sim \tilde{\sigma}$ i.i.d and letting $\bw(\bx) = 1/(m\tilde{\sigma}(x))$.
\item Then, with probability $1 - o(1)$ over the draw of $(\bS, \bw)$, it is an $\eps$-weak coreset for $(k,z)$-subspace approximation of $X$.
\end{itemize}
\end{theorem}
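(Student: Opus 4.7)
The plan is to verify the two inequalities defining a weak $\eps$-coreset separately, using the sensitivity function $\sigma$ of Lemma~\ref{lem:subspace-sensitivities}, whose total is $\frakS_\sigma = O((k+1)^{1+z})$. By the defining property of $\sigma$, the importance-sampled summands $Z_i(R) := \|\bx_i - \rho_R(\bx_i)\|_2^z / \tilde{\sigma}(\bx_i)$ satisfy $\Ex[Z_i(R)] = \cost_z^z(X, R)$ and $Z_i(R) \le \frakS_\sigma \cdot \cost_z^z(X, R)$, so $\cost_z^z((\bS, \bw), R) = \tfrac{1}{m}\sum_i Z_i(R)$ is a bounded unbiased estimator for each fixed $R$. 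For the upper inequality $\min_R \cost_z((\bS, \bw), R) \leq (1+\eps)\min_R \cost_z(X, R)$, I would set $R^*$ optimal for $X$ and apply Bernstein's inequality to $\cost_z^z((\bS, \bw), R^*)$; for $m \gsim \frakS_\sigma \log(1/\delta)/\eps^2$ this yields $(1\pm\eps)$ concentration around $\cost_z^z(X, R^*)$ with probability $1-\delta$, after which one extracts the $z$-th root.

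For the lower inequality $\min_R \cost_z((\bS, \bw), R) \geq (1+\eps)^{-1} \min_R \cost_z(X, R)$, the minimizer $R^{**}$ of $\cost_z((\bS, \bw), \cdot)$ depends on $\bS$, so a single-subspace bound is insufficient. My plan is to invoke Lemma~\ref{lem:find-good-subspace} on $(\bS, \bw)$: there exists a $(1+\eps)$-approximately-optimal subspace $R^Q$ for $(\bS, \bw)$ lying in $\mathrm{span}(Q)$ for some $Q \subset \bS$ with $|Q| = q = O(k^2\log(k/\eps)/\eps)$. It therefore suffices to ensure $\cost_z((\bS, \bw), R) \geq (1-\eps) \cost_z(X, R)$ uniformly over all $k$-dim subspaces $R$ lying in the span of some $q$-subset of $\bS$. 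I would build an $\eps$-net $\calN$ of such subspaces: for each of the $\binom{m}{q}$ choices of $Q$, net the Grassmannian of $k$-dim subspaces of $\mathrm{span}(Q)$ at sufficiently fine scale, obtaining $|\calN| \leq \binom{m}{q}(q/\eps)^{O(qkz)}$. Applying Bernstein simultaneously to all $R \in \calN$ and choosing $m$ a sufficiently large polynomial in $(k+1)^z$ and $1/\eps$ forces $\log|\calN| \ll \eps^2 m / \frakS_\sigma$, so two-sided concentration holds uniformly on $\calN$. The best approximator of $R^Q$ inside $\calN$ then delivers $\cost_z(X, R^Q) \leq (1+O(\eps)) \cost_z((\bS, \bw), R^Q)$, chaining to $\min_R \cost_z(X, R) \leq (1+O(\eps))^2 \min_R \cost_z((\bS, \bw), R)$.

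The main obstacle is building the net so that $\log|\calN|$ stays polynomial in $k, z, \log(1/\eps), \log m$, rather than depending on $n$ or $d$. This requires a Lipschitz estimate: a perturbation of an orthonormal basis of $R \subset \mathrm{span}(Q)$ by $\eta$ changes each $\|\bx_i - \rho_R(\bx_i)\|_2$ by at most $O(\eta \|\bx_i\|_2)$, and taking the $z$-th power multiplies this by a factor depending on the ratio between $\|\bx_i\|_2^z$ and $\cost_z^z(X, R)$. Points whose ratio is large are precisely those contributing heavily to the second term of $\sigma$ from Lemma~\ref{lem:subspace-sensitivities}, so they are oversampled into $\bS$ and can be handled individually rather than via the net. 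After that reduction, the Lipschitz argument costs only a $\poly(z)$ factor in the net exponent, absorbed by the final $m = \poly((k+1)^z, 1/\eps)$ bound.
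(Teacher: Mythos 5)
This statement is not something the paper proves: it is cited verbatim as Theorem~5.10 of \cite{HV20} and used as a black box (the same way Theorem~\ref{thm:strong-coresets} is used for the clustering section). So there is no paper proof to compare against. For reference, the actual argument in \cite{HV20} (and in the Feldman--Langberg line it builds on) goes through a combinatorial uniform-convergence bound: Lemma~\ref{lem:find-good-subspace} is used to restrict the function class $R \mapsto \|\cdot - \rho_R(\cdot)\|_2^z$ to subspaces lying in $O(k^2\log(k/\eps)/\eps)$-dimensional spans of coreset points, which gives a bounded shattering/pseudo-dimension for the induced range space, and then a VC-type concentration theorem (no metric-entropy or Lipschitz estimate is needed) supplies the sample-size bound $m = \poly(\frakS_\sigma, \mathrm{pseudo\text{-}dim}, 1/\eps)$.

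Your net-based substitute has a genuine gap exactly where you flag it, and I don't think the proposed repair closes it. The Lipschitz constant of $R \mapsto \|x_i - \rho_R(x_i)\|_2^z$ at a point $x_i$ scales like $z\,\|x_i - \rho_R(x_i)\|_2^{z-1} \cdot \|x_i\|_2$, so to keep the cost within a $(1\pm\eps)$ factor the net resolution $\eta$ must shrink with $\min_i \|x_i - \rho_R(x_i)\|_2 / \|x_i\|_2$. That ratio is instance-dependent and can be arbitrarily small without making either term of $\sigma$ from Lemma~\ref{lem:subspace-sensitivities} large: a point lying exactly on (or very near) $R$ has infinite (or huge) Lipschitz constant, zero residual contribution to the first term, and leverage $\sup_u |\langle \rho_R(x),u\rangle|^z/\sum_{x'}|\langle \rho_R(x'),u\rangle|^z$ that is comparable to $k/n$ when the projected points are symmetric. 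Such points are therefore not oversampled, so ``handling them individually'' is not available, and $\log|\calN|$ would pick up a $\log(\max_i\|x_i\|_2/\min_i\|x_i-\rho_R(x_i)\|_2)$ factor that is not $\poly((k+1)^z, 1/\eps)$. To rescue a net argument you would need either a clipping/peeling scheme that caps the contribution of near-subspace points (and then argues the clipped and unclipped costs agree up to $(1\pm\eps)$) or, more simply, to replace the net with the shattering-dimension bound, which is what \cite{HV20} does and which sidesteps the Lipschitz issue entirely.
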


\subsection{The Important Events} 

Similarly to the previous section, we define the important events, over the randomness in $\bPi$ such that, if these are satisfied, then the optimum of $(k,z)$-subspace approximation after dimension reduction does not decrease substantially. We first define the events, and then we prove that if the events are all satisfied, then we obtain our desired approximation. 

\begin{definition}[The Events]\label{def:events-sa} Let $X = \{ x_1,\dots, x_n \} \subset \R^d$, and $\tilde{\sigma}$ the sensitivity sampling distribution of $X$ from Lemma~\ref{lem:subspace-sensitivities}. We consider the following experiment,
\begin{enumerate}
\item We generate a sample $(\bS, \bw)$ by sampling from $\tilde{\sigma}$ for $m = \poly(k^z, 1/\eps)$ i.i.d iterations $\bx \sim \tilde{\sigma}$ and set $\bw(\bx) = 1/(m\tilde{\sigma}(\bx))$.
\item Furthermore, we sample $\bPi \sim \calJ_{d, t}$, which is a Johnson-Lindenstrauss map $\R^d \to \R^t$.
\item We let $\bS' = \bPi(\bS) \subset \R^t$ denote the image of $\bPi$ on $\bS$.
\end{enumerate}
The events are the following:
\begin{itemize}
\item $\bE_1$ : The weighted (multi-)set $(\bS, \bw)$ is a weak $\eps$-coreset for $(k,z)$-subspace approximation of $X$ in $\R^d$.
\item $\bE_2$ : The map $\bPi \colon \R^d \to \R^t$ satisfies the following condition. For any choice of $O(k^2 \log(k/\eps)/\eps)$ points of $\bS$, $\bPi$ is an $\eps$-subspace embedding of the subspace spanned by these points. 
\item $\bE_3(\beta)$ : Let $\bR' \subset \R^t$ denote the $k$-dimensional subspace for optimal $(k,z)$-subspace approximation of $\bPi(X)$ in $\R^t$. Then, 
\[ \cost_z^z((\bPi(\bS), \bw), \bR') \leq \beta \cdot \cost_z^z(\bPi(X), \bR'). \]
\end{itemize}
\end{definition}

\begin{lemma}\label{lem:events-to-thm-sa}
Let $X = \{ x_1,\dots, x_n \} \subset \R^d$, and suppose $(\bS, \bw)$ and $\bPi \colon \R^d \to \R^t$ satisfy events $\bE_1$, $\bE_2$, and $\bE_3(\beta)$. Then,
\[ \min_{\substack{R' \subset \R^t \\ \dim R' \leq k}} \cost_z(\bPi(X), R') \geq \dfrac{1}{\beta^{1/z}(1+\eps)^3} \cdot \min_{\substack{R \subset \R^d \\ \dim R \leq k}} \cost_z(X, R). \]
\end{lemma}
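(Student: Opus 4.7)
The plan is to mirror the flow of Lemma~\ref{lem:combine-events}, but using the subspace embedding property (event $\bE_2$) in place of Kirszbraun's extension. Let $\bR'\subset\R^t$ be the optimal $(k,z)$-subspace for $\bPi(X)$. By $\bE_3(\beta)$, the weighted set $(\bPi(\bS),\bw)$ already certifies
\[ \cost_z((\bPi(\bS),\bw),\bR') \leq \beta^{1/z}\cdot \cost_z(\bPi(X),\bR'). \]
To convert this into a $d$-dimensional bound, I would apply Lemma~\ref{lem:find-good-subspace} to the weighted set $(\bPi(\bS),\bw)$. This yields a subset $\bQ'\subset\bPi(\bS)$ of size $O(k^2\log(k/\eps)/\eps)$ and a $k$-dimensional subspace $\hat{\bR}\subset\mathrm{span}(\bQ')$ with
\[ \cost_z((\bPi(\bS),\bw),\hat{\bR}) \leq (1+\eps)\cdot \cost_z((\bPi(\bS),\bw),\bR'). \]

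The next step is to pull $\hat{\bR}$ back to $\R^d$. Let $\bQ\subset \bS$ be the preimage of $\bQ'$ (so $\bPi(\bQ)=\bQ'$); by event $\bE_2$, $\bPi$ is an $\eps$-subspace embedding of $\mathrm{span}(\bQ)$ and is in particular a bijection on this subspace, so the pullback $R^*\eqdef \bPi^{-1}(\hat{\bR})\cap\mathrm{span}(\bQ)$ is a well-defined $k$-dimensional subspace of $\R^d$. The key inequality to establish is, for each $x\in\bS$,
\[ \| x - \rho_{R^*}(x)\|_2 \leq (1+\eps)\cdot \| \bPi(x) - \rho_{\hat{\bR}}(\bPi(x))\|_2. \]
To prove this, fix $x\in\bS$ and let $z = \rho_{\hat{\bR}}(\bPi(x))\in\hat{\bR}$, and let $y\in R^*$ be the (unique) preimage $\bPi(y)=z$. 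Now apply event $\bE_2$ to the subset $\bQ\cup\{x\}$, whose cardinality is still $O(k^2\log(k/\eps)/\eps)$: $\bPi$ is an $\eps$-subspace embedding of $V_x \eqdef \mathrm{span}(\bQ\cup\{x\})$, and since both $x$ and $y$ lie in $V_x$, we get $\|x-y\|_2 \leq (1+\eps)\|\bPi(x)-\bPi(y)\|_2 = (1+\eps)\|\bPi(x)-z\|_2$. Since $y\in R^*$, the optimality of $\rho_{R^*}(x)$ gives $\|x-\rho_{R^*}(x)\|_2 \leq \|x-y\|_2$, and the displayed bound follows.

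Raising to the $z$-th power, weighting by $\bw(x)$, and summing over $x\in\bS$ yields $\cost_z((\bS,\bw),R^*) \leq (1+\eps)\cost_z((\bPi(\bS),\bw),\hat{\bR})$. Combining with the previous two inequalities,
\[ \min_{R\subset\R^d,\,\dim R\leq k} \cost_z((\bS,\bw),R)\; \leq\; \cost_z((\bS,\bw),R^*)\; \leq\; (1+\eps)^2\,\beta^{1/z}\cdot \cost_z(\bPi(X),\bR'). \]
Finally, event $\bE_1$ (the weak $\eps$-coreset property of $(\bS,\bw)$) contributes one more factor of $(1+\eps)$ when passing from $\min_R\cost_z((\bS,\bw),R)$ to $\min_R\cost_z(X,R)$, which gives the claimed bound with total loss $\beta^{1/z}(1+\eps)^3$. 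The main subtlety, and the one place that departs from the $(k,z)$-clustering argument, is the pulling-back step: we cannot simply extend $\bPi^{-1}$ via Kirszbraun, so we must invoke $\bE_2$ on the slightly enlarged set $\bQ\cup\{x\}$ for each $x\in\bS$. This is precisely why event $\bE_2$ is formulated as a universal statement over all $O(k^2\log(k/\eps)/\eps)$-subsets of $\bS$ rather than for a single fixed subset.
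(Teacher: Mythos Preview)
Your proof is correct and follows essentially the same route as the paper: apply $\bE_3(\beta)$, use Lemma~\ref{lem:find-good-subspace} to find a near-optimal $k$-subspace $\hat{\bR}$ in the span of a small subset $\bPi(\bQ)$, pull it back to $R^*\subset\mathrm{span}(\bQ)$ via the bijectivity guaranteed by $\bE_2$, and then for each $x\in\bS$ invoke $\bE_2$ on $\bQ\cup\{x\}$ to compare $\|x-\rho_{R^*}(x)\|_2$ with $\|\bPi(x)-\rho_{\hat{\bR}}(\bPi(x))\|_2$, finishing with $\bE_1$. The only cosmetic difference is that the paper constructs $R^*$ (called $U$ there) by explicitly writing down preimages $u_\ell$ of an orthonormal basis $v_\ell$ of $\hat{\bR}$, whereas you describe it as $\bPi^{-1}(\hat{\bR})\cap\mathrm{span}(\bQ)$; these are the same subspace (as a minor aside, you may want to rename the point you call $z$ to avoid clashing with the exponent $z$).
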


\begin{proof}
Consider a fixed $\bPi$ and $(\bS, \bw)$ satisfying the three events of Definition~\ref{def:events-sa}. Let $\bR' \subset \R^t$ be the $k$-dimensional subspace which minimizes $\cost_z^z(\bPi(X), \bR')$. Then, by event $\bE_3(\beta)$, we have $\cost_z^z((\bPi(\bS), \bw), \bR') \leq \beta \cdot \cost_z^z(\bPi(X), \bR')$. Now, we apply Lemma~\ref{lem:find-good-subspace} to $(\bPi(\bS), \bw)$, and we obtain a subset $Q \subset \bS$ of size $O(k^2 \log(k/\eps) / \eps)$ for which there exists a $k$-dimensional subspace $R'' \subset \R^t$ within the span of $\bPi(Q)$ which satisfies
\[ \left( \sum_{x \in \bS} \bw(x) \cdot \| \bPi(x) - \rho_{R''}(\bPi(x)) \|_2^z\right)^{1/z} = \cost_z((\bPi(\bS),\bw), R'') \leq (1+\eps) \cdot \cost_z((\bPi(\bS), \bw), \bR'). \]
Note that $R''$ is a $k$-dimensional subspace lying in the span of $\bPi(Q)$. For any $x \in \bS$, we will use the fact that $\bE_2$ is satisfied to say that $\bPi$ is an $\eps$-subspace embedding of the subspace spanned by $Q \cup \{x\}$. This will enable us to find a subspace $U \subset \R^d$ of dimension $k$ whose cost of approximating $(\bS, \bw)$ is at most $(1+\eps) \cdot \cost_z( (\bPi(\bS), \bw), R'')$.

Specifically, we write $v_1, \dots, v_k \in \R^t$, as orthogonal unit vectors which span $R''$. Because $R''$ lies in the span of $\bPi(Q)$, there are vectors $u_1,\dots, u_k \in \R^d$ in the span of $Q$ which satisfy
\[ v_{\ell} = \bPi(u_{\ell}) \in \R^t \qquad\text{for}\qquad u_{\ell} = \sum_{y \in Q} c_{\ell, y} y \in \R^d. \]
Hence, the subspace $U$ given by the span of all vectors in $U$ is a $k$-dimensional subspace lying in the span of $Q$. For $x \in \bS$, we may write the coefficients $\gamma_{\ell}(x) = \langle \bPi(x), v_{\ell}\rangle$, and we may express projection $\rho_{R''}(\bPi(x)) \in \R^t$ as
\[ \rho_{R''}(\bPi(x)) = \sum_{\ell=1}^k \gamma_{\ell}(x) \cdot v_{\ell} = \bPi\left( \sum_{y \in Q} \left(\sum_{\ell=1}^k \gamma_{\ell}(x) c_{\ell, y}\right) \cdot y\right) = \bPi\left( \sum_{\ell=1}^k \gamma_{\ell}(x) u_{\ell} \right). \]
which is a linear combination of $Q$. By event $\bE_2$, $\bPi$ is an $\eps$-subspace embedding of the subspace spanned by $Q \cup \{ x\}$, so
\[ \| x - \rho_U(x) \|_2 \leq \left\| x - \sum_{\ell=1}^k \gamma_{\ell}(x) u_{\ell}\right\|_2 \leq (1+\eps) \| \bPi(x) - \rho_{R''}(\bPi(x)) \|_2. \]
Combining the inequalities, we have
\[ \cost_z\left( (\bS, \bw), U\right) \leq (1+\eps) \cdot \cost_z\left( (\bPi(\bS), \bw), R''\right),  \]
and finally, since $(\bS, \bw)$ is a $\eps$-weak coreset, we obtain the desired inequality.
\end{proof}

We note that event $\bE_1$ will be satisfied with sufficiently high probability from Theorem~\ref{thm:weak-coresets-sa}. Furthermore, event $\bE_2$ is satisfied with sufficiently high probability from the following simple lemma. All that will remain is showing that event $\bE_3(\beta)$ is satisfied.

\begin{lemma}\label{lem:event-2-sa}
Let $S \subset \R^d$ be any set of $m$ points and $\ell \in \N$, and let $\calJ_{d,t}$ denote the Johnson-Lindenstrauss map, with
\[t \gsim \dfrac{\ell \log m}{\eps^2}. \]
Then, with probability $0.99$ over the draw of $\bPi \sim \calJ_{d, t}$, $\bPi$ is an $\eps$-subspace embedding for all subspaces spanned by $\ell$ vectors of $S$.
\end{lemma}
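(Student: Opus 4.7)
The plan is to reduce the statement to a union bound over Lemma~\ref{lem:subspace-embedding}. First I would observe that any subspace spanned by $\ell$ vectors of $S$ has dimension at most $\ell$, and that the number of such subspaces (as $S$ has only $m$ points) is at most $\binom{m}{\ell} \leq m^{\ell}$. So the task reduces to showing that $\bPi$ is an $\eps$-subspace embedding simultaneously for a collection of at most $m^{\ell}$ subspaces, each of dimension at most $\ell$.

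Next I would invoke Lemma~\ref{lem:subspace-embedding}: for any fixed subspace $A \subset \R^d$ of dimension at most $\ell$ and any failure parameter $\delta \in (0, 1/2)$, a Gaussian Johnson-Lindenstrauss map with target dimension $t \gsim (\ell + \log(1/\delta))/\eps^2$ is an $\eps$-subspace embedding of $A$ with probability at least $1 - \delta$. Setting $\delta = 0.01 / m^{\ell}$ gives $\log(1/\delta) = O(\ell \log m)$, so the lemma's hypothesis is satisfied whenever
\[ t \gsim \frac{\ell + \ell \log m}{\eps^2} \asymp \frac{\ell \log m}{\eps^2}, \]
matching the hypothesis in the statement.

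Finally I would apply a union bound over the $\leq m^{\ell}$ subspaces arising as the span of some $\ell$-element subset of $S$. Each fails with probability at most $0.01 / m^{\ell}$, so the total failure probability is at most $0.01$, giving success with probability at least $0.99$. There is no real obstacle here; the only thing to be slightly careful about is that the enumeration over subsets of size $\ell$ covers all subspaces spanned by $\ell$ vectors of $S$, which it does by definition (subspaces spanned by fewer than $\ell$ vectors only need smaller embeddings and are automatically handled).
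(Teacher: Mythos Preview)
Your proposal is correct and matches the paper's proof essentially verbatim: count the at most $\binom{m}{\ell}\le m^{\ell}$ subspaces, apply Lemma~\ref{lem:subspace-embedding} with $\delta$ a small constant times $1/m^{\ell}$, and union bound.
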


\begin{proof}
There are at most $\binom{m}{\ell}$ subspaces spanned by $\ell$ vectors of $S$. If $\bPi$ is a subspace embedding for all of them, we obtain our desired conclusion. We use Lemma~\ref{lem:subspace-embedding} with $\delta$ to be a sufficiently small constant factor of $1/m^{\ell}$ and union bound.
\end{proof}

\subsubsection{A Bad Approximation Guarantee}

\begin{lemma}[Warm-Up Lemma]\label{lem:bad-approx-sa}
Fix any $\Pi \in \calJ_{d,t}$ and let $R' \subset \R^t$ denote the $k$-dimensional subspace for optimal $(k,z)$-subspace approximation of $\Pi(X)$ in $\R^t$. Then, with probability at least $0.99$ over the draw of $(\bS, \bw)$ as per Definition~\ref{def:events-sa},
\begin{align*}
\sum_{x \in \bS} \bw(x) \cdot \| \Pi(x) - \rho_{R'}(\Pi(x))\|_2^z \leq 100 \cdot \cost_z^z(\Pi(X), R').
\end{align*}
In other words, event $\bE_3(100)$ holds with probability at least $0.99$.
\end{lemma}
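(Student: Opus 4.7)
The plan is to follow the same schema as Lemma~\ref{lem:bad-approx} in the clustering section: interpret the left-hand side as an importance-sampling estimator for $\cost_z^z(\Pi(X), R')$, compute its expectation over the draw of $(\bS, \bw)$ with $\Pi$ fixed, and then apply Markov's inequality. Since $\Pi$ is now the (arbitrary, fixed) Johnson-Lindenstrauss map, no probabilistic argument is needed about $\Pi$; the only randomness is in the sensitivity sampling.

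Concretely, recall that $(\bS, \bw)$ is produced by drawing $m$ i.i.d.\ samples $\bx \sim \tilde{\sigma}$ from the sensitivity sampling distribution in Lemma~\ref{lem:subspace-sensitivities}, and setting $\bw(\bx) = 1/(m\tilde{\sigma}(\bx))$. Therefore the quantity of interest can be rewritten as an average over $m$ i.i.d.\ draws:
\begin{align*}
\sum_{x \in \bS} \bw(x)\cdot \|\Pi(x) - \rho_{R'}(\Pi(x))\|_2^z \;=\; \Ex_{\bx \sim \bS}\!\left[\frac{1}{\tilde{\sigma}(\bx)} \cdot \|\Pi(\bx) - \rho_{R'}(\Pi(\bx))\|_2^z \right].
\end{align*}
Taking expectation over $(\bS, \bw)$ and using that $\tilde{\sigma}$ is supported on $X$ with mass $\tilde{\sigma}(x)$ at $x$, the $1/\tilde{\sigma}(\bx)$ factor cancels out and one obtains
\begin{align*}
\Ex_{(\bS,\bw)}\!\left[\sum_{x \in \bS} \bw(x)\cdot \|\Pi(x) - \rho_{R'}(\Pi(x))\|_2^z\right] = \sum_{x \in X} \|\Pi(x) - \rho_{R'}(\Pi(x))\|_2^z = \cost_z^z(\Pi(X), R').
\end{align*}
Since the sampled quantity is non-negative, Markov's inequality gives that it exceeds $100\cdot \cost_z^z(\Pi(X), R')$ with probability at most $1/100$, proving $\bE_3(100)$.

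There is really no technical obstacle here; the point is exactly the one highlighted in the overview of techniques: the expectation identity (\ref{eq:expectation-after-Pi}) holds for \emph{any} fixed $\Pi$ in the support of $\calJ_{d,t}$, because $\tilde{\sigma}$ is a probability distribution on $X$ and the weights $1/(m\tilde{\sigma}(\bx))$ are designed to produce an unbiased estimator no matter which nonnegative function of $\bx$ one aggregates. This loose bound is all we need at this stage; the cubic dependence on $\eps^{-1}$ and the tighter variance bound (\ref{eq:variance-after-map}) will be obtained later (as in Section~\ref{sec:improving-approx}) by bootstrapping off of this $O(1)$-approximation.
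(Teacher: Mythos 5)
Your proposal matches the paper's proof exactly: both rewrite the weighted sum as an unbiased importance-sampling estimator for $\cost_z^z(\Pi(X), R')$, compute its expectation over $(\bS,\bw)$ with $\Pi$ held fixed, and apply Markov's inequality to the nonnegative quantity to get the factor-$100$ bound. This is the same one-line computation the paper carries out in the proof of Lemma~\ref{lem:bad-approx-sa}, mirroring Lemma~\ref{lem:bad-approx}.
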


\begin{proof}
Similarly to the proof of Lemma~\ref{lem:bad-approx}, we compute the expectation of the left-hand side of the inequality and use Markov's inequality.
\begin{align}
\Ex_{\bS}\left[ \Ex_{\bx \sim \bS}\left[\frac{1}{\tilde{\sigma}(\bx)} \cdot \| \Pi(x) - \rho_{R'}(\Pi(x))\|_2^z \right]\right] &= \Ex_{\bx \sim \tilde{\sigma}}\left[ \frac{1}{\tilde{\sigma}(\bx)} \cdot \| \Pi(x) - \rho_{R'}(\Pi(x))\|_2^z\right] = \cost_z^z(\Pi(X), R'). \label{eq:expectation-S-sa}
\end{align}
\end{proof}

\begin{corollary}\label{cor:bad-approx-sa}
Let $X = \{ x_1,\dots, x_n \} \subset \R^d$ be any set of points. For any $\eps \in (0,1/2)$, let $\calJ_{d,t}$ be the Johnson-Lindenstrauss map with
\[ t \gsim \dfrac{z \cdot k^2 \cdot \polylog(k/\eps)}{\eps^3}. \]
Then, with probability at least $0.97$ over the draw of $\bPi \sim \calJ_{d,t}$,
\begin{align*}
\dfrac{1}{100^{1/z} (1+\eps)^3} \cdot \cost_z(X, R) \leq \min_{\substack{R' \subset \R^t \\ \dim R' \leq k}} \cost_z(\bPi(X), R').
\end{align*}
\end{corollary}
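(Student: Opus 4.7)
The plan is to follow the exact schema of the warm-up for center-based clustering (Corollary~\ref{cor:bad-approx}), only now using the events of Definition~\ref{def:events-sa} and their corresponding realizations for subspace approximation. First, I would draw $\bPi \sim \calJ_{d,t}$ and $(\bS,\bw)$ jointly (with $m = \poly((k+1)^z, 1/\eps)$ i.i.d.\ samples from $\tilde{\sigma}$ of Lemma~\ref{lem:subspace-sensitivities}). Then it suffices to argue that events $\bE_1$, $\bE_2$, and $\bE_3(100)$ simultaneously hold with probability at least $0.97$, since Lemma~\ref{lem:events-to-thm-sa} instantiated with $\beta = 100$ delivers exactly the claimed lower bound $\cost_z(X,R)/(100^{1/z}(1+\eps)^3) \leq \min_{R'} \cost_z(\bPi(X), R')$.

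For event $\bE_1$, Theorem~\ref{thm:weak-coresets-sa} ensures that $(\bS,\bw)$ is a weak $\eps$-coreset for $(k,z)$-subspace approximation with probability $1 - o(1)$, purely from the sensitivity-sampling construction. For event $\bE_3(100)$, Lemma~\ref{lem:bad-approx-sa} handles this for every fixed $\Pi$ in the support of $\calJ_{d,t}$: by a direct expectation computation and Markov's inequality, the weighted cost of $(\bPi(\bS),\bw)$ with respect to the optimal $k$-subspace $\bR'$ of $\bPi(X)$ is at most $100$ times the true cost with probability at least $0.99$ over $(\bS,\bw)$. Since this is true pointwise in $\bPi$, it also holds marginally after drawing $\bPi \sim \calJ_{d,t}$.

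The dominant constraint on $t$ comes from event $\bE_2$. I would invoke Lemma~\ref{lem:event-2-sa} applied to the (multi-)set $\bS$ of size $m$, taking $\ell = O(k^2 \log(k/\eps)/\eps)$, the same bound on the support size that appears in Lemma~\ref{lem:find-good-subspace}. This requires
\[ t \gsim \frac{\ell \log m}{\eps^2} \gsim \frac{k^2 \log(k/\eps)}{\eps^3} \cdot \log\bigl(\poly((k+1)^z/\eps)\bigr) = \frac{z\, k^2 \polylog(k/\eps)}{\eps^3}, \]
matching the stated bound. A final union bound over $\bE_1$, $\bE_2$, $\bE_3(100)$ loses at most $0.03$ in probability, and then Lemma~\ref{lem:events-to-thm-sa} closes the proof.

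The only genuinely delicate point is ordering the randomness carefully: the algorithm $\ALG$ computing $(\bS,\bw)$ depends on $X$ (not on $\bPi$), while $\bE_2$ is a statement about $\bPi$ conditioned on $\bS$. Fortunately, both $\bE_1$ and $\bE_2$ can be argued by first conditioning on $\bS$ (using that it has size $m$) and applying Lemma~\ref{lem:event-2-sa} to the resulting fixed set, so independence of $\bPi$ and $\bS$ lets the union bound go through cleanly. No new technical machinery is needed beyond the lemmas already set up; the work is purely in checking that the parameter $t$ is chosen large enough so that the $\ell$-subspace-embedding event of Lemma~\ref{lem:event-2-sa} applies to every $\ell$-tuple of coreset points with failure probability $o(1)$.
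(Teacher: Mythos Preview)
Your proposal is correct and follows essentially the same approach as the paper: sample $\bPi$ and $(\bS,\bw)$, union bound over $\bE_1$ (via Theorem~\ref{thm:weak-coresets-sa}), $\bE_3(100)$ (via Lemma~\ref{lem:bad-approx-sa}), and $\bE_2$ (via Lemma~\ref{lem:event-2-sa} with $m=\poly(k^z,1/\eps)$ and $\ell=O(k^2\log(k/\eps)/\eps)$), then apply Lemma~\ref{lem:events-to-thm-sa} with $\beta=100$. Your additional remarks about the ordering of randomness are a helpful clarification but do not depart from the paper's argument.
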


\begin{proof}
We sample $\bPi \sim \calJ_{d,t}$ and $(\bS, \bw)$ as per Definition~\ref{def:events-sa}. By Theorem~\ref{thm:weak-coresets-sa} and Lemma~\ref{lem:bad-approx-sa}, and a union bound, events $\bE_1$ and $\bE_3(100)$ hold with probability at least $0.98$. Event $\bE_2$ occurs with probability at least $0.99$ by apply Lemma~\ref{lem:event-2-sa} with $m = \poly(k^z, 1/\eps)$ and $\ell = O(k^2 \log(k/\eps) / \eps)$. Hence, we apply Lemma~\ref{lem:events-to-thm-sa}.
\end{proof}

\subsubsection{Improving the Approximation}

We now improve on the approximation of Corollary~\ref{cor:bad-approx-sa} in a fashion similar to that of Subsection~\ref{sec:improving-approx}. We will show that with large probability, $\bE_3(1+\eps)$ holds. We let $X = \{ x_1,\dots, x_n \} \subset \R^d$ and $R \subset \R^d$ be the subspace of dimension $k$ for optimal $(k,z)$-subspace approximation of $X$ in $\R^d$. As before, we let $\sigma \colon X \to \R_{\geq 0}$ be the sensitivities of $X$ with respect to $R$ (as in Lemma~\ref{lem:subspace-sensitivities}), and $\tilde{\sigma}$ be the sensitivity sampling distribution. 

We define one more events, $\bE_4$ with respect to the randomness in $\bPi \sim \calJ_{d,t}$. Let $\bD_x \in \R_{\geq 0}$ denote the random variable given by
\begin{align} 
\bD_x \eqdef \dfrac{\| \bPi(x) - \bPi(\rho_R(x))\|_2}{\|x - \rho_R(x)\|_2}. \label{eq:def-D-sa}
\end{align}
We say event $\bE_4$ holds whenever
\begin{align} 
\sum_{x \in X} \bD_x^{2z} \cdot \sigma(x) \leq 100 \cdot 2^z \cdot \frakS_{\sigma}, \label{eq:event-4-sa}
\end{align}
which holds with probability at least $0.99$ over the draw of $\bPi \sim \calJ_{d,t}$, similarly to the proof of Claim~\ref{cl:event-4} and Lemma~\ref{lem:subspace-sensitivities}. 

\begin{lemma}\label{lem:variance-bound-sa}
Let $\Pi \in \calJ_{d,t}$ be a Johnson-Lindenstrauss map where, for $\alpha > 1$, the follows events hold:
\begin{enumerate}
\item\label{en:event-1-sa} Guarantee from Lemma~\ref{lem:subspace-ub}: $\sum_{x \in X} \| \Pi(x) - \Pi(\rho_R(x))\|_2^z \leq \alpha \cdot \cost_z^z(X, R)$.
\item\label{en:event-2-sa} Guarantee from Corollary~\ref{cor:bad-approx-sa}: letting $R' \subset \R^t$ be the optimal $(k,z)$-subspace approximation of $\Pi(X)$, then $\cost_z^z(X, R) \leq \alpha \cost_z^z(\Pi(X), R')$.
\item\label{en:event-3-sa} Event $\bE_4$ holds.
\end{enumerate}
Then, if we let $(\bS, \bw)$ denote $m = \poly(k^z, 1/\eps, \alpha)$ i.i.d draws from $\tilde{\sigma}$ and $\bw(x) = 1 / (m\tilde{\sigma}(x))$, with probability at least $0.99$, 
\[ \cost_z^z((\Pi(\bS), \bw), R') \leq (1+\eps) \cdot \cost_z^z(\Pi(X), R'). \]
\end{lemma}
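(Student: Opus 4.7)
The plan is to mirror the variance-bound argument of Lemma~\ref{lem:variance-bound}, replacing the clustering-specific sensitivity bound by its subspace-approximation analogue from Lemma~\ref{lem:subspace-sensitivities}. Write $Z \eqdef \cost_z^z((\Pi(\bS), \bw), R')$ as the average of $m$ i.i.d.\ estimators $Y_i \eqdef \|\Pi(\bx_i) - \rho_{R'}(\Pi(\bx_i))\|_2^z / \tilde{\sigma}(\bx_i)$ where $\bx_i \sim \tilde{\sigma}$. As in (\ref{eq:expectation-S-sa}), $\Ex[Z] = \cost_z^z(\Pi(X), R')$, so it will suffice to bound $\Var[Z] = \Var[Y_1]/m \leq (\frakS_\sigma / m) \sum_{x \in X} \|\Pi(x) - \rho_{R'}(\Pi(x))\|_2^{2z} / \sigma(x)$ by $\eps^2 \cost_z^{2z}(\Pi(X), R')/100$; a one-sided Chebyshev's inequality then yields the $(1+\eps)$-upper bound with probability at least $0.99$.

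The key technical step is a per-point inequality
\begin{align}
\dfrac{\|\Pi(x) - \rho_{R'}(\Pi(x))\|_2^z}{\cost_z^z(\Pi(X), R')} \leq \alpha^{O(1)} \cdot 2^{O(z)} \cdot (D_x^z + 1) \cdot \sigma(x), \label{eq:per-point-sa}
\end{align}
in direct analogy with the bound distilled inside the proof of Lemma~\ref{lem:variance-bound}. I would derive (\ref{eq:per-point-sa}) by retracing the proof of Lemma~\ref{lem:subspace-sensitivities} with $R$ playing the role of the intermediate ``optimal'' subspace. Two applications of the triangle inequality and H\"older give $\|\Pi(x) - \rho_{R'}(\Pi(x))\|_2^z \leq 2^{z-1}\|\Pi(x) - \Pi(\rho_R(x))\|_2^z + 2^{z-1}\|\Pi(\rho_R(x)) - \rho_{R'}(\Pi(\rho_R(x)))\|_2^z$. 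The first summand equals $2^{z-1} D_x^z \|x - \rho_R(x)\|_2^z$; dividing by $\cost_z^z(\Pi(X), R')$ and invoking hypothesis~(\ref{en:event-2-sa}) turns the denominator into $\cost_z^z(X, R)$ at the cost of a factor $\alpha$, exhibiting $\alpha D_x^z$ times the first term of $\sigma(x)$. For the second summand, I would first bound it against $\cost_z^z(\Pi(\rho_R(X)), R')$ via the hyperplane reduction from Lemma~\ref{lem:subspace-sensitivities}, obtaining $\sup_{v \in \R^t} |\langle \Pi(\rho_R(x)), v\rangle|^z / \sum_{x'}|\langle \Pi(\rho_R(x')), v\rangle|^z$; the adjoint substitution $u = \Pi^T v$ then upper bounds this by $\sup_{u \in \R^d} |\langle \rho_R(x), u\rangle|^z / \sum_{x'}|\langle \rho_R(x'), u\rangle|^z$, i.e.\ the second term of $\sigma(x)$ up to the factor $2^{2z-1}$. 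Finally, to swap the denominator $\cost_z^z(\Pi(\rho_R(X)), R')$ back to $\cost_z^z(\Pi(X), R')$, one more triangle inequality combined with hypotheses~(\ref{en:event-1-sa}) and~(\ref{en:event-2-sa}) gives $\cost_z^z(\Pi(\rho_R(X)), R') \leq 2^{z-1}(\alpha^2 + 1)\cost_z^z(\Pi(X), R')$.

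Once (\ref{eq:per-point-sa}) is established, the rest is a direct calculation. Substituting into $\Ex[Y_1^2]$ and using $\tilde{\sigma}(x) = \sigma(x)/\frakS_\sigma$ yields $\Ex[Y_1^2] \leq \alpha^{O(1)} \cdot 2^{O(z)} \cdot \frakS_\sigma \cdot \cost_z^{2z}(\Pi(X), R') \cdot \sum_{x \in X}(D_x^z + 1)^2 \sigma(x)$. Expanding the square and using event $\bE_4$ together with $\sum_x \sigma(x) = \frakS_\sigma$ bounds the remaining sum by $O(2^z) \frakS_\sigma$, hence $\Var[Z] \leq \alpha^{O(1)} \cdot 2^{O(z)} \cdot \frakS_\sigma^2 \cdot \cost_z^{2z}(\Pi(X), R')/m$. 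Since $\frakS_\sigma = O((k+1)^{z+1}\cdot 2^{2z})$ from Lemma~\ref{lem:subspace-sensitivities}, taking $m = \poly(k^z, 1/\eps, \alpha)$ sufficiently large drives this below $\eps^2 \cost_z^{2z}(\Pi(X), R')/100$, and Chebyshev's inequality finishes the proof.

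I expect the main obstacle to be the second-summand portion of (\ref{eq:per-point-sa}). Unlike the clustering case of Lemma~\ref{lem:variance-bound}, here $\sigma(x)$ contains a ``leverage-score''-like term defined through a supremum over test directions in $\R^d$, and we must argue that this same quantity continues to dominate the post-projection sensitivity whose sup is a priori over test directions in $\R^t$. The adjoint substitution $u = \Pi^T v$ is precisely what makes this work, pushing the supremum back into $\R^d$ without requiring any subspace-embedding property of $\Pi$ restricted to $R$, only its linearity.
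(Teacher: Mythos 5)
Your proposal is correct and follows essentially the same approach as the paper: the same variance calculation, the same two triangle-inequality applications decomposing through $\rho_R(x)$, the same adjoint substitution $u = \Pi^\top v$ to push the leverage-score supremum from $\R^t$ back into $\R^d$, the same use of hypotheses (\ref{en:event-1-sa}), (\ref{en:event-2-sa}) to swap cost denominators, and the same appeal to $\bE_4$ at the end. The only difference is cosmetic: your per-point bound keeps the additive form $(D_x^z + 1)$, which is slightly more careful than the paper's multiplicative $D_x^z (\alpha^2+1)$ (the latter is loose when $D_x < 1$), but this has no effect on the final conclusion since both are absorbed into $m = \poly(k^z, 1/\eps, \alpha)$.
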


\begin{proof}
Again, the proof is similar to that of Lemma~\ref{lem:variance-bound}, where we bound the variance of the estimator to apply Chebyshev's inequality. In particular, we have
\begin{align}
\mathop{\Var}_{\bS}\left[ \Ex_{\bx \sim \bS}\left[ \dfrac{1}{\tilde{\sigma}(\bx)} \cdot \dfrac{\| \Pi(\bx) - \rho_{R'}(\Pi(\bx))\|_2^z}{\cost_z^z(\Pi(X), R')} \right] \right] \leq \frac{\frakS_{\sigma}}{m} \sum_{x \in X} \left(\frac{1}{\sigma(x)} \cdot \dfrac{\| \Pi(x) - \rho_{R'}(\Pi(x))\|_2^{2z}}{\cost_z^{2z}(\Pi(X), R')} \right). \label{eq:variance-bound-sa}
\end{align}
Similarly to the proof of Lemma~\ref{lem:variance-bound}, we will upper bound
\begin{align*}
\dfrac{\| \Pi(x) - \rho_{R'}(\Pi(x))\|_2^z}{\cost_z^z(\Pi(X), R)}
\end{align*}
as a function of $\sigma(x)$ and $D_x$ (given by (\ref{eq:def-D-sa}) without boldface as $\Pi$ is fixed). Toward this bound, we simplify the notation by letting $y_x = \rho_{R}(x) \in \R^d$ and $Y = \{ y_x : x \in X\}$. Then, since $\Pi \colon \R^d \to \R^t$ is a linear map, for any $x \in X$
\begin{align}
\sup_{v \in \R^t} \dfrac{|\langle \Pi(y_x), v \rangle|^z}{\sum_{x' \in X} |\langle \Pi(y_{x'}), v\rangle|^z} \leq \sup_{u \in \R^d} \dfrac{|\langle y_x, u \rangle|^z}{ \sum_{x' \in X} |\langle y_{x'}, u\rangle|^z}.  \label{eq:sups-better}
\end{align}
In particular, if we let $M \in \R^{n\times d}$ be the matrix given by having the rows be points $y_x \in Y$, then writing $\Pi \in \R^{d \times t}$, we have $M \Pi \in \R^{n\times t}$ is the matrix whose rows are $\Pi(y_x)$; in particular, one may compare the left- and right-hand sides of (\ref{eq:sups-better}) by letting $u = \Pi v \in \R^d$. Thus, we have
\begin{align}
&\| \Pi(x) - \rho_{R'}(\Pi(x))\|_2^z \leq \| \Pi(x) - \rho_{R'}(\Pi(y_x))\|_2^z \leq 2^{z-1} \left( \| \Pi(x) - \Pi(y_x)\|_2^z + \| \Pi(y_x) - \rho_{R'}(\Pi(y_x))\|_2^z \right) \nonumber \\
	&\qquad\leq 2^{z-1} \left(D_x^z \cdot \dfrac{\| x - y_x\|_2^z}{\cost_z^z(X, R)} \cdot \cost_z^z(X, R) + \dfrac{\| \Pi(y_x) - \rho_{R'}(\Pi(y_x))\|_2^z}{\cost_z^z(\Pi(Y), R')} \cdot \cost_z^z(\Pi(Y), R') \right). \label{eq:distance-in-projection}
\end{align}
We may now apply the triangle inequality, as well as (\ref{en:event-1-sa}) and (\ref{en:event-2-sa}), and we have
\begin{align}
 \cost_z^z(\Pi(Y), R') &\leq 2^{z-1}\left( \cost_z^z(\Pi(X), R') + \sum_{x\in X} \| \Pi(x) - \Pi(y_x)\|_2^z \right) \nonumber \\
 		&\leq 2^{z-1}\left( \cost_z^z(\Pi(X), R') + \alpha \cdot \cost_z^z(X, R)\right)  \leq 2^{z-1} (1 + \alpha^2) \cdot \cost_z^z(\Pi(X), R'). \label{eq:cost-triangle-ineq}
 \end{align}
Finally, we note that, similarly to the proof of Lemma~\ref{lem:subspace-sensitivities}, 
\begin{align} 
\frac{\| \Pi(y_x) - \rho_{R'}(\Pi(y_x))\|_2^z}{\cost_z^z(\Pi(Y), R')} \leq \sup_{v \in \R^t} \dfrac{|\langle \Pi(y_x), v \rangle|^z}{\sum_{x' \in X} |\langle \Pi(y_{x'}), v \rangle|^z}.  \label{eq:sens-on-y}
\end{align}
Continuing to upper-bound the left-hand side of (\ref{eq:distance-in-projection}) by plugging in (\ref{eq:sups-better}), (\ref{eq:sens-on-y}) and (\ref{eq:cost-triangle-ineq}),
\begin{align*}
\dfrac{\| \Pi(x) - \rho_{R'}(\Pi(x))\|_2^z}{\cost_z^z(\Pi(X), R')} &\leq 2^{z-1} \left(D_x^z \alpha \cdot \dfrac{\| x - y_x\|_2^z}{\cost_z^z(X, Y)} + 2^{z-1}(\alpha^2 + 1) \sup_{u \in \R^d} \dfrac{|\langle y_x, u \rangle|^z}{\sum_{x' \in X} |\langle y_{x'}, u\rangle|^z} \right) \\
	&\leq D_x^z \cdot (\alpha^2 + 1) \cdot \sigma(x). 
\end{align*}
In particular, the bound on the variance in (\ref{eq:variance-bound-sa}) is at most
\[ \dfrac{\frakS_{\sigma}}{m} \cdot 2^{4z} (\alpha^2 + 1)^2 \sum_{x \in X}  D_x^{2z} \sigma(x) \leq \dfrac{\frakS_{\sigma}^2}{m} \cdot 2^{5z} (\alpha^2 + 1),\]
so letting $m = \poly(k^z, 1/\eps, \alpha)$ gives the desired bound on the variance.
\end{proof}

\begin{corollary}
Let $X = \{ x_1,\dots, x_n \} \subset \R^d$ be any set of points, and let $R \subset \R^d$ be the subspace for optimal $(k,z)$-subspace approximation of $X$. For any $\eps \in (0, 1/2)$, let $\calJ_{d,t}$ be the Johnson-Lindenstrauss map with 
\[ t \gsim \dfrac{z \cdot k^2 \cdot \polylog(k/\eps)}{\eps^3}. \]
Then, with probability at least $0.92$ over the draw of $\bPi \sim \calJ_{d,t}$,
\[ \dfrac{1}{(1+\eps)^{3+1/z}} \cdot \cost_z(X, R) \leq \min_{\substack{R' \subset \R^t \\ \dim R' \leq k}} \cost_z(\bPi(X), R'). \]
\end{corollary}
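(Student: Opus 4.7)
The plan is to mirror the schema used for the analogous corollary at the end of Section~\ref{sec:clustering}, now plugging in the subspace-approximation machinery just developed. First, I would sample $\bPi \sim \calJ_{d,t}$ and $(\bS, \bw)$ as in Definition~\ref{def:events-sa}, with $m = \poly(k^z, 1/\eps)$ chosen large enough for Lemma~\ref{lem:variance-bound-sa}. The target is to verify that events $\bE_1$, $\bE_2$, and $\bE_3(1+\eps)$ all hold simultaneously with probability at least $0.92$; then Lemma~\ref{lem:events-to-thm-sa} applied with $\beta = 1+\eps$ yields exactly the claimed $(1+\eps)^{3+1/z}$-factor lower bound on $\min_{R'} \cost_z(\bPi(X), R')$.

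The first two events are immediate from earlier lemmas. Event $\bE_1$ holds with probability $1-o(1)$ by Theorem~\ref{thm:weak-coresets-sa}, given the choice of $m$. Event $\bE_2$ holds with probability at least $0.99$ by Lemma~\ref{lem:event-2-sa}, applied to the coreset with $\ell = O(k^2 \log(k/\eps)/\eps)$ as demanded by Lemma~\ref{lem:events-to-thm-sa}; this is also where the stated target dimension is determined, since Lemma~\ref{lem:event-2-sa} requires $t \gsim \ell \log m / \eps^2 = \tilde O(z k^2/\eps^3)$, matching the regimes needed by Corollary~\ref{cor:bad-approx-sa} and Lemma~\ref{lem:gaussian-guarantee}.

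The crux is upgrading the coarse $\bE_3(100)$ from Corollary~\ref{cor:bad-approx-sa} to $\bE_3(1+\eps)$, which is precisely the content of Lemma~\ref{lem:variance-bound-sa}. Its three hypotheses are each standard high-probability events over $\bPi$: the ``easy'' upper bound from Lemma~\ref{lem:subspace-ub} (probability $0.99$), the coarse $O(1)$-approximation from Corollary~\ref{cor:bad-approx-sa} (probability $0.97$), and event $\bE_4$, which controls $\sum_x \bD_x^{2z}\sigma(x)$ and holds with probability at least $0.99$ by Markov applied to the expectation computed via Lemma~\ref{lem:gaussian-guarantee} and the total-sensitivity bound of Lemma~\ref{lem:subspace-sensitivities}. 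On the intersection of these three events (probability at least $0.95$), Lemma~\ref{lem:variance-bound-sa} furnishes $\bE_3(1+\eps)$ with probability at least $0.99$ over the independent draw of $(\bS, \bw)$, using some constant $\alpha = O(1)$. A final union bound over $\bE_1$, $\bE_2$, and $\bE_3(1+\eps)$ yields the claimed $0.92$.

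The main obstacle, which is already resolved by Lemma~\ref{lem:variance-bound-sa}, is the circularity discussed at the end of Section~\ref{sec:overview}: the sensitivity function $\sigma$ from Lemma~\ref{lem:subspace-sensitivities} is defined with respect to the \emph{original} optimal subspace $R \subset \R^d$, yet we must argue that the sensitivity-sampling estimator concentrates well under $\bPi$ as if $\sigma$ were tailored to the projected optimum $R' \subset \R^t$. The resolution is precisely to feed the coarse $O(1)$-approximation from Corollary~\ref{cor:bad-approx-sa} into Lemma~\ref{lem:variance-bound-sa}: this lets one bound $\|\Pi(x)-\rho_{R'}(\Pi(x))\|_2^z / \cost_z^z(\Pi(X), R')$ by $(\bD_x^z + O(1)) \cdot \sigma(x)$ using two triangle-inequality steps that mirror Lemma~\ref{lem:subspace-sensitivities} together with the linearity-based comparison of ``one-dimensional'' sensitivities before and after $\bPi$, after which Chebyshev closes the loop.
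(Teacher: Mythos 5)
Your proposal is correct and follows essentially the same line of reasoning as the paper's proof: assemble $\bE_1$ via Theorem~\ref{thm:weak-coresets-sa}, $\bE_2$ via Lemma~\ref{lem:event-2-sa}, verify the three hypotheses of Lemma~\ref{lem:variance-bound-sa} (Lemma~\ref{lem:subspace-ub}, Corollary~\ref{cor:bad-approx-sa}, and $\bE_4$) to obtain $\bE_3(1+\eps)$, then union bound and apply Lemma~\ref{lem:events-to-thm-sa} with $\beta = 1+\eps$. The probability accounting ($0.99$, $0.99$, $0.94$ giving $0.92$ by union bound) and the identification of the circularity-breaking step match the paper's argument exactly.
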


\begin{proof}
We sample $\bPi \sim \calJ_{d,t}$ and $(\bS, \bw)$ as per Definition~\ref{def:events-sa}. Again, Theorem~\ref{thm:weak-coresets-sa} guarantees that $\bE_1$ occurs with probability at least $0.99$ over the draw of $(\bS, \bw)$. By Lemma~\ref{lem:subspace-ub}, Corollary~\ref{cor:bad-approx-sa} and (\ref{eq:event-4-sa}), the condition (\ref{en:event-1-sa}), (\ref{en:event-2-sa}), and (\ref{en:event-3-sa}) are satisfied with probability at least $0.95$, so we may apply Lemma~\ref{lem:variance-bound-sa} and have $\bE_3(1+\eps)$ holds with probability at least $0.94$. Finally, event $\bE_2$ holds with probability at least $0.99$ by Lemma~\ref{lem:event-2-sa}. Taking a union bound and applying Lemma~\ref{lem:events-to-thm-sa} gives the desired bound.
\end{proof}

\section{$k$-Flat Approximation}\label{sec:flat}

In the $(k,z)$-\emph{flat approximation} problem, we consider subspace $R \subset \R^d$ of dimension less than $k$, which we may encode by a collection of at most $k$ orthonormal vectors $r_1,\dots, r_k \in R$, as well as a translation vector $\tau \in \R^d$. The $k$-flat specified by $R$ and $\tau$ is given by the affine subspace
\[ F = \left\{ x + \tau \in \R^d : x \in R \right\}.\]
We let $\rho_F \colon \R^d \to \R$ denote the map which sends each $x \in \R^d$ to its closest point on $F$, and we note that
\begin{align*}
\rho_F(x) &= \argmin_{y \in F} \| x - y\|_2^2 = \tau + \sum_{i=1}^r \langle x - \tau, r_i \rangle r_i
\end{align*}
For any $X \subset \R^d$, we let 
\[ \cost_z^z(X, F) \eqdef \sum_{x \in X} \left\| x - \rho_F(x) \right\|_2^z. \]
In this section, we show that we may find the optimal $k$-flat approximation after applying a Johnson-Lindenstrauss map. The proof will be almost exactly the same as the $(k,z)$-subspace approximation problem. Indeed, it only remains to incorporate a translation vector.
\begin{theorem}[Johnson-Lindenstrauss for $k$-Flat Approximation]\label{thm:k-flat}
Let $X = \{ x_1,\dots, x_n \} \subset \R^d$ be any set of points, and let $F \subset \R^d$ denote the optimal $(k,z)$-flat approximation of $X$. For any $\eps \in (0, 1)$, suppose we let $\calJ_{d,t}$ be a distribution over a Johnson-Lindenstrauss maps where
\[ t \gsim \dfrac{z \cdot k^2 \cdot \polylog(k/\eps)}{\eps^3}. \]
Then, with probability at least $0.9$ over the draw of $\bPi \sim \calJ_{d,t}$,
\[ \dfrac{1}{1+\eps} \cdot \cost_z(X, F) \leq \min_{\substack{F' \text{ $k$-flat} \\ \text{in $\R^t$}}} \cost_z(\bPi(X), F') \leq (1+\eps) \cdot \cost_z(X, F). \]
\end{theorem}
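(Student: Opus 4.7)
My plan is to mirror the proof of Theorem~\ref{thm:k-subspace} step by step, treating a $k$-flat $F = R + \tau$ as the affine image of a $k$-subspace under a translation. First, for the easy direction, I would observe that since $\bPi$ is linear, the optimal $k$-flat $F^* = R^* + \tau^*$ maps under $\bPi$ to the $k$-flat $\bPi(R^*) + \bPi(\tau^*)$ in $\R^t$, and $\rho_{\bPi(F^*)}(\bPi(x)) = \bPi(\rho_{F^*}(x))$. This reduces the upper bound to controlling $\sum_x \|\bPi(x - \rho_{F^*}(x))\|_2^z$, which Lemma~\ref{lem:gaussian-guarantee} and Markov handle exactly as in Lemma~\ref{lem:subspace-ub}.

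For the hard direction, my strategy is to import the entire scaffolding of Section~\ref{sec:subspace} via a lifting trick: each $x \in \R^d$ is lifted to $(x,1) \in \R^{d+1}$, and each $k$-flat $F = R + \tau$ corresponds to the $(k+1)$-dimensional linear subspace of $\R^{d+1}$ spanned by $(r_1,0),\dots,(r_k,0),(\tau,1)$. The orthogonal distance from a lifted point to this lifted subspace equals the original flat distance. This lets me lift Lemma~\ref{lem:subspace-sensitivities} to obtain a sensitivity function for $(k,z)$-flat approximation with total sensitivity $\tilde{O}((k+1)^{1+z})$, and to lift Lemma~\ref{lem:find-good-subspace} to the statement that an approximately optimal $k$-flat for a weighted set $(S,w)$ lies in the affine span of $O(k^2 \log(k/\eps)/\eps)$ points of $S$. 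Theorem~\ref{thm:weak-coresets-sa} then yields a weak $\eps$-coreset for $(k,z)$-flat approximation with the same asymptotic sample complexity.

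Next I would define events $\bE_1$--$\bE_4$ in analogy with Definition~\ref{def:events-sa}, where $\bE_2$ now requires $\bPi$ to be an $\eps$-subspace embedding of the affine span of every $O(k^2 \log(k/\eps)/\eps)$-subset of $\bS$. Equivalently, fixing any reference point $x_0 \in \bS$, this is a subspace-embedding condition on the $O(k^2\log(k/\eps)/\eps)$-dimensional linear span of the pointwise differences $\{x - x_0\}$, so Lemma~\ref{lem:event-2-sa} applies with the same target dimension. The analog of Lemma~\ref{lem:events-to-thm-sa} then runs in parallel: given the optimal $k$-flat $\bF'$ for $\bPi(X)$, event $\bE_3(\beta)$ transfers the bound to $(\bPi(\bS),\bw)$; the affine version of Lemma~\ref{lem:find-good-subspace} produces a near-optimal $k$-flat $F''$ inside the affine span of $\bPi(Q)$ for a small $Q \subset \bS$; event $\bE_2$ allows me to pull $F''$ back through $\bPi^{-1}$ to a $k$-flat $U$ in the affine span of $Q$ whose cost on $(\bS,\bw)$ is within $(1+\eps)$; and $\bE_1$ supplies the final $(1+\eps)$-factor from $(\bS,\bw)$ back to $X$.

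Finally, I would execute the two-round circularity-breaking argument verbatim: first establish an $O(1)$-approximation via Markov applied to the estimator identity analogous to (\ref{eq:expectation-S-sa}) (mirroring Corollary~\ref{cor:bad-approx-sa}), and then bootstrap to $(1+\eps)$ by controlling the variance of the sensitivity-sampling estimator under $\bPi$ exactly as in Lemma~\ref{lem:variance-bound-sa}, using $\bD_x = \|\bPi(x) - \bPi(\rho_F(x))\|_2 / \|x - \rho_F(x)\|_2$ and the lifted sensitivity decomposition. The main place to be careful is the lifting: I need to verify that the bounds on $\sup_{v} |\langle \rho_R(x), v\rangle|^z / \sum_{x'} |\langle \rho_R(x'), v\rangle|^z$ used in Lemma~\ref{lem:subspace-sensitivities} survive the translation, which they do because in the lifted space the projection onto the $(k+1)$-dimensional subspace faithfully encodes $\rho_F(x)$. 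The price is $k \mapsto k+1$, which is absorbed into the $\tilde{O}(zk^2/\eps^3)$ dimension bound.
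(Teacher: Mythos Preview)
Your overall scaffolding (easy direction, events $\bE_1$--$\bE_4$, two-round bootstrap) matches the paper, but the lifting you rely on for the hard direction contains a genuine error. The claim that ``the orthogonal distance from a lifted point to this lifted subspace equals the original flat distance'' is false. Take $d=1$, $k=0$, $\tau=1$, $x=0$: the flat distance is $|0-1|=1$, but the lifted subspace is $\mathrm{span}\{(1,1)\}\subset\R^2$ and the distance from $(0,1)$ to it is $1/\sqrt{2}$. In general the orthogonal projection of $(x,1)$ onto $\mathrm{span}\{(r_1,0),\dots,(r_k,0),(\tau,1)\}$ has last coordinate $(\langle x,\tau\rangle+1)/(\|\tau\|_2^2+1)\neq 1$, so it does not ``faithfully encode $\rho_F(x)$'' as you assert, and the lifted distance is strictly smaller than the flat distance. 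This breaks both of your intended imports: you cannot obtain a flat-sensitivity function by applying Lemma~\ref{lem:subspace-sensitivities} to the lifted instance, and you cannot deduce an affine-span version of Lemma~\ref{lem:find-good-subspace} this way, since the $(k+1)$-subspace cost on the lifted points is not the $k$-flat cost on the original points (and an optimal $(k+1)$-subspace for the lifted points need not even correspond to a $k$-flat).

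The paper avoids this by working directly with flats. For the sensitivity function (Lemma~\ref{lem:flat-sensitivities}) it repeats the triangle-inequality decomposition of Lemma~\ref{lem:subspace-sensitivities} in $\R^d$, yielding a second term $\sup_{u,\phi} |\langle \rho_F(x),u\rangle-\phi|^z / \sum_{x'} |\langle \rho_F(x'),u\rangle-\phi|^z$; only at that point is the append-a-$1$ lift used, via the trivial identity $|\langle y,u\rangle-\phi|=|\langle (y,1),(u,-\phi)\rangle|$, to bound the total sensitivity by the $(k+1)$-dimensional linear case. For the structural step inside Lemma~\ref{lem:events-to-thm-flat}, the paper does \emph{not} have a single affine-span lemma; instead it applies two results in sequence: Lemma~\ref{lem:find-good-translation} (from~\cite{SV07}) produces a near-optimal translation $\tau''$ in the convex hull of $O(\log(1/\eps)/\eps)$ points of $\bPi(\bS)$, and then Lemma~\ref{lem:find-good-subspace} is applied to the translated set $\bPi(\bS)-\tau''$ to find a $k$-subspace in the span of $O(k^2\log(k/\eps)/\eps)$ further points. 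Your plan omits this separate translation-finding ingredient.
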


\subsection{Easy Direction: Optimum Cost Does Not Increase}

\begin{lemma}\label{lem:flat-ub}
Let $X = \{ x_1,\dots, x_n \} \subset \R^d$ by any set of points and let $F \subset \R^d$ be the optimal $(k,z)$-flat approximation of $X$. We let $\calJ_{d,t}$ be the distribution over Johnson-Lindenstrauss maps. If $t \gsim z / \eps^2$, then with probability at least $0.99$ over the draw of $\bPi \sim \calJ_{d,t}$, 
\[ \sum_{x \in X} \| \bPi(x) - \bPi(\rho_F(x))\|_2^z \leq (1+\eps) \cdot \cost_z^z(X, F), \]
and hence,
\[ \min_{\substack{F' \text{ $k$-flat} \\ \text{in $\R^t$}}} \cost_z(\bPi(X), F') \leq (1+\eps) \cdot \cost_z(X, F). \]
\end{lemma}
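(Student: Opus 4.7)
The proof will follow the exact same template as Lemma~\ref{lem:upper-bound} (center-based clustering) and Lemma~\ref{lem:subspace-ub} ($k$-subspace approximation); the only new ingredient is handling the translation vector $\tau$, which is painless because $\bPi$ is linear.

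The plan is as follows. First, I would observe that if $F \subset \R^d$ is the optimal $k$-flat for $X$, with $F = R + \tau$ for a $k$-dimensional subspace $R$ spanned by orthonormal $r_1,\ldots,r_k$ and a translation $\tau \in \R^d$, then $\bPi(F) := \{\bPi(y) : y \in F\}$ is an affine subspace of $\R^t$ of dimension at most $k$, namely $\mathrm{span}(\bPi(r_1),\ldots,\bPi(r_k)) + \bPi(\tau)$. In particular, $\bPi(F)$ is itself a valid candidate $k$-flat in $\R^t$, and for each $x \in X$ the point $\bPi(\rho_F(x)) \in \bPi(F)$ is a feasible (not necessarily optimal) ``projection'' witness. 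This gives the pointwise bound
\[
\min_{\substack{F' \text{ $k$-flat} \\ \text{in $\R^t$}}} \cost_z^z(\bPi(X), F') \;\le\; \sum_{x \in X} \bigl\| \bPi(x) - \bPi(\rho_F(x)) \bigr\|_2^z,
\]
so it suffices to control the right-hand side.

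Next, I would bound the expected one-sided increase. Writing $y_x := \rho_F(x) \in \R^d$, each pair $(x, y_x)$ is a fixed pair of points in $\R^d$, independent of $\bPi$. By Lemma~\ref{lem:gaussian-guarantee}, whenever $t \gtrsim z/\eps^2$,
\[
\Ex_{\bPi \sim \calJ_{d,t}}\left[\left( \dfrac{\|\bPi(x) - \bPi(y_x)\|_2^z}{\|x - y_x\|_2^z} - 1\right)^+\right] \;\le\; \dfrac{(1+\eps)^z - 1}{100}.
\]
Multiplying by $\|x - y_x\|_2^z$ and summing over $x \in X$ (noting $\sum_x \|x - y_x\|_2^z = \cost_z^z(X,F)$) yields
\[
\Ex_{\bPi \sim \calJ_{d,t}}\left[\sum_{x \in X} \Bigl(\|\bPi(x) - \bPi(y_x)\|_2^z - \|x - y_x\|_2^z\Bigr)^+\right] \;\le\; \dfrac{(1+\eps)^z - 1}{100}\cdot \cost_z^z(X,F).
\]

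Finally, I would apply Markov's inequality to the non-negative random variable inside the expectation: with probability at least $0.99$, the sum of positive deviations is at most $\tfrac{(1+\eps)^z - 1}{1} \cdot \cost_z^z(X,F)$, so
\[
\sum_{x \in X} \bigl\|\bPi(x) - \bPi(\rho_F(x))\bigr\|_2^z \;\le\; (1+\eps)^z \cdot \cost_z^z(X,F),
\]
and taking $z$-th roots gives the claimed $(1+\eps)$-factor. Combined with the first display, this yields the lemma. There is no real obstacle here: the translation $\tau$ passes through linearity of $\bPi$, and all the distance-preservation work has already been packaged into Lemma~\ref{lem:gaussian-guarantee}; the argument is purely a cosmetic adaptation of Lemma~\ref{lem:subspace-ub}.
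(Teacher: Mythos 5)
Your proof is correct and follows exactly the approach the paper sketches (and which appears in a commented-out proof block in the source): define $\bPi(F) = \bPi(\tau) + \mathrm{span}(\bPi(r_1),\ldots,\bPi(r_k))$, observe that $\bPi(\rho_F(x)) \in \bPi(F)$ by linearity of $\bPi$, apply Lemma~\ref{lem:gaussian-guarantee} to bound the expected positive deviation, and finish with Markov's inequality. Your explicit unpacking of the Markov step and the $z$-th-root step at the end is the right way to reconcile the displayed inequality in the lemma statement (which, as written, drops a $z$ exponent) with what the argument actually delivers, namely $\sum_x \|\bPi(x)-\bPi(\rho_F(x))\|_2^z \le (1+\eps)^z \cost_z^z(X,F)$.
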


The proof follows in a similar fashion to Lemma~\ref{lem:upper-bound} and Lemma~\ref{lem:subspace-ub}. In particular, there is a natural definition of a $k$-flat $\bPi(F) \subset \R^t$, and the proof proceeds by upper bounding the expected dilation of $\|\bPi(x) - \bPi(\rho_{F}(x))\|_2^z$.


\ignore{\begin{proof}
Let $R \subset \R^d$ be the $k$-dimensional subspace and $\tau \in \R^d$ the translation vector for the $k$-flat $F$. We let $r_1,\dots, r_k \in R$ be $k$ orthonormal vectors spanning $R$. Furthermore, we denote $\bPi(F)$ the $k$-flat in $\R^t$ given by
\[ \bPi(F) \eqdef \left\{ \bPi(\tau) + \sum_{i=1}^r \alpha_i \cdot \bPi(r_i) : \alpha_1,\dots, \alpha_r \in \R \right\}.\]
Since $\bPi \colon \R^d \to \R^t$ is a linear map, we note that 
\[ \bPi(\rho_F(x)) = \bPi(\tau) + \sum_{i=1}^r \langle x - \tau , r_i \rangle \cdot \bPi(r_i) \in \bPi(F),\]
so it suffices to upper bound
\[ \sum_{x \in X} \| \bPi(x) - \bPi(\rho_F(x)) \|_2^z. \]
Similarly to Lemma~\ref{lem:subspace-ub} and Lemma~\ref{lem:upper-bound}, the lemma follows from applying Markov's inequality once we show
\[ \dfrac{1}{\cost_z^z(X, R)} \Ex_{\bPi \sim \calJ_{d,t}}\left[ \sum_{x\in X} \left(\| \bPi(x) - \bPi(\rho_R(x))\|_2^z - \| x - \rho_R(x)\|_2^z \right)^+\right] \lsim \dfrac{(1+\eps)^z - 1}{100}, \]
however, this is also follows (as in Lemma~\ref{lem:upper-bound} and Lemma~\ref{lem:subspace-ub}) from Lemma~\ref{lem:gaussian}.
\end{proof}
}
\subsection{Hard Direction: Optimum Cost Does Not Decrease}

\subsubsection{Preliminaries}

The proof in this section follows similarly to that of $(k,z)$-subspace approximation. 

\begin{definition}[Weak Coresets for $(k,z)$-flat approximation]
Let $X = \{ x_1,\dots, x_n \} \subset \R^d$ be a set of points. A weak $\eps$-coreset of $X$ for $(k,z)$-flat approximation is a weighted subset $S \subset \R^d$ of points with weights $w \colon S \to \R_{\geq 0}$ such that, 
\[ \dfrac{1}{1+\eps} \cdot \min_{\substack{F \text{ $k$-flat} \\ \text{in $\R^d$}}} \cost_z(X, F) \leq \min_{\substack{F \text{ $k$-flat} \\ \text{in $\R^d$}}} \cost_z((S, w), F) \leq (1+\eps) \cdot  \min_{\substack{F \text{ $k$-flat} \\ \text{in $\R^d$}}} \cost_z(X, F) \]
\end{definition}

\begin{definition}[Sensitivities]
Let $n, d \in \N$, and consider any set of points $X = \{ x_1,\dots, x_n \} \subset \R^d$, as well as $k \in \N$ and $z \geq 1$. A sensitivity function $\sigma \colon X \to \R_{\geq 0}$ for $(k,z)$-flat approximation in $\R^d$ is a function satisfying that, for all $x \in X$,
\begin{align*}
\sup_{\substack{F \subset \R^d \\ \text{$k$-flat}}} \dfrac{\|x - \rho_F(x)\|_2^z}{\cost_z^z(X, F)} \leq \sigma(x). 
\end{align*}
The total sensitivity of the senstivity function $\sigma$ is given by
\[\frakS_{\sigma} = \sum_{x \in X} \sigma(x). \]
For a sensitivity function, we let $\tilde{\sigma}$ denote the sensitivity sampling distribution, supported on $X$, which samples $x \in X$ with probability proportional to $\sigma(x)$.
\end{definition}

The sensitivity function we use here generalizes that of the previous section. In particular, the proof will follow similarly, and we will defer to the arguments in the previous section.

\begin{lemma}\label{lem:flat-sensitivities}
Let $n, d \in \N$, and consider any set of points $X = \{ x_1,\dots, x_n \} \subset \R^d$, as well as $k \in \N$ with $k < d$ and $z \geq 1$. Suppose $F \subset \R^d$ is the optimal $(k,z)$-flat approximation of $X$ in $\R^d$. Then, the function $\sigma \colon X \to \R_{\geq 0}$ given by
\[ \sigma(x) = 2^{z-1} \cdot \dfrac{\|x - \rho_F(x)\|_2^z}{\cost_z^z(X, F)} + 2^{2z-1} \cdot \sup_{\substack{u \in \R^d \\ \phi \in \R}} \dfrac{|\langle \rho_F(x), u \rangle - \phi |^z}{\sum_{x' \in X} |\langle \rho_F(x'), u \rangle - \phi|^z} \]
is a sensitivity function for $(k,z)$-flat approximation of $X$ in $\R^d$, satisfying
\[ \frakS_{\sigma} \leq 2^{z-1} + 2^{2z-1} (k+2)^{1+z}.\]
\end{lemma}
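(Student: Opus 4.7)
The structure will mirror that of Lemma~\ref{lem:subspace-sensitivities}, with two adjustments to accommodate the translation vector: the second term of $\sigma(x)$ involves an affine parameter $\phi$, and the total sensitivity bound needs a lifting trick to reduce affine-hyperplane sensitivities to linear-hyperplane sensitivities in one extra dimension.

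First I would show $\sigma$ is a valid sensitivity function. Fix a candidate $k$-flat $F'\subset\R^d$ and any $x\in X$. Since $\rho_{F'}(x)$ is the closest point on $F'$ to $x$, we have $\|x-\rho_{F'}(x)\|_2\leq \|x-\rho_{F'}(\rho_F(x))\|_2$, and then the triangle inequality plus H\"older's inequality yield
\[
\|x-\rho_{F'}(x)\|_2^z \leq 2^{z-1}\left(\|x-\rho_F(x)\|_2^z + \|\rho_F(x)-\rho_{F'}(\rho_F(x))\|_2^z\right).
\]
Using $\cost_z^z(X,F)\leq \cost_z^z(X,F')$ (optimality of $F$) and the triangle inequality $\cost_z^z(\rho_F(X),F')\leq 2^{z-1}(\cost_z^z(X,F)+\cost_z^z(X,F'))$, I divide by $\cost_z^z(X,F')$ and bound the resulting ratio by the first term of $\sigma(x)$ plus $2^{2z-1}$ times $\|\rho_F(x)-\rho_{F'}(\rho_F(x))\|_2^z/\cost_z^z(\rho_F(X),F')$. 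It then remains, exactly as in the subspace proof, to show that for any set $Y\subset\R^d$ and $y\in Y$, the supremum over $k$-flats of $\|y-\rho_{F'}(y)\|_2^z/\cost_z^z(Y,F')$ is bounded by the supremum over $(d-1)$-flats $H$ of the corresponding ratio. The $(d-1)$-flat trick carries over verbatim: any $k$-flat $F'$ is contained in the $(d-1)$-flat $H$ orthogonal to $y-\rho_{F'}(y)$ passing through $\rho_{F'}(y)$, which strictly decreases the denominator while preserving the numerator. Parameterizing $H$ by its unit normal $u$ and offset $\phi$, the distance from $y$ to $H$ is $|\langle y,u\rangle-\phi|$ (after normalization that cancels in the ratio), yielding the second term of $\sigma(x)$.

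Next I bound the total sensitivity. The first term contributes $2^{z-1}$ trivially. For the second term, note that $\rho_F(X)$ lies entirely in the $k$-flat $F=\tau+R$, where $R$ is a $k$-dimensional linear subspace. Writing $\rho_F(x)=\tau+r_x$ with $r_x\in R$ and substituting $\phi'=\phi-\langle\tau,u\rangle$, the quantity $\langle\rho_F(x),u\rangle-\phi$ becomes $\langle r_x,u\rangle-\phi'$. Because $r_x\in R$, replacing $u$ by its orthogonal projection onto $R$ does not change the inner product, so the sup can be taken over $u\in R$, i.e., over a $k$-dimensional space. Now I apply the standard lifting: choose an orthonormal basis of $R$ to write $r_x$ as a vector $\tilde r_x\in\R^k$, and set $\hat r_x=(\tilde r_x,1)\in\R^{k+1}$ and $\hat u=(\tilde u,-\phi')\in\R^{k+1}$. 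Then $\langle\hat r_x,\hat u\rangle=\langle r_x,u\rangle-\phi'$, and the second-term sum becomes
\[
\sum_{x\in X}\sup_{\hat u\in\R^{k+1}}\frac{|\langle\hat r_x,\hat u\rangle|^z}{\sum_{x'\in X}|\langle\hat r_{x'},\hat u\rangle|^z},
\]
which is a total sensitivity for the subspace (hyperplane) approximation problem applied to a set of points living in $\R^{k+1}$. Lemma~16 of~\cite{VX12} then bounds this by $(k+2)^{1+z}$, giving the claimed total sensitivity $2^{z-1}+2^{2z-1}(k+2)^{1+z}$.

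The main technical obstacle is the lifting step: one must verify that the affine sup over $(u,\phi)$ really reduces to a linear sup in one higher dimension, and that the points $\hat r_x$ land in a $(k+1)$-dimensional space so that the $\cite{VX12}$ bound applies with the correct parameter. Everything else follows the template of Lemma~\ref{lem:subspace-sensitivities} almost verbatim, which is what we should expect since $(k,z)$-flat approximation is just $(k,z)$-subspace approximation with an extra affine degree of freedom, and that single degree accounts for the shift from $(k+1)^{1+z}$ to $(k+2)^{1+z}$ in the total sensitivity.
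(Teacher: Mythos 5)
Your proof is correct and follows essentially the same route as the paper: the same triangle-inequality/H\"older reduction from Lemma~\ref{lem:subspace-sensitivities}, the same reduction to affine hyperplanes, and the same lifting trick (appending a coordinate equal to $1$) to convert the affine supremum over $(u,\phi)$ into a linear supremum in one higher dimension before invoking Lemma~16 of~\cite{VX12}. The only cosmetic difference is that you first project $u$ onto $R$ and lift in $\R^{k+1}$, whereas the paper lifts in $\R^{d+1}$ and observes the lifted points lie in a $(k+1)$-dimensional subspace; both yield the same $(k+2)^{1+z}$ bound.
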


\begin{proof}
Consider any $k$-flat $F' \subset \R^d$, given by a subspace $R \subset \R^d$ of dimension at most $k$, and a translation $\tau \in \R^d$. As in the proof of Lemma~\ref{lem:subspace-sensitivities}, 
\begin{align*}
\dfrac{\|x - \rho_{F'}(x)\|_2^z}{\cost_z^z(X, F')} \leq 2^{z-1} \cdot \dfrac{\|x - \rho_{F}(x)\|_2^z}{\cost_z^z(X, F)} + 2^{2z-1}\cdot \dfrac{\|\rho_F(x) - \rho_{F'}(\rho_F(x))\|_2^z}{\cost_z^z(\rho_F(X), F')}.
\end{align*}
We now have that for any $Y \subset \R^d$, and any $y \in Y$, 
\begin{align*}
\sup_{\substack{F' \subset \R^d \\ \text{$k$-flat}}} \dfrac{\|y - \rho_{F'}(y)\|_2^z}{\cost_z^z(Y, F')} \leq \sup_{\tau \in \R^d} \sup_{u \in \R^d} \dfrac{|\langle y - \tau, u \rangle|^z}{\sum_{y' \in Y} |\langle y' - \tau, u \rangle|^z}.
\end{align*}
Finally, for each $y \in Y \subset \R^d$, we may consider appending an additional coordinate and consider $y^* \in \R^{d+1}$ where the $d+1$-th entry is $1$. Then, by linearity
\[ \sup_{\tau \in \R^d} \sup_{u \in \R^d} \dfrac{|\langle y - \tau, u\rangle|^z}{\sum_{y' \in Y}|\langle y-\tau, u\rangle|^z} = \sup_{v \in \R^{d+1}} \dfrac{|\langle y^*, v\rangle|^z}{\sum_{y' \in Y} |\langle y^{'*}, v \rangle|^z},\]
and the bound on the total sensitivity follows from Lemma~\ref{lem:subspace-sensitivities}.
\end{proof}

In the $(k,z)$-subspace approximation section, we used a lemma (Lemma~\ref{lem:find-good-subspace}) to narrow down the approximately optimal subspaces to those spanned by at most $O(k^2 \log(k/\eps) / \eps)$ points. Here, we use a similar lemma in order to find an approximately optimal translation vector $\tau \in \R^d$, which is spanned by a small subset of points.

\begin{lemma}[Lemma 3.3~\cite{SV07}]\label{lem:find-good-translation}
Let $d, k \in \N$, and consider a weighted set of points $S \subset \R^d$ with weights $w \colon S \to \R_{\geq 0}$, as well as $\eps \in (0,1)$ and $z \geq 1$. Suppose $F \subset \R^d$ is the optimal $(k,z)$-flat approximation of $X$, encoded by a $k$-dimensional subspace $R \subset \R^d$ and translation vector $\tau \in \R^d$. There exists a subset $Q \subset S$ of size $O(\log(1/\eps)/\eps)$ and a point $\tau' \in \conv(Q)$ such that the $k$-flat 
\[ F' = \left\{ \tau' + y \in \R^d : y \in R \right\} \]
satisfies
\[ \cost_z((S, w), F') \leq (1+\eps) \cdot \cost_z((S, w), F). \]
\end{lemma}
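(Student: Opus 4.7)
\textbf{Proof proposal for Lemma~\ref{lem:find-good-translation}.} The plan is to reduce the problem to an unconstrained $1$-center problem in the orthogonal complement of $R$, and then build $\tau'$ incrementally by a Frank--Wolfe-style greedy argument that grows a set $Q$ by one element at a time.

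\emph{Reduction (Step 1).} Let $\pi\colon\R^d\to R^\perp$ be the orthogonal projection and set $y_x=\pi(x)$ for $x\in S$ and $\tau^*=\pi(\tau)$. Because any flat of the form $\tau'+R$ depends on $\tau'$ only through $\pi(\tau')$, we have $\cost_z^z((S,w),\tau'+R)=\sum_{x\in S}w(x)\|y_x-\pi(\tau')\|_2^z$. It therefore suffices to find a small set $Q\subset S$ and a point $\tau''\in\conv(\{y_q:q\in Q\})$ with $g(\tau'')\le(1+\eps)g(\tau^*)$, where
\[
g(u)\eqdef\sum_{x\in S} w(x)\,\|y_x-u\|_2^z.
\]
Any preimage of $\tau''$ under $\pi$ inside $\conv(Q)$ (obtainable from the same convex-combination coefficients applied to the points of $Q$) then yields the desired $\tau'\in\conv(Q)$. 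Note that $g$ is convex on $R^\perp$ because $u\mapsto\|y_x-u\|_2^z$ is convex for $z\ge1$, and its minimizer $\tau^*$ lies in $\conv(\{y_x:x\in S\})$ (otherwise projecting onto that convex hull decreases every term simultaneously).

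\emph{Greedy Frank--Wolfe construction (Steps 2--3).} Initialize $q_1\in S$ arbitrarily and set $\tau^{(0)}=y_{q_1}$. At iteration $i$, pick $q_{i+1}\in S$ and $\alpha_i\in[0,1]$ and update
\[
\tau^{(i+1)}=(1-\alpha_i)\,\tau^{(i)}+\alpha_i\,y_{q_{i+1}},
\]
so $\tau^{(i)}\in\conv(\{y_{q_1},\ldots,y_{q_i}\})$ by construction. The index $q_{i+1}$ is chosen using the convex-combination representation of $\tau^*-\tau^{(i)}$: since $\tau^*\in\conv(\{y_x\})$, we can write $\tau^*=\sum_x \lambda_x y_x$ with $\lambda_x\ge0$, $\sum\lambda_x=1$, and hence there exists $x\in S$ for which moving from $\tau^{(i)}$ toward $y_x$ captures at least a $(1/m)$-fraction of the shift toward $\tau^*$. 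Combining this with convexity of $g$ and line-searching over $\alpha_i\in[0,1]$ gives an inequality of the Frank--Wolfe type,
\[
g(\tau^{(i+1)})-g(\tau^*)\,\le\,\bigl(1-\alpha_i\bigr)\bigl(g(\tau^{(i)})-g(\tau^*)\bigr)+C_z\,\alpha_i^{\min(z,2)}\cdot M_i,
\]
where $C_z$ depends on $z$ and $M_i$ captures the local ``curvature'' of $g$ at $\tau^{(i)}$, bounded by $O_z(g(\tau^{(i)}))$ via the uniform-convexity estimates $\|a+b\|_2^z-\|a\|_2^z\le z\|a\|_2^{z-1}\|b\|_2+O_z(\|b\|_2^z+\|b\|_2^2\|a\|_2^{z-2})$ applied termwise.

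\emph{Two-phase convergence (Step 4).} I plan to split the iterations into two phases. While $g(\tau^{(i)})\ge2\,g(\tau^*)$, taking $\alpha_i=\Theta(1)$ yields a geometric decrease of the excess $g(\tau^{(i)})-g(\tau^*)$ by a constant factor, so $O(\log(1/\eps))$ steps bring the iterate into the regime $g(\tau^{(i)})\le2\,g(\tau^*)$. In the refinement phase, choosing $\alpha_i=\Theta(\eps)$ yields a decrease of the excess by $\Omega(\eps)\cdot(g(\tau^{(i)})-g(\tau^*))$, so a further $O(\log(1/\eps)/\eps)$ iterations drive the excess below $\eps\cdot g(\tau^*)$. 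The final total is $O(\log(1/\eps)/\eps)$ points in $Q$, which matches the claimed bound.

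\emph{Main obstacle.} The crux is the per-step improvement lemma. For $z=2$ the objective is smooth and standard Frank--Wolfe analysis (with curvature constant) applies cleanly; for general $z\ge1$ one has to replace smoothness by $\ell_z$-type uniform-convexity/smoothness inequalities and show that, despite the non-quadratic growth of $\|\cdot\|_2^z$, the ``direction-towards-a-data-point'' chosen at step $i$ still realizes an $\Omega(1/m)$-fraction of the optimal descent (where $m$ is the effective ``weighted diameter'' of the data relative to the gap $g(\tau^{(i)})-g(\tau^*)$). Making this quantitative in a $z$-uniform way is what forces the two-phase analysis and the logarithmic factor, and is the step I expect to require the most care.
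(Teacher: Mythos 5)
The paper does not prove Lemma~\ref{lem:find-good-translation}---it is cited verbatim from \cite{SV07}---so there is no in-paper proof to compare against; I evaluate the proposal on its own merits. Your Step~1 reduction to a convex $1$-center problem $\min_u g(u)$ over $\conv(\{y_x\})$ in $R^\perp$ is correct, as is the observation that $\tau'\in\conv(Q)$ can be recovered from the same convex coefficients. The Frank--Wolfe framing is also the right conceptual neighborhood (it is the family of arguments behind Badoiu--Clarkson and Clarkson's Frank--Wolfe coresets, to which \cite{SV07} belongs).

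However, the key quantitative steps are flawed. (i)~The vertex selection is wrong: arguing from the unknown convex representation $\tau^*=\sum_x\lambda_x y_x$ that some $y_x$ ``captures a $(1/m)$-fraction of the shift'' yields only $\max_x\lambda_x\ge 1/|S|$, so your bound would scale with $|S|$ rather than $1/\eps$. Frank--Wolfe selects the vertex minimizing the linearization $\langle\nabla g(\tau^{(i)}),y_x\rangle$, and its analysis runs through the duality gap, not the $\lambda_x$'s. (ii)~The curvature bound $M_i=O_z(g(\tau^{(i)}))$ is asserted, not proved, and is false in general: your second-order estimate contains the factor $\|a\|_2^{z-2}$, which for $z<2$ diverges as $\tau^{(i)}$ approaches a data point $y_x$, so the termwise sum $\sum_x w(x)\|y_x-\tau^{(i)}\|_2^{z-2}$ cannot be controlled by $g(\tau^{(i)})$. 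Even for $z=2$, the FW curvature scales with $W\cdot\mathrm{diam}(\conv Y)^2$ (with $W=\sum w(x)$), which has no a priori relation to $g^*$. (iii)~The two-phase convergence does not follow from the recursion you wrote: with constant step size $\alpha_i=\Theta(1)$, the inequality $e_{i+1}\le(1-\alpha)e_i+C\alpha^2 M_i$ yields geometric decay of $e_i=g(\tau^{(i)})-g(\tau^*)$ only if $M_i$ shrinks proportionally to $e_i$, which would require a strong-convexity-type property that $g$ does not have; vanilla FW converges at rate $O(M/i)$, not geometrically, so Phase~1 does not terminate in $O(\log(1/\eps))$ steps. The correct per-step improvement in \cite{SV07} (and the related Deshpande--Varadarajan arguments) is geometric/probabilistic---pick a point \emph{weighted by its residual} and argue the cost ratio shrinks---rather than gradient-and-curvature based. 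You correctly identify the per-step lemma as the crux, but the claim $M_i\lesssim g(\tau^{(i)})$ should be abandoned, and the direction selection replaced, rather than patched; as it stands the argument would fail for $z\ne 2$, and even for $z=2$ the easier route is the variance argument showing that a uniform sample mean of $O(1/\eps)$ points already achieves $(1+\eps)$.
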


\begin{theorem}[$\eps$-Weak Coresets for $k$-Flats via Sensitivity Sampling]\label{thm:weak-coresets-flat}
For any subset $X = \{ x_1,\dots, x_n \} \subset \R^d$ and $\eps \in (0, 1/2)$, let $\tilde{\sigma}$ denote the sensitivity sampling distribution. 
\begin{itemize}
\item Let $(\bS, \bw)$ denote the random (multi-)set $\bS \subset X$ and $\bw \colon \bS \to \R_{\geq 0}$ given by, for 
\[ m = \poly((k+2)^{z}, 1/\eps) \]
iterations, sampling $\bx \sim \tilde{\sigma}$ i.i.d and letting $\bw(\bx) = 1/(m\tilde{\sigma}(x))$.
\item Then, with probability $1 - o(1)$ over the draw of $(\bS, \bw)$, it is an $\eps$-weak coreset for $(k,z)$-subspace approximation of $X$.
\end{itemize}
\end{theorem}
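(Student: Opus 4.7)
The plan is to adapt the proof of Theorem~\ref{thm:weak-coresets-sa} (i.e., Theorem~5.10 in~\cite{HV20}) to the affine setting, by combining the new sensitivity function of Lemma~\ref{lem:flat-sensitivities} with the translation-vector structural result Lemma~\ref{lem:find-good-translation}. As in the subspace case, importance sampling from $\tilde\sigma$ produces an unbiased estimator for the cost of any fixed flat; because we only need a \emph{weak} coreset, it suffices to control the estimator over a small net of near-optimal flats rather than uniformly over all $k$-flats.

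First I would verify the upper direction: $\cost_z^z((\bS, \bw), F)$ is an unbiased estimator of $\cost_z^z(X, F)$ for the optimal flat $F$, and the variance of the normalized cost is bounded by $\frakS_\sigma \leq 2^{z-1} + 2^{2z-1}(k+2)^{1+z}$ via Lemma~\ref{lem:flat-sensitivities}. Chebyshev's inequality then gives $\cost_z((\bS, \bw), F) \leq (1+\eps)\cost_z(X, F)$ with high probability once $m \gsim \frakS_\sigma/\eps^2$.

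For the harder direction---that no $k$-flat makes $\cost_z((\bS,\bw), \cdot)$ significantly smaller than $\cost_z(X, F)$---I would apply Lemma~\ref{lem:find-good-subspace} to extract a subset $Q_1 \subseteq \bS$ of size $O(k^2 \log(k/\eps)/\eps)$ whose span contains a near-optimal subspace direction, and Lemma~\ref{lem:find-good-translation} to extract $Q_2 \subseteq \bS$ of size $O(\log(1/\eps)/\eps)$ whose convex hull contains a near-optimal translation. Together these confine the search for a near-optimal flat on $(\bS, \bw)$ to a family indexed by $(Q_1, Q_2)$ with coefficients discretized to an $\eps$-net, of total size $(m/\eps)^{O(k^2 \log(k/\eps)/\eps)}$. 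Applying the variance bound plus Chebyshev to each flat in this net and union-bounding shows that each such flat's coreset cost is within a $(1 \pm \eps)$ factor of its true cost, so in particular the optimum on $(\bS, \bw)$ exceeds $(1-\eps)\cost_z(X, F)$.

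The main obstacle will be absorbing the net size inside the union bound without letting $m$ grow beyond $\poly((k+2)^z, 1/\eps)$. The key observation (as in~\cite{HV20}) is that $|\bS|\leq m$ already depends only on $(k+2)^z$ and $1/\eps$, so the net size is itself $\poly(m/\eps)^{O(k^2\log(k/\eps)/\eps)}$, and a per-flat failure probability of $\exp(-\Omega(m\eps^2/\frakS_\sigma))$ suffices once $m$ is a sufficiently high-degree polynomial in $(k+2)^z/\eps$. A minor but convenient point is that Lemma~\ref{lem:find-good-translation} is stated for weighted sets, which is exactly what we need since $(\bS, \bw)$ is weighted; no additional work is required to handle the weights.
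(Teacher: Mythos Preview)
The paper does not actually prove Theorem~\ref{thm:weak-coresets-flat}: it is stated in the preliminaries of Section~\ref{sec:flat} as a known result, parallel to Theorem~\ref{thm:weak-coresets-sa} (which is explicitly attributed to \cite{HV20}), and is used as a black box to guarantee that event $\bE_1$ holds. So there is no ``paper's own proof'' to compare against; your sketch is essentially filling in an argument the authors take for granted.

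That said, your outline is the right one and matches how such results are proved in \cite{FL11,HV20}: bound the total sensitivity via Lemma~\ref{lem:flat-sensitivities}, restrict attention to near-optimal flats lying in the span of few coreset points via Lemmas~\ref{lem:find-good-subspace} and~\ref{lem:find-good-translation}, and union-bound over a discretized family. Two small points worth tightening. First, you invoke Chebyshev in one paragraph and then claim an $\exp(-\Omega(m\eps^2/\frakS_\sigma))$ failure probability in the next; the latter is what you need for the union bound, and it does hold because the sensitivity condition makes each summand bounded by $\frakS_\sigma$, so Bernstein (not Chebyshev) gives the exponential tail. Second, the net you describe is data-dependent (the subsets $Q_1,Q_2$ are drawn from $\bS$ itself), so a naive per-flat concentration plus union bound is circular. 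The standard fix, implicit in \cite{HV20}, is either a pseudo-dimension argument for the restricted function class (flats spanned by $O(k^2\log(k/\eps)/\eps)$ sample points) or a two-phase sampling argument; either route keeps $m=\poly((k+2)^z,1/\eps)$ and closes the gap.
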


\subsection{The Important Events} 

The important events we consider mirror those the subspace approximation problem. The only event which would change is $\bE_2$, where we require $\bPi$ to be an $\eps$-subspace embedding for all subsets of $O(k^2 \log(k/\eps) / \eps) + O(\log(1/\eps) / \eps)$ points from $\bS$. This will allow us to incorporate the translation $\tau'$ from Lemma~\ref{lem:find-good-translation}.

\begin{definition}[The Events]\label{def:events-flat} Let $X = \{ x_1,\dots, x_n \} \subset \R^d$, and $\tilde{\sigma}$ the sensitivity sampling distribution of $X$ of Lemma~\ref{lem:flat-sensitivities}. We consider the following experiment,
\begin{enumerate}
\item We generate a sample $(\bS, \bw)$ by sampling from $\tilde{\sigma}$ for $m = \poly(k^z, 1/\eps)$ i.i.d iterations $\bx \sim \tilde{\sigma}$ and set $\bw(\bx) = 1/(m\tilde{\sigma}(\bx))$.
\item Furthermore, we sample $\bPi \sim \calJ_{d, t}$, which is a Johnson-Lindenstrauss map $\R^d \to \R^t$.
\item We let $\bS' = \bPi(\bS) \subset \R^t$ denote the image of $\bPi$ on $\bS$.
\end{enumerate}
The events are the following:
\begin{itemize}
\item $\bE_1$ : The weighted (multi-)set $(\bS, \bw)$ is a weak $\eps$-coreset for $(k,z)$-flat approximation of $X$ in $\R^d$.
\item $\bE_2$ : The map $\bPi \colon \R^d \to \R^t$ satisfies the following condition. For any choice of $O(k^2 \log(k/\eps) / \eps)$ points of $\bS$, $\bPi$ is an $\eps$-subspace embedding of the subspace spanned by these points.
\item $\bE_3(\beta)$ : Let $\bF' \subset \R^t$ denote the optimal $(k,z)$-flat approximation of $\bPi(X)$ in $\R^t$. Then, 
\[ \cost_z^z((\bPi(\bS), \bw), \bF') \leq \beta \cdot \cost_z^z(\bPi(X), \bF'). \]
\end{itemize}
\end{definition}

\begin{lemma}\label{lem:events-to-thm-flat}
Let $X = \{ x_1,\dots, x_n \} \subset \R^d$, and suppose $(\bS, \bw)$ and $\bPi \colon \R^d \to \R^t$ satisfy events $\bE_1$, $\bE_2$, and $\bE_3(\beta)$. Then,
\[ \min_{\substack{F' \text{ $k$-flat} \\ \text{in $\R^t$}}} \cost_z(\bPi(X), F') \geq \dfrac{1}{\beta^{1/z}(1+\eps)^4} \cdot \min_{\substack{F \text{ $k$-flat} \\ \text{in $\R^d$}}} \cost_z(X, F). \]
\end{lemma}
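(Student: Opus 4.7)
The plan is to mirror the proof of Lemma~\ref{lem:events-to-thm-sa} step by step, inserting one extra application of Lemma~\ref{lem:find-good-translation} to handle the translation vector defining a $k$-flat. In more detail, I would start by fixing the optimal $k$-flat $\bF'$ for $\bPi(X)$ in $\R^t$ and using event $\bE_3(\beta)$ exactly once, to obtain $\cost_z^z((\bPi(\bS),\bw),\bF') \leq \beta \cdot \cost_z^z(\bPi(X),\bF')$; this is the only step that contributes the $\beta^{1/z}$ factor.

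Next I would apply Lemma~\ref{lem:find-good-translation} to $(\bPi(\bS),\bw)$, keeping the $k$-dimensional direction subspace of $\bF'$ fixed, to extract a subset $Q_\tau \subset \bS$ of size $O(\log(1/\eps)/\eps)$ and a translation $\tau' \in \conv(\bPi(Q_\tau))$ such that the resulting $k$-flat $\bF''$ satisfies $\cost_z((\bPi(\bS),\bw),\bF'') \leq (1+\eps)\cost_z((\bPi(\bS),\bw),\bF')$. Translating by $-\tau'$ and applying Lemma~\ref{lem:find-good-subspace} to the shifted weighted set $(\bPi(\bS)-\tau',\bw)$ then yields a subset $Q_R \subset \bS$ of size $O(k^2\log(k/\eps)/\eps)$ and a $k$-dimensional subspace $R'' \subset \mathrm{span}(\{\bPi(x)-\tau' : x \in Q_R\})$ for which the $k$-flat $F''' = R'' + \tau'$ satisfies $\cost_z((\bPi(\bS),\bw),F''') \leq (1+\eps)\cost_z((\bPi(\bS),\bw),\bF'')$. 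Since $\tau' \in \mathrm{span}(\bPi(Q_\tau))$, the entire flat $F'''$ lives in the affine hull of $\bPi(Q)$ for $Q = Q_R \cup Q_\tau$, a subset of $\bS$ of size $O(k^2\log(k/\eps)/\eps)$.

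With $F'''$ in hand, I would pull it back to $\R^d$ using event $\bE_2$ in exactly the style of Lemma~\ref{lem:events-to-thm-sa}. Pick an orthonormal basis $v_1,\dots,v_k$ of $R''$ and find preimages $u_\ell \in \mathrm{span}(Q)$ with $\bPi(u_\ell) = v_\ell$, and a $\tau^* \in \mathrm{span}(Q_\tau) \subset \mathrm{span}(Q)$ with $\bPi(\tau^*) = \tau'$; then define $F^* = \tau^* + \mathrm{span}(u_1,\dots,u_k) \subset \R^d$. For each $x \in \bS$, letting $\gamma_\ell(x) = \langle \bPi(x)-\tau', v_\ell \rangle$, the point $p_x = \tau^* + \sum_\ell \gamma_\ell(x) u_\ell$ lies in $F^*$ and satisfies $\bPi(p_x) = \rho_{F'''}(\bPi(x))$, while $p_x - x$ lies in $\mathrm{span}(Q \cup \{x\})$. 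Event $\bE_2$, applied to this subspace (which is spanned by at most $O(k^2\log(k/\eps)/\eps)$ points of $\bS$ together with $x$), gives $\|x-\rho_{F^*}(x)\|_2 \leq \|x-p_x\|_2 \leq (1+\eps)\|\bPi(x)-\rho_{F'''}(\bPi(x))\|_2$, and hence $\cost_z((\bS,\bw),F^*) \leq (1+\eps)\cost_z((\bPi(\bS),\bw),F''')$.

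The final step is to invoke event $\bE_1$, giving $\min_F \cost_z(X,F) \leq (1+\eps)\cost_z((\bS,\bw),F^*)$, and then chaining the four $(1+\eps)$ factors (one each from $\bE_1$, $\bE_2$, Lemma~\ref{lem:find-good-subspace}, and Lemma~\ref{lem:find-good-translation}) together with the $\beta^{1/z}$ from $\bE_3(\beta)$ yields the claimed bound. The main subtlety compared to the subspace case is ensuring that the affine lift $F^*$ genuinely lives in the span of $O(k^2\log(k/\eps)/\eps)$ points of $\bS$, which requires that the translation $\tau^*$ be expressible in the span of $Q_\tau$; this is guaranteed because Lemma~\ref{lem:find-good-translation} places $\tau'$ in $\conv(\bPi(Q_\tau))$, so a single subspace-embedding event $\bE_2$ (with the combined subset $Q_R \cup Q_\tau$) is enough, with no degradation in the coreset size or the target dimension compared to the subspace-approximation argument.
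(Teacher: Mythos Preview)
Your proposal is correct and follows essentially the same route as the paper's own proof: apply $\bE_3(\beta)$, then Lemma~\ref{lem:find-good-translation} to replace the translation by one in $\conv(\bPi(Q_\tau))$, then Lemma~\ref{lem:find-good-subspace} on the shifted set to replace the direction subspace by one in $\mathrm{span}(\bPi(Q_R))$, pull back via $\bE_2$ on the combined subset $Q_\tau\cup Q_R\cup\{x\}$, and finish with $\bE_1$. The only cosmetic difference is that you spell out the explicit preimages $u_\ell,\tau^*$ and the point $p_x$, whereas the paper defers to the argument of Lemma~\ref{lem:events-to-thm-sa}; the chain of four $(1+\eps)$ factors plus $\beta^{1/z}$ is identical.
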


\begin{proof}
Consider a fixed $\bPi$ and $(\bS, \bw)$ satisfying the three events of Definition~\ref{def:events-sa}. Let $F' \subset \R^t$ be the $k$-flat which minimizes $\cost_z^z(\bPi(X), F')$. Suppose that $F'$ is specified by a $k$-dimensional subspace $R'$ and a translation $\tau'$. Then, by event $\bE_3(\beta)$, we have $\cost_z^z((\bPi(\bS), \bw), F') \leq \beta \cdot \cost_z^z(\bPi(X), F')$. Now, we apply Lemma~\ref{lem:find-good-translation} to $(\bPi(\bS), \bw)$, and we obtain a subset $Q \subset \bS$ of size $O(\log(1/\eps) / \eps)$ for which there exists a translation vector $\tau'' \in \R^t$ within the $\conv(\bPi(Q))$ such that $k$-flat $F''$ given by $\tau''$ and $R'$ satisfies
\[ \left( \sum_{x \in \bS} \bw(x) \cdot \| \bPi(x) - \rho_{F''}(\bPi(x)) \|_2^z\right)^{1/z} = \cost_z((\bPi(\bS),\bw), F'') \leq (1+\eps) \cdot \cost_z((\bPi(\bS), \bw), F'). \]
Furthermore, by Lemma~\ref{lem:find-good-subspace} to the weighted vectors $(\bPi(\bS) - \tau'', \bw)$,\footnote{Here, we are using the short-hand $\bPi(\bS) - \tau'' \eqdef \left\{ \bPi(x) - \tau'' \in \R^t : x \in \bS\right\}$.} there exists a subset $Q' \subset \bPi(\bS) - \tau''$ of size $O(k^2 \log(k/\eps) / \eps)$ and a $k$-dimensional subspace $R'' \subset \R^d$ within the span of $Q'$ such that the $k$-flat $F'''$ specified by $R''$ and $\tau''$ satisfies
\[ \left( \sum_{x \in \bS} \bw(x) \cdot \| \bPi(x) - \rho_{F'''}(\bPi(x))\|_2^z \right)^{1/z} = \cost_z((\bPi(\bS), \bw), F''') \leq (1+\eps)^2 \cdot \cost_z((\bPi(\bS), \bw), F'). \]
Recall that (i) $R''$ is a $k$-dimensional subspace lying in the span of $\bPi(Q')$, (ii) $\tau'' \in \R^t$ is within $\conv(\bPi(Q))$, and (iii) for any $x \in \bS$, $\bPi$ is an $\eps$-subspace embedding of the span of $Q \cup Q' \cup \{ x \}$. Similarly to Lemma~\ref{lem:events-to-thm-sa}, we may find a $k$-flat $U$ such that for every $x \in \bS$,
\[ \|x - \rho_{U}(x) \|_2 \leq (1+\eps) \| \bPi(x) - \rho_{F'''}(\bPi(x))\|_2, \]
and hence
\[ \cost_z((\bS, \bw), U) \leq (1+\eps) \cdot \cost_z((\bPi(\bS), \bw), F''').\]
Finally, since $(\bS, \bw)$ is a $\eps$-weak coreset, we obtain the desired inequality.
\end{proof}

As in the previous section, events $\bE_1$ and $\bE_2$ hold with sufficiently high probability. All that remains is showing that $\bE_3(1+\eps)$ holds with sufficiently high probability. We proceed in a similar fashion, where we first show a loose approximation guarantee, and later improve on it.

\begin{lemma}\label{lem:bad-approx-flat}
Fix any $\Pi \in \calJ_{d,t}$ and let $F' \subset \R^t$ denote the $k$-flat for optimal $(k,z)$-flat approximation of $\Pi(X)$ in $\R^t$. Then with probability at least $0.99$ over the draw of $(\bS, \bw)$ as per Definition~\ref{def:events-flat}, 
\[ \sum_{x \in \bS} \bw(x) \cdot \| \Pi(x) - \rho_{F'}(\Pi(x))\|_2^z \leq 100 \cdot \cost_z^z(\Pi(X), F'). \]
In other words, event $\bE_3(100)$ holds with probability at least $0.99$.
\end{lemma}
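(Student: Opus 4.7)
The plan is to mirror the argument used in Lemma~\ref{lem:bad-approx-sa} (the analogous warm-up bound for $(k,z)$-subspace approximation), since the sampling setup for Definition~\ref{def:events-flat} is identical apart from replacing the subspace sensitivity function with the flat sensitivity function from Lemma~\ref{lem:flat-sensitivities}. The strategy is to compute the expectation of the weighted cost estimator, show it equals $\cost_z^z(\Pi(X), F')$ exactly, and then invoke Markov's inequality to conclude that it cannot exceed $100$ times its mean with probability more than $0.01$.

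Concretely, I would first fix $\Pi$ in the support of $\calJ_{d,t}$ and let $F' \subset \R^t$ be the optimal $k$-flat for $\Pi(X)$. Recall that $\bS$ consists of $m$ i.i.d.\ samples from $\tilde{\sigma}$ and that $\bw(\bx) = 1/(m\tilde{\sigma}(\bx))$ for each sampled point. Then, using linearity of expectation, for each of the $m$ samples $\bx \sim \tilde{\sigma}$,
\[
\Ex_{\bx \sim \tilde{\sigma}}\left[\dfrac{1}{\tilde{\sigma}(\bx)} \cdot \| \Pi(\bx) - \rho_{F'}(\Pi(\bx))\|_2^z\right] = \sum_{x \in X} \|\Pi(x) - \rho_{F'}(\Pi(x))\|_2^z = \cost_z^z(\Pi(X), F').
\]
Summing over the $m$ samples and dividing by $m$ (which is absorbed into $\bw$) yields
\[
\Ex_{(\bS,\bw)}\left[\sum_{x \in \bS} \bw(x) \cdot \|\Pi(x) - \rho_{F'}(\Pi(x))\|_2^z\right] = \cost_z^z(\Pi(X), F').
\]

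Finally, I would apply Markov's inequality: the probability that the (non-negative) estimator exceeds $100$ times its expectation is at most $1/100 = 0.01$. This gives the desired $0.99$ probability bound. There is no real obstacle here: the key observation is simply that sensitivity sampling always produces an unbiased estimator of the cost at any fixed candidate solution $F'$ (regardless of whether $F'$ is in $\R^d$ or obtained as the optimum after applying $\Pi$, since the expectation identity only uses the definition of $\tilde{\sigma}$ and $\bw$). The more delicate variance bound, which would be needed to replace the factor $100$ by $(1+\eps)$, is deferred to the analogous ``improving the approximation'' step (modelled on Lemma~\ref{lem:variance-bound-sa}) and is not required for this warm-up lemma.
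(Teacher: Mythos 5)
Your proposal is correct and matches the paper's intended argument: the paper states Lemma~\ref{lem:bad-approx-flat} without an explicit proof because it is the exact analogue of Lemma~\ref{lem:bad-approx-sa}, which likewise computes the expectation of the sensitivity-sampled estimator (an unbiased estimate of $\cost_z^z(\Pi(X), F')$) and concludes by Markov's inequality. Your observation that the unbiasedness holds for any fixed candidate $F'$, using only the definitions of $\tilde{\sigma}$ and $\bw$, is precisely the point.
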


\begin{corollary}\label{cor:bad-approx-flat}
Let $X = \{ x_1,\dots, x_n \} \subset \R^d$ be any set of points. For any $\eps \in (0, 1/2)$, let $\calJ_{d,t}$ be the Johnson-Lindenstrauss map with 
\[ t \gsim \dfrac{z \cdot k^2 \cdot \polylog(k/\eps)}{\eps^3}. \]
Then, with probability at least $0.97$ over the draw of $\bPi \sim \calJ_{d,t}$,
\[\dfrac{1}{100^{1/z}(1+\eps)^4} \cdot \min_{\substack{F \subset \R^d \\ \text{$k$-flat}}} \cost_z(X, F) \leq \min_{\substack{F' \subset \R^t \\ \text{$k$-flat}}} \cost_z(\bPi(X), F'). \]
\end{corollary}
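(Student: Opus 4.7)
The plan is to mirror the proof of Corollary~\ref{cor:bad-approx-sa}, invoking the three events from Definition~\ref{def:events-flat} and then applying Lemma~\ref{lem:events-to-thm-flat}. Specifically, I would sample $(\bS, \bw)$ and $\bPi \sim \calJ_{d,t}$ as described in Definition~\ref{def:events-flat}, setting the sample size $m = \poly((k+2)^z, 1/\eps)$ and the target dimension $t \gsim z k^2 \polylog(k/\eps) / \eps^3$ as in the hypothesis.

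First, I would argue that $\bE_1$ holds with probability at least $0.99$ over the draw of $(\bS, \bw)$: this is exactly the conclusion of Theorem~\ref{thm:weak-coresets-flat}, provided $m$ is the polynomial indicated there (which is subsumed by our choice). Second, for $\bE_3(100)$, I would appeal to Lemma~\ref{lem:bad-approx-flat}, which gives the statement for a fixed $\Pi$ with probability at least $0.99$ over $(\bS, \bw)$; since this holds pointwise in $\Pi$, it holds on average and hence with probability at least $0.99$ over the joint draw of $\bPi$ and $(\bS, \bw)$ (a union bound against $\bE_1$ keeps us above $0.98$).

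For $\bE_2$, the only wrinkle relative to the subspace-approximation case is that the flat version of Lemma~\ref{lem:events-to-thm-flat} requires $\bPi$ to be an $\eps$-subspace embedding for subspaces spanned by any $\ell$ vectors of $\bS$, where $\ell = O(k^2\log(k/\eps)/\eps) + O(\log(1/\eps)/\eps)$: namely, the $Q'$ from Lemma~\ref{lem:find-good-subspace} together with the $Q$ from Lemma~\ref{lem:find-good-translation} used to recover an approximately optimal translation. I would then apply Lemma~\ref{lem:event-2-sa} with $m = \poly((k+2)^z,1/\eps)$ points and this $\ell$, which demands $t \gsim \ell \log m / \eps^2 = O(z k^2 \polylog(k/\eps)/\eps^3)$; this matches the bound assumed in the corollary, so $\bE_2$ holds with probability at least $0.99$ over $\bPi \sim \calJ_{d,t}$.

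Finally, a union bound gives all three events simultaneously with probability at least $0.97$, and Lemma~\ref{lem:events-to-thm-flat} with $\beta = 100$ yields the claimed $1/(100^{1/z}(1+\eps)^4)$ bound. I do not expect any substantive obstacle here beyond the bookkeeping in $\ell$: the sensitivity function of Lemma~\ref{lem:flat-sensitivities} and the coreset guarantee of Theorem~\ref{thm:weak-coresets-flat} already incorporate the translation through the appending-a-coordinate trick, so no new geometric argument is needed. The only thing to double-check is that the subspace-embedding event covers both the subspace $R''$ recovered inside $\mathrm{span}(\bPi(Q'))$ and the translation $\tau''$ lying inside $\conv(\bPi(Q))$, which is why the $\ell$ above is the sum of the two coreset sizes.
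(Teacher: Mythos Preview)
Your proposal is correct and matches the paper's approach exactly: the paper does not even spell out a proof for this corollary, since it follows by the same union-bound-over-$\bE_1,\bE_2,\bE_3(100)$ argument as Corollary~\ref{cor:bad-approx-sa}, with the only change being the slightly larger $\ell = O(k^2\log(k/\eps)/\eps) + O(\log(1/\eps)/\eps)$ in $\bE_2$ to accommodate the translation subset $Q$ from Lemma~\ref{lem:find-good-translation}. Your bookkeeping on $\ell$ and the resulting bound on $t$ is precisely what the paper remarks on just before Definition~\ref{def:events-flat}.
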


\subsubsection{Improving the approximation}

The improvement of the approximation, follows from upper bounding the variance, as in the $(k,z)$-clustering problem, and the $(k,z)$-subspace approximation problem. In particular, we show that $\bE_3(1+\eps)$ holds. Fix $X = \{ x_1,\dots, x_n\} \subset \R^d$ and $F \subset \R^d$ be the optimal $(k,z)$-flat approximation of $X$ in $\R^d$. The sensitivity function $\sigma \colon X \to \R_{\geq 0}$ specified in Lemma~\ref{lem:flat-sensitivities} specify the sensitivity sampling distribution $\tilde{\sigma}$. 

We let $\bE_4$ denote the following event with respect to the randomness in $\bPi \sim \calJ_{d,t}$. For each $x \in X$, we let $\bD_x \in \R_{\geq 0}$ denote the random variable
\[ \bD_x \eqdef \dfrac{\|\bPi(x) - \bPi(\rho_F(x))\|_2}{\|x - \rho_F(x)\|_2},\]
and as in (\ref{eq:def-D-sa}) and (\ref{eq:event-4-sa}), event $\bE_4$, which occurs with probability at least $0.99$, whenever
\[ \sum_{x \in X} \bD_x^{2z} \cdot \sigma(x) \leq 100 \cdot 2^z \cdot \frakS_{\sigma}. \]

\begin{lemma}\label{lem:variance-bound-flat}
Let $\Pi \in \calJ_{d,t}$ be a Johnson-Lindensrauss map where, for $\alpha > 1$, the following events hold:
\begin{enumerate}
\item Guarantee from Lemma~\ref{lem:flat-ub}: $\sum_{x \in X} \| \Pi(x) - \Pi(\rho_F(x))\|_2^z \leq \alpha \cdot \cost_z^z(X,F)$.
\item Guarantee from Corollary~\ref{cor:bad-approx-flat}: letting $F' \subset \R^t$ be the optimal $(k,z)$-flat approximation of $\Pi(X)$, then $\cost_z^z(X,F) \leq \alpha \cost_z^z(\Pi(X), F')$.
\item Event $\bE_4$ holds.
\end{enumerate}
Then, if we let $(\bS, \bw)$ denote $m = \poly(k^z, 1/\eps, \alpha)$ i.i.d draws from $\tilde{\sigma}$ and $\bw(x) = 1 / (m \tilde{\sigma}(x))$, with probability at least $0.99$, 
\[ \cost_z^z((\Pi(\bS), \bw), F') \leq (1+\eps) \cdot \cost_z^z(\Pi(X), F'). \]
\end{lemma}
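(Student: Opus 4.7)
The plan is to mimic the structure of the proof of Lemma~\ref{lem:variance-bound-sa}, bounding the variance of the sample mean estimator and applying Chebyshev's inequality, with the only nontrivial modification being how we handle translations in the sensitivity function of Lemma~\ref{lem:flat-sensitivities}. Specifically, if $F'$ is the optimal $k$-flat for $\Pi(X)$, I start by writing
\[
\mathop{\Var}_{\bS}\left[ \Ex_{\bx \sim \bS}\left[ \frac{1}{\tilde\sigma(\bx)} \cdot \frac{\|\Pi(\bx) - \rho_{F'}(\Pi(\bx))\|_2^z}{\cost_z^z(\Pi(X), F')} \right]\right] \leq \frac{\frakS_\sigma}{m}\sum_{x \in X} \frac{1}{\sigma(x)} \cdot \frac{\|\Pi(x) - \rho_{F'}(\Pi(x))\|_2^{2z}}{\cost_z^{2z}(\Pi(X), F')},
\]
exactly as in~(\ref{eq:variance-bound-sa}). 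The task is then to upper bound each $\|\Pi(x) - \rho_{F'}(\Pi(x))\|_2^z / \cost_z^z(\Pi(X), F')$ by $O(\alpha^2) \cdot D_x^z \cdot \sigma(x)$ up to $2^{O(z)}$ factors.

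To do so, let $y_x = \rho_F(x) \in \R^d$ and $Y = \{y_x : x \in X\}$, so $\|x - y_x\|_2 = \|x - \rho_F(x)\|_2$ and (by assumption (1)) $\sum_x \|\Pi(x) - \Pi(y_x)\|_2^z \leq \alpha \cdot \cost_z^z(X,F)$. Using $\rho_{F'}(\Pi(y_x))$ as a feasible point for $\Pi(x)$, two applications of the triangle and H\"{o}lder inequalities give
\[
\|\Pi(x) - \rho_{F'}(\Pi(x))\|_2^z \leq 2^{z-1}\bigl( \|\Pi(x) - \Pi(y_x)\|_2^z + \|\Pi(y_x) - \rho_{F'}(\Pi(y_x))\|_2^z\bigr).
\]
The first term is $D_x^z \cdot \|x - y_x\|_2^z$, which divided by $\cost_z^z(X,F)$ is bounded by the first piece of $\sigma(x)$. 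For the second term, as in the proof of Lemma~\ref{lem:flat-sensitivities}, I bound
\[
\frac{\|\Pi(y_x) - \rho_{F'}(\Pi(y_x))\|_2^z}{\cost_z^z(\Pi(Y), F')} \leq \sup_{\substack{v \in \R^t \\ \psi \in \R}} \frac{|\langle \Pi(y_x), v\rangle - \psi|^z}{\sum_{x' \in X}|\langle \Pi(y_{x'}), v\rangle - \psi|^z} \leq \sup_{\substack{u \in \R^d \\ \phi \in \R}} \frac{|\langle y_x, u\rangle - \phi|^z}{\sum_{x'\in X}|\langle y_{x'}, u\rangle - \phi|^z},
\]
where the second inequality uses linearity of $\Pi$ to take $u = \Pi^\top v$ and $\phi = \psi$. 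This quantity is exactly the second piece of $\sigma(x)$ from Lemma~\ref{lem:flat-sensitivities}.

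Next I need to convert $\cost_z^z(\Pi(Y), F')$ into $\cost_z^z(\Pi(X), F')$: by the triangle inequality,
\[
\cost_z^z(\Pi(Y), F') \leq 2^{z-1}\bigl( \cost_z^z(\Pi(X), F') + \sum_{x \in X} \|\Pi(x) - \Pi(y_x)\|_2^z \bigr) \leq 2^{z-1}(1+\alpha^2)\cdot \cost_z^z(\Pi(X), F'),
\]
by combining assumptions (1) and (2). Putting these estimates together yields
$\|\Pi(x) - \rho_{F'}(\Pi(x))\|_2^z / \cost_z^z(\Pi(X), F') \leq 2^{O(z)}(\alpha^2 + 1)\cdot (D_x^z + 1)\cdot \sigma(x)$, and plugging back into the variance bound, together with $(D_x^z + 1)^2 \leq 2(D_x^{2z} + 1)$ and event $\bE_4$ (which controls $\sum_x D_x^{2z}\sigma(x)$ by $O(\frakS_\sigma)\cdot 2^z$), gives a variance bounded by $\frakS_\sigma^2 \cdot 2^{O(z)}(\alpha^2+1)^2 / m$. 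Setting $m = \poly(k^z, 1/\eps, \alpha)$ makes this $o(\eps^2)$, and Chebyshev's inequality then yields the claim with probability at least $0.99$.

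The main obstacle is the step that converts a sup over $(v, \psi) \in \R^t \times \R$ into a sup over $(u,\phi) \in \R^d \times \R$ via $\Pi$. This is the only place the $(k,z)$-flat case diverges structurally from the $(k,z)$-subspace case, and it is what ensures the sensitivity values $\sigma(x)$ defined in the original $d$-dimensional space still bound the corresponding ratios after applying $\Pi$. Everything else is a direct adaptation of Lemma~\ref{lem:variance-bound-sa}, with the translation coordinate $\phi$ propagating unchanged through the triangle-inequality manipulations.
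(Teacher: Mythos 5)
Your proposal is correct and follows essentially the same route as the paper's proof, which itself is a terse ``do what Lemma~\ref{lem:variance-bound-sa} did, but with a translation coordinate.'' You expand the same two key displayed inequalities the paper states: the triangle-inequality decomposition of $\|\Pi(x)-\rho_{F'}(\Pi(x))\|_2^z$ through $\Pi(y_x)$, and the transfer of the affine sup from $(v,\psi)\in\R^t\times\R$ to $(u,\phi)\in\R^d\times\R$ by pulling $v$ back through $\Pi$ (with $\phi=\psi$ unchanged). Your bookkeeping is slightly more careful than the paper's about keeping the additive $+1$ in $(D_x^z+1)$ rather than absorbing it, which is harmless; the rest—using assumptions (1) and (2) to trade $\cost_z^z(X,F)$ and $\cost_z^z(\Pi(Y),F')$ for $\cost_z^z(\Pi(X),F')$, invoking $\bE_4$ to control $\sum_x D_x^{2z}\sigma(x)$, and closing with Chebyshev—matches the paper's argument.
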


\begin{proof}
We similarly bound the variance of 
\begin{align*}
\Var_{\bS}\left[ \Ex_{\bx \sim \bS}\left[ \dfrac{1}{\tilde{\sigma}(\bx)} \dfrac{\| \Pi(\bx) - \rho_{F'}\left(\Pi(\bx)\right) \|_2^z}{\cost_z^z(\Pi(X), F')} \right] \right] \leq \dfrac{\frakS_{\sigma}}{m} \sum_{x \in X} \left(\dfrac{1}{\sigma(x)} \cdot \dfrac{\|\Pi(x) - \rho_{F'}(\Pi(x))\|_2^z}{\cost_z^{2z}(\Pi(X), F')} \right).
\end{align*}
It is not hard to show, as in the proof of Lemma~\ref{lem:variance-bound-sa}, that writing $y_x = \rho_F(x) \in \R^d$ and $Y = \{ y_x : x \in X\}$, that
\begin{align*}
\dfrac{\|\Pi(x) - \rho_{F'}(\Pi(x))\|_2^z}{\cost_z^z(\Pi(X), F')} \leq 2^{z-1}  \alpha  \bD_x^{z} \cdot \dfrac{\|x - \rho_F(x)\|_2^z}{\cost_z^z(X, F)} + 2^{2z-2} (1+\alpha^2) \sup_{\substack{v \in \R^t \\ \mu \in \R}} \dfrac{|\langle \Pi(y_x), v \rangle - \mu|^z}{\sum_{x' \in X} |\langle \Pi(y_{x'}), v \rangle - \mu|^z},
\end{align*}
and similarly to before, we have
\begin{align*}
\sup_{\substack{v \in \R^t \\ \mu \in \R}} \dfrac{|\langle \Pi(y_x), v \rangle - \mu|^z}{\sum_{x'\in X} |\langle \Pi(y_{x'}), v \rangle - \mu|^z} \leq \sup_{\substack{u \in \R^d \\ \phi \in R}} \dfrac{|\langle y_x, u \rangle - \phi|^z}{\sum_{x'\in X} |\langle y_{x'}, u \rangle - \phi|^z}.
\end{align*}
This implies that the variance is at most
\[ 2^{4z-2} \cdot \dfrac{\frakS_{\sigma} \alpha^4}{m} \cdot \sum_{x \in X} \bD_x^{2z} \cdot \sigma(x) \leq \eps^2\]
by setting $m = \poly((k+2)^z, 1/\eps, \alpha)$ to be large enough when $\bE_4$ holds, and we apply Chebyshev's inequality.
\end{proof}

\begin{corollary}
Let $X = \{ x_1,\dots, x_n \} \subset \R^d$ be any set of points, and let $F \subset \R^d$ be the optimal $(k,z)$-flat approximation of $X$. For any $\eps \in (0,1/2)$, let $\calJ_{d,t}$ be the Johnson-Lindenstrauss map with 
\[ t \gsim \dfrac{z \cdot k^2 \cdot \polylog(k/\eps)}{\eps^3}.\]
Then, with probability at least $0.92$ over the draw of $\bPi \sim \calJ_{d,t}$,
\[ \dfrac{1}{(1+\eps)^{4+1/z}} \cdot \cost_z(X, F) \leq \min_{\substack{F' \subset \R^t \\ \text{$k$-flat}}} \cost_z(\bPi(X), F'). \] 
\end{corollary}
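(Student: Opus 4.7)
The plan is to mirror the structure of the analogous corollary for $(k,z)$-subspace approximation, which itself parallels the argument used for $(k,z)$-clustering. The target bound will be obtained by applying Lemma~\ref{lem:events-to-thm-flat} once the three events $\bE_1$, $\bE_2$, and $\bE_3(1+\eps)$ of Definition~\ref{def:events-flat} are shown to hold simultaneously with probability at least $0.92$. Concretely, we sample $\bPi \sim \calJ_{d,t}$ and $(\bS, \bw)$ as specified in Definition~\ref{def:events-flat}, using the sensitivity function of Lemma~\ref{lem:flat-sensitivities} and the sensitivity sampling distribution $\tilde{\sigma}$ derived from it.

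First I would dispatch the two easy events. Event $\bE_1$, that $(\bS, \bw)$ is an $\eps$-weak coreset for $(k,z)$-flat approximation of $X$, holds with probability $1-o(1)$ by Theorem~\ref{thm:weak-coresets-flat}, provided $m \gsim \poly((k+2)^z, 1/\eps, \alpha)$ where $\alpha$ is a sufficiently large constant (the extra $\alpha$ factor is needed later for the variance-based bound). Event $\bE_2$, that $\bPi$ is an $\eps$-subspace embedding for every subspace spanned by $O(k^2 \log(k/\eps)/\eps) + O(\log(1/\eps)/\eps)$ points of $\bS$ (together with any additional $x \in \bS$), holds with probability at least $0.99$ by Lemma~\ref{lem:event-2-sa}, since $t \gsim (k^2 \polylog(k/\eps)/\eps) \cdot (\log m)/\eps^2 = \tilde{O}(k^2/\eps^3)$ suffices for the union-bounded subspace-embedding guarantee.

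The main work is in verifying $\bE_3(1+\eps)$, and this is where I invoke Lemma~\ref{lem:variance-bound-flat}. Its three hypotheses are: (i) the dilation bound $\sum_{x \in X} \|\Pi(x) - \Pi(\rho_F(x))\|_2^z \leq \alpha \cdot \cost_z^z(X, F)$, which holds with probability at least $0.99$ by Lemma~\ref{lem:flat-ub} (setting $\alpha$ to absorb the Markov-based constant, e.g., $\alpha = O(1)$); (ii) the loose lower bound $\cost_z^z(X, F) \leq \alpha \cdot \cost_z^z(\Pi(X), F')$ for the optimal $(k,z)$-flat $F' \subset \R^t$ of $\Pi(X)$, which holds with probability at least $0.97$ by Corollary~\ref{cor:bad-approx-flat}; and (iii) event $\bE_4$, which holds with probability at least $0.99$ by the computation sketched just before the lemma (expectation bound on $\bD_x^{2z}$ via Lemma~\ref{lem:gaussian-guarantee} combined with the total sensitivity bound $\frakS_\sigma \leq 2^{z-1} + 2^{2z-1}(k+2)^{1+z}$ from Lemma~\ref{lem:flat-sensitivities} and Markov's inequality). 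A union bound gives all three conditions with probability at least $0.95$, so Lemma~\ref{lem:variance-bound-flat} (applied with $m$ the specified polynomial in $(k+2)^z$, $1/\eps$, and $\alpha$) yields $\bE_3(1+\eps)$ with probability at least $0.94$.

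Finally, a union bound over $\bE_1$, $\bE_2$, and $\bE_3(1+\eps)$ gives probability at least $0.92$ that all three events hold simultaneously. Plugging $\beta = 1+\eps$ into Lemma~\ref{lem:events-to-thm-flat} yields the stated inequality
\[ \min_{\substack{F' \subset \R^t \\ \text{$k$-flat}}} \cost_z(\bPi(X), F') \geq \dfrac{1}{(1+\eps)^{4 + 1/z}} \cdot \cost_z(X, F). \]
The only genuinely delicate step is the variance calculation internal to Lemma~\ref{lem:variance-bound-flat}, where one must carefully break the ``circularity'' between defining the sensitivity function through the true optimum $F$ in $\R^d$ and applying it to estimate costs for the optimum $F'$ in $\R^t$; but since that lemma is already available to us, the present argument is essentially a packaging of the event-level guarantees via the union bound.
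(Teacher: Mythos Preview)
Your proposal is correct and follows essentially the same approach as the paper's (implicit) proof: sample $(\bS,\bw)$ and $\bPi$ as in Definition~\ref{def:events-flat}, obtain $\bE_1$ from Theorem~\ref{thm:weak-coresets-flat} and $\bE_2$ from Lemma~\ref{lem:event-2-sa}, establish the hypotheses of Lemma~\ref{lem:variance-bound-flat} via Lemma~\ref{lem:flat-ub}, Corollary~\ref{cor:bad-approx-flat}, and event $\bE_4$ to get $\bE_3(1+\eps)$, then union bound and invoke Lemma~\ref{lem:events-to-thm-flat} with $\beta=1+\eps$. This is exactly the pattern the paper uses in the analogous corollary for $(k,z)$-subspace approximation, transported to the flat setting.
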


\section{$k$-Line Approximation}\label{sec:line}

In the $(k,z)$-\emph{line approximation} problem, we consider a collection of $k$ lines in $\R^d$. A line is encoded by a vector $v \in \R^d$ and a unit vector $u \in S^{d-1}$, where we will write
\[ \ell(v, u) = \left\{ v + t \cdot u \in \R^d : t \in \R\right\}. \]
For a single line $\ell$ encoded by $v$ and $u$, we write $\rho_{\ell} \colon \R^d \to \R$ as the orthogonal projection of a point onto $\ell$, i.e., the closest vector which lies on the line, where
\[ \rho_{\ell}(x) = \argmin_{y \in \ell} \| x - y\|_2^2 = v + \langle x - v, u \rangle u.  \]
For any set of $k$ lines, $L = \{ \ell_1, \dots, \ell_k \}$, and a point $x \in X$, we write
\[ \cost_z^z(x, L) = \min_{\ell \in L} \|x - \rho_{\ell}(x)\|_2^z,\]
and for any dataset $X \subset \R^d$ and set of lines $L$, we consider the map $\ell \colon X \to L$ which sends a point $x$ to its nearest line in $L$. Then, we write
\[ \cost_z^z(X, L) = \sum_{x \in X} \| x - \rho_{\ell(x)}(x) \|_2^z, \]
as the cost of representing the points in $X$ according to the $k$ lines in $L$. In this section, we show that we may find the optimal $(k,z)$-line approximation after applying a Johnson-Lindenstrauss map. Specifically, we prove:
\begin{theorem}[Johnson-Lindenstrauss for $(k,z)$-Line Approximation]\label{thm:k-line} Let $X = \{ x_1,\dots, x_n \} \subset \R^d$ be any set of points, and let $L = \{ \ell_1,\dots, \ell_k \}$ denote a set of lines in $\R^d$ for optimally $(k,z)$-line approximation of $X$. For any $\eps \in (0, 1)$, suppose we let $\calJ_{d,t}$ be a distribution over Johnson-Lindenstrauss maps where
\[ t \gsim \dfrac{k \log \log n + z + \log(1/\eps)}{\eps^3}. \]
Then, with probability at least $0.9$ over the draw of $\bPi \sim \calJ_{d,t}$,
\[ \dfrac{1}{1+\eps} \cdot \cost_z(X, L) \leq \min_{\substack{L' \text{ $k$ lines} \\ \text{in $\R^t$}}} \cost_z(\bPi(X), L') \leq (1+\eps) \cdot \cost_z(X, L).\]
\end{theorem}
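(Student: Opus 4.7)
The plan is to instantiate the coreset-based framework of Section~\ref{sec:overview}, closely following the templates of Theorems~\ref{thm:center-based}, \ref{thm:k-subspace}, and \ref{thm:k-flat}. For the easy direction, given the optimal lines $L = \{\ell_1,\dots,\ell_k\}$ in $\R^d$ with $\ell_j$ encoded by a base point $v_j \in \R^d$ and direction $u_j \in S^{d-1}$, I would take as candidate in $\R^t$ the collection $\tilde L$ obtained by applying $\bPi$ to each pair $(v_j, u_j)$ and normalizing the direction. Since $x - \rho_{\ell(x)}(x)$ is orthogonal to the direction of $\ell(x)$, a Pythagorean decomposition reduces the positive deviation of $\|\bPi x - \rho_{\tilde\ell(x)}(\bPi x)\|_2^z$ over $\|x - \rho_{\ell(x)}(x)\|_2^z$ to a form handled by Lemma~\ref{lem:gaussian-guarantee}, and Markov's inequality yields $\cost_z(\bPi(X), \tilde L) \leq (1+\eps)\cost_z(X, L)$ with probability at least $0.99$.

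For the hard direction, I design a sensitivity function $\sigma \colon X \to \R_{\geq 0}$ that combines a per-point distance term (as in Lemma~\ref{lem:sens}) with a per-cluster $1$-flat sensitivity (Lemma~\ref{lem:flat-sensitivities} applied within each cluster of $X$ induced by $L$), yielding total sensitivity polynomial in $k$ and $2^z$. Sampling from the induced distribution $\tilde\sigma$ produces a weak $\eps$-coreset $(\bS, \bw)$ of some size $m$ determined by the best available $(k,z)$-line coreset bounds. I then define events analogous to Definition~\ref{def:events-flat}: $\bE_1$ (coreset correctness), $\bE_3(\beta)$ (coreset cost on $\bPi(\bS)$ is within a factor $\beta$ of the cost on $\bPi(X)$ for the optimal $k$ lines $\bL'$ in $\R^t$), $\bE_4$ (moment bound $\sum_{x \in X} \bD_x^{2z} \sigma(x) \lsim \frakS_\sigma$ where $\bD_x = \|\bPi x - \bPi \rho_{\ell(x)}(x)\|_2 / \|x - \rho_{\ell(x)}(x)\|_2$), and $\bE_2$ requiring $\bPi$ to be an $\eps$-subspace embedding of all small-dimensional subspaces of $\mathrm{span}(\bS)$ that will arise in the lift-back step.

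The key structural step (the analogue of Lemma~\ref{lem:events-to-thm-flat}) is, given the optimal $\bL' = \{\ell_1', \dots, \ell_k'\} \subset \R^t$ for $\bPi(X)$, to produce a candidate $L^\ast \subset \R^d$ whose cost on $(\bS, \bw)$ is within $(1+\eps)^{O(1)}$ of $\cost_z((\bPi(\bS), \bw), \bL')$. The argument proceeds per-line: for each cluster of $(\bPi(\bS), \bw)$ induced by $\bL'$, Lemma~\ref{lem:find-good-translation} replaces the base point of $\ell_j'$ by an element of $\mathrm{conv}(\bPi(Q_j))$ for some $Q_j \subset \bS$ with $|Q_j| = O(\log(1/\eps)/\eps)$, and Lemma~\ref{lem:find-good-subspace} with $k=1$ replaces the direction of $\ell_j'$ by a unit vector in $\mathrm{span}(\bPi(Q_j'))$ with $|Q_j'| = O(\log(1/\eps)/\eps)$. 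Event $\bE_2$ then allows inversion: the desired $L^\ast$ is obtained by taking translations and directions in $\R^d$ with the same coefficients against $Q_j \cup Q_j'$ as were used in $\R^t$. Finally $\bE_1$ converts the bound on $\cost_z((\bS, \bw), L^\ast)$ into a bound on $\cost_z(X, L^\ast)$.

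The main obstacle is twofold. The first issue, the circularity between the $O(1)$-approximation required for the variance analysis and the $(1+\eps)$-approximation the analysis itself proves, is resolved exactly as in Subsection~\ref{sec:improving-approx} by a two-pass bootstrap, first establishing $\bE_3(O(1))$ via Markov's inequality and then sharpening to $\bE_3(1+\eps)$ via Chebyshev's inequality driven by $\bE_4$. The second and more delicate issue is achieving the claimed $k\log\log n$ scaling: a direct application of Lemma~\ref{lem:event-2-sa} to $k$-tuples of $O(\log(1/\eps)/\eps)$-dimensional subspaces of the coreset gives $t \gsim k \log m \cdot \log(1/\eps)/\eps^3$, so recovering the stated bound requires both a refined $(k,z)$-line coreset of size $m = \mathrm{poly}(k/\eps) \cdot \mathrm{polylog}(n)$ and a careful decoupling that pulls the $\log(1/\eps)/\eps$ subspace dimension factor out of the $\log m$ accounting, leaving only the $\log\log n$ contribution per line.
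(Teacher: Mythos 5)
Your high-level architecture (easy direction via projected lines, events $\bE_1$--$\bE_4$, two-pass bootstrap from $\bE_3(O(1))$ to $\bE_3(1+\eps)$, lift-back via Lemmas~\ref{lem:find-good-subspace} and \ref{lem:find-good-translation}) matches the paper, and the easy direction is essentially right. The genuine gap is in the sensitivity function, which is the heart of the hard direction for lines and where the paper does something you have not reconstructed.

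Your proposal to build $\sigma$ from a per-point distance term plus ``per-cluster $1$-flat sensitivities'' does not yield a valid sensitivity function for $(k,z)$-line approximation. In the $(k,z)$-clustering case (Lemma~\ref{lem:sens}) the second term $2^{2z-1}/|X_{c(x)}|$ works because after projecting to the optimal center, all points in a cluster collapse to the \emph{same} point, so the quantity $\|c(x)-c'(x)\|_2^z$ can be averaged over the cluster. In the flat/subspace case there is a single flat, so the supremum over directions $u$ is a well-defined per-point quantity that survives linear maps. But for $k$ lines you must bound $\|y_x - \rho_{\ell'(y_x)}(y_x)\|_2^z$ over \emph{arbitrary} $k$-tuples $L'$, and the assignment $\ell'(\cdot)$ of projected points $y_x \in \ell_j$ to lines in $L'$ can split a single cluster $X_j$ across all $k$ lines of $L'$; the averaging inequality used for clustering simply fails because the $y_x$ are spread along a line rather than coincident. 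Consequently the total sensitivity cannot be $\poly(k, 2^z)$ independent of $n$; indeed the paper's total sensitivity is $O\bigl(2^{2z} k (\log n)^{k+1}\bigr)$, obtained from a genuinely different mechanism: a recursive peeling $A_1, A_2, \dots, A_s$ of $Y$ into $3$-coresets for $(k,\infty)$-line approximation (Lemmas~\ref{lem:lines-sensitivity}--\ref{lem:coreset-for-lines}), with the sensitivity of a point $y_x \in A_i$ proportional to $1/i$. The crucial extra fact (Lemma~\ref{lem:coreset-for-lines}, item~\ref{en:coreset-after-map}, and Corollary~\ref{cor:sensitivity-bound-project}) is that this particular coreset construction is \emph{invariant under linear maps}: $\Pi(A_j)$ remains a $3$-coreset for $\Pi(Y)$, because the construction depends only on the linear order and distance ratios of points within each line, which $\Pi$ preserves. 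This is exactly what makes (\ref{eq:variance-after-map}) provable after projection, and is the step your per-cluster flat-sensitivity idea has no analogue of.

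Relatedly, your diagnosis of where $k\log\log n$ comes from is off. It does not come from a ``$k$-tuple of $O(\log(1/\eps)/\eps)$-dimensional subspaces'' subspace-embedding event; the paper's $\bE_2$ for lines (Definition~\ref{def:k-line-events}) only requires $\bPi$ be a subspace embedding for \emph{individual} subsets of $O(\log(1/\eps)/\eps)$ points, just as in the flat case. The $k\log\log n$ arises entirely from $\log \frakS_{\sigma}$ with $\frakS_{\sigma} = \poly(2^z) \cdot (\log n)^{O(k)}$, fed into Corollary~\ref{cor:bad-approx-lines}. So the ``careful decoupling'' you describe is not what is needed; what is needed is the $\ell_\infty$-coreset peeling and its linear-map invariance.
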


\subsection{Easy Direction: Optimum Cost Does Not Increase}

\begin{lemma}\label{lem:upper-bound-lines}
Let $X = \{ x_1,\dots, x_n \} \subset \R^d$ be any set of points and let $L = \{ \ell_1,\dots, \ell_k \}$ be a set of $k$ lines in $\R^d$ for optimal $(k,z)$-line approximation of $X$, and for each $x \in X$, let $\ell(x) \in L$ be the line assigned to $x$. We let $\calJ_{d,t}$ be the distribution over Johnson-Lindenstrauss maps. If $t \gsim z / \eps^2$, then with probability at least $0.99$ over the draw of $\bPi \sim \calJ_{d,t}$,
\[ \left(\sum_{x \in X} \| \bPi(x) - \bPi(\rho_{\ell(x)}(x)) \|_2^z\right)^{1/z} \leq (1+\eps) \cdot \cost_z(X, L), \]
and hence,
\[ \min_{\substack{L' \text{ $k$ lines} \\ \text{in $\R^t$}}} \cost_z(\bPi(X), L') \leq (1+\eps) \cdot \cost_z(X, L).\]
\end{lemma}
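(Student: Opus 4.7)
The plan is to follow the same template that established Lemma~\ref{lem:upper-bound}, Lemma~\ref{lem:subspace-ub}, and Lemma~\ref{lem:flat-ub}: exhibit an explicit candidate set of lines in $\R^t$ whose cost upper bounds the optimum after projection, then control the expected positive dilation in per-point error and finish with Markov's inequality. The key observation is that a Johnson--Lindenstrauss map $\bPi$ is linear, so each line $\ell_i = \ell(v_i,u_i) \in L$ has a natural image $\bPi(\ell_i) \subset \R^t$, namely the affine line $\{ \bPi(v_i) + s \cdot \bPi(u_i) : s \in \R \}$ (a point if $\bPi(u_i) = 0$, which is a measure-zero event we can safely ignore). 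We take $L^* = \{ \bPi(\ell_1),\ldots,\bPi(\ell_k)\}$ as our candidate $k$ lines in $\R^t$.

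First I would argue that for every $x \in X$, the projected point $\bPi(\rho_{\ell(x)}(x))$ lies on $\bPi(\ell(x))$ by linearity, so the distance from $\bPi(x)$ to the nearest line in $L^*$ is bounded by $\|\bPi(x) - \bPi(\rho_{\ell(x)}(x))\|_2$. Raising to the $z$-th power and summing,
\[ \cost_z^z(\bPi(X), L^*) \leq \sum_{x \in X} \|\bPi(x) - \bPi(\rho_{\ell(x)}(x))\|_2^z, \]
so it suffices to bound the right-hand side by $(1+\eps)^z \cost_z^z(X,L)$ with probability at least $0.99$.

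Next I would estimate the expected positive deviation, which reduces the problem to the scalar dilation of a single segment from $x$ to its assigned footpoint $\rho_{\ell(x)}(x)$:
\[ \dfrac{1}{\cost_z^z(X,L)} \Ex_{\bPi \sim \calJ_{d,t}}\left[ \sum_{x \in X} \left( \|\bPi(x)-\bPi(\rho_{\ell(x)}(x))\|_2^z - \|x - \rho_{\ell(x)}(x)\|_2^z \right)^+ \right] \leq \frac{(1+\eps)^z - 1}{100}. \]
This is exactly the conclusion of Lemma~\ref{lem:gaussian-guarantee} applied term-by-term with the pair $(x, \rho_{\ell(x)}(x))$ and $t \gsim z/\eps^2$, after normalizing each summand by $\|x - \rho_{\ell(x)}(x)\|_2^z / \cost_z^z(X,L)$ (a convex combination of the weights across $x$). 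Markov's inequality then gives $\sum_{x}\|\bPi(x) - \bPi(\rho_{\ell(x)}(x))\|_2^z \leq (1+\eps)^z \cost_z^z(X,L)$ with probability at least $0.99$, and taking $z$-th roots finishes the lemma.

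There is no real obstacle here: the only point requiring care is confirming that $\bPi(\ell_i)$ is indeed a valid line (i.e.\ $\bPi(u_i) \neq 0$), but this follows from $\bPi$ being an almost-isometry on the one-dimensional subspace spanned by $u_i$ with overwhelming probability, and even if degeneracy occurred the resulting ``point'' is still a valid degenerate line for the purposes of the upper bound. The argument is structurally identical to the subspace and flat cases, with lines being a particularly simple affine object.
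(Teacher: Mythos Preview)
Your proposal is correct and follows essentially the same approach as the paper: define the candidate lines $\bPi(L)$ as the images of the original lines under the linear map $\bPi$, observe that $\bPi(\rho_{\ell(x)}(x)) \in \bPi(\ell(x))$, and then bound the expected positive dilation via Lemma~\ref{lem:gaussian-guarantee} and Markov's inequality exactly as in Lemmas~\ref{lem:upper-bound}, \ref{lem:subspace-ub}, and \ref{lem:flat-ub}. The paper's own proof is just a one-sentence sketch referring back to those earlier lemmas, so your write-up is in fact more detailed than what appears there.
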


By now, there is a straight-forward way to prove the above lemma. For each set of $k$ lines $L = \{ \ell_1,\dots, \ell_k\}$ in $\R^d$, there is an analogous definition of $k$ lines $\bPi(L)$ in $\R^t$. Hence, we use Lemma~\ref{lem:gaussians} as in previous sections to upper bound the cost $\cost_z(\bPi(X), \bPi(L))$.


\subsection{Hard Direction: Optimum Cost Does Not Decrease}

\subsubsection{Preliminaries}

At a high level, we proceed with the same argument as in previous sections: we consider a sensitivity function for $(k,z)$-line approximation of $X$ in $\R^d$, and use it to build a weak coreset, as well as argue that sensitivity sampling is a low-variance estimator of the optimal $(k,z)$-line approximation in the projected space. The proof in this section will be significantly more complicated than the previous section. Defining the appropriate sensitivity functions, which will give a low-variance estimator in the projected space, is considerably more difficult than the expressions of Lemmas~\ref{lem:sens},~\ref{lem:subspace-sensitivities}, and~\ref{lem:flat-sensitivities}. For this reason, will be proceed by assuming access to a sensitivity function which we will define lated in the section.

\begin{definition}[Weak Coresets for $(k,z)$-Line Approximation] Let $X = \{ x_1,\dots, x_n \} \subset \R^d$ be a set of points. A weak $\eps$-coreset of $X$ for $(k,z)$-line approximation is a weighted subset $S \subset \R^d$ of points with weights $w \colon S \to \R_{\geq 0}$ such that
\begin{align*}
\frac{1}{1+\eps} \cdot \min_{\substack{L \text{ $k$ lines} \\ \text{in $\R^d$}}} \cost_z(X, L) \leq \min_{\substack{L \text{ $k$ lines} \\ \text{in $\R^d$}}} \cost_z((S, w), L) \leq (1+\eps) \cdot \min_{\substack{L \text{ $k$ lines} \\ \text{in $\R^d$}}} \cost_z(X, L) 
\end{align*}
\end{definition}

\begin{definition}[Sensitivities]
Let $n, d \in \N$, and consider any set of points $X = \{ x_1,\dots, x_n \} \subset \R^d$, as well as $k \in \N$ and $z\geq 1$. A sensitivity function $\sigma \colon X \to \R_{\geq 0}$ for $(k,z)$-line approximation in $\R^d$ is a function which satisfies that, for all $x \in X$,
\begin{align*}
\sup_{\substack{L \text{: $k$ lines} \\ \text{in $\R^d$}}} \dfrac{\| x - \rho_{\ell(x)}(x)\|_2^z}{\cost_z^z(X, L)} \leq \sigma(x).
\end{align*}
The total sensitivity of the sensitivity function $\sigma$ is given by
\[ \frakS_{\sigma} = \sum_{x \in X} \sigma(x). \]
For a sensitivity function, we let $\tilde{\sigma}$ denote the sensitivity sampling distribution, supported on $X$, which samples $x \in X$ with probability proportional to $\sigma(x)$.
\end{definition}

Similarly to before, we first give a lemma which narrows down the space of the optimal line approximations for a set of points. the following lemma is a re-formulations of Lemma~\ref{lem:find-good-subspace} and Lemma~\ref{lem:find-good-translation} catered to the case of $(k,z)$-line approximation.

\begin{lemma}[Theorem 3.1 and Lemma~3.3 of~\cite{SV12}]\label{lem:within-span}
Let $d \in \N$ and $S \subset \R^d$ be any set of points with weights $w \colon S \to \R_{\geq 0}$, $\eps \in (0, 1/2)$, and $z \geq 1$. There exists a subset $Q \subset S$ of size $O(\log(1/\eps) / \eps)$ and a line $\ell$ in $\R^d$ within the span of $Q$ such that
\[ \cost_{z}((S, w), \{\ell\}) \leq (1+\eps) \min_{\substack{\ell' \text{ line} \\ \text{in $\R^d$}}} \cost_z((S, w), \{\ell'\}). \]
\end{lemma}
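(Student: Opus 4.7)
The plan is to compose Lemma~\ref{lem:find-good-subspace} (the subspace-span lemma) and Lemma~\ref{lem:find-good-translation} (the translation lemma) from the previous two sections, specialized to the single-line case. Let $\ell^*$ denote an optimal line, which I will parametrize as $\ell^* = \{\tau^* + tu^* : t \in \R\}$ with $u^* \in S^{d-1}$, and write $\mathrm{OPT} = \cost_z((S,w),\{\ell^*\})$. A line is a $1$-flat, so I can first pin down a good translation vector inside a short list of input points, and then pin down a good direction inside a short list.

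First I would apply Lemma~\ref{lem:find-good-translation} to $(S,w)$ with the $1$-dimensional subspace $R^* = \mathrm{span}(u^*)$ and accuracy $\eps/3$. This yields a subset $Q_1 \subset S$ of size $O(\log(1/\eps)/\eps)$ together with a point $\tau' \in \mathrm{conv}(Q_1) \subset \mathrm{span}(Q_1)$ such that the line $\ell_1 = \{\tau' + t u^* : t\in\R\}$ satisfies $\cost_z((S,w),\{\ell_1\}) \leq (1+\eps/3)\cdot \mathrm{OPT}$. Next I would translate the data set by $-\tau'$, i.e.\ consider the weighted set $(S', w)$ where $S' = \{x - \tau' : x \in S\}$. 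Because distances to a line are invariant under common translation of the line and the points, approximating $(S',w)$ by the $1$-dimensional subspace $R^*$ (through the origin) costs exactly $\cost_z((S,w),\{\ell_1\}) \leq (1+\eps/3)\cdot \mathrm{OPT}$; in particular the optimal $1$-dimensional subspace approximation of $(S',w)$ has cost at most $(1+\eps/3)\cdot\mathrm{OPT}$.

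Now I would apply Lemma~\ref{lem:find-good-subspace} to $(S', w)$ with $k = 1$ and accuracy $\eps/3$. This gives a subset $Q_2' \subset S'$ of size $O(\log(1/\eps)/\eps)$ and a $1$-dimensional subspace $R'$ lying in $\mathrm{span}(Q_2')$ whose cost for $(S', w)$ is at most $(1+\eps/3)^2 \cdot \mathrm{OPT} \leq (1+\eps)\cdot \mathrm{OPT}$ for $\eps \in (0,1/2)$. Lifting back, let $Q_2 = \{x \in S : x - \tau' \in Q_2'\}$ and pick any unit vector $u'$ spanning $R'$. Since $\tau' \in \mathrm{span}(Q_1)$, every element of $Q_2'$ sits in $\mathrm{span}(Q_1 \cup Q_2)$, and hence $u' \in \mathrm{span}(Q_1 \cup Q_2)$. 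Taking $Q = Q_1 \cup Q_2$ (of size $O(\log(1/\eps)/\eps)$) and $\ell = \{\tau' + t u' : t\in \R\}$ gives a line wholly contained in $\mathrm{span}(Q)$ whose cost is at most $(1+\eps)\cdot\mathrm{OPT}$, as required.

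The main point I would verify carefully is the translation-equivariance used when moving from Step~1 to Step~2: for any $y \in \R^d$, $\|y - \rho_{R^*}(y)\|_2 = \|(y+\tau') - \rho_{\ell_1}(y+\tau')\|_2$ because orthogonal projection onto a line commutes with translation along the line's affine space. This identity is what lets me reduce the direction-finding problem for $(S,w)$ to a pure $(1,z)$-subspace approximation problem on $(S',w)$, allowing Lemma~\ref{lem:find-good-subspace} to be invoked as a black box. The only other thing to check is that $\eps$-slack budgeting across the two applications composes correctly, which is immediate from $(1+\eps/3)^2 \leq 1+\eps$ on the stated range of $\eps$.
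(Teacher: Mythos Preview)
Your proposal is correct and is exactly the approach the paper intends: the paper does not spell out a proof but simply labels the lemma as the combination of Theorem~3.1 and Lemma~3.3 of~\cite{SV12}, i.e., Lemma~\ref{lem:find-good-subspace} and Lemma~\ref{lem:find-good-translation}, which you compose in the natural way. Your translation-then-direction reduction, the span bookkeeping via $\tau'\in\mathrm{span}(Q_1)$, and the $\eps$-budgeting are all straightforward and match what the citation is meant to convey.
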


\begin{lemma}[Weak Coresets for $k$-Line Approximation~\cite{FL11, VX12}]
For any subset $X = \{ x_1,\dots, x_n \} \subset \R^d$ and $\eps \in (0, 1/2)$, let $\sigma$ denote a sensitivity function for $(k,z)$-line approximation of $X$ with total sensitivity $\frakS_{\sigma}$ and let $\tilde{\sigma}$ its sensitivity sampling distribution. 
\begin{itemize}
\item Let $(\bS, \bw)$ denote the random (multi-)set $\bS \subset X$ and $\bw \colon \bS \to \R_{\geq 0}$ given by, for 
\[ m = \poly(\frakS_{\sigma}, k, 1/\eps),\]
iterations, sampling $\bx \sim \tilde{\sigma}$ i.i.d and letting $\bw(\bx) = 1/(m \tilde{\sigma}(\bx))$.
\item Then, with probability $1 - o(1)$ over the draw of $(\bS, \bw)$, it is an $\eps$-weak coreset for $(k,z)$-line approximation of $X$.
\end{itemize}
\end{lemma}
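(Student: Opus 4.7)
The plan is to follow the sensitivity sampling recipe of \cite{FL11, VX12}, proving the two directions of the weak coreset definition separately. For the upper direction $\min_L \cost_z((\bS,\bw), L) \leq (1+\eps) \min_L \cost_z(X, L)$, I would fix an optimal line set $L^*$ for $X$. The choice $\bw(\bx) = 1/(m\tilde{\sigma}(\bx))$ gives $\Ex[\cost_z^z((\bS,\bw), L^*)] = \cost_z^z(X, L^*)$ by the usual unbiasedness identity. The defining property $\sigma(x) \geq \|x-\rho_{\ell(x)}(x)\|_2^z/\cost_z^z(X, L^*)$, evaluated at $L^*$, yields a single-sample second moment of at most $\frakS_{\sigma} \cdot \cost_z^{2z}(X, L^*)$, so the variance of the $m$-sample average is at most $\frakS_{\sigma}\cost_z^{2z}(X, L^*)/m$. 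Choosing $m \gsim \frakS_{\sigma} z^2/\eps^2$ and invoking Chebyshev's inequality then yield $\cost_z((\bS,\bw), L^*) \leq (1+\eps)\cost_z(X, L^*)$ with high constant probability, which suffices since $\min_L \cost_z((\bS,\bw), L) \leq \cost_z((\bS,\bw), L^*)$.

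For the lower direction $\min_L \cost_z(X, L) \leq (1+\eps) \min_L \cost_z((\bS,\bw), L)$, the difficulty is that the minimizing line set $L^{**}$ for $(\bS,\bw)$ depends on the sample, so a fixed-$L$ concentration argument no longer applies. I would apply Lemma~\ref{lem:within-span} to each of the $k$ clusters of $\bS$ induced by $L^{**}$, producing an approximately optimal line set $L'$ for $(\bS,\bw)$ in which each line lies in the affine span of some $O(\log(1/\eps)/\eps)$-size subset of $\bS$. Hence it suffices to establish a ``for-all'' concentration restricted to this combinatorially structured family: its discrete index set has size at most $|\bS|^{O(k\log(1/\eps)/\eps)}$, and within each restriction the lines vary continuously in a low-dimensional parameter space. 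A standard net over the continuous parameters, combined with a union bound over the discrete index set and the variance estimate of the upper direction applied uniformly to each fixed candidate, yields $\cost_z(X, L') \leq (1+\eps) \cost_z((\bS,\bw), L')$. Chaining with the Lemma~\ref{lem:within-span} approximation completes the lower direction, after absorbing constants into the $(1+\eps)$-factor by a rescaling of $\eps$.

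The principal obstacle is calibrating the net and the union bound in the lower direction so that $m$ remains $\poly(\frakS_{\sigma}, k, 1/\eps)$. Since $|\bS| = m$, the $\log|\bS|$ arising from the combinatorial enumeration of which points of $\bS$ span each line makes the required lower bound on $m$ implicit; however, the logarithmic dependence keeps this self-consistent as a polynomial in the stated parameters. The continuous net over directions within each spanning subspace contributes only polynomial factors in $\frakS_{\sigma}/\eps$ from the $O(\log(1/\eps)/\eps)$-dimension of each span. Both contributions are absorbed into the final $\poly(\frakS_{\sigma}, k, 1/\eps)$ bound on $m$, although the exact exponent requires careful bookkeeping and is the most error-prone step in the argument.
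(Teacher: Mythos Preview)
Your proposal is essentially the same approach the paper takes. The paper's ``proof'' is really the paragraph immediately following the lemma: the standard sensitivity-sampling theorem of \cite{FL11,VX12} gives a strong coreset of size $\poly(\frakS_\sigma,k,d,1/\eps)$ because the \emph{function dimension} for $(k,z)$-line approximation in $\R^d$ is $O(kd)$; to remove the $d$, the paper invokes Lemma~\ref{lem:within-span} exactly as you do, observing that for a \emph{weak} coreset one only needs uniform concentration over lines spanned by $O(\log(1/\eps)/\eps)$ coreset points per cluster, which drops the function dimension to $O(k\log(1/\eps)/\eps)$ and hence gives $m=\poly(\frakS_\sigma,k,1/\eps)$.

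The only difference is packaging: the paper black-boxes your ``union bound over subsets of $\bS$ plus net over continuous parameters'' into the single phrase ``function dimension $O(k\log(1/\eps)/\eps)$'' and then cites the off-the-shelf sample-complexity theorem (e.g., Theorem~13 of \cite{VX12}). This sidesteps the bookkeeping you flag as error-prone---the implicit dependence of the enumeration on $|\bS|=m$ and the net granularity---since the pseudo-dimension machinery handles uniform convergence directly. Your explicit argument is a correct unrolling of that machinery, but if you want to match the paper's level of detail you can simply state the reduced function dimension and invoke \cite{FL11,VX12}.
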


We note that \cite{FL11} and \cite{VX12} only give a strong coreset for $(k,z)$-line approximation of $\poly(\frakS_{\sigma}, k,d,1/\eps)$. For example, Theorem~13 in \cite{VX12} giving the above bound follows from the fact that the ``function dimension'' (see Definition~3 of \cite{VX12}) for $(k,z)$-line approximation is $O(kd)$. However, Lemma~\ref{lem:within-span} implies that for any set of points, a line which approximates the points is within a span of $O(\log(1/\eps)/\eps)$ points. This means that, for $\eps$-weak coresets, it suffices to only consider $k$ lines spanned by $O(k\log(1/\eps)/\eps)$, giving us a ``function dimension'' of $O(k \log(1/\eps)/\eps)$.

\subsubsection{The Important Events}

\begin{definition}[The Events]\label{def:k-line-events}
Let $X = \{ x_1,\dots, x_n \} \subset \R^d$, and $\sigma$ be a sensitivity function for $(k,z)$-line approximation of $X$ in $\R^d$, with total sensitivity $\frakS_{\sigma}$ and sensitivity sampling distribution $\tilde{\sigma}$. We consider the following experiment,
\begin{enumerate}
\item We generate a sample $(\bS, \bw)$ by sampling from $\tilde{\sigma}$ for $m = \poly(\frakS_{\sigma}, k, 1/\eps)$ i.i.d iterations $\bx \sim \tilde{\sigma}$ and set $\bw(\bx) = 1/(m\tilde{\sigma}(\bx))$. 
\item Furthermore, we sample $\bPi \sim \calJ_{d,t}$, which is a Johnson-Lindenstrauss map $\R^d \to \R^t$.
\item We let $\bS' = \bPi(\bS) \subset \R^t$ denote the image of $\bPi$ on $\bS$.
\end{enumerate}
The events are the following:
\begin{itemize}
\item $\bE_1$ : The weighted (multi-)set $(\bS, \bw)$ is a weak $\eps$-coreset for $(k,z)$-line approximation of $X$ in $\R^d$. 
\item $\bE_2$ : For any subset of $O(\log(1/\eps)/\eps)$ points from $\bS$, the map $\bPi \colon \R^d \to \R^t$ is an $\eps$-subspace embedding for the subspace spanned by that subset.
\item $\bE_3(\beta)$ : Let $\bL' = \{ \bell_1',\dots,\bell_k'\}$ denote $k$ lines in $\R^t$ for optimal $(k,z)$-line approximation of $\bPi(X)$ in $\R^t$. Then,
\begin{align*}
\cost_z^z( (\bPi(\bS), \bw), \bL') \leq \beta\cdot \cost_z^z(\bPi(X), \bL').
\end{align*}
\end{itemize}
\end{definition}

\begin{lemma}\label{lem:lines-events-to-approx}
Let $X = \{ x_1,\dots, x_n \} \subset \R^d$, and suppose $(\bS, \bw)$ and $\bPi \colon \R^d \to \R^t$ satisfy events $\bE_1, \bE_2$ and $\bE_3$. Then,
\begin{align*}
\min_{\substack{L' \text{ $k$ lines} \\ \text{in $\R^t$}}} \cost_z(\bPi(X), L') \geq \dfrac{1}{\beta^{1/z}(1+\eps)^{3}} \cdot \min_{\substack{L \text{ $k$ lines} \\ \text{in $\R^d$}}} \cost_z(X, L).
\end{align*}
\end{lemma}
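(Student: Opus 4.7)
The plan is to mirror Lemma~\ref{lem:events-to-thm-sa} but handle the $k$ lines separately, one partition piece at a time. Fix $\bPi \in \calJ_{d,t}$ and $(\bS, \bw)$ satisfying all three events, and let $\bL' = \{\bell'_1, \dots, \bell'_k\}$ attain $\min_{L'}\cost_z(\bPi(X), L')$, with the induced nearest-line assignment $\ell' \colon \bPi(X) \to \bL'$ partitioning $\bS$ as $\bS_1 \sqcup \cdots \sqcup \bS_k$ according to which line each $\bPi(x)$ is assigned to. Event $\bE_3(\beta)$ already gives $\cost_z^z((\bPi(\bS),\bw), \bL') \leq \beta \cdot \cost_z^z(\bPi(X),\bL')$, so the goal reduces to producing $k$ lines in $\R^d$ whose $\cost_z^z$ on $(\bS,\bw)$ is at most $(1+\eps)^{2z}$ times $\cost_z^z((\bPi(\bS),\bw),\bL')$; event $\bE_1$ will then transport the bound back from $(\bS,\bw)$ to $X$, contributing one more factor of $(1+\eps)^z$.

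The first step replaces $\bL'$ by lines lying in small known spans. Apply Lemma~\ref{lem:within-span} separately to each single-line weighted instance $(\bPi(\bS_j),\bw)$, obtaining a subset $Q'_j \subset \bPi(\bS_j)$ of size $O(\log(1/\eps)/\eps)$ and a line $\bell''_j$ in the span of $Q'_j$ with
\[ \cost_z((\bPi(\bS_j),\bw), \{\bell''_j\}) \leq (1+\eps) \cdot \cost_z((\bPi(\bS_j),\bw), \{\bell'_j\}), \]
because $\bell'_j$ is one candidate single-line approximator of $(\bPi(\bS_j),\bw)$. Summing $z$-th powers over $j$ yields $\cost_z^z((\bPi(\bS),\bw), \bL'') \leq (1+\eps)^z \cdot \cost_z^z((\bPi(\bS),\bw), \bL')$ for $\bL'' = \{\bell''_1, \ldots, \bell''_k\}$.

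Now pull $\bL''$ back to $\R^d$. Let $Q_j \subset \bS_j$ be the preimages so that $\bPi(Q_j) = Q'_j$; since $\bPi$ is injective on the subspace $\mathrm{span}(Q_j)$ (an $\eps$-subspace embedding is in particular an injection), we may write $\bell''_j = \{\bPi(v_j) + t\bPi(u_j) : t \in \R\}$ for unique $v_j, u_j \in \mathrm{span}(Q_j)$ and define $\bell_j = \{v_j + t u_j : t \in \R\} \subset \R^d$. For any $x \in \bS_j$, event $\bE_2$ says $\bPi$ is an $\eps$-subspace embedding of $\mathrm{span}(Q_j \cup \{x\})$, a subspace with at most $O(\log(1/\eps)/\eps)$ spanning points. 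Writing $\rho_{\bell''_j}(\bPi(x)) = \bPi(v_j + \alpha_x u_j)$ for the appropriate scalar $\alpha_x$, and noting $x - (v_j + \alpha_x u_j) \in \mathrm{span}(Q_j \cup \{x\})$, the embedding guarantee gives
\[ \|x - \rho_{\bell_j}(x)\|_2 \leq \| x - (v_j + \alpha_x u_j)\|_2 \leq (1+\eps)\|\bPi(x) - \rho_{\bell''_j}(\bPi(x))\|_2. \]
Raising to the $z$-th power and summing over $x \in \bS$ (noting $\rho_{\ell(x)}(x) \le \rho_{\bell_j}(x)$ whenever $x \in \bS_j$) yields $\cost_z^z((\bS,\bw), \bL) \leq (1+\eps)^z \cost_z^z((\bPi(\bS),\bw), \bL'')$ for $\bL = \{\bell_1,\ldots,\bell_k\}$.

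Chaining $\bE_3(\beta)$, the two $(1+\eps)^z$ factors above, and $\bE_1$ gives
\[ \min_{L}\cost_z^z(X, L) \leq (1+\eps)^z \min_{L}\cost_z^z((\bS,\bw), L) \leq (1+\eps)^{z}\cost_z^z((\bS,\bw),\bL) \leq (1+\eps)^{3z}\beta\cdot\cost_z^z(\bPi(X),\bL'), \]
and taking $z$-th roots yields the claim. The main technical subtlety---and the reason the target dimension $t$ grows only linearly rather than quadratically in $k$---is that Lemma~\ref{lem:within-span} is applied partition-by-partition, so no single application of $\bE_2$ needs to embed the full union $\bigcup_j Q_j$ simultaneously; instead, each individual set $Q_j \cup \{x\}$ has size $O(\log(1/\eps)/\eps)$, which fits inside the size threshold in the definition of $\bE_2$.
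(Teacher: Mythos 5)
Your proof is correct and follows essentially the same route as the paper: partition $\bS$ by the nearest line in $\bL'$, apply Lemma~\ref{lem:within-span} separately to each $(\bPi(\bS_j),\bw)$ to get lines $\bell_j''$ in small spans, pull each back to $\R^d$ via the per-subset subspace-embedding guarantee of $\bE_2$, and chain through the weak-coreset property $\bE_1$. Your closing observation about why $\bE_2$ only needs subsets of size $O(\log(1/\eps)/\eps)$ (and hence avoids a quadratic-in-$k$ blow-up) correctly identifies the design rationale behind the paper's definition of $\bE_2$ in this section; the only cosmetic slip is the parenthetical ``$\rho_{\ell(x)}(x)\le\rho_{\bell_j}(x)$,'' which should be an inequality between distances $\|x-\rho_{\ell(x)}(x)\|_2 \le \|x-\rho_{\bell_j}(x)\|_2$, not projections.
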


\begin{proof}
Let $\bPi \sim \calJ_{d,t}$ and $(\bS, \bw)$ be sampled according to Definition~\ref{def:k-line-events}, and suppose events $\bE_1, \bE_2$ and $\bE_3$ all hold. Let $\bL' = \{ \bell_1',\dots, \bell_k'\}$ denote the set of $k$ lines for optimal $(k,z)$-line approximation of $\bPi(X)$ in $\R^t$. Then, by event $\bE_3$, we have $\cost_z^z((\bPi(\bS), \bw), \bL') \leq \beta \cdot \cost_z^z(\bPi(X), \bL')$. Consider the partition of $\bS$ into $\bS_1,\dots, \bS_k$ induced by the lines in $\bL'$ closest to $\bPi(\bS)$. 

For each $i \in [k]$, we apply Lemma~\ref{lem:within-span} to $\bPi(\bS_i)$ with weights $\bw \colon \bS_i \to \R_{\geq 0}$. In particular, there exists subsets $\bQ_1 \subset \bS_1,\dots , \bQ_k \subset \bS_k$ and $k$ lines $\bL'' = \{ \bell_1'', \dots, \bell_k''\}$ in $\R^t$ such that each line $\bell_i''$ lie in the span of $\bQ_i$, and 
\[ \cost_z((\bPi(\bS), \bw), \bL'') \leq (1+\eps) \cdot \cost_z((\bPi(\bS), \bw), \bL').\]
Event $\bE_2$ implies that for each $i \in [k]$ and each $x \in \bS$, $\bPi$ is an $\eps$-subspace embedding for the subspace spanned by $\bQ_i \cup \{ x \}$. It is not hard to see, that there exists $k$ lines $\bH = \{ \bh_1,\dots, \bh_k\}$ in $\R^d$ such that for all $x \in \bS$, 
\[ \| x - \rho_{\bh_i}(x) \|_2 \leq (1+\eps) \cdot \| \bPi(x) - \rho_{\bell_i''}(\bPi(x))\|_2, \]
and therefore, 
\begin{align*}
\cost_z((\bS, \bw), \bH) \leq (1+\eps) \cdot \cost_z((\bPi(\bS), \bw), \bL'').
\end{align*}
Lastly, $(\bS, \bw)$ is a $\eps$-weak coreset for $X$, which means that 
\[ \min_{\substack{L \text{ $k$ lines} \\ \text{in $\R^d$}}} \cost_z(X, L) \leq (1+\eps) \cdot \cost_z((\bS, \bw), \bH).\]
Combining all inequalities gives the desired lemma.
\end{proof}

By now, we note that it is straight-forward to prove the following corollary, which gives a dimension reduction bound which depends on the total sensitivity of a sensitivity function.
\begin{corollary}\label{cor:bad-approx-lines}
Let $X = \{ x_1,\dots, x_n \} \subset \R^d$ be any set of points, and for $k \in \N$ and $z \geq 1$, let $\sigma \colon X \to \R_{\geq 0}$ be a sensitivity function for $(k,z)$-line approximation of $X$ in $\R^d$. For any $\eps \in (0, 1/2)$, let $\calJ_{d,t}$ be the Johnson-Lindenstrauss map with
\[ t \gsim \dfrac{\log(\frakS_{\sigma}, k, 1/\eps)}{\eps^3}. \]
Then, with probability at least $0.97$ over the draw of $\bPi \sim \calJ_{d,t}$, 
\begin{align*}
\dfrac{1}{100^{1/z} (1+\eps)^3} \min_{\substack{L \text{ $k$ lines}\\ \text{in $\R^d$}}} \cost_z(X, L) \leq \min_{\substack{L' \text{ $k$ lines}\\ \text{in $\R^t$}}} \cost_z(\bPi(X), L') 
\end{align*}
\end{corollary}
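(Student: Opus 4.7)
The plan is to mirror the template already used to prove Corollary~\ref{cor:bad-approx-sa} and Corollary~\ref{cor:bad-approx-flat}. I will sample $\bPi \sim \calJ_{d,t}$ and $(\bS, \bw)$ exactly as in Definition~\ref{def:k-line-events}, verify that each of the three events $\bE_1$, $\bE_2$, and $\bE_3(100)$ holds with high constant probability, take a union bound, and then invoke Lemma~\ref{lem:lines-events-to-approx} with $\beta = 100$ to conclude
\[ \min_{\substack{L' \text{ $k$ lines} \\ \text{in $\R^t$}}} \cost_z(\bPi(X), L') \geq \frac{1}{100^{1/z}(1+\eps)^3} \cdot \min_{\substack{L \text{ $k$ lines} \\ \text{in $\R^d$}}} \cost_z(X, L). \]

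Event $\bE_1$ is immediate from the weak-coreset lemma for $(k,z)$-line approximation stated just above Definition~\ref{def:k-line-events}: with $m = \poly(\frakS_{\sigma}, k, 1/\eps)$ sensitivity samples, $(\bS, \bw)$ is an $\eps$-weak coreset with probability $1-o(1)$. Event $\bE_2$ is handled by the same argument used for its analog in Section~\ref{sec:subspace}: since $\bS$ has only $m$ (multi-)points and we need a subspace-embedding guarantee for every subset of size $\ell = O(\log(1/\eps)/\eps)$, Lemma~\ref{lem:event-2-sa} applies as soon as $t \gsim \ell \log m / \eps^2$, which matches the hypothesis $t \gsim \polylog(\frakS_\sigma, k, 1/\eps) / \eps^3$ and yields $\bE_2$ with probability at least $0.99$.

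The only bit that needs its own short argument is $\bE_3(100)$, which is the direct analog of Lemma~\ref{lem:bad-approx-sa} and Lemma~\ref{lem:bad-approx-flat}. Fix an arbitrary $\Pi$ in the support of $\calJ_{d,t}$, let $L' = \{\ell_1',\dots,\ell_k'\}$ be the optimal $(k,z)$-line approximation of $\Pi(X)$, and let $\ell'(x) \in L'$ denote the line in $L'$ nearest to $\Pi(x)$. Since $\sigma$ is a sensitivity function and $\tilde{\sigma}$ samples $x$ with probability proportional to $\sigma(x)$, while $\bw(\bx) = 1/(m\tilde{\sigma}(\bx))$, a one-line computation gives
\[ \Ex_{\bS}\!\left[\sum_{x \in \bS} \bw(x)\,\|\Pi(x) - \rho_{\ell'(x)}(\Pi(x))\|_2^z\right] \;=\; \Ex_{\bx \sim \tilde{\sigma}}\!\left[\frac{1}{\tilde{\sigma}(\bx)}\,\|\Pi(\bx) - \rho_{\ell'(\bx)}(\Pi(\bx))\|_2^z\right] \;=\; \cost_z^z(\Pi(X), L'). \]
Markov's inequality then yields $\cost_z^z((\Pi(\bS), \bw), L') \leq 100 \cdot \cost_z^z(\Pi(X), L')$ with probability at least $0.99$ over the draw of $(\bS, \bw)$. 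Integrating over $\bPi$, event $\bE_3(100)$ holds with probability at least $0.99$.

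Combining the three bounds by a union bound gives the simultaneous occurrence of $\bE_1$, $\bE_2$, $\bE_3(100)$ with probability at least $0.97$, and Lemma~\ref{lem:lines-events-to-approx} produces exactly the claimed inequality. I do not anticipate a real technical obstacle here; the proof is essentially a mechanical translation of the $(k,z)$-subspace and $(k,z)$-flat arguments to the lines setting, where the substitute for Lemma~\ref{lem:find-good-subspace}/\ref{lem:find-good-translation} is Lemma~\ref{lem:within-span}. The genuine work of the section—constructing an explicit sensitivity function $\sigma$ with total sensitivity $\frakS_\sigma$ small enough to drive the hypothesis $t \gsim \polylog(\frakS_\sigma, k, 1/\eps)/\eps^3$ down to the target $(k \log\log n + z + \log(1/\eps))/\eps^3$ of Theorem~\ref{thm:k-line}, as well as the subsequent variance-tightening from $\bE_3(100)$ to $\bE_3(1+\eps)$—lies downstream of this corollary and is where the real difficulty will appear.
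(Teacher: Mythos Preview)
Your proposal is correct and is exactly the argument the paper has in mind: it explicitly says ``by now, we note that it is straight-forward to prove the following corollary'' and omits the proof, relying on the reader to instantiate the template of Corollary~\ref{cor:bad-approx} and Corollary~\ref{cor:bad-approx-sa} in the lines setting, which is precisely what you do. Your handling of each event---$\bE_1$ via the weak-coreset lemma with $m = \poly(\frakS_\sigma, k, 1/\eps)$ samples, $\bE_2$ via Lemma~\ref{lem:event-2-sa} with $\ell = O(\log(1/\eps)/\eps)$, and $\bE_3(100)$ via the one-line expectation-plus-Markov argument---matches the pattern used in the subspace and flat sections, and the final appeal to Lemma~\ref{lem:lines-events-to-approx} with $\beta = 100$ is the intended conclusion.
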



\subsection{A Sensitivity Function for $(k,z)$-Line Approximation}

We now describe a sensitivity function for $(k,z)$-line approximation of points in $\R^d$. Similarly to the previous section, we consider a set of points $X = \{x_1,\dots, x_n \} \subset \R^d$, and we design a sensitivity function $\sigma \colon X \to \R_{\geq0}$ for $(k,z)$-line approximation of $X$ in $\R^d$. The sensitivity function should satisfy two requirements. The first is that we have a good bound on the total sensitivity, $\frakS_{\sigma}$, where the target dimension $t$ will have logarithmic dependence on $\frakS_{\sigma}$ (for example, like in Corollary~\ref{cor:bad-approx-lines}). 

The second is that $\bE_3(1+\eps)$ will hold with sufficiently high probability over the draw of $\bPi \sim \calJ_{d,t}$. In other words, we will proceed similarly to Lemmas~\ref{lem:variance-bound},~\ref{lem:variance-bound-sa}, and~\ref{lem:variance-bound-flat} and show that, for the optimal $(k,z)$-line approximation $\bL'$ of $\bPi(X)$ in $\R^t$, sampling according to the sensitivity sampling distribution gives a low-variance estimate for the cost of $\bL'$. 

\subsubsection{From Coresets for $(k, \infty)$-line approximation to Sensitivity Functions}

Unfortunately, we do not know of a ``clean'' description of a sensitivity function for $(k,z)$-line approximation, as was the case in previous definitions. Certainly, one may define a sensitivity function to be $\sigma(x) = \sup_{L} \cost_z^z(x, L) / \cost_z^z(X, L)$, but then arguing that $\bE_3(1+\eps)$ holds with high probability becomes more complicated. The sensitivity function which we present follows the connection between sensitivity and $\ell_{\infty}$-coresets \cite{VX12b}.

\begin{definition}[$c$-coresets for $(k,\infty)$-line approximation]\label{def:inf-coreset}
Let $Y = \{ y_1,\dots, y_n \} \subset \R^d$ be any subset of points, and $c \geq 1$. A subset $A \subset Y$ is a $c$-coreset  for $(k, \infty)$-line approximation if the following holds:
\begin{itemize}
\item Let $L = \{ \ell_1,\dots, \ell_{k} \}$ be any collection of $k$ lines in $\R^d$, and $r \in \R_{\geq 0}$ such that for all $y \in A$,
\[ \min_{\ell \in L} \| y - \rho_{\ell}(y)\|_2 \leq r. \]
\item Then, for all $x \in X$, 
\[ \min_{\ell \in L} \| y - \rho_{\ell}(y) \|_2 \leq c r. \]
\end{itemize}
\end{definition}

Note that the $(k, \infty)$-line approximation is the problem of minimum enclosing cylinder: we are given a set of points $Y$, and want to find a set of $k$ cylinders $C_1,\dots, C_k \subset \R^d$ of smallest radius such that $Y \subset \bigcup_{i=1}^k C_i$. Thus, Definition~\ref{def:inf-coreset}, a set $A \subset Y$ is a $c$-coreset for $(k, \infty)$-line approximation if, given any $k$ cylinders which contain $A$, increasing the radii by a factor of $c$ contains $Y$. The reason they will be relevant for defining a sensitivity function is the following simple lemma, whose main idea is from \cite{VX12b}.

\begin{lemma}[Sensitivities from $c$-coresets for $(k,\infty)$-line approximation (see Lemma~3.1 in~\cite{VX12b})]\label{lem:lines-sensitivity}
Let $X = \{ x_1,\dots, x_n \} \subset \R^d$ be any set of points and $k \in  \N$, $z \geq 1$. Let $L = \{\ell_1,\dots, \ell_k \}$ be the $k$ lines in $\R^d$ for optimal $(k,z)$-line approximation of $X$, and let $Y = \{ y_x \in \R^d : x \in X\}$ where $y_x= \rho_{\ell(x)}(x)$. For $c \geq 1$, let the function $\sigma \colon X \to \R_{\geq 0}$ be defined as follows:
\begin{itemize}
\item Let $A_1, A_2,\dots, A_{s}$ denote a partition of $Y$ where each $A_i$ is a $c$-coreset for $(k,\infty)$-line approximation of $Y \setminus \left(\bigcup_{i' = 1}^{i-1} A_{i'}\right)$. 
\item For each $x \in X$, where $y_x \in A_i$ we let
\[ \sigma(x) \eqdef 2^{z-1} \cdot \dfrac{\|x - y_x \|_2^z}{\cost_z^z(X, L)} + 2^{2z-1} \cdot \dfrac{c}{i}. \]
\end{itemize}
Then, $\sigma$ is a sensitivity function for $(k,z)$-line approximation, and the total sensitivity 
\[ \frakS_{\sigma} = O\left(2^{2z} \cdot c \cdot \log n \cdot \max_{i \in [s]} |A_{i}|\right)  \]
\end{lemma}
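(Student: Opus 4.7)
The plan is to follow the template of Lemma~\ref{lem:sens}, Lemma~\ref{lem:subspace-sensitivities}, and Lemma~\ref{lem:flat-sensitivities}: for an arbitrary candidate solution $L' = \{\ell_1',\dots,\ell_k'\}$, route the distance from $x$ to its nearest line in $L'$ through the surrogate point $y_x = \rho_{\ell(x)}(x)$ via the triangle inequality, and then bound the two resulting pieces separately, using (i) optimality of $L$ for the piece involving $\|x - y_x\|_2$, and (ii) the layered $c$-coreset hypothesis for the piece involving the distance from $y_x$ to $L'$.

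Writing $\ell'(x) \in L'$ for the line closest to $x$ and $e_{x'} = \|y_{x'} - \rho_{\ell'(y_{x'})}(y_{x'})\|_2$, the triangle inequality together with H\"{o}lder immediately give $\|x - \rho_{\ell'(x)}(x)\|_2^z \leq 2^{z-1}\|x - y_x\|_2^z + 2^{z-1} e_x^z$. After dividing by $\cost_z^z(X, L') \geq \cost_z^z(X, L)$ (optimality of $L$), the first summand matches the first term in the definition of $\sigma(x)$, so the entire burden is to control $e_x^z / \cost_z^z(X, L')$.

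The control of the second piece proceeds in two sub-steps. First I would establish the global estimate $\sum_{x' \in X} e_{x'}^z \leq 2^z \cost_z^z(X, L')$ by inverting the triangle inequality ($e_{x'} \leq \|x' - y_{x'}\|_2 + \|x' - \rho_{\ell'(x')}(x')\|_2$), applying H\"{o}lder, and re-using optimality of $L$, in exactly the manner used in Lemma~\ref{lem:subspace-sensitivities}. Next I would apply the $c$-coreset hypothesis layer-by-layer: for each $i' \leq i$, since $A_{i'}$ is a $c$-coreset of a set containing $y_x$, setting $r = \max_{y \in A_{i'}} \|y - \rho_{\ell'(y)}(y)\|_2$ forces $e_x \leq c r$, so $A_{i'}$ contains a witness $y^{(i')}$ with $\|y^{(i')} - \rho_{\ell'(y^{(i')})}(y^{(i')})\|_2 \geq e_x/c$. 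Since the $A_{i'}$'s partition $Y$, the $i$ witnesses are disjoint, giving $i \cdot (e_x/c)^z \leq \sum_{x' \in X} e_{x'}^z$; combined with the global estimate this yields the needed per-point upper bound on $e_x^z / \cost_z^z(X, L')$.

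For the total sensitivity, summing the first term of $\sigma(x)$ over $x \in X$ contributes exactly $2^{z-1}$ by definition of $y_x$, while the second term contributes $2^{2z-1} c \sum_{i=1}^{s} |A_i| / i \leq 2^{2z-1} c \cdot (\max_i |A_i|) \cdot O(\log n)$ via the harmonic-sum bound (using $s \leq n$), matching the claimed $\frakS_\sigma = O(2^{2z} c \log n \cdot \max_i |A_i|)$. The step I expect to be the main obstacle is matching the exponent of $c$ in the stated formula: the layered-coreset argument as sketched naturally produces $c^z/i$ in the second term of $\sigma$, so to land on the stated $c/i$ I would need either to interpret $c$ as allowed to absorb a $z$-dependent factor (as seems to be the convention of~\cite{VX12b}), or to refine the coreset step so that the $c$-distortion is paid before, rather than after, the $z$-th power is taken.
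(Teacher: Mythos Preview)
Your approach is essentially identical to the paper's: the same triangle-inequality decomposition through $y_x$, the same use of optimality of $L$ to convert $\cost_z^z(Y,L')$ back to $2^z\cost_z^z(X,L')$, the same layered-witness argument from the $c$-coreset hypothesis, and the same harmonic-sum bound for $\frakS_\sigma$.

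Regarding the concern you flag at the end: you are right, and the paper's own proof has exactly the same feature. The paper argues that each $A_j$ with $j\le i$ contains a witness at distance at least $r/c$ from $L'$, which yields $i\,(r/c)^z \le \cost_z^z(Y,L')$ and hence $r^z \le (c^z/i)\cdot\cost_z^z(Y,L')$; it then writes $c/i$ in the displayed bound without further justification. There is no sharper step you are missing. In the paper's only application the coreset constant is $c=3$, so the discrepancy between $c$ and $c^z$ is absorbed into the existing $2^{O(z)}$ factor and does not affect any downstream bound; but as a general statement with arbitrary $c$, the second term of $\sigma(x)$ should read $2^{2z-1}\cdot c^z/i$ rather than $2^{2z-1}\cdot c/i$.
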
 

\begin{proof}
Suppose $x \in X$ and $y_x\in A_i$. Consider any set $L' = \{ \ell_1',\dots, \ell_k' \}$ of $k$ lines in $\R^d$. The goal is to show that
\begin{align}
\dfrac{\min_{\ell' \in L'}\|x - \rho_{\ell'}(x)\|_2^z}{\cost_z^z(X, L')} \leq \sigma(x). \label{eq:sensitivity-requirements}
\end{align}
We will first use H\"{o}lder inequality and the triangle inequality, as well as the fact that $y_x \in A_i$ in order to write the following:
\begin{align}
\min_{\ell' \in L'}\|x - \rho_{\ell'}(x)\|_2^z &\leq 2^{z-1} \| x - y_x\|_2^z + 2^{z-1} \min_{\ell' \in L'} \| y_x - \rho_{\ell'}(y_x)\|_2^z \nonumber \\
		&\leq 2^{z-1} \|x -y_x\|_2^z + 2^{z-1} \left( \cost_z^z(Y, L') \cdot \dfrac{c}{i}\right). \label{eq:ell-infty-coreset}
\end{align}
The justification for (\ref{eq:ell-infty-coreset}) is the following: for every $j \leq i$, $A_j$ is a $c$-coreset for $(k,\infty)$-line approximation of a set of points which contains $y_x$. Therefore, if $\min_{\ell ' \in L'} \| y_x - \rho_{\ell'}(y_x)\|_2 = r$, there must exists a point $u \in A_j$ with $\min_{\ell' \in L'} \| u - \rho_{\ell'}(u) \|_2 \geq r / c$. Suppose otherwise: every $u \in A_j$ satisfies $\min_{\ell' \in L'} \| u - \rho_{\ell'}(u)\|_2 < r/c$. Then, the $k$ cylinders of radius $r/c$ contain $A_j$, so increasing the radius by a factor of $c$ contains $y_x$. However, this means $\min_{\ell' \in L'} \|y_x -\rho_{\ell'}(y_x) \|_2 < c \cdot r/c < r$, which is a contradiction.

Hence, we always have that $y_x \in A_i$ satisfies
\[ \min_{\ell' \in L'} \| y_x - \rho_{\ell'}(y_x)\|_2^z \leq \cost_z^z(Y, L') \cdot \dfrac{c}{i}. \] 
Continuing on upper-bounding (\ref{eq:ell-infty-coreset}), we now use the fact $\cost_z^z(Y, L') \leq 2^{z-1} \cdot \cost_z^z(X, L) +  2^{z-1} \cdot \cost_z^z(X, L') \leq 2^{z} \cost_z^z(X, L')$ because $\cost_z^z(X, L)$ is the optimal $(k,z)$-line approximation. Therefore,
\begin{align*}
\min_{\ell' \in L'}\| x - \rho_{\ell'}(x)\|_2^z \leq 2^{z-1} \cdot \|x - y_x \|_2^z + 2^{2z-1} \cdot \dfrac{c}{i} \cdot \cost_z^z(X, L'),
\end{align*}
so dividing by $\cost_z^z(X, L')$ and noticing that $\cost_z^z(X, L) \leq \cost_z^z(X, L')$ implies $\sigma$ is a sensitivity function.

The bound on total sensitivity then follows from
\begin{align*}
\frakS_{\sigma} = \sum_{x \in X} \sigma(x) = \sum_{i =1}^s \sum_{x \in A_{s'}} \sigma(x) = 2^{z-1} + 2^{2z-1} \sum_{i = 1}^s |A_{i}| \cdot \dfrac{c}{s'} = O\left( 2^{2z}\cdot c \cdot \max_{i \in [s]} |A_i| \cdot\log n\right),
\end{align*}
since $s \leq n$.
\end{proof}

\subsubsection{A simple coreset for $(k,\infty)$-line approximation of one-dimensional instances}

Suppose first, that a dataset $Y = \{ y_1,\dots, y_n \}$ lies on a line in $\R^d$, and let $C_1,\dots, C_k$ be a collection of $k$ cylinders. Then, the intersection of the cylinders with the line results in a union of $k$ intervals on the line. If we increase the radius of each cylinder $C_1,\dots, C_k$ by a factor of $c$, the lengths of the intervals are scaled by factor of $c$ (while keeping center of interval fixed). We first show that, for any $Y = \{ y_1,\dots, y_n \}$ which lie on a line, there exists a small subset $Q \subset Y$ such that: if $I_1,\dots, I_k$ is any collection of $k$ intervals which covers $Q$, then increasing the length of each interval by a factor of $3$ (while keeping the center of the interval fixed) covers $Y$. 

\begin{lemma}\label{lem:line-coreset}
There exists a large enough constant $c_1 \in \R_{\geq 0}$ such that the following is true. Let $Y = \{ y_1,\dots, y_n \}$ be a set of points lying on a line in $\R^d$, and $k \in \N$. There exists a subset $Q \subset Y$ which is a $3$-coreset for $(k,\infty)$-line approximation of size at most $(c_1 \log n)^{k}$.
\end{lemma}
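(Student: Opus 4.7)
The plan is first to reduce to a one-dimensional problem. Since $Y$ lies on a line in $\R^d$, I will parameterize that line by a single real coordinate so that $Y$ becomes $\{y_1 < \cdots < y_n\} \subset \R$. For any cylinder with axis $\ell$ and radius $r$, the squared distance from a point $y(t)$ on the line to $\ell$ is a quadratic $a(t-t_0)^2 + b$ with $a \in (0,1]$ and $b \ge 0$, so the cylinder's intersection with the line is an interval of half-length $s(r) = \sqrt{(r^2 - b)/a}$ centered at $t_0$. A direct calculation then yields the key geometric inequality
\[ s(3r)^2 \;=\; s(r)^2 + \tfrac{8r^2}{a} \;\ge\; s(r)^2 + 8r^2, \]
which implies both $s(3r) \ge 3\, s(r)$ and, even for tangent cylinders where $s(r)=0$, $s(3r) \ge 2\sqrt{2}\,r$. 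Hence it will suffice to find $Q \subset Y$ of size $(c_1 \log n)^k$ such that whenever $k$ intervals generated by cylinders of common radius $r$ cover $Q$, the intervals generated at radius $3r$ cover $Y$.

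Next, I will construct $Q_k(Y)$ by induction on $k$ through a dyadic recursion. The base case is $Q_1(Y) = \{y_1, y_n\}$: any single interval covering both endpoints covers $[y_1, y_n] \supseteq Y$ with no dilation required. For $k \ge 2$, I will define
\[ Q_k(Y) \;=\; \{y_1, y_n\} \;\cup\; \bigcup_{j=0}^{\lceil\log_2 n\rceil} \bigl[Q_{k-1}(Y_L^{(j)}) \cup Q_{k-1}(Y_R^{(j)})\bigr], \]
where $Y_L^{(j)}$ and $Y_R^{(j)}$ are the leftmost and rightmost $\min(n, 2^j)$ points of $Y$. The size bound $|Q_k(Y)| \le 2 + 2(\log_2 n + 1)(c_1 \log n)^{k-1} \le (c_1 \log n)^k$ will follow by induction for a sufficiently large constant $c_1$.

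The correctness argument will also proceed by induction on $k$. I will take $k$ intervals $I_1 < \cdots < I_k$ with common cylinder-radius $r$ covering $Q_k(Y)$ and assume for contradiction some $y \in Y$ is missed by all the $3r$-inflated intervals. Since $y_1, y_n \in Q$, the point $y$ lies strictly between the outermost intervals, in some gap. Without loss of generality $\mathrm{rank}(y) \le n/2$, so taking the smallest $j^*$ with $\mathrm{rank}(y) \le 2^{j^*}$ places $y$ inside $Y_L^{(j^*)}$. By construction, $Q_{k-1}(Y_L^{(j^*)}) \subset Q_k(Y)$ is covered by the adversary's intervals; invoking the inductive hypothesis on $Y_L^{(j^*)}$ will require reducing from $k$ to $k-1$ effective intervals on this sub-instance, which I will accomplish by absorbing the portions of intervals that straddle the right boundary of $Y_L^{(j^*)}$ into a single pseudo-interval, using the additive slack $+8r^2$ from the dilation inequality to pay for the merge. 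The inductive hypothesis then forces $y$ to be covered by some dilated interval, yielding the contradiction.

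The hard part will be making the boundary-absorption step rigorous: carefully tracking how an interval straddling $Y_L^{(j^*)}$'s right boundary contributes to the dilated cover inside $Y_L^{(j^*)}$, and showing that the $+8r^2$ slack (which arises only because all cylinders share a common radius $r$) precisely compensates for the merge, leaving exactly $k-1$ effective intervals for the induction. This additive slack, as opposed to a purely multiplicative factor of $3$, is the distinctive feature of the cylinder-based coreset notion: without it, adversarial singleton-interval configurations at singleton blocks of $Q$ would prevent $|Q|$ from being sublinear in $n$.
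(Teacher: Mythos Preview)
Your reduction to one dimension and the dilation inequality $s(3r)^2 = s(r)^2 + 8r^2/a$ are correct, and in fact sharper than what the paper states (the paper asserts that tripling the cylinder radius scales the interval length by a factor of $3$, whereas your computation shows it \emph{at least} triples it). But your recursive construction and its correctness argument diverge from the paper's, and the step you yourself flag as ``the hard part'' is a genuine gap.

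The problem is the boundary-absorption step. Once you fix $j^*$ so that $y\in Y_L^{(j^*)}$, there will in general be points of $Q_{k-1}(Y_L^{(j^*)})$ lying strictly to the right of $y$ --- for instance the endpoint $y_{2^{j^*}}$ is always in that sub-coreset. Those points are covered by intervals lying entirely to the right of $y$, and there may be several such intervals. To invoke the inductive hypothesis on $Y_L^{(j^*)}$ you need at most $k-1$ cylinders of radius $r$ covering $Q_{k-1}(Y_L^{(j^*)})$; replacing several right-side intervals by one ``pseudo-interval'' breaks this, because the pseudo-interval is not the trace of any radius-$r$ cylinder, so the inductive hypothesis (which is a statement about cylinder covers) no longer applies to it. And the additive slack $+8r^2$ you propose to use is of order $r^2$, while the region you need to absorb can have diameter arbitrarily large compared to $r$; an $O(r)$ bonus in half-length cannot pay for merging intervals whose union spans a length unrelated to $r$. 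As stated, the induction does not close.

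The paper sidesteps this entirely by a different recursion. It includes the \emph{median} point $y_{\lceil n/2\rceil}$ in $Q$ in addition to the two endpoints, and recurses on both $k$ and $n$ via
\[
f(k,n)\;\le\;3 + f(k,n/2) + 2\,f(k-1,n/2),
\]
where (assuming without loss of generality that the left half is the wider one) the three recursive pieces are a $3$-coreset for $(k,\infty)$ on $Y_L$ and $3$-coresets for $(k-1,\infty)$ on each of $Y_L,Y_R$. Correctness is then a clean three-case split on which of $y_1,\,y_{\lceil n/2\rceil},\,y_n$ share an interval: if $y_1$ and $y_{\lceil n/2\rceil}$ do, tripling that single interval already covers all of $Y$; if $y_{\lceil n/2\rceil}$ and $y_n$ do, that interval already covers $Y_R$ and the $f(k,n/2)$ sub-coreset handles $Y_L$; if all three are in distinct intervals, one drops the interval through $y_1$ (respectively $y_n$) and applies the $(k-1)$-level sub-coreset on $Y_L$ (respectively $Y_R$). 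No merging of intervals and no cylinder-specific slack are needed; the entire argument is about pure factor-$3$ interval dilation.
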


\begin{proof}
The construction is recursive. Let $\ell$ be the line containing $Y$, and after choosing an arbitrary direction on $\ell$, let $y_1, \dots, y_n$ be the points in sorted order according to the chosen direction. 

The set $Q$ is initially empty, and we include $Q \leftarrow \{ y_1, y_{\lceil n/2 \rceil} ,  y_n \}$. Suppose that $\| y_1 - y_{\lceil n / 2\rceil} \|_2 \geq \| y_{\lceil n / 2 \rceil} - y_n \|_2$ (the construction is symmetric, with $y_1$ and $y_{n}$ switched otherwise). We divide $Y$ into two sets, the subsets $Y_{L} = \{ y_1,\dots, y_{\lfloor n/2 \rfloor} \}$ and $Y_{R} = \{ y_{\lceil n / 2 \rceil}, \dots, y_n \}$. Then, we perform three recursive calls: (i) we let $Q_1$ be a $3$-coreset for $(k, \infty)$-line approximation of $Y_{L}$, (ii) we let $Q_2$ be a $3$-coreset for $(k-1, \infty)$-line approximation of $Y_L$, and (iii) we let $Q_3$ be a $3$-coreset for $(k-1, \infty)$-line approximation of $Y_R$. We add $Q_1, Q_2$, and $Q_3$ to $Q$.

The proof of correctness argues as follows. Let $C_1,\dots, C_k$ be an arbitrary collection of $k$ cylinders which covers $Q$. The goal is to show that increasing the radius of $C_1,\dots, C_k$ by a factor of 3 covers $Y$. Let $I_1,\dots, I_k$ be the intervals given by $I_i = \ell \cap C_i$. We let the indices $u,v \in [k]$ be such that $I_u$ is the first interval which contains $y_1$, and $I_v$ the last interval which contains $y_n$. We note that $I_1 \cup \dots \cup I_k$ covers $Q$. We must show that if we increase the length of each interval by a factor of $3$, we cover $Y$. We consider three cases:
\begin{itemize}
\item Suppose there exists an index $i^* \in [k]$ such that $y_1$ and $y_{\lceil n / 2 \rceil}$ both lie in the interval $I_{i^*}$. Recall $\| y_1 - y_{\lceil n / 2 \rceil}\|_2 \geq \| y_{\lceil n / 2 \rceil} - y_n \|_2$, and all points are contained within $y_1$ and $y_n$. Hence, when we increase the length of $I_{i^*}$ by a factor of $3$ while keeping center fixed, $y_1$ and $y_n$ lie in the same interval, and thus cover $Y$.
\item Suppose $y_1$ and $y_{\lceil n / 2 \rceil}$ lie in different intervals, but there exists $i^*$ such that $y_{\lceil n / 2\rceil}$ and $y_{n}$ lie in the interval $I_{i^*}$. Then, since all points of $Y_{R}$ between $y_{\lceil n / 2\rceil}$ and $y_n$, $I_{i^*}$ covers $Y_{R}$. Since $I_1,\dots, I_k$ covers $Q_1$ and $Q_1$ is a $3$-coreset for $(k, \infty)$-line approximation of $Y_L$, increasing the length of each interval by a factor of $3$ covers $Y_L$, and therefore all of $Y$.
\item Suppose $y_1$, $y_{\lceil n / 2\rceil}$ and $y_n$ all lie in different intervals. Then, since $y_1$ and $y_{\lceil n / 2 \rceil}$ are not on the same interval, the $k-1$ intervals $\bigcup_{i \in [k]\setminus \{u\}} I_i$ covers $Q_2$. Similarly, $y_{\lceil n /2\rceil}$ and $y_{n}$ are not on the same interval, so the $k-1$ intervals $\bigcup_{i \in [k]\setminus\{v\}} I_i$ covers $Q_3$. Since $Q_2$ is a $3$-coreset for $(k-1, \infty)$-line approximation of $Y_L$, increasing the radius of each interval by a factor of $3$ covers all of $Y_L$. In addition, $Q_3$ is a $3$-coreset for $(k-1, \infty)$-line approximation of $Y_R$, so increasing length of intervals by a factor of $3$ covers $Y_R$.
\end{itemize}
This concludes the correctness of the coreset, and it remains to upper bound the size. Let $f(k, n) \in \N$ be an upper bound on the coreset size of $(k, \infty)$-line approximation of a subset of size $n$. We have $f(1, n) = 2$, since any single interval which covers $y_1$ and $y_n$ covers everything in between them. By our recursive construction, we have
\[ f(k, n) \leq 3 + f(k, n/2) + 2 \cdot f(k-1, n/2).\]
By a simple induction, one can show $f(k, n)$ is at most $(c_1 \log n)^{k}$ when $k \geq 2$, for large enough constant $c_1$ and large enough $n$.
\end{proof}

\subsubsection{The coreset for points on $k$ lines and the effect of dimension reduction}

\begin{lemma}\label{lem:coreset-for-lines}
There exists a large enough constant $c_1 \in \R_{\geq 0}$ such that the following is true. Let $Y = \{ y_1,\dots, y_n \}$ be a set of points lying on $k$ lines in $\R^d$. There exists a subset $Q \subset Y$ which satisfies the following two requirements:
\begin{enumerate}
\item\label{en:coreset-in-orig} $Q$ is a $3$-coreset for $(k, \infty)$-line approximation of $Y$ size at most $k (c_1 \log n)^{k}$.
\item\label{en:coreset-after-map} If $\Pi \colon \R^d \to \R^t$ is a linear map, then $\Pi(Q)$ is a $3$-coreset for $(k,\infty)$-line approximation of $\Pi(Y)$. 
\end{enumerate}
\end{lemma}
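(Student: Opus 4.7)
The plan is to partition $Y$ according to which of the $k$ lines each point lies on, apply Lemma~\ref{lem:line-coreset} independently to each group, and take the union. Concretely, write $Y = Y_1 \sqcup \dots \sqcup Y_k$ where $Y_i$ consists of the points of $Y$ lying on line $\ell_i$ (breaking ties arbitrarily if some points lie on multiple lines). For each $i \in [k]$, invoke Lemma~\ref{lem:line-coreset} on $Y_i$ to obtain a subset $Q_i \subset Y_i$ of size at most $(c_1 \log n)^k$ which is a $3$-coreset for $(k,\infty)$-line approximation of $Y_i$. Define $Q = \bigcup_{i=1}^k Q_i$, whose size is at most $k(c_1 \log n)^k$ as required.

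For Requirement~\ref{en:coreset-in-orig}, suppose $C_1, \dots, C_k$ are $k$ cylinders in $\R^d$ that cover $Q$ at some radius $r$. Then for each $i \in [k]$, the cylinders cover $Q_i \subset Q$, so by the coreset property of $Q_i$ for $Y_i$, scaling each $C_j$ by a factor of $3$ (keeping the axis fixed) covers $Y_i$. Taking the union over $i$ shows that the scaled cylinders cover $Y$, establishing the $3$-coreset property for $Y$.

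For Requirement~\ref{en:coreset-after-map}, the key observation is that the construction and analysis of Lemma~\ref{lem:line-coreset} depends only on the linear order of points along the underlying line and on ratios of consecutive distances. Let $\Pi \colon \R^d \to \R^t$ be any linear map. For each $i$, the image $\Pi(Y_i)$ lies on the image set $\Pi(\ell_i)$, which is either a single line in $\R^t$ or, in the degenerate case where $\Pi$ kills the direction of $\ell_i$, a single point. In the degenerate case, any cylinder containing one point of $\Pi(Y_i)$ contains all of $\Pi(Y_i)$, so $\Pi(Q_i)$ is trivially a $3$-coreset for $\Pi(Y_i)$. In the nondegenerate case, $\Pi$ restricted to $\ell_i$ is an affine bijection onto $\Pi(\ell_i)$, hence preserves the linear order of points up to global reversal and preserves all ratios of one-dimensional distances along the line. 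This means that the subset $\Pi(Q_i)$ is exactly (up to reversal of orientation) what Lemma~\ref{lem:line-coreset}'s construction would produce on the input $\Pi(Y_i)$, so by the same proof it is a $3$-coreset for $(k,\infty)$-line approximation of $\Pi(Y_i)$. Requirement~\ref{en:coreset-after-map} then follows by the same union argument as Requirement~\ref{en:coreset-in-orig}, applied to $\Pi(Y) = \bigcup_i \Pi(Y_i)$.

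The only subtle point I anticipate is verifying that the recursive argument of Lemma~\ref{lem:line-coreset} truly is invariant under affine reparametrizations of the ambient line: the proof uses the comparison of $\|y_1 - y_{\lceil n/2\rceil}\|_2$ with $\|y_{\lceil n/2\rceil} - y_n\|_2$ and then extends intervals by a factor of $3$. Under the affine map $\Pi|_{\ell_i}$ every one-dimensional distance along $\ell_i$ is multiplied by the same positive scalar, so both the comparison and the scale-$3$ interval-growing argument go through verbatim, and we do not actually need $\Pi$ to be a subspace embedding or to approximately preserve norms.
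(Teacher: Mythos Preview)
Your proposal is correct and follows essentially the same approach as the paper: partition $Y$ according to the $k$ lines, apply Lemma~\ref{lem:line-coreset} to each part, take the union, and for the second requirement argue that the construction of Lemma~\ref{lem:line-coreset} depends only on the order of points and ratios of distances along the line, both of which are preserved by a linear map restricted to a line. You additionally handle the degenerate case where $\Pi$ collapses a line to a point, which the paper's proof does not mention explicitly.
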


\begin{proof}
Let $Y_1,\dots, Y_k$ be the partition of $Y$ into points lying on the lines $\ell_1,\dots, \ell_k$ of $\R^d$, respectively. We may write each line $\ell_i$ by two vectors $u_i, v_i \in \R^d$, and have
\[ \ell_i = \left\{ u_i + t \cdot v_i : t \in \R \right\}.\]
Let $Q_i$ be the $3$-coreset for $(k,\infty)$-line approximation of $Y_i$ specified by Lemma~\ref{lem:line-coreset}. We let $Q$ be the union of all $Q_i$. Item~\ref{en:coreset-in-orig} follows from Lemma~\ref{lem:line-coreset}, since we are taking the union of $k$ coresets. 

We now argue Item~\ref{en:coreset-after-map}. Since $\Pi$ is a linear map, and every point in $Y_i$ lies on the line $\ell_i$, there exists a map $t \colon Y_i \to \R$ where each $y \in Y_i$ satisfies
\[ y = u_i + t(y) \cdot v_i \in \R^d\qquad\text{and thus,}\qquad \Pi(y) = \Pi(u_i) + t(y) \cdot \Pi(v_i) \in \R^t. \]
In other words, $\Pi(Y_i)$ lies within a line in $\R^t$. We note that the relative order of points in $\Pi(Y_i)$ remains the same, since for any two points $y, y' \in Y_i$,
\[ \| \Pi(y) - \Pi(y')\|_2 = | t(y) - t(y')| \cdot \| \Pi(v_i) \|_2, \qquad \| y - y' \|_2 = |t(y) - t(y')| \cdot \| v_i \|_2. \] 
We note that the construction of Lemma~\ref{lem:line-coreset} only considers the order of points in $Y_i$, as well as the ratio of distances. Therefore, executing the construction of Lemma~\ref{lem:line-coreset} on the points $\Pi(Y_i)$ returns the set $\Pi(Q_i)$.
\end{proof}

\begin{corollary}\label{cor:sensitivity-bound-project}
Let $Y = \{ y_1,\dots, y_n \} \subset \R^d$ be a set of points lying on $k$ lines in $\R^d$. 
\begin{itemize}
\item Let $A_1, \dots, A_s$ denote a partition of $Y$ where each $A_i$ is a $3$-coreset for $(k,\infty)$-line approximation of $Y$ from Lemma~\ref{lem:coreset-for-lines} on the set $Y \setminus \bigcup_{i' = 1}^{i-1} A_{i'}$. 
\item Let $\Pi \colon \R^d \to \R^t$ be any linear map. 
\end{itemize}
For any set of $k$ lines $L' = \{ \ell_1' ,\dots, \ell_k' \}$ in $\R^t$, if $y \in A_i$, we have
\[ \| \Pi(y) - \rho_{L'}(\Pi(y)) \|_2 \leq \cost_z^z(\Pi(Y), L') \cdot \dfrac{3}{i}. \]
\end{corollary}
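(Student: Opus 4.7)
The plan is to repeat the contrapositive argument used inside the proof of Lemma~\ref{lem:lines-sensitivity}, but now carried out in the projected space $\R^t$, using Item~\ref{en:coreset-after-map} of Lemma~\ref{lem:coreset-for-lines} to move the coreset property through $\Pi$. Crucially, the hypothesis that $Y$ lies on $k$ lines is preserved by the linear map $\Pi$ (the image lies on the $k$ lines $\Pi(\ell_1),\dots,\Pi(\ell_k)$), which is exactly what allows Lemma~\ref{lem:coreset-for-lines} to apply to $\Pi(Y)$.

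First I would unpack the setup. For each $j \in [s]$, write $Y_j := Y \setminus \bigcup_{j'<j} A_{j'}$, so by construction $A_j$ is a $3$-coreset for $(k,\infty)$-line approximation of $Y_j$. Since $Y_j$ still lies on the original $k$ lines, Item~\ref{en:coreset-after-map} of Lemma~\ref{lem:coreset-for-lines} (applied with the linear map $\Pi$) yields that $\Pi(A_j)$ is a $3$-coreset for $(k,\infty)$-line approximation of $\Pi(Y_j)$. Now fix $y \in A_i$ and set $r = \min_{\ell' \in L'}\|\Pi(y) - \rho_{\ell'}(\Pi(y))\|_2$; for every $j \le i$ we have $y \in Y_j$, hence $\Pi(y) \in \Pi(Y_j)$.

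Next I apply the contrapositive of Definition~\ref{def:inf-coreset} inside $\R^t$, exactly mirroring the argument after equation~(\ref{eq:ell-infty-coreset}) in the proof of Lemma~\ref{lem:lines-sensitivity}. If every $u \in \Pi(A_j)$ satisfied $\min_{\ell' \in L'}\|u - \rho_{\ell'}(u)\|_2 < r/3$, then the $k$ cylinders around $L'$ of radius $r/3$ would cover $\Pi(A_j)$, and blowing up by a factor $3$ (using that $\Pi(A_j)$ is a $3$-coreset for $\Pi(Y_j)$) would force $\min_{\ell' \in L'}\|\Pi(y) - \rho_{\ell'}(\Pi(y))\|_2 < r$, contradicting the choice of $r$. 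Hence for each $j \le i$ there is $u_j \in \Pi(A_j)$ with $\min_{\ell' \in L'}\|u_j - \rho_{\ell'}(u_j)\|_2 \ge r/3$.

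Finally, since the sets $\Pi(A_1),\dots,\Pi(A_i)$ are disjoint multisets inside $\Pi(Y)$ (they come from the partition $A_1,\dots,A_s$ of $Y$ and we count with multiplicity), the $i$ witnesses $u_1,\dots,u_i$ give distinct contributions to $\cost_z^z(\Pi(Y),L')$, so
\[
\cost_z^z(\Pi(Y),L') \;\ge\; \sum_{j=1}^{i}\bigl(r/3\bigr)^{z} \;=\; i \cdot \frac{r^{z}}{3^{z}},
\]
and rearranging gives the claimed bound (with the same constants convention as in Lemma~\ref{lem:lines-sensitivity}). The only real content is the observation that the coreset property of $A_j$ transfers to $\Pi(A_j)$; everything else is a verbatim replay of the earlier proof. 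The main ``obstacle'' to guard against is simply that this transfer requires the ``lies on $k$ lines'' hypothesis, which is why Lemma~\ref{lem:coreset-for-lines} was stated for this restricted geometry in the first place.
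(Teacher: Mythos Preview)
Your proof is correct and follows essentially the same approach as the paper: transfer the $3$-coreset property of each $A_j$ to $\Pi(A_j)$ via Item~\ref{en:coreset-after-map} of Lemma~\ref{lem:coreset-for-lines}, then rerun the contrapositive argument from the proof of Lemma~\ref{lem:lines-sensitivity} in the projected space to produce $i$ witnesses each contributing at least $(r/3)^z$ to $\cost_z^z(\Pi(Y),L')$. Your added remark about treating $\Pi(A_1),\dots,\Pi(A_i)$ as disjoint \emph{multisets} (so that non-injectivity of $\Pi$ does not collapse witnesses) is a nice point the paper leaves implicit, and your acknowledgment that the final constant matches the paper's convention in Lemma~\ref{lem:lines-sensitivity} (rather than the literal $3^z/i$ the computation yields) is exactly right.
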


\begin{proof}
The proof follows from applying the same observation of  Lemma~\ref{lem:lines-sensitivity} to $\Pi(A_j)$, which is a $3$-coreset for $(k, \infty)$-line approximation by Lemma~\ref{lem:coreset-for-lines}. Namely, for every $j \leq i$, the set $\Pi(A_j)$ is a $3$-coreset for $(k,\infty)$-line approximation of a set containing $y$. Thus, if $\| y - \rho_{L'}(y) \|_2 = r$, there must be a set of at least $i$ points $y' \in Y$ where $\|y' - \rho_{L'}(y')\|_2 \geq r / 3$.
\end{proof}

\subsection{Improving the approximation}

We now instantiate the sensitivity function of Lemma~\ref{lem:coreset-for-lines}, and use Corollary~\ref{cor:bad-approx-lines} and Lemma~\ref{lem:lines-events-to-approx} in order to improve on the approximation. Similarly to before, we show that event $\bE_3(1+\eps)$ occurs with sufficiently high probability over the draw of $\bPi$ and $(\bS, \bw)$ by giving an upper bound on the variance as in Lemma~\ref{lem:variance-bound}, Lemma~\ref{lem:variance-bound-sa}, and Lemma~\ref{lem:variance-bound-flat}.

Fix $X = \{ x_1,\dots, x_n \} \subset \R^d$, and let $L = \{ \ell_1,\dots, \ell_k \}$ be the optimal $(k,z)$-line approximation of $X$ in $\R^d$. For $x \in X$, we let $y_x \in \R^d$ be given by $y_x = \rho_{\ell(x)}(x)$, and we denote the set $Y = \{ y_x : x \in X\}$. The sensitivity function $\sigma \colon X \to \R_{\geq 0}$ is specified by Lemma~\ref{lem:lines-sensitivity}. Recall that we first let $A_1,\dots, A_s$ denote a partition $Y$, where $A_i$ is the $3$-coreset for $(k,\infty)$-line approximation of $Y$ from Lemma~\ref{lem:coreset-for-lines}. For $x \in X$ with $y_x \in A_i$, we have
\[ \sigma(x) = 2^{z-1} \cdot \dfrac{\| x - y_x\|_2^z}{\cost_z^z(X, L)} + 2^{2z-1} \cdot \frac{3}{i}. \]
We let $\bE_4$ denote the following event with respect to the randomness in $\bPi \sim \calJ_{d,t}$. For each $x \in X$, we let $\bD_x \in \R_{\geq 0}$ denote the random variable
\[ \bD_x = \dfrac{\| \bPi(x) - \bPi\left( \rho_{\ell(x)}(x) \right)\|_2}{\| x - \rho_{\ell(x)}(x) \|_2},\]
and as in previous sections, event $\bE_4$, which occurs with probability at least $0.99$, whenever 
\[ \sum_{x \in x} \bD_x^{2z} \cdot \sigma(x) \leq 100 \cdot 2^z \cdot \frakS_{\sigma}. \]

\begin{lemma}\label{lem:improving-approx-lines}
Let $\Pi \in \calJ_{d,t}$ be a Johnson-Lindenstrauss map where, for $\alpha > 1$, the following events hold:
\begin{itemize}
\item\label{en:ahah1} Guarantee from Lemma~\ref{lem:upper-bound-lines}: $\sum_{x \in X} \| \Pi(x) - \Pi(\rho_{\ell(x)}(x))\|_2^z \leq \alpha \cdot \cost_z^z(X, L)$.
\item\label{en:ahah2} Guarantee from Corollary~\ref{cor:bad-approx-lines}: letting $L' = \{ \ell_1',\dots, \ell_k' \}$ be the optimal $(k,z)$-line approximation fo $\Pi(X)$, then $\cost_z^z(X, L) \leq \alpha \cdot \cost_z^z(\Pi(X), L')$.
\item Event $\bE_4$ holds. 
\end{itemize}
Then, if we let $(\bS, \bw)$ denote $m = \poly((\log n)^k, 1/\eps, \alpha)$, i.i.d draws from $\tilde{\sigma}$ and $\bw(x) = 1 / (m\tilde{\sigma}(x))$, with probability at least $0.99$, 
\[ \cost_z^z(( \Pi(\bS), \bw), L') \leq (1+\eps) \cdot \cost_z^z(\Pi(X), L'). \]
\end{lemma}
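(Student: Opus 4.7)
The plan is to follow the same template as the variance bounds in Lemma~\ref{lem:variance-bound}, Lemma~\ref{lem:variance-bound-sa}, and Lemma~\ref{lem:variance-bound-flat}. As before, since $\Ex_{\bS}[\cost_z^z((\Pi(\bS),\bw), L')] = \cost_z^z(\Pi(X), L')$, it suffices to upper bound the variance of the normalized estimator by $o(\eps^2)$ and apply Chebyshev's inequality. The starting point is
\[
\Var_{\bS}\!\left[ \Ex_{\bx \sim \bS}\!\left[ \frac{1}{\tilde{\sigma}(\bx)} \cdot \frac{\| \Pi(\bx) - \rho_{L'}(\Pi(\bx))\|_2^z}{\cost_z^z(\Pi(X), L')}\right]\right] \leq \frac{\frakS_{\sigma}}{m} \sum_{x \in X} \frac{1}{\sigma(x)} \cdot \frac{\| \Pi(x) - \rho_{L'}(\Pi(x))\|_2^{2z}}{\cost_z^{2z}(\Pi(X), L')},
\]
so the key step is to show that each term $\|\Pi(x) - \rho_{L'}(\Pi(x))\|_2^z / \cost_z^z(\Pi(X), L')$ is at most $2^{O(z)} \alpha^{O(1)} (D_x^z + 1) \sigma(x)$.

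To obtain such a bound, write $y_x = \rho_{\ell(x)}(x)$ and split by the triangle inequality and H\"{o}lder into two pieces, just as in the proofs of Lemmas~\ref{lem:variance-bound-sa} and \ref{lem:variance-bound-flat}:
\[
\|\Pi(x) - \rho_{L'}(\Pi(x))\|_2^z \leq 2^{z-1}\|\Pi(x) - \Pi(y_x)\|_2^z + 2^{z-1}\|\Pi(y_x) - \rho_{L'}(\Pi(y_x))\|_2^z.
\]
The first piece equals $D_x^z \cdot \|x - y_x\|_2^z$, and dividing by $\cost_z^z(\Pi(X), L')$ and invoking assumption~(\ref{en:ahah2}) gives a bound of $\alpha \cdot D_x^z \cdot \|x - y_x\|_2^z / \cost_z^z(X, L)$, which matches the first summand in $\sigma(x)$ up to a factor of $\alpha 2^{z-1} D_x^z$.

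The second piece is where the specific $(k,z)$-line sensitivity function is essential. Since $Y = \{y_x : x \in X\}$ lies on the $k$ lines $\ell_1,\dots, \ell_k$, the partition $A_1,\dots,A_s$ of $Y$ constructed from Lemma~\ref{lem:coreset-for-lines} satisfies, for $y_x \in A_i$, the estimate $\|\Pi(y_x) - \rho_{L'}(\Pi(y_x))\|_2^z \leq (3/i) \cdot \cost_z^z(\Pi(Y), L')$ via Corollary~\ref{cor:sensitivity-bound-project}. The preservation of the coreset property under the linear map $\Pi$ (Item~\ref{en:coreset-after-map} of Lemma~\ref{lem:coreset-for-lines}) is precisely the crucial ingredient. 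To replace $\cost_z^z(\Pi(Y), L')$ by $\cost_z^z(\Pi(X), L')$, use triangle inequality together with assumptions~(\ref{en:ahah1}) and (\ref{en:ahah2}):
\[
\cost_z^z(\Pi(Y), L') \leq 2^{z-1}\bigl( \cost_z^z(\Pi(X), L') + \textstyle\sum_{x\in X}\|\Pi(x)-\Pi(y_x)\|_2^z\bigr) \leq 2^{z-1}(1 + \alpha^2) \cost_z^z(\Pi(X), L').
\]
This delivers the second summand of $\sigma(x)$ times $2^{O(z)}\alpha^2$.

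Combining both pieces yields $\|\Pi(x) - \rho_{L'}(\Pi(x))\|_2^z / \cost_z^z(\Pi(X), L') \lesssim 2^{O(z)} \alpha^2 (D_x^z + 1) \sigma(x)$. Squaring, plugging back into the variance bound, using $(D_x^z + 1)^2 \leq 2(D_x^{2z} + 1)$, and invoking event $\bE_4$ to control $\sum_x D_x^{2z}\sigma(x) \leq O(2^z \frakS_\sigma)$ gives a variance bound of $O\!\left( 2^{O(z)} \alpha^{O(1)} \frakS_\sigma^2 / m\right)$. Since $\frakS_\sigma = O(2^{2z} \cdot \log n \cdot k(c_1 \log n)^k)$ by Lemmas~\ref{lem:lines-sensitivity} and~\ref{lem:coreset-for-lines}, setting $m = \poly((\log n)^k, 2^z, 1/\eps, \alpha)$ makes the variance at most $o(\eps^2)$, and Chebyshev's inequality produces the claimed $(1+\eps)$-relative guarantee with probability at least $0.99$. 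The only genuine obstacle relative to the previous sections is justifying the second triangle-inequality piece without a clean algebraic sensitivity formula; this is exactly what the carefully engineered $\ell_\infty$-coreset partition and its linearity-preserving property in Lemma~\ref{lem:coreset-for-lines} deliver.
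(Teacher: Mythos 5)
Your proposal is correct and follows essentially the same route as the paper's proof: the same variance-Chebyshev template, the same triangle-inequality split into the $D_x^z\|x-y_x\|_2^z$ piece and the $\|\Pi(y_x)-\rho_{L'}(\Pi(y_x))\|_2^z$ piece, the same use of Corollary~\ref{cor:sensitivity-bound-project} (driven by Item~\ref{en:coreset-after-map} of Lemma~\ref{lem:coreset-for-lines}), and the same use of event $\bE_4$ and the total-sensitivity bound to set $m$. The only differences are cosmetic bookkeeping of the $2^{O(z)}$ and $\alpha$ factors, which are absorbed into $m$ in both arguments.
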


\begin{proof}
We bound the variance,
\begin{align}
\Var_{\bS}\left[ \Ex_{\bx \sim \bS}\left[\dfrac{1}{\tilde{\sigma}(\bx)} \cdot \dfrac{\|\Pi(\bx) - \rho_{L'}(\Pi(\bx))\|_2^z}{\cost_z^z(\Pi(X), L')}  \right]\right] \leq \dfrac{\frakS_{\sigma}}{m} \sum_{x \in X} \left(\dfrac{1}{\sigma(x)} \cdot \dfrac{\|\Pi(x) - \rho_{L'}(\Pi(x))\|_2^{2z}}{\cost_z^{2z}(\Pi(X), L') }\right).\label{eq:var-bound}
\end{align}
We note that, as before, we will apply H\"{o}lder's inequality and the triangle inequality, followed by Corollary~\ref{cor:sensitivity-bound-project}. Specifically, suppose $x \in X$ with $y_x \in A_i$, then,
\begin{align*}
\| \Pi(x) - \rho_{L'}(\Pi(x))\|_2^z &\leq 2^{z-1} \cdot \bD_x^z \cdot \| x - \rho_{L}(x)\|_2^z + 2^{z-1} \cdot \| \Pi(y_x) - \rho_{L'}(\Pi(y_x)) \|_2^z \\
		&\leq 2^{z-1}  \cdot \bD_x^z \cdot \| x - \rho_{L}(x)\|_2^z + 2^{z-1} \cdot \cost_z^z(\Pi(Y), L')\cdot \frac{3}{i} \\
		&\leq 2^{z-1} \cdot \bD_x^z \cdot \|x - \rho_L(x)\|_2^z + \dfrac{2^{2z-2} \cdot 3}{i} \left( \cost_z^z(\Pi(X), L') + \cost_z^z(\Pi(X), \Pi(Y)) \right) .
\end{align*}
We note that from (\ref{en:ahah1}) and (\ref{en:ahah2}), we have $\cost_z^z(\Pi(X), \Pi(Y)) \leq \alpha \cost_z^z(X, L) \leq \alpha^2 \cost_z^z(\Pi(X), L')$. So the above simplifies to
\begin{align*}
\dfrac{\| \Pi(x) - \rho_{L'}(\Pi(x))\|_2^z}{\cost_z^z(\Pi(X), L')} &\leq 2^{z-1} \cdot \bD_x^z \cdot \dfrac{\|x - \rho_L(x)\|_2^z}{\cost_z^z(X, L)} + 2^{2z-2} \cdot \dfrac{3(1+\alpha^2)}{i} \\
	&\leq (\bD_x^z + 1 + \alpha^2) \cdot \sigma(x).
\end{align*}
We now continue upper bounding (\ref{eq:var-bound}), where the variance becomes less than
\begin{align*}
\dfrac{\frakS_{\sigma}}{m} \sum_{x \in X} \left( \bD_x^z + 1 + \alpha^2\right)^2 \cdot \sigma(x) \lsim \dfrac{\frakS_{\sigma}^2 \cdot \alpha^4}{m},
\end{align*}
since event $\bE_4$ holds. Since $\frakS_{\sigma} \leq \poly(2^{2z}, (\log n)^k)$, we obtain our desired bound on the variance by letting $m$ be a large enough polynomial of $(\log n)^k$, $\alpha$, and $1/\eps$.
\end{proof}

\begin{corollary}
Let $X = \{ x_1,\dots, x_n \} \subset \R^d$ be any set of points, and let $L = \{ \ell_1,\dots, \ell_k \}$ be the optimal set of $k$ lines for $(k, z)$-line approximation of $X$. For any $\eps \in (0, 1/2)$, let $\calJ_{d,t}$ be the Johnson-Lindenstrauss map with 
\[ t \gsim \dfrac{k \log \log n + z + \log(1/\eps)}{\eps^3}. \]
Then, with probability at least $0.92$ over the draw of $\bPi \sim \calJ_{d,t}$,
\[ \min_{\substack{L' \text{ $k$ lines} \\ \text{in $\R^t$}}} \cost_z(\bPi(X), L') \geq \dfrac{1}{(1+\eps)^{3 + 1/z}} \cdot \cost_z(X, L). \]
\end{corollary}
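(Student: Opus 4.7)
The plan is to follow the same bootstrapping template used for the clustering, subspace, and flat approximation corollaries: combine the three events $\bE_1$, $\bE_2$, $\bE_3(1+\eps)$ via Lemma~\ref{lem:lines-events-to-approx}, where $\bE_3(1+\eps)$ is obtained from an initial loose guarantee followed by a variance-reduction step. Specifically, I would instantiate Definition~\ref{def:k-line-events} with the sensitivity function $\sigma$ constructed in Lemma~\ref{lem:lines-sensitivity} using the $3$-coresets of Lemma~\ref{lem:coreset-for-lines}. Since each coreset $A_i$ produced by Lemma~\ref{lem:coreset-for-lines} has size at most $k(c_1\log n)^k$, Lemma~\ref{lem:lines-sensitivity} gives a total sensitivity $\frakS_\sigma = O\!\left(2^{2z}\cdot k(c_1\log n)^k \cdot \log n\right)$, so $\log \frakS_\sigma = O(z + k\log\log n)$. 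This matches the claimed bound on $t$, since the target dimension in Corollary~\ref{cor:bad-approx-lines} is $t \gtrsim \log(\frakS_\sigma,k,1/\eps)/\eps^3$.

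Next, I would check that each event occurs with high enough probability under the stated setting of $t$ and with $m = \poly((\log n)^k, 1/\eps, \alpha)$ for a large constant $\alpha$. Event $\bE_1$ holds with probability $1-o(1)$ by the weak coreset guarantee for $(k,z)$-line approximation stated just after Lemma~\ref{lem:within-span}. Event $\bE_2$ holds with probability $0.99$ by the argument of Lemma~\ref{lem:event-2-sa} applied with $\ell = O(\log(1/\eps)/\eps)$: one uses Lemma~\ref{lem:subspace-embedding} and union bounds over the $\binom{m}{\ell}$ spans, which is why the dimension bound contains the term $\log(1/\eps)/\eps^3$. Event $\bE_4$ holds with probability $0.99$ exactly as stated following Claim~\ref{cl:event-4}, using that $\Ex[\bD_x^{2z}] \leq 2^z$ by Lemma~\ref{lem:gaussian-guarantee}. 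Finally, applying Corollary~\ref{cor:bad-approx-lines} gives an $O(1)$-factor guarantee $\cost_z^z(X,L) \leq \alpha\, \cost_z^z(\bPi(X), \bL')$ with probability at least $0.97$.

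The key step is upgrading the loose $\bE_3(100)$ guarantee to $\bE_3(1+\eps)$. For this I invoke Lemma~\ref{lem:improving-approx-lines}: its three hypotheses are Lemma~\ref{lem:upper-bound-lines} (which holds with probability $0.99$), the $\alpha$-approximation from Corollary~\ref{cor:bad-approx-lines} (which holds with probability $0.97$), and event $\bE_4$. By a union bound these hold simultaneously with probability at least $0.95$, and conditional on them Lemma~\ref{lem:improving-approx-lines} produces $\bE_3(1+\eps)$ with conditional probability $0.99$, so $\bE_3(1+\eps)$ holds overall with probability at least $0.94$. A final union bound with $\bE_1$ and $\bE_2$ yields all three simultaneously with probability at least $0.92$, and Lemma~\ref{lem:lines-events-to-approx} with $\beta = 1+\eps$ delivers
\[
\min_{\substack{L' \text{ $k$ lines}\\\text{in } \R^t}} \cost_z(\bPi(X), L') \;\geq\; \frac{1}{(1+\eps)^{1/z}(1+\eps)^3}\cdot \cost_z(X,L) \;=\; \frac{1}{(1+\eps)^{3+1/z}}\cdot \cost_z(X,L).
\]

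The main conceptual obstacle has already been disposed of in Lemma~\ref{lem:improving-approx-lines} and in the construction of the right sensitivity function: the variance calculation there relies crucially on Corollary~\ref{cor:sensitivity-bound-project}, which guarantees that $\bPi(A_i)$ remains a $3$-coreset for $(k,\infty)$-line approximation of $\bPi(Y)$ (this is why the coresets are built only from order statistics on each line, so they commute with linear maps). What remains at this stage is essentially bookkeeping: plug in $\frakS_\sigma = \mathrm{poly}(2^{2z},(\log n)^k)$ into $m$ to verify that the variance bound in Lemma~\ref{lem:improving-approx-lines} falls below $\eps^2$, and then translate $\log \frakS_\sigma = O(z + k\log\log n)$ into the stated bound on $t$.
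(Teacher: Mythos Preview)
Your proposal is correct and follows essentially the same approach as the paper. The paper does not write out an explicit proof for this final corollary, but the template it establishes in the analogous corollaries for clustering, subspace, and flat approximation is exactly the bootstrapping argument you describe: verify $\bE_1$, $\bE_2$, and $\bE_4$ individually, feed Lemma~\ref{lem:upper-bound-lines}, Corollary~\ref{cor:bad-approx-lines}, and $\bE_4$ into Lemma~\ref{lem:improving-approx-lines} to obtain $\bE_3(1+\eps)$, union bound, and apply Lemma~\ref{lem:lines-events-to-approx} with $\beta = 1+\eps$.
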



\appendix

\section{On preserving ``all solutions'' and comparisons to prior work}\label{sec:forall}

\newcommand\ddfrac[2]{\frac{\displaystyle #1}{\displaystyle #2}}

This section is meant for two things:
\begin{enumerate}
\item To help compare the guarantees of this work to that of prior works on $(k,z)$-clustering of \cite{MMR19} and $(k,2)$-subspace approximation \cite{CEMMP15}, expanding on the discussion in the introduction. In short, for $(k,z)$-clustering, the results of \cite{MMR19} are qualitatively stronger than the results obtained here. In $(k,2)$-subspace approximation, the ``for all'' guarantees  of \cite{CEMMP15} are for the qualitatively different problem of low-rank approximation. While the costs of low-rank approximation and $(k, 2)$-subspace approximation happen to agree at the optimum, the notion of a candidate solution is different.
\item To show that, for two related problems of ``medoid'' and ``column subset selection,'' one cannot apply the Johnson-Lindenstrauss transform to dimension $o(\log n)$ while preserving the cost. The medoid problem is a center-based clustering problem, and column subset selection problem is a subspace approximation problem. The instances we will construct for these problems are very symmetric, such that uniform sampling will give small coresets. These give concrete examples ruling out a theorem which directly relates the size of coresets to the effect of the Johnson-Lindenstrauss transform.
\end{enumerate}

\paragraph{Center-Based Clustering} Consider the following (slight) modification to the center-based clustering problems known as the ``medoid'' problem.
\begin{definition}[$1$-medoid problem]
Let $X = \{ x_1,\dots, x_n \} \subset \R^d$ be any set of points. The $1$-medoid problem asks to optimize
\[ \min_{\substack{c \in X}} \sum_{x \in X} \| x - c\|_2^2. \]
\end{definition}
Notice the difference between $1$-medoid and $1$-mean: in $1$-medoid the center is restricted to be from within the set of points $X$, whereas in $1$-mean the center is arbitrary. Perhaps surprisingly, this modification has a dramatic effect on dimension reduction.
\begin{theorem}\label{thm:medoid}
For large enough $n, d \in \N$, there exists a set of points $X \subset \R^d$ (in particular, given by the $n$-basis vector $\{ e_1,\dots, e_n \} \subset \R^n$) such that, with high probability over the draw of $\bPi \sim \calJ_{d, t}$ where $t = o(\log n)$, 
\begin{align*}
\ddfrac{\mathop{\min}_{c \in X} \sum_{x \in X} \| x - c\|_2^2} {\mathop{\min}_{c' \in \bPi(X)} \sum_{x \in X} \| \bPi(x) - c'\|_2^2}  \geq 2 - o(1).
\end{align*}
\end{theorem}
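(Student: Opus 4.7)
The plan is to handle the numerator and the projected cost separately. For the numerator, a direct computation gives that for $X = \{e_1,\dots,e_n\}$ and any candidate center $c = e_j \in X$,
\[ \sum_{i \in [n]} \|e_i - e_j\|_2^2 = \sum_{i \neq j} 2 = 2(n-1), \]
so the minimum medoid cost in $\R^n$ equals $2(n-1)$ regardless of the choice of $e_j$.

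For the denominator, I would exploit two features of the i.i.d.\ Gaussian columns $\bPi(e_1),\dots,\bPi(e_n)$, each distributed as $\calN(0, I_t/t)$ in $\R^t$. The first step is to show that \emph{some} column is anomalously short: since $t\|\bPi(e_i)\|_2^2$ is $\chi^2$ with $t$ degrees of freedom, a direct integration of its density near the origin gives
\[ \Prx\left[\|\bPi(e_i)\|_2^2 \leq \epsilon\right] \gsim \dfrac{(t\epsilon)^{t/2}}{t \cdot 2^{t/2}\,\Gamma(t/2)} \]
for every $\epsilon \in (0,1/t)$. The regime $t = o(\log n)$ is tailored so that one can pick $\epsilon = \epsilon(n) = o(1)$ decaying slowly enough to make the right-hand side exceed $n^{-1/2}$; since the $n$ columns of $\bPi$ are independent, with probability $1 - \exp(-\Omega(\sqrt n))$ there is an index $i^{\star} \in [n]$ with $\|\bPi(e_{i^{\star}})\|_2^2 \leq \epsilon(n) = o(1)$.

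The second step is to use $c' = \bPi(e_{i^{\star}})$ as a test center in the projected space and expand
\[ \sum_{i=1}^n \|\bPi(e_i) - \bPi(e_{i^{\star}})\|_2^2 = \sum_{i=1}^n \|\bPi(e_i)\|_2^2 \;-\; 2\langle \bPi(\mathbf{1}),\, \bPi(e_{i^{\star}})\rangle \;+\; n\|\bPi(e_{i^{\star}})\|_2^2, \]
where $\mathbf{1} = \sum_i e_i$. The first sum is distributed as $(1/t)\chi^2_{tn}$, which has mean $n$ and standard deviation $O(\sqrt{n/t}) = o(n)$, so it equals $(1\pm o(1))n$ with high probability. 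The third term is $n \cdot o(1) = o(n)$ by the choice of $i^{\star}$. The cross term is bounded by Cauchy--Schwarz: $|\langle \bPi(\mathbf{1}), \bPi(e_{i^{\star}})\rangle| \leq \|\bPi(\mathbf{1})\|_2 \cdot \|\bPi(e_{i^{\star}})\|_2 \leq O(\sqrt{n}) \cdot o(1) = o(\sqrt n)$, using that $\bPi(\mathbf{1}) \sim \calN(0,(n/t)I_t)$ has norm $\Theta(\sqrt n)$ whp. Combining yields $\sum_i \|\bPi(e_i) - \bPi(e_{i^{\star}})\|_2^2 \leq (1+o(1))\,n$, so the ratio in the theorem is at least $2(n-1)/((1+o(1))n) = 2 - o(1)$.

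The main obstacle is the anti-concentration argument in the first step: even though every individual $\|\bPi(e_i)\|_2$ concentrates near $1$, over $n$ independent draws one can find a much shorter vector. The threshold $t = o(\log n)$ is exactly the regime where the $(t\epsilon)^{t/2}$ left tail of a single chi-squared is large enough that a union bound over the $n$ columns produces such a vector; the argument breaks down for $t = \Omega(\log n)$, which matches the statement of the theorem.
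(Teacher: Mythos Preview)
Your proof is correct and follows essentially the same approach as the paper's: both compute the numerator as $2(n-1)$, use anti-concentration of the chi-squared left tail to find an index $i^\star$ with $\|\bPi(e_{i^\star})\|_2^2 = o(1)$, and combine this with the high-probability bound $\sum_i \|\bPi(e_i)\|_2^2 \leq (1+o(1))n$ to upper-bound the projected medoid cost by $(1+o(1))n$. Your treatment is in fact more careful than the paper's sketch (which just invokes ``the Pythagorean theorem'' for the final step), since you explicitly expand the sum of squared differences and control the cross term via Cauchy--Schwarz and concentration of $\|\bPi(\mathbf{1})\|_2$.
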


Theorem~\ref{thm:medoid}  gives very strong lower bound for dimension reduction for $k$-medoid, showing that decreasing the dimension to any $o(\log n)$ does not preserve (even the optimal) solutions within better-than factor $2$. This is in stark contrast to the results on center-based clustering, where the $1$-mean problem can preserve the solutions up to $(1\pm \eps)$-approximation without any dependence on $n$ or $d$. The proof itself is also very straight-forward: each $\bPi(e_i)$ is an independent Gaussian vector in $\R^t$, and if $t = o(\log n)$, with high probability, there exists an index $i \in [n]$ where $\| \bPi(e_i) \|_2^2 = o(1)$. In a similar vein, with high probability $\sum_{i=1}^n \| \bPi(e_i) \|_2^2 \leq (1+o(1)) n$. We take a union bound and set the center $c' = \bPi(e_i)$ for the index $i$ where $\| \bPi(e_i) \|_2^2 = o(1)$. By the pythagorean theorem, the cost of this $1$-medoid solution is at most $(1+o(1)) n$. On the other hand, every $1$-medoid solution in $X$ has cost $2(n-1)$. 

We emphasize that Theorem~\ref{thm:medoid} does not contradict~\cite{MMR19, BBCGS19}, even though it rules out that ``all candidate \emph{centers}'' are preserved. The reason is that the notion of ``candidate solution'' is different. Informally, \cite{MMR19} shows that for any dataset $X \subset \R^d$ of $n$ vectors and any $k \in \N$, $\eps > 0$, applying the Johnson-Lindenstrauss map $\bPi \sim \calJ_{d, t}$ with $t = O(\log(k/\eps) / \eps^2)$ satisfies the following guarantee: for all partitions of $X$ into $k$ sets, $(P_1, P_2, \dots, P_k)$, the following is true:
\begin{align*}
\sum_{\ell=1}^k \min_{c_{\ell}' \in \R^t} \sum_{x \in P_{\ell}} \| \bPi(x) - c_{\ell}'\|_2^2 \approx_{1\pm \eps} \sum_{\ell=1}^k \min_{c_{\ell} \in \R^d} \sum_{x \in P_{\ell}} \| x - c_{\ell}\|_2^2 .
\end{align*}
The ``for all'' quantifies over clusterings $(P_1,\dots, P_k)$ is different (as seen from the $1$-medoid example) from the ``for all'' over centers $c_1,\dots, c_k$.

\paragraph{Subspace Approximation} The same subtlety appears in subspace approximation. Here, we can consider the $1$-column subset selection problem, which at a high level, is the medoid version of subspace approximation. We want to approximate a set of points by their projections onto the subspace spanned by one of those points.
\begin{definition}[$1$-column subset selection]
Let $X = \{ x_1,\dots, x_n \} \subset \R^d$ be any set of points. The $1$-column subset selection problem asks to optimize
\begin{align*}
\min_{\substack{S = \mathrm{span}(\{ x_i \}) \\ x_i \in X}} \sum_{x \in X} \| x - \rho_{S}(x) \|_2^2
\end{align*}
\end{definition}

Again, notice the difference between $1$-column subset selection and $(k, 1)$-subspace approximation: the subspace $S$ is restricted to be in the span of a point from $X$. Given Theorem~\ref{thm:medoid}, it is not surprising that Johnson-Lindenstrauss cannot reduce the dimension of $1$-column subset selection to $o(\log n)$ without incurring high distortions.
\begin{theorem}\label{thm:1-css}
For large enough $n, d \in \N$, there exists a set of points $X \subset \R^d$ such that, with high probability over the draw of $\bPi \sim \calJ_{d,t}$ where $t = o(\log n)$, 
\begin{align*}
\ddfrac{\mathop{\min}_{\substack{S = \mathrm{span}(x) \\ x \in X}} \sum_{x \in X} \| x - \rho_S(x)\|_2^2} {\mathop{\min}_{\substack{S' = \mathrm{span}(\bPi(x)) \\ x \in X}} \sum_{x \in X} \| \bPi(x) - \rho_{S'}(\bPi(x))\|_2^2}  \geq 3/2 - o(1).
\end{align*}
\end{theorem}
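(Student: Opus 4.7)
The plan is to mirror the proof strategy of Theorem~\ref{thm:medoid}: use the standard basis $X = \{e_1,\dots,e_n\} \subset \R^n$ as the hard instance, compute the original cost exactly, and exhibit in the projected space a candidate subspace of substantially smaller cost by exploiting Gaussian concentration of $\bPi$.

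\emph{Step 1: Original cost.} For $X$ the standard basis and any $i \in [n]$, the subspace $S = \mathrm{span}(e_i)$ contains $e_i$ and is orthogonal to every other $e_j$, so $\|e_j - \rho_S(e_j)\|_2^2 = \|e_j\|_2^2 = 1$ for $j \neq i$. Hence the optimal $1$-column subset selection cost in $\R^n$ is exactly $n-1$.

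\emph{Step 2: Projected cost formula.} Sample $\bPi \sim \calJ_{n,t}$ and let $z_i := \bPi(e_i) \in \R^t$. Then $z_1,\dots,z_n$ are i.i.d.\ $\calN(0, I_t/t)$. For the candidate $S' = \mathrm{span}(z_i)$, a short computation gives
\[
\sum_{j=1}^n \|z_j - \rho_{S'}(z_j)\|_2^2 \;=\; \mathrm{tr}(M) \;-\; u_i^{\top} M u_i, \qquad u_i := \frac{z_i}{\|z_i\|_2},\quad M := \sum_{j=1}^n z_j z_j^{\top}.
\]
So minimizing the projected cost over $i$ is equivalent to maximizing $u_i^{\top} M u_i$.

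\emph{Step 3: Concentration of $\mathrm{tr}(M)$ and $M$.} I would use standard $\chi^2$ tail bounds to show $\mathrm{tr}(M) = \sum_j \|z_j\|_2^2 = n(1+o(1))$ w.h.p. The matrix $M$ is a rescaled Wishart: writing $z_j = g_j/\sqrt{t}$ with $g_j \sim \calN(0,I_t)$, we have $M = (1/t) G G^{\top}$ with $G$ a $t \times n$ standard Gaussian matrix. Since $t = o(\log n) \ll n$, Marchenko--Pastur gives $\|M - (n/t) I_t\|_{\mathrm{op}} = O(\sqrt{n/t^2}\cdot\sqrt{t}) = o(n/t)$ with high probability, so every unit vector $u$ satisfies $u^{\top} M u = (n/t)(1+o(1))$. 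In particular $\max_i u_i^{\top} M u_i = (n/t)(1+o(1))$.

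\emph{Step 4: Ratio.} Combining, the projected $1$-column subset selection cost is at most $n\cdot\frac{t-1}{t}\cdot(1+o(1))$, so the ratio of original to projected cost is at least $\frac{n-1}{n\,(t-1)/t}(1-o(1)) = \frac{t}{t-1}(1-o(1))$. At $t=3$ this is exactly $3/2 - o(1)$, and for any $t \leq 3$ the bound is $\geq 3/2 - o(1)$.

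\emph{Main obstacle.} Extending the $3/2$ bound uniformly to the full regime $t = o(\log n)$ (including $t \to \infty$ slowly). The basis instance by itself yields only $t/(t-1) \to 1$ for large $t$, and the ``small-norm'' trick from the medoid proof does not help here: when $\|z_i\|_2$ is tiny, $u_i = z_i/\|z_i\|_2$ is still a uniform unit vector independent of $M_{-i} := \sum_{j \neq i} z_j z_j^{\top}$, and $u_i^{\top} M u_i = \|z_i\|_2^2 + u_i^{\top} M_{-i} u_i$ is already saturated near $\lambda_{\max}(M) \approx n/t$ by the tight spectral concentration of $M$. Pushing the ratio back up to $3/2 - o(1)$ for large $t$ therefore requires a more carefully engineered instance (for example, combining scaled basis vectors at several magnitudes so that a rare atypically-long projected point aligns with a high-mass direction in $\R^t$) rather than a purely spectral argument. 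The remainder of the proof is then a direct union bound over the high-probability events (spectral concentration of $M$, $\chi^2$ concentration of $\mathrm{tr}(M)$, and the optimality of the selected index $i$).
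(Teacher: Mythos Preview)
Your diagnosis of the obstacle is accurate, and it is fatal for the proposal as written: with the standard basis instance the ratio you obtain is only $t/(t-1)+o(1)$, which tends to $1$ as $t$ grows. Since the theorem must hold for \emph{every} $t = o(\log n)$, including $t \to \infty$ slowly, the basis instance simply does not establish the claimed $3/2 - o(1)$ bound. Your suggested remedy of ``scaled basis vectors at several magnitudes'' is heading in a more complicated direction than necessary.

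The missing idea is to bake a single \emph{common direction} into all the points, so that the ``small-norm'' trick from the medoid proof becomes useful again. The paper takes $d = n+1$ and $x_i = (e_{n+1} + e_i)/\sqrt{2}$. In the original space $\langle x_i, x_j\rangle = 1/2$ for $i \neq j$, so the residual of $x_j$ onto $\mathrm{span}(x_i)$ has squared norm $3/4$, and the original cost is $\tfrac{3}{4}(n-1)$. After projection, write $\bg_\ell = \bPi(e_\ell)$; since $t = o(\log n)$, with high probability some $i \in [n]$ has $\|\bg_i\|_2 = o(1)$, so $\bPi(x_i)$ is within $o(1)$ of $\bg_{n+1}/\sqrt{2}$ and $\mathrm{span}(\bPi(x_i))$ is essentially $\mathrm{span}(\bg_{n+1})$. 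Projecting $\bPi(x_j) = (\bg_{n+1} + \bg_j)/\sqrt{2}$ onto this line kills the $\bg_{n+1}$ component and leaves residual squared norm at most $\|\bg_j\|_2^2/2 + o(1)$; summing over $j$ gives projected cost at most $(n/2)(1+o(1))$. The ratio is then $(3/4)/(1/2) = 3/2$ up to $o(1)$, \emph{uniformly in $t$}: the common $e_{n+1}$ direction plays the role that your argument tried to assign to the top Wishart eigenvector, but without the $1/t$ loss.
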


The proof is slightly more involved. The instance sets $d = n + 1$, and sets $X = \{ x_1,\dots, x_n \}$ where $x_i = (e_{n+1} + e_i) / \sqrt{2}$. For any subspace $S$ spanned by any of the points $x_i$, via a straight-forward calculation, the distance between $x_j$ and $\rho_S(x_j)$ is $\sqrt{3/4}$ when $j \neq i$, and therefore, the cost of $1$-column subset selection in $X$ is $3/4 \cdot (n-1)$. We apply dimension reduction to $t = o(\log n)$ and we think of $\bg_1,\dots, \bg_{n+1} \in \R^t$ as the independent Gaussian vectors given by $\bPi(e_1),\dots, \bPi(e_{n+1})$. As in the $1$-medoid case, there exists an index $i \in [n]$ for which $\| \bg_i \|_2^2 = o(1)$, and notice that when this occurs, $\bPi(x_i)$ is essentially $\bg_{n+1} / \sqrt{2}$ (because $\|\bPi(x_i) - \bg_{n+1} / \sqrt{2}\|_2 = o(1)$). Letting $S$ be the subspace spanned by $\bPi(x_i)$, we get that the distance between the projection $\| \bPi(x_j) - \rho_{S}(\bPi(x_j)) \|_2^2$ is at most $\| \bg_j \|_2^2 / 2 + o(1)$. This latter fact is because the subspace spanned by $S$ is essentially spanned by $\bg_{n+1}$. Therefore, the cost of the $1$-column subset selection of $\bPi(X)$ is at most $n/2 (1+o(1))$. 

As above, Theorem~\ref{thm:1-css} does not contradict \cite{CEMMP15}, even though it means that ``all candidate \emph{subspaces}'' are preserved needs to be carefully considered. The notion of ``candidate solutions'' is different. In the matrix notation that \cite{CEMMP15} uses, the points in $X$ are stacked into rows of an $n \times d$ matrix (which we denote $X$). A Johnson-Lindenstrauss map $\bPi$ is represented by a $d \times t$ matrix, and applying the map to every point in $X$ corresponds to the operation $X \bPi$ (which is now an $n \times t$ matrix). \cite{CEMMP15} shows that if $\bPi$ is sampled with $t = O(k/\eps^2)$, the following occurs with high probability. For all rank-$k$ projection matrices $P \in \R^{n \times n}$, we have
\[ \| X - PX\|_F^2 \approx_{1\pm \eps} \| X\bPi - P X \bPi \|_F^2. \]
Note that when we multiply the matrix $X$ on the left-hand side by $P$, we are projecting the $d$ columns of $X$ to a $k$-dimensional subspace of $\R^n$. This is different from approximating all points in $X$ with a $k$-dimensional subspace in $\R^d$, which would correspond to finding a rank-$k$ projection matrix $S \in \R^{d \times d}$ and considering $\| X - X S\|_F^2$. In the matrix notation of \cite{CEMMP15}, the dimension reduction result for $(k, 2)$-subspace approximation says that
\begin{align} 
\min_{\substack{S \in \R^{d\times d} \\ \text{ rank-$k$} \\ \text{projection}}} \| X - XS \|_F^2 \approx_{1\pm \eps} \min_{\substack{S' \in \R^{t\times t} \\ \text{ rank-$k$} \\ \text{projection}}} \| X\bPi - X \bPi S' \|_F^2. \label{eq:subs-apprx}
\end{align}
At the optimal $S \in \R^{d \times d}$ and the optimal $P \in \R^{n \times n}$, the costs coincide (a property which holds only for $z = 2$). Thus, \cite{CEMMP15} implies (\ref{eq:subs-apprx}), but it does not say that the cost of all subspaces of $\R^d$ are preserved (as there is a type mismatch in the rank-$k$ projections on the left- and right-hand side of (\ref{eq:subs-apprx})).

\bibliographystyle{alpha}
\bibliography{waingarten}

\end{document}